\newcommand{\ignore}[1]{}
\newtheorem{definition}{Definition}[section]
\newtheorem{proposition}[definition]{Proposition}
\newtheorem{theorem}[definition]{Theorem}
\newtheorem{remark}[definition]{Remark}
\newtheorem{lemma}[definition]{Lemma}
\newtheorem{corollary}[definition]{Corollary}
\numberwithin{equation}{section}
\newcommand{\0}{{\bf 0}}
\newcommand{\m}{{\bf m}}
\newcommand{\n}{{\bf n}}
\newcommand{\z}{\mathsf{z}}
\newcommand{\B}{B}%{\mathsf{B}}
\newcommand{\D}{D}%{\mathsf{D}}
\newcommand{\I}{\mathcal{I}}
\newcommand{\J}{\mathcal{J}}
\renewcommand{\L}{\mathsf{L}}
\newcommand{\N}{\mathbb{N}}
\newcommand{\R}{\mathbb{R}}
\newcommand{\T}{\mathsf{T}}
\newcommand{\Z}{Z}%{\mathsf{Z}}
\newcommand{\dum}[1]{\mathsf{#1}} %dummy variables
\newcommand{\End}{\textnormal{End}}
\newcommand{\Der}{\textnormal{Der}}
\newcommand{\Alg}{\textnormal{Alg}}
\newcommand{\cop}{\textnormal{cop}\,}
\newcommand{\Sig}[2]{\D_{(#2,#1)}}
\newcommand{\lhom}{|}
\newcommand{\rhom}{|}
\newcommand{\prelie}{\triangleright}
\definecolor{darkred}{rgb}{0.9,0.1,0.1}
\definecolor{darkblue}{rgb}{0,0,0.7}
\definecolor{darkgreen}{rgb}{0,0.5,0}
\providecommand{\PfStep}[2]{ \ifnum\value{#1}=1%\smallskip
\else\medskip\fi{\sc Step }\arabic{#1}\label{#2}\refstepcounter{#1}.}
\title[The structure group for quasi-linear equations]{The structure group for quasi-linear equations via universal enveloping algebras}
\author{Pablo Linares$^1$}
\email{p.linares-ballesteros@imperial.ac.uk}
\address{${}^1\,$Imperial College London (United Kingdom)} 
\author{Felix Otto$^2$}
\email{felix.otto@mis.mpg.de} 
\author{Markus Tempelmayr$^2$}
\email{markus.tempelmayr@mis.mpg.de}
\address{${}^2\,$Max Planck Institute for Mathematics in the Sciences, Leipzig (Germany)}
\subjclass[2020]{60L30, 60L70, 16S30, 16T05}
\keywords{Regularity structures, structure group, Hopf algebras, pre-Lie algebras}
\newcommand{\treeZero}{
\hspace{-0.8ex}
\raisebox{0.8mm}{
\begin{istgame}
\istroot(0) \endist
\end{istgame}}
}
\newcommand{\treeZeroo}{
\hspace{-1ex}
\raisebox{-0.2ex}{
\begin{istgame}
\istroot(0) \endist
\end{istgame}}
}
\newcommand{\treeOne}{
\hspace{-0.7ex}
\raisebox{-0.5mm}{
\begin{istgame}
\xtdistance{3mm}{1.5mm}
\setistgrowdirection{north}
\istroot(0)[null node] \istb* \endist
\end{istgame}}
}
\newcommand{\treeThree}{
2\hspace{-0.5ex}
\raisebox{-1mm}{
\begin{istgame}[font=\tiny]
\xtdistance{2mm}{2mm}
\setistgrowdirection{north}
\istroot(0)[null node] \istb* \endist
\istroot(a)(0-1)[null node] \istb*{X_1^2}[xshift=10pt, yshift=3pt] \istb* \endist
\end{istgame}}
\hspace{-1ex}+2\hspace{-0.5ex}
\raisebox{-1mm}{
\begin{istgame}[font=\tiny]
\xtdistance{2mm}{2mm}
\setistgrowdirection{north}
\istroot(0)[null node] \istb*{X_1^2}[xshift=10pt, yshift=3pt] \endist
\istroot(a)(0-1)[null node] \istb* \endist
\istroot(b)(a-1)[null node] \istb* \endist
\end{istgame}}
\hspace{-1ex}+2\hspace{-0.5ex}
\raisebox{-1mm}{
\begin{istgame}[font=\tiny]
\xtdistance{2mm}{2mm}
\setistgrowdirection{north}
\istroot(0)[null node] \istb* \endist
\istroot(a)(0-1)[null node] \istb*{X_1^2}[xshift=10pt, yshift=3pt] \endist
\istroot(b)(a-1)[null node] \istb* \endist
\end{istgame}}
}
\newcommand{\treeFour}{
\hspace{-0.8ex}
\raisebox{-1.8mm}{
\begin{istgame}[font=\tiny]
\istroot(0) \endist
\xtOwner(0){$X_1^2$}[r]
\end{istgame}}
}
\newcommand{\treeNine}[1]{
	\hspace{-0.7ex}
	\raisebox{-3mm}{
		\begin{istgame}[font=\tiny]
			\xtdistance{4mm}{2mm}
			\setistgrowdirection{north}
			\istroot(0)
			\xtOwner(0){$X^{#1}$}[r]
			\istb* \endist
	\end{istgame}}
}
\newcommand{\treeNineteen}{
\hspace{-0.7ex}
\raisebox{-0.5mm}{
\begin{istgame}[font=\tiny]
\xtdistance{3mm}{4mm}
\setistgrowdirection{north}	
\istroot(0) \istb{X^{\mathbf{n}_2}}[xshift=10pt,yshift=8pt] \istb* \istb{X^{\mathbf{n}_1}}[xshift=-5pt,yshift=8pt]  \endist
\end{istgame}}
}
\newcommand{\treeTwenty}{
	\hspace{-0.7ex}
	\raisebox{-0.5mm}{
		\begin{istgame}[font=\tiny]
			\xtdistance{3mm}{4mm}
			\setistgrowdirection{north}	
			\istroot(0) \istb{X^{\mathbf{n}}}[xshift=10pt,yshift=8pt] \istb* \istb{X^{\mathbf{n}}}[xshift=-5pt,yshift=8pt]  \endist
	\end{istgame}}
}
\newcommand{\treeThirty}[1]{
	\hspace{-1.3ex}
	\raisebox{-1.5ex}{
		\begin{istgame}[font=\tiny]
			\istroot(0) \endist
			\xtOwner(0){$#1$}[r]
	\end{istgame}}
}
\renewcommand*{\glossarymark}[1]{} % remove 'Glossary' from heading
\newglossaryentry{modelspace}{
name = {\ensuremath{\T}}, 
description = {Model space},
sort = t}
\newglossaryentry{polynomialsector}{
name = {\ensuremath{\bar{\T}}},
description = {Polynomial sector of ${\T}$},
sort = t b}
\newglossaryentry{nonpolynomialsector}{
name = {\ensuremath{\mathsf{\tilde T}}},
description = {Non-polynomial subspace of ${\T}$},
sort = t c}
\newglossaryentry{homogeneities}{
name = {\ensuremath{\mathsf A}},
description = {Set of homogeneities},
sort = a}
\newglossaryentry{structuregroup}{
name = {\ensuremath{\mathsf G}},
description = {Structure group},
sort = g}
\newglossaryentry{coordinates}{
name = {\ensuremath{\z_k,\z_\n}},
description = {Coordinate functionals on $(a,p)$},
sort = z}
\newglossaryentry{polynomials}{
text = {\ensuremath{\R[\z_k,\z_\n]}},
name = {\ensuremath{\R[ \cdot ]}},
description = {Polynomials in the variables $\cdot$},
sort = R}
\newglossaryentry{powerseries}{
text = {\ensuremath{\R[[\z_k,\z_\n]]}},
name = {\ensuremath{\R[[ \cdot ]]}},
description = {Formal power series in the variables $\cdot$},
sort = R R}
\newglossaryentry{monomial}{
name = {\ensuremath{\z^\gamma}},
description = {Monomial in $\T^*$},
sort = z gamma a}
\newglossaryentry{monomialT}{
name = {\ensuremath{\z_\gamma}},
description = {Monomial in $\T$},
sort = z gamma}
\newglossaryentry{generatorTilt}{
name = {\ensuremath{D^{(\n)}}},
description = {Infinitesimal generator of constant tilt by $x^\n$},
sort = d n}
\newglossaryentry{generatorShift}{
name = {\ensuremath{\partial_i}},
description = {Infinitesimal generator of shift},
sort = del}
\newglossaryentry{generatorVariableTilt}{
name = {\ensuremath{\z^\gamma D^{(\n)}}},
description = {Infinitesimal generator of variable tilt},
sort = z gamma D}
\newglossaryentry{preLieProduct}{
name = {\ensuremath{\prelie}},
description = {Pre-Lie product on $\Der(\R[[\z_k,\z_n]])$},
sort = 10}
\newglossaryentry{homogeneity}{
name = {\ensuremath{\lhom \gamma \rhom}},
description = {Homogeneity of the multi-index $\gamma$},
sort = 40}
\newglossaryentry{noisehomogeneity}{
name = {\ensuremath{[ \gamma ]}},
description = {Noise-homogeneity of the multi-index $\gamma$},
sort = 30}
\newglossaryentry{parabolicDistance}{
name = {\ensuremath{|\n |}},
description = {Scaled length of $\n$},
sort = 50}
\newglossaryentry{LieAlgebra}{
name = {\ensuremath{\L}},
description = {Lie algebra of $\{ \z^\gamma D^{(\n)} \}_{(\gamma,\n)} \cup \{ \partial_i \}_i $},
sort = L}
\newglossaryentry{UniversalEnvelope}{
name = {\ensuremath{{\rm U} (\L)}},
description = {Universal enveloping algebra of $\L$},
sort = UL}
\newglossaryentry{action}{
name = {\ensuremath{\rho}},
description = {Representation of ${\rm U}(\L)$},
sort = rho}
\newglossaryentry{derivedAlgebra}{
name = {\ensuremath{\tilde{\L}}},
description = {Lie algebra of $\{ \z^\gamma D^{(\n)} \}_{(\gamma,\n)} $},
sort = L derived}
\newglossaryentry{basisU}{
name = {\ensuremath{\D_{(J,\m)}}},
description = {Basis element of ${\rm U}(\L)$},
sort = D }
\newglossaryentry{hopfSpace}{
name = {\ensuremath{\T^+}},
description = { Dual structure of ${\rm U}(\L)$ },
sort = T plus }
\newglossaryentry{basisTplus}{
name = {\ensuremath{\Z^{(J,\m)}}},
description = {Basis element of $\T^+$},
sort = Z J }
\newglossaryentry{coaction}{
name = {\ensuremath{\Delta}},
description = {Coaction on $\T$ over $\T^+$},
sort = delta }
\newglossaryentry{coproduct}{
name = {\ensuremath{\Delta^+}},
description = {Coproduct in $\T^+$},
sort = delta plus }
\newglossaryentry{antipode}{
name = {\ensuremath{\mathcal{S}}},
description = {Antipode in $\T^+$},
sort = S }
\newglossaryentry{embeddingJ}{
name = {\ensuremath{\mathcal{J}_\n}},
description = {Embedding of $\{\gamma\in\mathsf{\tilde T} \, | \, \lhom\gamma\rhom>|\n|\}$ into $\T^+$},
sort = J n }
\newglossaryentry{gammaF}{
name = {\ensuremath{\Gamma_f}},
description = {Generic element of the structure group ${\mathsf G}$},
sort = Gamma f }
\newglossaryentry{model}{
name = {\ensuremath{\Pi}},
description = {Model of a regularity structure},
sort = Pi }
\newglossaryentry{projectionP}{
name = {\ensuremath{P}},
description = {Projection of $\mathsf{T}^*$ on $\mathsf{\tilde T}^*$},
sort = P}
\newglossaryentry{projectionPdagger}{
name = {\ensuremath{P^\dagger}},
description = {Projection of $\mathsf{T}$ on $\mathsf{\tilde T}$},
sort = Pdag}
\newglossaryentry{GLpreLie}{
name = {\ensuremath{{\mathcal L}^1}},
description = {(pre-)Lie algebra of Connes-Kreimer in \cite{ConnesKreimer}},
sort = L GL }
\newglossaryentry{modelSpaceBranched}{
name = {\ensuremath{{\mathcal B}}},
description = {Model space for branched rough paths},
sort = B }
\newglossaryentry{preLieGL}{
name = {\ensuremath{\leadsto}},
description = {Pre-Lie product on ${\mathcal L}^1$},
sort = 15 }
\newglossaryentry{basisGL}{
name = {\ensuremath{\Z_\tau}},
description = {Basis element of ${\mathcal L}^1$},
sort = Z tau }
\newglossaryentry{genericTree}{
name = {\ensuremath{\tau}},
description = {Generic tree},
sort = tau }
\newglossaryentry{modelBranched}{
name = {\ensuremath{{\mathbb X}}},
description = {Model for branched rough paths},
sort = X }
\newglossaryentry{grafting}{
name = {\ensuremath{\curvearrowright}},
description = {Grafting of trees},
sort = 5 }
\newglossaryentry{phi}{
name = {\ensuremath{\phi}},
description = {Dictionary between multi-indices and trees},
sort = phi }
\newglossaryentry{Phi}{
name = {\ensuremath{\Phi}},
description = {Dictionary between $\T^+$ and $\T^+_H$},
sort = Phi 3 }
\newglossaryentry{phio}{
	name = {\ensuremath{\mathring{\phi}}},
	description = {Dictionary between $\T$ and $\mathscr{B}$},
	sort = phi 1 }
\newglossaryentry{phiomin}{
	name = {\ensuremath{\mathring{\phi}_-}},
	description = {Dictionary between $\mathsf{\tilde T}$ and $\mathscr{B}$},
	sort = phi 2 }
\newglossaryentry{embeddingHairer}{
name = {\ensuremath{\mathcal{J}_\n^H}},
description = {Abstract placeholder for Taylor coefficients in \cite{Hairer}},
sort = J n H }
\newglossaryentry{multFunc}{
name = {\ensuremath{f}},
description = {Generic element of $\Alg(\T^+,\R)$},
sort = f}
\newglossaryentry{coproductButcher}{
name = {\ensuremath{\Delta_B}},
description = {Coproduct on trees of \cite{Brouder}},
sort = Delta 30}
\newglossaryentry{LieAlgebraRP}{
name = {\ensuremath{\L_{RP}}},
description = {Restriction of $\L$ relevant for rough paths},
sort = L rp}
\newglossaryentry{ModelSpaceRP}{
name = {\ensuremath{\T_{RP}}},
description = {Restriction of $\T$ relevant for rough paths},
sort = T h rp}
\newglossaryentry{TplusRP}{
name = {\ensuremath{\T^+_{RP}}},
description = {Restriction of $\T^+$ relevant for rough paths},
sort = T plus rp}
\newglossaryentry{coproductRP}{
name = {\ensuremath{\Delta^+_{RP}}},
description = {Restriction of $\Delta^+$ relevant for rough paths},
sort = Delta plus rp}
\newglossaryentry{ModelSpaceH}{
name = {\ensuremath{\T_{H}}},
description = {Model space of \cite{Hairer}},
sort = T H}
\newglossaryentry{TplusH}{
name = {\ensuremath{\T^+_{H}}},
description = {Formal expressions representing Taylor coefficients of \cite{Hairer}},
sort = T plus H}
\newglossaryentry{comoduleH}{
name = {\ensuremath{\Delta_{H}}},
description = {Coaction on $\T_H$ over $\T^+_H$ of \cite{Hairer}},
sort = Delta H}
\newglossaryentry{coproductH}{
name = {\ensuremath{\Delta^+_{H}}},
description = {Coproduct on $\T^+_H$ of \cite{Hairer}},
sort = Delta plus H}
\newglossaryentry{iotan}{
name = {\ensuremath{\iota_\n}},
description = {Projection of $\{ \z^\gamma D^{(\n)} \}_\gamma$ to $\T^*$},
sort = iota n}
\newglossaryentry{bigrading}{
name = {\ensuremath{{\rm bi}}},
description = {Bigrading on pairs $(\gamma,\n)$ and $(J,\m)$},
sort = bi}
\newglossaryentry{grading}{
text = {\ensuremath{|(J,\m)|_{\rm gr}}},
name = {\ensuremath{| \cdot |_{\rm gr}}},
description = {Grading on $(J,\m)$},
sort = 70}
\newglossaryentry{lengthJm}{
name = {\ensuremath{| (J,\m) |}},
description = {Length of $(J,\m)$},
sort = 60}
\newglossaryentry{homogeneityHairer}{
text = {\ensuremath{| \cdot |_H}},
name = {\ensuremath{| \tau |_H}},
description = {Homogeneity of the tree $\tau$ of \cite{Hairer}},
sort = 45}
\newglossaryentry{BspaceHairer}{
name = {\ensuremath{\mathscr{B}}},
description = {Space of trees with expanded polynomial decorations in \cite{BCCH19}},
sort = B scr }
\newglossaryentry{arrown}{
name = {\ensuremath{\curvearrowright_\n}},
description = {Generalized grafting pre-Lie product in \cite[Definition 4.7]{BCCH19}},
sort = 6}
\newglossaryentry{preLien}{name={\ensuremath{\leadsto_\n}},
description={Generalized grafting pre-Lie product on $\mathcal{L}^1$},
sort = 16
}
\newglossaryentry{arrowup}{
name={\ensuremath{\uparrow_i}},
description={Operator of increasing decorations in \cite[Definition 4.7]{BCCH19}},
sort = 7}
\newglossaryentry{sharp}{name={\ensuremath{\sharp_i}},
description={Operator of increasing decorations on $\mathcal{L}^1$},
sort = 17}
\newglossaryentry{contraction}{name={\ensuremath{\mathcal{Q}}},
description={Decorations contraction operator in \cite[p. 911]{BCCH19}},
sort = Q 2}
\newglossaryentry{counitn}{name={\ensuremath{\varepsilon_{\mathbf{n}}}},
description={Generalized counit in $\mathrm{U}(\mathsf{L})$},
sort = epsilon}
\newglossaryentry{rmap}{name={\ensuremath{M_c}},
description={Renormalization map for rough paths},
sort = Mc}
\newglossaryentry{rmapBCFP}{name={\ensuremath{M_v^{BCFP}}},
description={Translation map for rough paths in \cite[Definition 14]{BCFP19}},
sort = McBCFP}
\newglossaryentry{intHairer}{name={\ensuremath{\mathcal{I}}},
description={Abstract integration map in \cite{Hairer,BCCH19}},
sort = I}
\newglossaryentry{dagger}{name={\ensuremath{\dagger}},
description={Transposition of linear maps},
sort = dagger}
\newglossaryentry{cop}{name={\ensuremath{\mathrm{cop}}},
description={Coproduct in $\mathrm{U}(\mathsf{L})$},
sort = cop}
\newglossaryentry{upsilon}{name={\ensuremath{\Upsilon}},
description={Linear map from $\mathcal{B}$ into functions, cf. \cite[Definition 2.13]{BonnefoiCMW}},
sort = upsilon}
\newglossaryentry{upsilonCirc}{name={\ensuremath{\mathring\Upsilon}},
description={Linear map from $\mathscr{B}$ into functions, cf. \cite[(4.4)]{BCCH19}},
sort = upsiloncirc}
\begin{document}

%\addtocontents{toc}{\protect\setcounter{tocdepth}{0}}
%\section*{Abstract}
%\addtocontents{toc}{\protect\setcounter{tocdepth}{1}}

\begin{abstract}
We replace trees by multi-indices
as an index set of the abstract model space to tackle
quasi-linear singular stochastic partial differential equations.
We show that this approach is consistent with the postulates of regularity structures when it comes to the structure group, which arises from a Hopf algebra and a comodule.  

Our approach, where the dual of the abstract model
space naturally embeds into a formal power series algebra,
allows to interpret the structure group as a
Lie group arising from a Lie algebra consisting of derivations
on this power series algebra. These derivations in turn
are the infinitesimal generators of two actions on the space of pairs (nonlinearities, functions of space-time mod constants).

We also argue that there exist pre-Lie algebra and Hopf algebra morphisms between our structure and the tree-based one in the cases of branched rough paths (Grossman-Larson, Connes-Kreimer) and of the stochastic heat equation.
\end{abstract}

\maketitle
%%%%%%%%%%%%%%%%%%%%%%%%%%%%%%%%%%%%%%%%%%%%%%%%%%%%%%%%%%%%%%%%%%%%%%%%%%%%%%%%%%%%
%%%%%%%%%%%%%%%%%%%%%%%%%%%%%%%%%%%%%%%%%%%%%%%%%%%%%%%%%%%%%%%%%%%%%%%%%%%%%%%%%%%%
%%%%%%%%%%%%%%%%%%%%%%%%%%%%%%%%%%%%%%%%%%%%%%%%%%%%%%%%%%%%%%%%%%%%%%%%%%%%%%%%%%%%

\tableofcontents

%\newpage

%%%%%%%%%%%%%%%%%%%%%%%%%%%%%%%%%%%%%%%%%%%%%%%%%%%%%%%%%%%%%%%%%%%%%%%%%%%%%%%%%%%%

\section{Introduction}

In this article, we connect the regularity structure $(\mathsf{A},\mathsf{T},\mathsf{G})$
introduced by the second author, Sauer, Smith and Weber in \cite{OSSW} for a simple class of quasi-linear equations to
the general Hopf-algebraic framework formulated by Hairer \cite{Hairer14} and later expanded in \cite{BHZ,ChH,BCCH19}.
The main difference between \cite{OSSW} on the one hand,
and the output of the general strategy in \cite{Hairer14} 
applied to this class of equations on the other hand, lies in the effectively smaller abstract model space 
$\mathsf{T}$: The basis elements in \cite{OSSW} amount to
specific linear combinations of the basis in \cite{Hairer14}, which is indexed
by trees. Trees do not play any role in the contribution of this paper; thus, the Hopf algebras underlying rough paths \cite{Lyons98,Gub04} as worked out in \cite{Gub10,HairerKelly}, and regularity structures \cite{Hairer14,BHZ} are not at our disposal. The goal of this paper is to unveil this Hopf structure in the tree-free set-up of \cite{OSSW}; loosely speaking, this amounts to replacing combinatorics by Lie geometry. For an introduction to our framework, we refer to the series of lectures \cite{OttoLecture} and the notes \cite{LO}; for an introduction to classical regularity structures, we refer to the review article \cite{Hairer}.

\medskip

In our approach to the regularity structure $(\mathsf{A},\mathsf{T},\mathsf{G})$, we start from the space
of tuples $(a,p)$ of (polynomial) nonlinearities $a$ and space-time polynomials $p$,
which we think of parameterizing the entire manifold of solutions\footnote{ satisfying the
	equation up to space-time polynomials} $u$. 
We consider the actions of shift by a space-time vector $h\in\mathbb{R}^{d+1}$ and of tilt by a
space-time polynomial $q$ on $(a,p)$-space, where, crucially, the tilt by a constant is 
encoded as a shift of the (one-dimensional) $u$-space because we think of $p$ as $p$ mod constants.
We consider the infinitesimal generators of these actions, and pull them back
as derivations on the algebra of formal power series $\R[[\z_k,\z_\n]]$ in the natural coordinates $\{\z_k\}_{k\geq 0}$ and $\{\z_{\bf n}\}_{\N_0^{d+1}\setminus \{\0\}}$ of $(a,p)$-space\footnote{ For the sake of clarity, we will fix $d=1$, though no fundamental changes appear when increasing the spatial dimension.}, which give rise to an index set of multi-indices. These derivations define a Lie algebra $\mathsf{L}$; the corresponding Lie group coincides with the pointwise dual $\mathsf{G}^*$ of the structure group $\mathsf{G}$. However, we take a completely algebraic route to construct $\mathsf{G}$, which passes via the universal envelope ${\rm U}(\mathsf{L})$ of $\mathsf{L}$, and a module structure which identifies ${\rm U}(\mathsf{L})$ with a space of endomorphisms of $\T^*$, the algebraic dual of the model space $\T$.

\medskip

The algebraic construction of the present paper is similar in spirit to the recent work by Bruned and Manchon \cite{BrunedManchon}, who construct Hopf algebras starting from a (multi) pre-Lie algebra that encodes grafting of decorated trees, following the general theory developed by Guin and Oudom \cite{GuinOudom}. 
Also our Lie structure comes from a natural pre-Lie product on $\mathsf{L}$, which however is not closed (see Subsection \ref{Sect3.10} for details); more recently, and motivated by the present paper, Bruned and Katsetsiadis \cite{BK22} have interpreted this structure as a post-Lie algebra. 
Like in \cite{GuinOudom} we use it to canonically identify the enveloping algebra
${\rm U}(\mathsf{L})$ with the symmetric algebra ${\rm S}(\mathsf{L})$,
which we implement through the choice of a specific basis\footnote{ This basis is different from the standard basis used in the Poincar\'e-Birkhoff-Witt  
	theorem, which relies on a non-canonical ordering of the index set of $\mathsf{L}$.}  for ${\rm U}(\mathsf{L})$. This basis is
crucial for recovering the intertwining property that relates the coproduct and coaction from regularity structures \cite[(4.14)]{Hairer}.

\medskip

While in \cite{OSSW} the regularity structure $(\mathsf{A},\mathsf{T},\mathsf{G})$ was introduced for quasi-linear equations of the form
\begin{align}\label{as34}
\big(\frac{\partial}{\partial x_2}-\frac{\partial^2}{\partial x_1^2}\big)u=a(u)\frac{\partial^2 u}{\partial x_1^2} + \xi,
\end{align}
the structures defined in this paper cover other (semi-linear)
equations relying on a single scalar nonlinearity $a(u)$. In fact, the algebraic structure we build is oblivious to the form of the equation and just relies on the solution to the linearized problem being of positive regularity, as will become apparent in Section 6 and, more importantly, Section 7.\footnote{ Our structure would also work, for example, for a generalized KPZ equation with only one nonlinearity, i.~e.
\begin{equation*}
	\big(\frac{\partial}{\partial x_2}-\frac{\partial^2}{\partial x_1^2}\big)u=a(u)\big(\frac{\partial}{\partial x_1}u\big)^2 + \xi.
\end{equation*}} 
In Section \ref{Sect5} we consider a driven ODE of the form 
\begin{equation}\label{drivenODE}
\frac{du}{dx_2} = a(u)\xi,
\end{equation}
provide a dictionary $\phi$ between our index set of multi-indices and linear combinations of trees in the Connes-Kreimer Hopf algebra (which is at the basis of branched rough paths), and prove that $\phi$ generates a Hopf algebra morphism. Section \ref{Sect6} is devoted to the stochastic heat equation (SHE)
\begin{equation}\label{SHE}
\big(\frac{\partial}{\partial x_2}-\frac{\partial^2}{\partial x_1^2}\big)u=a(u)\xi,
\end{equation}
where now the morphism property is established with respect to the Hopf algebra in regularity structures \cite{Hairer}. While the morphism $\phi$ between our model space and the one based on decorated trees 
changes from one equation to another, the consistency between our geometric
definition and the combinatorial definitions persists, and we expect it to hold as well for the class \eqref{as34}.

\medskip

Working with our more parsimonious regularity structure $(\mathsf{A},\mathsf{T},\mathsf{G})$ 
and model $(\Pi_x,\Gamma_{xy})$
has the potential advantage of reducing the number of counter-terms in renormalization.
In joint work with P.~Tsatsoulis \cite{LOTT}, we show that algebraic renormalization of \eqref{as34} combines well with our greedier setting:
we show that under a natural symmetry condition on the noise $\xi$, a 
BPHZ-type choice of renormalization can be performed, leading to a renormalized equation  
of the form $\big(\frac{\partial}{\partial x_2}-\frac{\partial^2}{\partial x_1^2}\big)u=a(u)\frac{\partial^2 u}{\partial x_1^2} + h(u) + \xi$ with a deterministic and 
local counter-term $h$, as postulated in \cite[Theorem 1]{OSSW}, and -- most importantly -- we show
that the greedy model $(\Pi_x,\Gamma_{xy})$ can be naturally estimated without resorting to trees. We believe this to be a general principle, namely that multi-indices provide a more efficient bookkeeping of the renormalization constants; in Subsection \ref{Sect6.6} we show this for \eqref{drivenODE}, connecting to translation of rough paths \cite{BCFP19}. 

\medskip

Our Ansatz has two invariances built-in, with beneficial effects for renormalization.
The first invariance is the independence on the choice of an origin in $u$-space,
which is ensured by the prominent role of the infinitesimal generator of $u$-shifts.
The second invariance relates to the more specific class of quasi-linear equations
\eqref{as34}. Namely, our theory is not affected by interpreting \eqref{as34} as a perturbation
around $(\frac{\partial}{\partial x_2}-2\frac{\partial^2}{\partial x_1^2})u=\xi$, i.~e.~
by rewriting \eqref{as34} as 
$(\frac{\partial}{\partial x_2}-2\frac{\partial^2}{\partial x_1^2})u
=(a(u)-1)\frac{\partial^2}{\partial x_1^2}u+\xi$. In other words, our approach is invariant
under changing the reference value in $a$-space. Insisting on such (collective) in- or
rather covariances in renormalization has been implemented in a much broader way in \cite{Hollands05}\footnote{ There, in the absence of a linear structure, the shift of $u$ and $a$ not just by a constant,
	but by a (regular) field is considered.}.

\section{Motivation and interpretation of the main result}\label{Sect1}

\subsection{Modding out constants}
\label{Sect1.11}
$\mbox{}$

We take the perspective Butcher \cite{Butcher} introduced on the level of ODEs, 
and which was extended in \cite{Gub10} to driven ODEs of the form\footnote{We consistently denote by $x_2$ the time-like variable.} 
\eqref{drivenODE}, of viewing the solution
of the homogeneous initial value problem, i.~e.~with $u(x_2=0)=0$, 
as a function(al) of the nonlinearity $a$, i.~e.~$u=u[a](x_2)$.
Obviously, the solution $\tilde u$ for an (inhomogeneous) initial datum $u_0$ can then
be recovered by a $u$-shift: 
\begin{align}\label{fs15}
\tilde u=u[a(\cdot+u_0)]+u_0.
\end{align}
In particular, re-centering in the sense of imposing homogeneous initial conditions
at some other time instance, say $u_1(x_2=1)=0$, can be recovered by a suitable
variable $u$-shift $\pi=\pi[a]$ in the form of the Ansatz $u_1[a]=u[a(\cdot+\pi[a])]+\pi[a]$.

\medskip

The extension to a driven PDE, e.~g.~\eqref{SHE},
is more subtle, since even for fixed $a$, the solution manifold is infinite-dimensional.
Relaxing the equation to hold only modulo space-time polynomials, one expects
that the solution manifold can be (locally) parameterized by all space-time
polynomials $p$. It is therefore natural to think in terms of $u=u[a,p](x)$,
like is implicitly
done in\footnote{There, ${\mathcal O}$ corresponds to the space of all jets $p$'s.} \cite[p.879]{BCCH19}. However, this is an over-parameterization
in the sense that it does not take advantage of $u$-shifts, cf.~\eqref{fs15}. 
A key feature of our approach, which will be spelled out in the upcoming
Subsections \ref{Sect1.12} and \ref{Sect1.13} is to consider $p$ only modulo
constants (and to keep track of $u$ only modulo constants). 
In Subsection \ref{Sect3.5} we argue
that this greedy approach to the regularity structure
is actually truthful.

\subsection{The $(a,p)$-space}
\label{Sect1.12}
$\mbox{}$

At the basis of our construction is the space 
$\mathbb{R}[u]\times(\mathbb{R}[x_1,x_2]/\mathbb{R})$, 
which is the set\footnote{
Our approach ignores the linear structure of $a$-space, and only appeals to the affine structure of $p$-space.} of pairs $(a,p)$, where $a$
is a polynomial in a single variable $u$,
and $p$ is a polynomial in two variables $(x_1,x_2)=x$.
As indicated by the quotient, we consider $p$'s only up to additive constants. 
Note that $\mathbb{R}[u]\times(\mathbb{R}[x_1,x_2]/\mathbb{R})$ is the direct sum
indexed by the disjoint union of $\mathbb{N}_0$ and $\mathbb{N}_0^2\setminus\{{\bf 0}\}$;
we often write $\{k\ge 0\}\cup\{{\bf n}\not={\bf 0}\}$.

\medskip

We recall that the polynomial $a$ plays the role of the nonlinearity in case of the quasi-linear class \eqref{as34}, its argument $u$ is a placeholder for the solution. The polynomial
$p$ plays the role of a (local) parameterization of the manifold
of solutions; 
the values of $u$ and $p$ are thus thought to be in
the same space, i.~e.~the real line, whereas
the argument $x$ of $p$ is in space-time.

%%%%%%%%%%%%%%%%%%%%%%%%%%%%%%%%%%%%%%%%%%%%%%%%%%%%%%%%%%%%%%%%%%%%%%%%%%%%
\subsection{Actions of shift and tilt}
\label{Sect1.13}
$\mbox{}$

There are two natural actions on the $(a,p)$-space $\mathbb{R}[u]\times(\mathbb{R}[x_1,x_2]/\mathbb{R})$,
which we shall call ``shift'' and ``tilt''.
We start by introducing the shift, by which we think of shifts of space $x_1$ and time $x_2$. 
We seek an action\footnote{By action one means that addition in the group $\mathbb{R}^2$
is compatible with composition of the transformations of $(a,p)$-space.}
of the additive group $\mathbb{R}^2\ni h$ on $(a,p)$-space. We (momentarily) identify 
\begin{align}\label{as35}
\mathbb{R}[x_1,x_2]/\mathbb{R}\cong\{\,p\in \mathbb{R}[x_1,x_2]\,|\,p(0)=0\,\},
\end{align}
which in particular allows to define the composition $a\circ p\in\mathbb{R}[x_1,x_2]$ 
on $(a,p)$-space $\mathbb{R}[u]\times(\mathbb{R}[x_1,x_2]/\mathbb{R})$.
Then for $h\in\mathbb{R}^2$, the transformation
\begin{align}\label{ao27bis}
(a,p)\mapsto \Big(a\big(\cdot+p(h)\big),p(\cdot+h)-p(h)\Big)
\end{align}
is well-defined. The action on the $p$-component is such that it corresponds to
shift projected onto \eqref{as35}. The action on the $a$-component
is made such that the composition $a\circ p$ is mapped onto its (unprojected)
shift $x\mapsto(a\circ p)(x+h)$. Thus under the lens of $a$, this action corresponds to
the plain shift by $h$.
It is easy to check that \eqref{ao27bis} is indeed an action.
The presence of the composition $a\circ p$ connects to the Fa\`a di Bruno formula, cf. \cite{Frabetti},
which expresses composition in terms of coefficients and thus encodes the chain rule.
It explicitly enters in the exponential formula \eqref{LOsg3} via \eqref{act1} and \eqref{fw13}.

\medskip

We now turn to tilt. 
By this we momentarily\footnote{We need an extension later on.}
think of an action
on $(a,p)$-space of the polynomial space $\mathbb{R}[x_1,x_2]$ (now including the constants).
It is defined by
writing $\mathbb{R}[x_1,x_2]\ni q$ $=\sum_{{\bf n}}\pi^{({\bf n})}x^{\bf n}$, 
where\footnote{With the implicit understanding that $\n\in\N_0^2$ if not stated otherwise.}
$x^{\bf n}$ $=x_1^{n_1}$ $x_2^{n_2}$,
and considering
\begin{align}\label{ao29bis}
(a,p)\mapsto\Big(a\big(\cdot+\pi^{(\0)}\big), p+\sum_{{\bf n}\not=\0}\pi^{({\bf n})}x^{\bf n}\Big).
\end{align}
This treatment of the $p$-component ensures that the transformation \eqref{ao29bis} 
is well-defined in view of \eqref{as35}. The treatment of the $a$-component is
such that the composition $a\circ p$ is mapped onto $a\circ(p+q)$
under \eqref{ao29bis}. So once more, under the lens of $a$, 
this action corresponds to the tilt of $p$ by $q$. Note that the addition of a polynomial is involved in recentering, by analogy with the addition of constants in the ODE case at the beginning of Subsection \ref{Sect1.11}.

%%%%%%%%%%%%%%%%%%%%%%%%%%%%%%%%%%%%%%%%%%%%%%%%%%%%%%%%%%%%%%%%%%%%%%%%%%%%%
\subsection{Seeking a representation as algebra endomorphisms}\label{Sect1.2}
$\mbox{}$

We are interested in the group $\mathsf{G}$ of transformations on $(a,p)$-space
generated by the two actions \eqref{ao27bis} and \eqref{ao29bis}.
We seek a representation of $\mathsf{G}$ as a matrix group, i.~e.~as a subgroup
of ${\rm Aut}(\mathsf{T})$ for a suitable linear space 
$\mathsf{T}$.
The natural approach is to lift \eqref{ao27bis} and \eqref{ao29bis} to 
an action on a space of nonlinear functionals $\pi$
on $(a,p)$-space by ``pull-back''. 
Indeed, it is tautological that 
\eqref{ao27bis} defines an algebra endomorphism $\Gamma^*$ of the algebra of functions $\pi$ on $(a,p)$-space
via
\begin{align}\label{ao80}
\Gamma^*\pi[a,p]=\pi\Big[a\big(\cdot+p(h)\big),p(\cdot+h)-p(h)\Big].
\end{align}
Similarly for \eqref{ao29bis}
\begin{align}\label{ao79}
\Gamma^*\pi[a,p]=\pi\Big[a\big(\cdot+\pi^{(\0)}\big),
p+\sum_{{\bf n}\not={\bf 0}}\pi^{({\bf n})}x^{\bf n}\Big].
\end{align}
For the moment, the notation $\Gamma^*$ is just suggestive; it will become meaningful when we identify this object with the dual of an element of $\mathsf{G}$. The same remark applies to the forthcoming $\mathsf{T}^*$ and $\mathsf{G}^*$.

\medskip

This pull-back also suggests to naturally extend \eqref{ao79} from
constant tilt $\pi^{({\bf n})}\in\mathbb{R}$ to variable tilt, meaning that
$\pi^{({\bf n})}$ itself is a function on $(a,p)$-space:
\begin{align}\label{ao81}
\Gamma^*\pi[a,p]=\pi\Big[a\big(\cdot+\pi^{(\0)}[a,p]\big),
p+\sum_{{\bf n}\not={\bf 0}}\pi^{({\bf n})}[a,p]x^{\bf n}\Big].
\end{align}
Note that also \eqref{ao80} has this form.

\medskip

We use the notation $\pi$ for a generic
function on $(a,p)$-space,
since it acts as a placeholder for the model $\Pi=\Pi[a,p](x)$, 
which indeed can be considered as a parameterization of the solution manifold by $p$
and depending on $a$ (next to depending on space-time $x$).

\subsection{Seeking a group structure}\label{Sect1.3}
\mbox{}

Obviously, \eqref{ao81} no longer is an action of the additive group of
functions $\{\pi^{({\bf n})}[a,p]\}$; however, it can be interpreted as an action
of the monoid given by the (non-Abelian) group operation
\begin{align}\label{as33}
\bar\pi^{({\bf n})}=\pi^{({\bf n})}+\Gamma^*{\pi'}^{({\bf n})},
\end{align}
in the sense that the corresponding\footnote{
Here, $\Gamma^*$, $\Gamma'^*$ and $\bar\Gamma^*$ are defined through \eqref{ao81} by $\{\pi^{(\n)}\}_\n$, $\{\pi'^{(\n)}\}_\n$ and $\{\bar\pi^{(\n)}\}_\n$, respectively.}
$\bar\Gamma^*$ satisfies 
$\bar\Gamma^*$ $=\Gamma^*{\Gamma'}^*$. The argument\footnote{
We will provide a rigorous proof in the context of Proposition \ref{exp01}.} for \eqref{as33} is a straightforward computation from \eqref{ao81}. The relation $\{\pi^{(\n)}\}_\n \mapsto \Gamma^*$ given by \eqref{ao81} and the composition rule \eqref{as33} reflect \cite[Definition 14]{BCFP19} on the level of branched rough paths. 

\medskip

While according to \eqref{as33}, the set of $\Gamma^*$'s defined through \eqref{ao81},
where $\pi^{({\bf n})}$ runs through all functions on $(a,p)$-space,
is closed under composition, there is in general no inverse.
For this, we will have to pass to a more restricted space for the $\pi^{({\bf n})}$'s. 
Incidentally, while \eqref{ao80} is contained in \eqref{ao81} when
$\pi^{({\bf n})}$ runs through all functions on $(a,p)$-space,
this will not be the case for the restricted space.

\medskip

\subsection{Seeking a matrix representation}\label{Sect1.4}
\mbox{}

A reason for not only restricting the space of $\pi^{({\bf n})}$'s
but also the one of $\pi$'s in (\ref{ao81}) is that the algebra of
all functions on $(a,p)$ is too large for a representation in terms
of countably many coordinates.
Let us therefore start from the following coordinates on $(a,p)$-space:
\begin{align}\label{ao20bis}
\begin{split}
\z_k[a,p]&=\frac{1}{k!}\frac{d^k a}{du^k}(0),\  k\geq 0\ \mbox{and}\\ 
\z_{\bf n}[a,p]&=\frac{1}{{\bf n}!}\frac{d^{\bf n}p}{dx^{\bf n}}(0),\  \n\neq \0.
\end{split}
\end{align}
In (\ref{ao20bis}) we use the standard
abbreviation ${\bf n}!$ $=n_1! \, n_2!$ and $\frac{d^{\bf n}}{dx^{\bf n}}$
$=\frac{d^{n_1}}{dx_1^{n_1}} \, \frac{d^{n_2}}{dx_2^{n_2}}$.
Note that $\{ \gls{coordinates}\}_{k,\n}$ can be considered as the dual basis to the standard monomial basis $\{u^k,x^\n\}_{k,\n}$ of $(a,p)$-space.
In particular, these coordinates arbitrarily fix an origin of the affine $u$-space and the affine $x$-space. The effect of changing these origins is considered in Subsection \ref{Sect3.2}. Clearly, every polynomial expression in (\ref{ao20bis}) can be identified with
a function on $(a,p)$-space. This allows us to identify the polynomial algebra
$\gls{polynomials}$ with a sub-algebra of the algebra
of functions on $(a,p)$-space.
Note that $\mathbb{R}[\z_k,\z_{\bf n}]$
is the direct sum over the index set of multi-indices\footnote{
This means that $\gamma$ is a map from the above index set into $\N_0$ that is zero for all but finitely many indices.}
$\gamma$. In particular, the monomials
\begin{align}\label{ao85}
\z^\gamma:=\prod_{k\ge 0,{\bf n}\not={\bf 0}}
\z_k^{\gamma(k)}\z_{\bf n}^{\gamma({\bf n})}
\end{align}
form a countable basis of $\mathbb{R}[\z_k,\z_{\bf n}]$. We will denote by $e_k$ and $e_\n$ the multi-indices such that $\z^{e_k} = \z_k$ and $\z^{e_\n}= \z_\n$, respectively.

\medskip

However, $\mathbb{R}[\z_k,\z_{\bf n}]$ is not preserved by the $\Gamma^*$
defined through (\ref{ao79}):
Taking $\pi^{(\0)}=v\in\mathbb{R}\setminus\{0\}$ and $\pi^{({\bf n})}=0$ for ${\bf n}\not=\0$, 
and considering the function $\pi$ $=\z_0$, we have $\Gamma^*\pi[a,p]=a(v)$.
Now $a(v)$ cannot be expressed as a finite linear 
combination of $\z^\gamma$'s.
Actually, it follows from Taylor's formula that $a(v)$ can be written as
\begin{align}\label{ao83}
a(v)=\sum_{k\ge 0}\z_k[a]\,v^k,
\end{align}
so that the function $\Gamma^*\z_0$ can be identified with a 
formal power series in the variables (\ref{ao20bis}), that is, an element of
$\gls{powerseries}$.

\medskip

Hence in coordinates, we a priori only know that (\ref{ao81})
defines an algebra morphism from $\mathbb{R}[\z_k,\z_{\bf n}]$
into the larger $\mathbb{R}[[\z_k,\z_{\bf n}]]$. Loosing
the endomorphism property of course obscures the group structure. 
We thus seek an extension of the above $\Gamma^*$'s to endomorphisms of 
$\mathbb{R}[[\z_k,\z_{\bf n}]]$. This will require restricting $\Gamma^*$ to a (linear) subspace $\T^*$ of $\R[[\z_k,\z_{\n}]]$, which amounts to the restriction of the space of $\pi$'s mentioned at the beginning of this subsection.

%%%%%%%%%%%%%%%%%%%%%%%%%%%%%%%%%%%%%%%%%%%%%%%%%%%%%%%%%%%%%%%%%%%%%%%%%%%%%%
\subsection{Main result}

$\mbox{}$

Our main results are: \textit{i)} The goals outlined in Subsections \ref{Sect1.2},
\ref{Sect1.3}, and \ref{Sect1.4}
can be achieved, provided we restrict to a suitable subspace 
$\mathsf{T}^*\subset\mathbb{R}[[\z_k,\z_{\bf n}]]$ and restrict the admissible $\pi^{(\n)}$'s. 
\textit{ii)} The objects are dual to a regularity structure.
\textit{iii)} They arise from a natural Hopf algebra structure based on a pre-Lie algebra structure.

\begin{theorem}\label{maintheorem}
$\mbox{}$
For arbitrary yet fixed $\alpha>0$ introduce the homogeneity\footnote{See Subsection \ref{Sect3.6} for a motivation of this expression which is targeted to the application for \eqref{as34}.} of a multi-index $\gamma$ by 
\begin{align*}
&\lhom\gamma\rhom:=\alpha([\gamma]+1)+\sum_{{\bf n}\not=\bf 0}|{\bf n}|\gamma({\bf n}),\\
&\mbox{where}\quad[\gamma]:=\sum_{k\ge 0}k\gamma(k)-\sum_{{\bf n}\not=\bf 0}\gamma({\bf n}),
\end{align*}
and let\footnote{
The definition of $|\n|$ is related to the scaling of the differential operator. See Subsection \ref{Sectri} below.} 
$\N_0^2\ni \n\mapsto \gls{parabolicDistance} \in \R_+$ be additive and satisfy $|(1,0)|,|(0,1)|>0$.
Moreover, introduce the linear subspace of
$\mathbb{R}[[\z_k,\z_{\bf n}]]$
\begin{align*}
\mathsf{T}^*:=\{\,\pi\,|\,
\pi_\gamma=0\;\mbox{unless}\;[\gamma]\ge 0\;\mbox{or}
\;\gamma=e_{\bf n}\;\mbox{for some}\;{\bf n}\not=\bf 0\,\}.
\end{align*}
i) Suppose the tilt $\{\pi^{({\bf n})}\}_{{\bf n}}$ satisfies
\begin{align*}
\pi^{(\bf n)}\in\{\,\pi\,|\,
\pi_\gamma=0\;\mbox{unless}\;[\gamma]\ge 0\;\mbox{and}\;
\lhom\gamma\rhom>|{\bf n}|\,\}.
\end{align*}
Then $\Gamma^*$ defined through (\ref{ao81}) extends to an automorphism of
$\mathsf{T}^*$, which respects the algebra structure 
of the ambient $\mathbb{R}[[\z_k,\z_{\bf n}]]$, cf. \eqref{eqmult}.
The same holds true for the $\Gamma^*$ defined through (\ref{ao80}). 
These two types of $\Gamma^*$'s generate a group 
$\mathsf{G}^*\subset{\rm End}(\mathsf{T}^*)$ consistent with (\ref{as33}).
As a set, $\mathsf{G}^*$ is parameterized by
a shift $h\in\mathbb{R}^2$ and 
a tilt $\{\pi^{({\bf n})}\}_{\bf n}$ through an exponential formula\footnote{
which is distinct from the matrix exponential in ${\rm End}(\mathsf{T}^*)$}, 
cf. \eqref{LOsg3}.

\medskip

ii) There exists a linear space $\mathsf{T}$ of which $\mathsf{T}^*$ is the algebraic dual; 
for every $\Gamma^*\in\mathsf{G}^*$
there exists a $\Gamma\in{\rm End}(\mathsf{T})$ of which $\Gamma^*$ is the dual, 
thereby defining a group $\mathsf{G}\subset{\rm End}(\mathsf{T})$. 
Letting $\mathsf{A}:=(\alpha\mathbb{N}_0+\mathbb{N}_0)\setminus\{0\}$,
the triple $(\mathsf{A},\mathsf{T},\mathsf{G})$ forms a regularity structure.

\medskip

iii) Consider the Lie algebra $\mathsf{L}\subset{\rm End}(\mathsf{T}^*)$ spanned
by the infinitesimal generators of shift and tilt. 
Consider its universal enveloping algebra ${\rm U}(\mathsf{L})$ with its canonical algebra morphism
${\rm U}(\mathsf{L})\to{\rm End}(\mathsf{T}^*)$. There exists a non-degenerate pairing between ${\rm U}(\mathsf{L})$
and a linear space $\mathsf{T}^+$ such that the Hopf algebra structure on 
${\rm U}(\mathsf{L})$ defines a Hopf algebra structure on $\mathsf{T}^+$.
Likewise, the pairing allows to lift the action given through the algebra morphism 
${\rm U}(\mathsf{L})\to{\rm End}(\mathsf{T}^*)$ to a coaction $\Delta\colon\mathsf{T}
\rightarrow\mathsf{T}^+\otimes\mathsf{T}$.
In line with regularity structures, the group 
$\mathsf{G}\subset{\rm End}(\mathsf{T})$ then arises from the Hopf algebra structure
of $\mathsf{T}^+$ together with $\Delta$.
The exponential formula arises from choosing a specific basis in
${\rm U}(\mathsf{L})$, which is based on a pre-Lie algebra structure on $\mathsf{L}$. This basis determines the pairing and ensures the intertwining of $\Delta^+$ and $\Delta$ modulo the re-centering maps $\mathcal{J}_{\n}$, cf. \eqref{Hai4.14}.

\end{theorem}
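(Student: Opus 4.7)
The plan is to prove the three parts in sequence, starting from the concrete geometric picture on $(a,p)$-space and progressively algebraizing it via infinitesimal generators, duality, and the universal enveloping algebra.

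\textbf{Part i).} I would first compute the infinitesimal generators of the actions (\ref{ao27bis}) and (\ref{ao29bis}) as derivations on $\mathbb{R}[[\z_k,\z_{\bf n}]]$: shift by $h$ yields $\partial_i$ after pulling back, while tilt by $\pi^{({\bf n})}x^{\bf n}$ yields $\z^\gamma D^{({\bf n})}$ (with the special $u$-shift role of ${\bf n}=\0$ encoded in $D^{(\0)}$). The key structural observation is the bigrading: each generator $\z^\gamma D^{(\bf n)}$ strictly increases the homogeneity $\lhom \cdot \rhom$ by $\lhom\gamma\rhom - |{\bf n}|$, which is positive by the assumption on the admissible tilts; similarly $\partial_i$ decreases the length of $\gamma$ but keeps $[\gamma]\geq 0$. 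Hence for any fixed target $\bar\gamma$ of bounded homogeneity, only finitely many compositions of generators contribute, so that the formal exponential (\ref{LOsg3}) makes sense as an endomorphism of $\mathsf{T}^*$. The extension of $\Gamma^*$ on $\mathbb{R}[\z_k,\z_{\bf n}]$ to $\mathsf{T}^*$ is then realized by this exponential, which is automatically an algebra morphism (exponentials of derivations are). The inverse is obtained by negating the generators, so $\Gamma^*$ is an automorphism. The composition rule (\ref{as33}) follows by a direct calculation from (\ref{ao81}), yielding the group structure $\mathsf{G}^*$.

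\textbf{Part ii).} Define $\mathsf{T}$ as the free $\mathbb{R}$-vector space on the basis $\{\z_\gamma\}$ indexed by those multi-indices $\gamma$ admissible in the definition of $\mathsf{T}^*$, with the obvious pairing making $\mathsf{T}^*$ its algebraic dual. Since the $\Gamma^*$ of part i) preserve the filtration by homogeneity and each homogeneity slice is finite-dimensional (a consequence of the two-sided constraint $[\gamma]\geq 0$ and bounded $\lhom\gamma\rhom$), the transpose $\Gamma\in\mathrm{End}(\mathsf{T})$ is well-defined. The three regularity-structure axioms—grading by $\mathsf{A}$, upper-triangularity of $\Gamma$, and invariance of each homogeneity level modulo strictly higher ones—follow from the strict positivity of $\lhom\gamma\rhom-|{\bf n}|$ on the admissible tilts and the analogous property for shifts.

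\textbf{Part iii).} Here I would proceed algebraically. The Lie algebra $\mathsf{L}$ is the span of $\{\z^\gamma D^{(\bf n)}\}\cup\{\partial_i\}$ inside $\mathrm{Der}(\mathbb{R}[[\z_k,\z_{\bf n}]])$, and its canonical projection $\mathrm{U}(\mathsf{L})\to\mathrm{End}(\mathsf{T}^*)$ sends ordered products to compositions. Following the Guin--Oudom philosophy, the partial pre-Lie product $\triangleleft$ on $\tilde{\mathsf{L}}$ canonically identifies $\mathrm{U}(\tilde{\mathsf{L}})$ with the symmetric algebra $\mathrm{S}(\tilde{\mathsf{L}})$, which I would leverage to select a distinguished basis $\{\D_{(J,{\bf m})}\}$ of $\mathrm{U}(\mathsf{L})$. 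The dual $\mathsf{T}^+$, spanned by the dual basis $\{\Z^{(J,{\bf m})}\}$, inherits a Hopf algebra structure by transposing the (cocommutative) coproduct and the associative product of $\mathrm{U}(\mathsf{L})$, once one verifies that the dual pairing is well-defined (non-degeneracy on each bigraded piece and finiteness of the relevant sums). The comodule $\Delta\colon\mathsf{T}\to\mathsf{T}^+\otimes\mathsf{T}$ arises from dualizing the module action of $\mathrm{U}(\mathsf{L})$ on $\mathsf{T}^*$, and the group $\mathsf{G}$ is then recovered as the group of characters of $\mathsf{T}^+$ acting via $\Delta$, consistent with $\mathsf{G}^*$ from part i).

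The main obstacle I anticipate is establishing the intertwining of $\Delta^+$ and $\Delta$ modulo the re-centering maps $\mathcal{J}_{\bf n}$, cf.\ (\ref{Hai4.14}). The difficulty is that this property is \emph{not} automatic from the Hopf-algebra duality alone: it requires that the chosen basis of $\mathrm{U}(\mathsf{L})$ interacts correctly with the ``re-centering'' interpretation of the generators as placeholders for Taylor coefficients. This is precisely why the Poincaré--Birkhoff--Witt basis, which depends on an arbitrary linear ordering of the index set, is insufficient; the pre-Lie-adapted basis is essential because it encodes the symmetry that makes the re-centering embeddings $\mathcal{J}_{\bf n}$ compatible with both the coproduct on $\mathsf{T}^+$ and the comodule action on $\mathsf{T}$. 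Once this is in place, the exponential formula in part i) follows as a shadow of the identification $\mathrm{U}(\mathsf{L})\cong\mathrm{S}(\mathsf{L})$, and all statements fit together.
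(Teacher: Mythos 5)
Your Part iii) is in essential agreement with the paper's actual route (Guin--Oudom identification, pre-Lie-adapted basis, dualization via finiteness properties, and the correct diagnosis of why PBW bases fail). Part i) and Part ii), however, contain concrete errors.

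\textbf{Part ii) has a factual gap.} You claim each homogeneity slice $\mathsf{T}_\kappa$ is finite-dimensional and use this to define the transpose $\Gamma$. This is false: the paper explicitly notes at the end of Subsection~\ref{Sect3.6} that because of the $\z_0$-variable, on which $\lhom\gamma\rhom$ is not coercive, $\mathsf{T}_\kappa$ is \emph{not} finite-dimensional. The transpose exists for a subtler reason, namely the finiteness property \eqref{SG06}, which is a statement about the matrix representation $(\D_{(J,\m)})^\gamma_\beta$ at fixed row $\beta$; it is established via the bigradation, not via finite-dimensionality of grades.

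\textbf{Part i) has two problems.} First, \eqref{LOsg3} is \emph{not} the exponential of the derivation $\sum_{\n}\pi^{(\n)}D^{(\n)}$: in that genuine exponential the multiplication operators $\pi^{(\n_j)}$ and derivations $D^{(\n_i)}$ would interleave, whereas in \eqref{LOsg3} all $\pi$'s sit to the left of all $D$'s. The paper flags this with the footnote ``which is distinct from the matrix exponential in $\mathrm{End}(\mathsf{T}^*)$.'' Consequently, multiplicativity does not follow from ``exponentials of derivations are algebra morphisms''; it requires the generalized Leibniz rule \eqref{leib02}, which in turn comes from the coproduct on $\mathrm{U}(\mathsf{L})$ -- so you are already implicitly using the machinery of Part iii). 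Second, ``the inverse is obtained by negating the generators'' is wrong: by the non-Abelian composition rule \eqref{as33}, $\bar\pi^{(\n)}=\pi^{(\n)}+\Gamma^*\pi'^{(\n)}$, so setting $\pi'^{(\n)}=-\pi^{(\n)}$ does not give the identity. The paper obtains invertibility via the antipode of the connected graded Hopf algebra $\mathsf{T}^+$, i.e.\ through Part iii). This is precisely why the paper proves iii) first, then ii), then i); your proposal to derive i) directly hits exactly the obstruction the paper describes when explaining why ``the passage from $\mathsf{L}$ to $\mathsf{G}^*$ has to pass via the primal side.''
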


%%%%%%%%%%%%%%%%%%%%%%%%%%%%%%%%%%%%%%%%%%%%%%%%%%%%%%%%%%%%%%%%%%%%%%%%%%%%%%%%%%%
\subsection{Outline of the paper}

$\mbox{}$

Section \ref{Sect3} introduces and motivates the 
main objects. More precisely, in Subsection \ref{Sect3.2},
we will introduce the infinitesimal generators
of shift $\{\partial_i\}_{i=1,2}$
and (constant) tilt $\{D^{({\bf n})}\}_{{\bf n}\in\mathbb{N}_0^2}$
as derivations on the algebra $\mathbb{R}[[\z_k,$ $\z_{\bf n}]]$.
In Subsection \ref{Sect3.3}, the polynomial sector\footnote{in the jargon of
regularity structures} $\mathsf{\bar T}$ will be defined;
in Subsection \ref{Sect3.4}, we define 
the space $\mathsf{T}^*\subset\mathbb{R}[[\z_k,\z_{\bf n}]]$, which turns out to be dual to the abstract model space $\mathsf{T}$.
The corresponding mapping properties of $\{\partial_i\}_{i}$, $\{D^{({\bf n})}\}_{{\bf n}}$, and their respective transposed versions are characterized. 
In Subsection \ref{Sect3.5}, we point out that the commutators of
$\{\partial_i\}_{i}$ and $\{D^{({\bf n})}\}_{{\bf n}}$ behave in the same
way shift and tilt operators would act on polynomials including the constants\footnote{This subsection is logically not needed, but provides a key intuition.}.
In Subsection \ref{Sect3.8}, we extend from constant to variable tilt parameters $\pi^{({\bf n})}$,
in form of monomials $\z^\gamma$,
by introducing the infinitesimal generator $\z^\gamma D^{({\bf n})}$
of variable tilt.
In Subsection \ref{Sect3.10}, we explore the natural pre-Lie algebra structure of the set of generators and a bigrading. 
In Subsection \ref{Sect3.6}, the homogeneity $\lhom\gamma\rhom$
of a multi-index $\gamma$ and thus the set of
homogeneities $\mathsf{A}$ and the ensuing grading of $\mathsf{T}$ will be introduced. 
In Subsection \ref{Sect3.7}, we define the Lie algebra $\mathsf{L}$ as the subspace
of ${\rm End}(\mathsf{T}^*)$ spanned by $\{\partial_i\}_{i=1,2}$
and $\{\z^\gamma D^{({\bf n})}\}_{\lhom\gamma\rhom>|{\bf n}|}$.

\medskip

While Section \ref{Sect3} is mostly about definitions and elementary
properties, Section \ref{Sect4} states the main, partially technical, results
that require a proof. 
In Subsection \ref{Sect4.1}, we appeal to the general theory of Hopf algebras:
We consider the universal enveloping algebra ${\rm U}(\mathsf{L})$ of the
Lie algebra $\mathsf{L}$, which is obtained from the tensor algebra factorized by
the ideal generated by the Lie bracket, and which naturally is a Hopf algebra.
Moreover, since $\mathsf{L}\subset{\rm End}(\mathsf{T}^*)$, 
there is a canonical algebra morphism
$\rho:{\rm U}(\mathsf{L})\to{\rm End}(\mathsf{T}^*)$ and the concatenation
product on ${\rm U}(\mathsf{L})$ coincides with the composition in ${\rm End}(\mathsf{T}^*)$.
This action naturally defines a (left) module structure
${\rm U}(\mathsf{L})\otimes\mathsf{T}^*\rightarrow\mathsf{T}^*$. In Subsection \ref{SectAd}, the pre-Lie product of Subsection \ref{Sect3.10} is extended to an operation of $\{\z^\gamma D^{(\n)}\}_{|\gamma|>|\n|}$ on ${\rm U}(\mathsf{L})$, cf.~\eqref{yc11}, which is shown to be consistent with the Hopf algebra structure. This operation will allow us, in Subsection \ref{Sect4.2}, to select a basis that is natural, but different from
the typical bases considered in the Poincar\'e-Birkhoff-Witt theorem, cf. \eqref{ao75}. Such a basis also provides
a non-degenerate pairing between ${\rm U}(\mathsf{L})$ and a space $\mathsf{T}^+$,
see (\ref{dual01}), which is introduced in Subsection \ref{Sect4.3}. 
Under this pairing, the coproduct on ${\rm U}(\mathsf{L})$ turns into a product
on $\mathsf{T}^+$ that allows to identify $\mathsf{T}^+$ with the polynomial algebra
in variables indexed by the index set of $\mathsf{L}$, cf. \eqref{cop01}.

\medskip

Next, we embark on the more subtle part of the dualization. 
This heavily relies on finiteness properties stated in \eqref{SG06} and \eqref{SG03}, 
which in turn are an outcome of extending the bigrading of $\mathsf{L}$ to ${\rm U}(\mathsf{L})$; 
this is carried out in Subsection \ref{Sect4.5}. 
In order to obtain these finiteness properties, it is crucial to pass from $\R[[\z_k,\z_\n]]$ to $\T^*$. 
As a consequence, the action and the product of ${\rm U}(\mathsf{L})$,
in terms of their coordinate representation with respect to our basis, 
turn into coaction and coproduct, respectively, 
for the couple $\mathsf{T}$ and $\mathsf{T}^+$, see Proposition \ref{cor02}.
More precisely, the action ${\rm U}(\mathsf{L})\otimes\mathsf{T}^*\rightarrow\mathsf{T}^*$
gives rise to a coaction $\Delta\colon\mathsf{T}\rightarrow\mathsf{T}^+\otimes\mathsf{T}$,
and the concatenation product
${\rm U}(\mathsf{L})\otimes{\rm U}(\mathsf{L})\rightarrow{\rm U}(\mathsf{L})$
gives rise to a coproduct $\Delta^+\colon\mathsf{T}^+\rightarrow\mathsf{T}^+\otimes\mathsf{T}^+$.
In particular, $\mathsf{T}^+$ carries the structure of a (graded connected) Hopf algebra.
In Subsection \ref{Sect5.6} we argue that $\Delta$ and $\Delta^+$ intertwine as postulated by regularity structures.

\medskip

Section \ref{Sect5old} deals with the group structure and connects to the goals of Theorem \ref{maintheorem}. In Subsection \ref{Sect5.1old}, we apply general Hopf algebra theory to $\mathsf{T}^+$. This allows to endow the space of multiplicative linear forms 
${\rm Alg}(\mathsf{T}^+,\mathbb{R})\subset(\mathsf{T}^+)^*$ with a group structure,
with help of the (convolution) product 
coming from the coproduct $\Delta^+$.
Together with the coaction $\Delta$, this gives rise to our 
$\mathsf{G}\subset{\rm End}(\mathsf{T})$, establishing part \textit{iii)} of Theorem \ref{maintheorem}.
We also state that $\mathsf{G}$ is consistent with the requirements
of regularity structures with respect to the gradedness of $\mathsf{T}$, cf. \eqref{tri01}, and the polynomial sector $\mathsf{\bar T}$, cf. \eqref{pol01}. 
This concludes part \textit{ii)} of Theorem \ref{maintheorem}.
In Subsection \ref{Sect5.2old}, we connect back to Section \ref{Sect1} by establishing part
\textit{i)} of Theorem \ref{maintheorem}. 
Namely, we show that the $\Gamma^*$'s extend the definition 
(\ref{ao81}) (Proposition \ref{exp01} \textit{ii)}), that they respect the
algebra structure of $\mathbb{R}[[\z_k,\z_{\bf n}]]$ (Proposition \ref{exp01} \textit{v)}), 
and that they respect the group structure (\ref{as33}) (Proposition \ref{exp01} \textit{vi)}). In Subsection \ref{Sect.ext} we enlarge our regularity structure to meet exactly\footnote{
not just up to the constant part, and including an abstract integration map} 
Hairer's axioms, 
showing that our smaller structure contains all the relevant information.

\medskip

At this stage, the reader may wonder why
the passage from the Lie algebra $\mathsf{L}\subset{\rm End}(\mathsf{T}^*)$ 
to the corresponding group $\mathsf{G}^*\subset{\rm Aut}(\mathsf{T}^*)$, which both live
on the dual side, has to pass via the primal side in form of $\mathsf{T}$ and $\mathsf{T}^+$.
The reason resides in our purely Hopf-algebraic approach, which prevents us from appealing to the matrix exponential in ${\rm End}(\mathsf{T}^*)$
that analytically links the Lie algebra $\mathsf{L}$ to its (Lie) group $\mathsf{G}^*$,
even if, as in our case, the exponential sum is effectively\footnote{meaning
	that it is finite for a given matrix element} finite because of gradedness.
Indeed, the universal enveloping algebra ${\rm U}(\mathsf{L})$, as {\it finite} linear combinations
of products of elements of $\mathsf{L}$, is obviously too small to contain matrix exponentials
of elements of $\mathsf{L}$, even if $\mathsf{T}^*$ were finite dimensional. 
First passing to $\mathsf{T}^+$, which as a linear space is isomorphic to ${\rm U}(\mathsf{L})$,
and then to its algebraic dual $(\mathsf{T}^+)^*$, 
which as a linear space is much larger than 
${\rm U}(\mathsf{L})$, is an algebraic way of extending ${\rm U}(\mathsf{L})$.
It turns out to contain the matrix exponentials of\footnote{
The elements $D^\dagger$ are well defined by \eqref{ao23}, since we have the stronger \eqref{yourchoice1}.}
$\{D^\dagger \,|\, D\in\mathsf{L}\subset\End(\T^*)\}$ 
$\subset{\rm End}(\mathsf{T})$, 
namely in form of ${\rm Alg}(\mathsf{T}^+,\mathbb{R})\subset(\mathsf{T}^+)^*$
as seen through the coaction $\Delta$.
Note that the primal $\mathsf{G}$ is more valuable than its dual $\mathsf{G}^*$,
since one may always pass from $\Gamma\in{\rm End}(\mathsf{T})$ to
$\Gamma^*\in{\rm End}(\mathsf{T}^*)$, while the opposite is only possible in finite
dimensions.

\medskip

Sections \ref{Sect5} and \ref{Sect6} are logically independent of the rest of the paper, but connect the combinatorial structures to our Lie-geometric construction. More precisely, we make this connection in the well-studied cases of branched rough paths \eqref{drivenODE} and the stochastic heat equation \eqref{SHE}. 
	
	\medskip
	
	In Section \ref{Sect5}, we consider the driven ODE example \eqref{drivenODE}. Here at a fixed $a$ the solution manifold is parameterized by $\R$, the space of initial conditions. This allows us to restrict the $(a,p)$-space to the space of nonlinearities $a$, hence the index set to multi-indices $\gamma$ over $\{k\geq 0\}$, and thus the Lie algebra to $\{\z^\gamma D^{(\0)}\}$. We show that the construction of Sections \ref{Sect3} and \ref{Sect4} is compatible with the Connes-Kreimer Hopf algebra underlying branched rough paths \cite{Gub10,HairerKelly}. More specifically, we associate our multi-indices \eqref{ao85} with linear combinations of (undecorated) trees in the Connes-Kreimer framework via a map $\phi$, and show that it is a Hopf algebra morphism. To do so, Lemma \ref{lembrp} establishes a pre-Lie algebra morphism property of the transpose of $\phi$ with respect to the grafting pre-Lie product (after a suitable normalization), which is at the core of the construction of Connes and Kreimer \cite{ConnesKreimer,Hoffman}. This morphism property is shown to be related to the one involving the map $\Upsilon$ in \cite{BonnefoiCMW} in Subsection \ref{Sect5.5}. In addition, in Subsection \ref{Sect6.6} we discuss how renormalization fits our setting and show that the transpose of $\phi$ intertwines with the translation maps of rough paths defined in \cite{BCFP19}, cf. Lemma \ref{lem6.5}. 
	
	\medskip
	
	In Section \ref{Sect6}, we connect to tree-based regularity structures \cite{Hairer14, Hairer, BCCH19} for SHE. 
	This example is simple in the sense that it only involves one integration kernel so that no edge decorations appear in the tree-based approach. We adopt the two perspectives in \cite{BCCH19} when it comes to the treatment of the polynomial decorations. In Subsections \ref{Sect7.1} to \ref{Sect7.3}, we consider a detailed description of the polynomials, in line with the space $\mathscr{B}$ in \cite[Subsection 4.1]{BCCH19}, and show that our Lie algebra $\mathsf{L}$ reflects the grafting operations in \cite[Definition 4.7]{BCCH19}; as in Section \ref{Sect5}, a pre-Lie morphism property of our dictionary is shown, connecting to the morphism properties of $\Upsilon$ in \cite{BCCH19}. In Subsection \ref{Sect7.4}, we relate to the coarser description which contracts all polynomials by multiplication giving rise to the model space $\T_H$; the morphism $\phi$ between $\T$ and $\T_H$ will no longer be one-to-one. In Proposition \ref{lemrs} we establish that $\phi$ induces a morphism $\Phi$ between $\T^+$ and the Hopf algebra $\mathsf{T}_H^+$ in \cite{Hairer} (without appealing to a pre-Lie structure).

%%%%%%%%%%%%%%%%%%%%%%%%%%%%%%%%%%%%%%%%%%%%%%%%%%%%%%%%%%%%%%%%%%%%%%%%%%%%%%%%%%%%
%%%%%%%%%%%%%%%%%%%%%%%%%%%%%%%%%%%%%%%%%%%%%%%%%%%%%%%%%%%%%%%%%%%%%%%%%%%%%%%%%%%%

\section{The Lie algebra structure}\label{Sect3}

%%%%%%%%%%%%%%%%%%%%%%%%%%%%%%%%%%%%%%%%%%%%%%%%%%%%%%%%%%%%%%%%%%%%%%%%%%
\subsection{Duality and transposition $\dagger$}\label{Sect3.1}
\mbox{}

Recall that the monomials $\gls{monomial}$, defined in \eqref{ao85}, can also be considered as elements of $\R[[\z_k,\z_\n]]$ and that $\R[[\z_k,\z_\n]]$ is the direct product over the index set of multi-indices $\gamma$.
Denoting the direct sum over the same index set\footnote{
which is isomorphic to $\R[\z_k,\z_\n]$, but we choose a different notation to distinguish $\R[\z_k,\z_\n]$ as a subspace of the space of formal power series from $\R[[\z_k,\z_\n]]^\dagger$ as the pre-dual of the space of formal power series.}
by $\R[[\z_k,\z_\n]]^\dagger$ and its basis elements -- to which the monomials of $\R[[\z_k,\z_\n]]$ are dual -- by $\z_\beta$, we have $(\R[[\z_k,\z_\n]]^\dagger)^*=\R[[\z_k,\z_\n]]$ with the canonical pairing 
\begin{equation}\label{pl95}
\langle\z^\gamma,\z_{\beta}\rangle= \delta_{\beta}^\gamma.
\end{equation}

For $D\in{\rm End}(\mathbb{R}[[\z_k,\z_{\bf n}]])$ we may consider the components of the sequence $D\z^\gamma$ as a  
matrix representation $\{D_\beta^\gamma\}_{\beta,\,\gamma}$. Since the $D$'s we construct below will have the finiteness property 
\begin{align}\label{ao64}
\{\,\beta\,|\,D_\beta^\gamma\not=0\,\}\;\mbox{is finite for all $\gamma$},
\end{align}
we may write $D\z^\gamma = \sum_\beta D_\beta^\gamma \, \z^\beta$. 
Moreover, they also will have the dual finiteness property 
\begin{align}\label{ao23}
\{\,\gamma\,|\,D_\beta^\gamma\not=0\,\}\;\mbox{is finite for all $\beta$}.
\end{align}
This second finiteness property is just the one needed to have a unique 
$D^{\dagger}\in{\rm End}(\mathbb{R}[[\z_k,\z_{\bf n}]]^\dagger)$ such that $(D^\dagger)^*=D$; 
on the level of the matrix representation this just means
\begin{equation}\label{ao107}
D^{\gls{dagger}} \z_\beta = \sum_\gamma D_\beta^\gamma \, \z_\gamma.
\end{equation}

\medskip

Passing from the polynomial space $\R[\z_k,\z_\n]$ to the formal power series space
	$\mathbb{R}[[\z_k,\z_{\bf n}]]$ serves us well 
	in the actual application, cf. \cite{LOTT,LO}, where we think of (\ref{ao20bis}) as coordinates on the manifold of all solutions $u$. As a consequence, the general solution $u$ can be seen as a function of the variables $\mathsf{z}_k,\mathsf{z}_{\bf n}$ (with values in the space of functions of $x$), or rather as a formal power series in these variables, described by its coefficients $\Pi_\beta$ indexed by our multi-indices $\beta$. Formally, $\Pi_\beta$ is, up to the combinatorial factor $\beta!$, a partial derivative of the general solution $u$ w.~r.~t.~the above variables. By Leibniz' rule, \eqref{as34} gives rise to the following family of linear equations indexed by $\beta$:
	\begin{equation}\label{extra1}
		\big(\frac{\partial}{\partial x_2}-\frac{\partial^2}{\partial x_1^2}\big)\Pi_{\beta} = \sum_{k \geq 0}\sum_{e_k+\beta_1+\cdots+\beta_{k+1}=\beta}\Pi_{\beta_1}\cdots \Pi_{\beta_k}\frac{\partial^2}{\partial x_1^2} \Pi_{\beta_{k+1}} + \delta_\beta^0 \xi.
	\end{equation}
	It turns out that this $\{\Pi_\beta\}_\beta$ can be inductively rigorously constructed\footnote{Here, we think of a smoothed-out noise so that the model
		is not a distribution but actually a smooth function in $x$.}, and thus naturally gives rise to the
formal power series  $\Pi(x)\in\mathbb{R}[[\z_k,\z_{\bf n}]]$.
In particular, the algebra structure of $\mathbb{R}[[\z_k,\z_{\bf n}]]$,
which will contain the dual $\mathsf{T}^*$ of the abstract model space $\mathsf{T}$, 
is inherent to our approach.
This dual perspective is consistent with the 
definition of the model as a distribution with values in $\mathsf{T}^*$ \cite[Definition 3.3]{Hairer}.

\medskip

In the solution theory of regularity structures, one considers truncations of this formal power series as approximations of the actual solution; such truncations, in turn, shall be seen as (coherent) modelled distributions, in the language of \cite{Hairer14,BCCH19}, after the application of the model\footnote{ We invite the reader to check \cite{OSSW}, where these objects are used to obtain a priori bounds for \eqref{as34}.} $\Pi$. 
%%%%%%%%%%%%%%%%%%%%%%%%%%%%%%%%%%%%%%%%%%%%%%%%%%%%%%%%%%%%%%%%%%%%%%%%%%%
\subsection{The infinitesimal generators of shift $\{\partial_i\}_i$ and constant tilt $\{D^{(\n)}\}_\n$}\label{Sect3.2}
\mbox{}

We now come to the definition of those derivations\footnote{i.e. satisfying $D\pi\pi'=(D\pi)\pi'+\pi D\pi'$ for $\pi,\pi'\in\R[[\z_k,\z_\n]]$.}
$D\in{\rm End}(\mathbb{R}[[\z_k,\z_{\bf n}]])$ that form the building blocks for the Lie algebra $\mathsf{L}$. 
The definitions capture the infinitesimal generators of the actions of shift and tilt
on $(a,p)$-space,
see (\ref{ao27bis}) and (\ref{ao29bis}) in Section \ref{Sect1}.
Let us now introduce the main building blocks $D^{({\bf 0})}$, $\{ \gls{generatorTilt} \}_{\n\neq \0}$ and $\{ \gls{generatorShift} \}_{i=1,2}$. 
We start with $D^{({\bf 0})}$, which is to capture 
the action of $\mathbb{R}$ onto $(a,p)$-space by tilt by constants, which in view of \eqref{ao29bis} amounts to a shift of $u$-space
\begin{align}\label{ao28}
(a,p)\mapsto (a(\cdot+v),p-v).
\end{align}
Here, the action on the $p$-component, which is made such that $a\circ p$ stays invariant, 
is immaterial because we ``mod out'' constants.
As for the shift (\ref{ao27bis}) of $x$, this action lifts by pull-back to functions
$\pi$ of $(a,p)$.
We formally define $D^{({\bf 0})}$ to be the infinitesimal generator of this action\footnote{Here is yet another characterization of $D^{(\0)}$: For arbitrary $u\in\R$ consider $\pi,\pi'\in\R[[\z_k,\z_\n]]$ specified through $\pi[a,p] = a(u)$ and $\pi'[a,p] = \frac{da}{dv}(u)$; they are related by $\pi'=D^{(\0)}\pi$.}
\begin{align}\label{ao21}
D^{({\bf 0})}\pi[a,p]=\frac{d}{dv}_{|v=0}\pi[a(\cdot+v),p-v].
\end{align}
This can be given sense for $\pi\in\{\z_k,\z_{\bf n}\}$, cf. (\ref{ao20bis}),
and yields 
\begin{align}\label{ao26}
D^{({\bf 0})}\z_k=(k+1)\z_{k+1},\quad D^{({\bf 0})}\z_{\bf n}=0.
\end{align}
In addition, (\ref{ao21}) suggests that $D^{({\bf 0})}$ is a derivation, which we postulate.
This and (\ref{ao26}) 
yield that, on the space $\mathbb{R}[\z_k,\z_{\bf n}]$, $D^{({\bf 0})}$ assumes
the form
\begin{align}\label{ao35}
D^{({\bf 0})}=\sum_{k\ge 0}(k+1)\z_{k+1}\partial_{\z_{k}};
\end{align}
the sum is obviously effectively finite on $\mathbb{R}[\z_k,\z_{\bf n}]$. 
From (\ref{ao35}) we infer the matrix representation with respect to the monomial basis:
\begin{align}\label{ao24}
(D^{({\bf 0})})_\beta^\gamma=\sum_{k\ge 0}\left\{\begin{array}{cl}
(k+1)\gamma(k)&\mbox{if}\;\gamma+e_{k+1}=\beta+e_k\\
0&\mbox{otherwise}\end{array}\right\}.
\end{align}
Note that the finiteness property (\ref{ao23}) is satisfied
so that $D^{({\bf 0})}$ is well defined as a derivation of $\mathbb{R}[[\z_k,\z_{\bf n}]]$. 
We also see that the finiteness property \eqref{ao64} holds, so that \eqref{ao107} defines an endomorphism $(D^{(\0)})^\dagger$ on $\R[[\z_k,\z_\n]]^\dagger$.
 
\medskip

After this representation (\ref{ao24}) of infinitesimal shifts of $u$,
% on the level of $\mathsf{T}$
we turn to the shifts of space $x_1$ and time $x_2$, that is, the action (\ref{ao27bis})
of $\mathbb{R}^2$ on $(a,p)$-space. Again, this action extends by pull-back to functions $\pi$ on $(a,p)$,
see (\ref{ao80}). We formally consider its infinitesimal generators\footnote{For arbitrary $x\in\R^2$ consider $\pi,\pi' \in \R[[\z_k,\z_\n]]$ characterized through $\pi[a,p] = p(x)$ and $\pi'[a,p] = \frac{\partial p}{\partial x_1}(x)$; they are related by $\pi' = \partial_1 \pi$.}
\begin{align}\label{ao25}
\begin{split}
\partial_1\pi[a,p]&=\tfrac{d}{dy_1}_{|y_1=0}\pi\big[a\big(\cdot+p(y_1,0)\big),
p\big(\cdot+(y_1,0)\big)-p(y_1,0)\big],\\
\partial_2\pi[a,p]&=\tfrac{d}{dy_2}_{|y_2=0}\pi\big[a\big(\cdot+p(0,y_2)\big),
p\big(\cdot+(0,y_2)\big)-p(0,y_2)\big].
\end{split}
\end{align}
By the chain rule and (\ref{ao26}) for $\z_k$,
and using the same argument (with $p$ playing the role of $a$)
that led to (\ref{ao26}) for $\z_{\bf n}$, we formally derive
\begin{align*}
\partial_1\z_k=\z_{(1,0)}D^{({\bf 0})}\z_k,\quad
\partial_1\z_{\bf n}=(n_1+1)\z_{{\bf n}+(1,0)},
\end{align*}
which we now postulate. Together with the postulate that $\partial_1$ be a derivation,
this implies that on the sub-algebra $\mathbb{R}[\z_k,\z_{\bf n}]$
we have
\begin{align}\label{ao30}
\partial_1=\sum_{{\bf n}}(n_1+1)\z_{{\bf n}+(1,0)}D^{({\bf n})}\quad\mbox{with}
\;D^{({\bf n})}:=\partial_{\z_{\bf n}}\;\mbox{for}\;{\bf n}\not={\bf 0}.
\end{align}
%
% and we identified\footnote{by an abus de language} $(0,0)$ and $0$. 
The notation $D^{({\bf n})}$ 
$=\partial_{\z_{\bf n}}$ is redundant, but very convenient; we obviously have
the matrix representation
\begin{align}\label{ao41}
(D^{({\bf n})})_\beta^\gamma=\left\{\begin{array}{cl}
\gamma({\bf n})&\mbox{if}\;\gamma=\beta+e_{\bf n}\\
0&\mbox{otherwise}\end{array}\right\}\quad\mbox{for}\;{\bf n}\not={\bf 0}.
\end{align}
Incidentally, still for ${\bf n}\not={\bf 0}$, we have
\begin{align}\label{hk05}
D^{({\bf n})}\pi[a,p]=\frac{d}{dt}_{|t=0}\pi[a,p+tx^{\bf n}],
\end{align}
which can be given a sense as an endomorphism on both 
$\mathbb{R}[[\z_k,\z_{\bf n}]]$ and $\mathbb{R}[\z_k,\z_{\bf n}]$.

\medskip

Inserting (\ref{ao24}) and (\ref{ao41}) into (\ref{ao30}) we obtain the matrix representation
\begin{align} \label{ao42}
\begin{split}
(\partial_1)_\beta^\gamma&=\sum_{k\ge 0}\left\{\begin{array}{cl}
(k+1)\gamma(k)&\mbox{if}\;\gamma+e_{k+1}+e_{(1,0)}=\beta+e_k \\
0&\mbox{otherwise}\end{array}\right\} \\
&+\sum_{{\bf n}\not=\0}\left\{\begin{array}{cl}
(n_1+1)\gamma({\bf n})&\mbox{if}\;\gamma+e_{{\bf n}+(1,0)}=\beta+e_{\bf n}\\
0&\mbox{otherwise}\end{array}\right\};
\end{split}
\end{align}
we again learn from (\ref{ao42}) that
$\partial_1$ satisfies the finiteness properties \eqref{ao64} and (\ref{ao23}). 
Hence $\partial_1$ is also well defined as an element of $ {\rm End}(\mathbb{R}[[\z_k,\z_{\bf n}]])$ and $\partial_1^\dagger$ as an element of ${\rm End}(\mathbb{R}[[\z_k,\z_{\bf n}]]^\dagger)$.
The same applies to $\partial_2$, with $(1,0)$ and $n_1$ replaced by $(0,1)$ and $n_2$ in \eqref{ao42}.

\medskip

We now have defined the building blocks, which are 
derivations on $\mathbb{R}[[\z_k,\z_{\bf n}]]$
\begin{align}\label{ao37}
\{D^{({\bf n})}\}_{{\bf n}} \cup \{\partial_i\}_{i}
\;\subset\;{\rm Der}(\mathbb{R}[[\z_k,\z_{\bf n}]])
\end{align}
satisfying the finiteness properties (\ref{ao64}) and (\ref{ao23}).

%%%%%%%%%%%%%%%%%%%%%%%%%%%%%%%%%%%%%%%%%%%%%%%%%%%%%%%%%%%%%%%%%%%%%%%%%%
\subsection{The polynomial sector $\bar{\T}$}\label{Sect3.3}
\mbox{}

In view of the second item of (\ref{ao20bis}), which identifies the coordinate
$\z_{\bf n}=\z^{e_\n}$
of $\mathbb{R}[[\z_k,\z_{\bf n}]]$ 
with the derivative
$\frac{1}{{\bf n}!}\frac{d^{\bf n}}{dx^{\bf n}}$, it is natural to identify
the element $\z_{e_{\bf n}} \in \mathbb{R}[[\z_k,\z_{\bf n}]]^\dagger$ 
with the polynomial $x^{\bf n}=x_1^{n_1}x_2^{n_2}$.
Hence we identify 
\begin{equation}\label{ao106}
\gls{polynomialsector} := \textnormal{span}  \{\z_{e_{\bf n}}\}_{{\bf n}\not=\0} 
\subset \mathbb{R}[[\z_k,\z_{\bf n}]]^\dagger
\end{equation}
with $\mathbb{R}[x_1,x_2]/\mathbb{R}$,
the space of polynomials in the variables $x_1,x_2$ quotiented by the constants.
Following \cite[Assumption 3.20]{Hairer}, 
we call $\mathsf{\bar T}$ the polynomial sector.
We note that the transposed endomorphisms of (\ref{ao37}) 
preserve this polynomial sector\footnote{which is the Lie algebra version of the corresponding postulate
	on $\Gamma\in\mathsf{G}$ in \cite[Assumption 3.20]{Hairer}}
	 $\mathsf{\bar T}$
\begin{align}\label{ao57}
D^\dagger\mathsf{\bar T}\subset\mathsf{\bar T}\quad\mbox{for}\quad
D\in\{D^{({\bf n})}\}_{{\bf n}} \cup \{\partial_i\}_{i},
\end{align}
which on the level of the matrix representation amounts to
\begin{align}\label{ao56}
D_\beta^\gamma=0\quad\mbox{for}\quad\beta\in\{e_{\bf n}\}_{{\bf n}\not=\0}
\quad\mbox{and}\quad\gamma\not\in\{e_{\bf n}\}_{{\bf n}\not=\0},
\end{align}
and can be inferred from (\ref{ao24}), (\ref{ao41}), and (\ref{ao42}). 
We note that $\partial_1^\dagger,\partial_2^\dagger$ almost act as partial 
derivatives on the polynomial sector\footnote{which is the Lie algebra version of the corresponding 
	Lie group postulate in \cite[Assumption 3.20]{Hairer}} $\mathsf{\bar T}$, which (in case of $\partial_1^\dagger$) means\footnote{where $\m<\n$ means $\m\leq \n$ component-wise and $\m\neq\n$}
\begin{equation}\label{ao51}
\partial_1^\dagger x^\n =
\left\{\begin{array}{cl}
n_1 x^{{\bf n}-(1,0)}&\mbox{if}\;\n > (1,0)\\
0&\mbox{otherwise}\end{array}\right\}\quad
\mbox{for}\;{\bf n}\not=\0.
\end{equation}
In terms of the matrix representation, this means
\begin{align*}
(\partial_1)_{e_{{\bf n}}}^\gamma
%=(\partial_1^\dagger)^{e_{{\bf n}}}_\gamma
=\left\{
\begin{array}{cl}
n_1&\mbox{if}\;\gamma=e_{{\bf n}-(1,0)},~ \n > (1,0)\\
0  &\mbox{otherwise}
\end{array}
\right\},
\end{align*}
which in turn can be read off from (\ref{ao42}). The reason why the case $\n = (1,0)$ (and analogously for $\partial_2$ the case $\n=(0,1)$) is excluded is that we modded out constants in the polynomial $p$, cf. \eqref{as35}; see however the upcoming Subsection \ref{Sect3.5}.

%%%%%%%%%%%%%%%%%%%%%%%%%%%%%%%%%%%%%%%%%%%%%%%%%%%%%%%%%%%%%%%%%%%%%%%%%%%%%%%
\subsection{Commutators of $\{D^{(\n)}\}_\n$ and $\{\partial_i\}_i$ behave naturally} \label{Sect3.5}
\mbox{}

We now make a connection between $\{D^{(\n)}\}_{\n}\cup \{\partial_i\}_{i}$ and the classical Lie algebra of tilt and shift on polynomials.
We start noting that
\begin{align}\label{ao36}
[D^{({\bf n})},D^{({\bf n}')}]=0,
\end{align}
which is obvious in case of ${\bf n}\not=\0$ and ${\bf n}'\not=\0$,
and can be easily inferred from (\ref{ao35}) for ${\bf n}\not=\0$ and ${\bf n}'=\0$.
We next argue that (\ref{ao25}) implies
\begin{align}\label{ao67}
[\partial_1,\partial_2]=0.
\end{align}
Indeed, by the finiteness property (\ref{ao64}), the monomial
$\z^\gamma$ is mapped by $\partial_1,\partial_2$ onto finite
linear combinations of monomials. Hence we may indeed appeal to
(\ref{ao25}) when computing $(\partial_1\partial_2-\partial_2\partial_1)\z^\gamma$,
which shows that this expression vanishes by the symmetry of second derivatives.
Turning to the commutator between $D^{(\n)}$'s and $\partial_i$'s, we first observe that by the characterization
(\ref{ao30}) and the commutation relation (\ref{ao36}) we have $[D^{(\0)},\partial_1]=0$ 
by the second item in (\ref{ao26}). Likewise,  
for ${\bf n}\not=\0$, we have $[D^{({\bf n})},\partial_1]=n_1D^{({\bf n}-(1,0))}$ 
(with the understanding that this expression vanishes if $n_1=0$) 
by the second item in (\ref{ao30}). We retain that
\begin{equation}\label{ao104}
[D^{({\bf n})},\partial_1] = n_1D^{({\bf n}-(1,0))} \textnormal{ and }
[D^{({\bf n})},\partial_2] = n_2D^{({\bf n}-(0,1))} \textnormal{ for all }{\bf n}.
\end{equation}

\medskip

%As we show in Appendix \ref{sec:ConnectClassicalShift},
The identities (\ref{ao36}), (\ref{ao67}), and (\ref{ao104}) mean
that the derivations $\partial_1$, $\partial_2$, $\{D^{({\bf n})}\}_{{\bf n}}$
on $\mathbb{R}[[\z_k,\z_{\bf n}]]$,
when it comes to their commutators, precisely behave like certain endomorphisms on 
${\mathbb R}[x_1,x_2]$.
%Recall that ${\mathbb R}[x_1,x_2]$ is the space of polynomials in the variables $x_1,x_2$.
The important fact here is that this is the full space ${\mathbb R}[x_1,x_2]$, 
not just the space ${\mathbb R}[x_1,x_2]/\mathbb{R}$ with the constants factored out,
which was our starting point in Section \ref{Sect1}. The corresponding endomorphisms on
${\mathbb R}[x_1,x_2]$ are just the infinitesimal generators of shift and tilt.
In particular, the subtle $D^{({\bf 0})}$ has a simple analogue in the infinitesimal generator
of the ``tilt'' by a constant polynomial. 
This shows that the incorporation
of constants into the $a$-part in (\ref{ao27bis}) and (\ref{ao29bis}) did not lead
to a loss of information.

\medskip 

%%%%%%%%%%%%%%%%%%%%%%%%%%%%%%%%%%%%%%%%%%%%%%%%%%%%%%%%%%%%%%%%%%%%%%%%
\subsection{Triangular structure}\label{Sectri}
\mbox{}

We now point out that the building blocks
$\{D^{({\bf n})}\}_{{\bf n}}\cup \{\partial_i\}_{i}$
are strictly triangular
with respect to the following two additive functionals on multi-indices $\gamma$
\begin{align}\label{ao50}
\gls{noisehomogeneity} :=\sum_{k\ge 0}k\gamma(k)-\sum_{{\bf n}\not=\0}\gamma({\bf n}) \quad\mbox{and}\quad \sum_{{\bf n}\not=\0}|{\bf n}|\gamma({\bf n}).
\end{align}
Here $\mathbb{N}_0^2\ni{\bf n} \mapsto|{\bf n}|\in\mathbb{N}_0$ denotes a scaled length of $\n$.
In applications to parabolic equations of the form \eqref{as34}, one considers $|\n|=n_1+2n_2$; 
in general, $|\n|$ is an additive and coercive map which is determined by the scaling of the differential operator.
We note that the combination of $\sum_{k\ge 0}k\gamma(k)$ and $\sum_{{\bf n}\not=\0}\gamma({\bf n})$
in \eqref{ao50} is natural: 
Like we identified $\z_{e_{\bf n}}$ with $p(x)=x^{\bf n}$ 
at the beginning
of Subsection \ref{Sect3.3}, we may identify $\z_{e_k}$ with $a(u)=u^k$; 
hence while $\sum_{k\ge 0}k\gamma(k)$ measures the homogeneity in the $u$-variable,
$\sum_{{\bf n}\not=\0}\gamma({\bf n})$ measures the homogeneity in the polynomial $p$;
$u$ and $p$-values have the same ``physical'' dimension. 
Considering the difference
$[\gamma]$ is forced upon us by the following.

\begin{lemma}\label{Lem3.1} It holds for ${\bf n}\not=\0$
\begin{equation}\label{ao40}
\begin{split}
\lefteqn{(D^{({\bf 0})})_\beta^\gamma\not=0}\\
&\Longrightarrow\quad
\left\{\begin{array}{ccc}
[\gamma]+1&=&[\beta]\quad\mbox{and}\\
\sum_{{\bf m}\not=\0}|{\bf m}|\gamma({\bf m})&=&\sum_{{\bf m}\not=\0}|{\bf m}|\beta({\bf m})
\end{array}\right\},
\end{split}
\end{equation}
\begin{equation}\label{ao47}
\begin{split}
\lefteqn{(D^{({\bf n})})_\beta^\gamma\not=0}\\
&\Longrightarrow\quad
\left\{\begin{array}{ccc}
[\gamma]+1&=&[\beta]\quad\mbox{and}\\
\sum_{{\bf m}\not=\0}|{\bf m}|\gamma({\bf m})&
=&\sum_{{\bf m}\not=\0}|{\bf m}|\beta({\bf m})+|{\bf n}|
\end{array}\right\},
\end{split}
\end{equation}
\begin{equation}\label{ao46}
\begin{split}
\lefteqn{(\partial_1)_\beta^\gamma\not=0}\\
&\Longrightarrow\quad
\left\{\begin{array}{ccc}
[\gamma]&=&[\beta]\quad\mbox{and}\\
\sum_{{\bf m}\not=\0}|{\bf m}|\gamma({\bf m})+|(1,0)|&
=&\sum_{{\bf m}\not=\0}|{\bf m}|\beta({\bf m})
\end{array}\right\}.
\end{split}
\end{equation}
\end{lemma}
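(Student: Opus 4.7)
The plan is to read each implication straight off the explicit matrix representations (\ref{ao24}), (\ref{ao41}), (\ref{ao42}), exploiting the fact that both functionals
\[
\gamma\mapsto[\gamma]=\sum_{k\ge 0}k\gamma(k)-\sum_{\m\not=\0}\gamma(\m),\qquad \gamma\mapsto \sum_{\m\not=\0}|\m|\gamma(\m)
\]
are $\Z$-linear on the free abelian group of multi-indices. Hence for any relation of the form $\beta=\gamma+\sum_i \epsilon_i e_{\alpha_i}$ with $\epsilon_i\in\{\pm 1\}$ and $\alpha_i\in\{k\ge 0\}\cup\{\n\not=\0\}$, the differences $[\beta]-[\gamma]$ and $\sum_{\m\not=\0}|\m|(\beta-\gamma)(\m)$ can be read off from the elementary values $[e_k]=k$, $[e_\n]=-1$, $\sum_{\m\not=\0}|\m|e_k(\m)=0$, $\sum_{\m\not=\0}|\m|e_\n(\m)=|\n|$.

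First I would dispatch the two easy cases. For $D^{(\0)}$, non-vanishing of $(D^{(\0)})_\beta^\gamma$ forces $\beta+e_k=\gamma+e_{k+1}$ for some $k\ge 0$ by (\ref{ao24}); additivity then yields $[\beta]-[\gamma]=(k{+}1)-k=1$, while the second functional is unchanged since neither $e_k$ nor $e_{k+1}$ carries weight at any $\m\not=\0$. This gives (\ref{ao40}). For $D^{(\n)}$ with $\n\not=\0$, (\ref{ao41}) forces $\beta=\gamma-e_\n$; then $[\beta]-[\gamma]=-[e_\n]=1$ and the second functional changes by $-|\n|$, which is (\ref{ao47}).

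The case of $\partial_1$ uses the two summands in (\ref{ao42}) separately. In the first, non-vanishing forces $\beta+e_k=\gamma+e_{k+1}+e_{(1,0)}$ for some $k\ge 0$, so $[\beta]-[\gamma]=(k{+}1)+(-1)-k=0$ and the second functional increases by $|(1,0)|$ (contributed by the single $\m\not=\0$ generator $e_{(1,0)}$). In the second, non-vanishing forces $\beta+e_\n=\gamma+e_{\n+(1,0)}$ for some $\n\not=\0$, so $[\beta]-[\gamma]=(-1)-(-1)=0$, and the second functional changes by $|\n+(1,0)|-|\n|=|(1,0)|$ by additivity of $|\cdot|$ on $\N_0^2$. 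Either way, (\ref{ao46}) holds.

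There is no real obstacle: the lemma is a direct bookkeeping exercise, and the only substantive input beyond the matrix formulas is the additivity of $[\cdot]$ and of $|\cdot|$. The same computation (with $(1,0)$ replaced by $(0,1)$) would give the analogous statement for $\partial_2$, which is why it is suppressed.
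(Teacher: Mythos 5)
Your proof is correct and takes essentially the same route as the paper: read off the constraints $\beta+e_k=\gamma+e_{k+1}$, $\beta=\gamma-e_\n$, and $\beta+e_k=\gamma+e_{k+1}+e_{(1,0)}$ (or $\beta+e_\n=\gamma+e_{\n+(1,0)}$) from the explicit matrix representations \eqref{ao24}, \eqref{ao41}, \eqref{ao42}, then evaluate both additive functionals \eqref{ao50} on the difference $\beta-\gamma$. Your version is slightly more systematic in spelling out the elementary values $[e_k]=k$, $[e_\n]=-1$, $\sum_{\m\neq\0}|\m|e_\n(\m)=|\n|$ and invoking additivity of $|\cdot|$ on $\N_0^2$, but the substance is identical.
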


Note that (\ref{ao40}) and (\ref{ao46}) are of similar character:
If the matrix element does not vanish, both functionals (\ref{ao50}) are ordered
and one of them is strictly ordered.
However, (\ref{ao47}) is of a different character, and we will get back to this in (\ref{ao55}).
\begin{proof}
Here comes the argument for (\ref{ao40}): From
(\ref{ao24}) we read off that $(D^{(\0)})^\gamma_\beta$ $\not=0$ implies $\gamma+e_{k+1}=\beta+e_k$
for some $k\ge0$, which by (\ref{ao50}) yields as desired $[\gamma]+1=[\beta]$ and
$\sum_{{\bf n}\not=\0}|{\bf n}|\gamma({\bf n})$ $=\sum_{{\bf n}\not=\0}|{\bf n}|\beta({\bf n})$.
Turning to (\ref{ao47}) we infer from (\ref{ao41}) that 
$(D^{({\bf n})})_\beta^\gamma\not=0$ if $\gamma=\beta+e_{\bf n}$ which implies $[\gamma]$
$=[\beta]-1$. For (\ref{ao46}) we look at (\ref{ao42}):
$(\partial_1)^\gamma_\beta$ $\not=0$ implies $\gamma+e_{k+1}+e_{(1,0)}$ $=\beta+e_k$
for some $k\ge0$ or $\gamma+e_{{\bf n}+(1,0)}$ $=\beta+e_{\bf n}$
for some ${\bf n}\not=\0$. In the first case and in the second case we have $[\gamma]$ $=[\beta]$
and $\sum_{{\bf n}\not=\0}|{\bf n}|\gamma({\bf n})+|(1,0)|$
$=\sum_{{\bf n}\not=\0}|{\bf n}|\beta({\bf n})$.
\end{proof}

\medskip

%%%%%%%%%%%%%%%%%%%%%%%%%%%%%%%%%%%%%%%%%%%%%%%%%%%%%%%%%%%%%%%%%%%%%%%%
\subsection{The abstract model space $\T$}\label{Sect3.4}
\mbox{}

Now is a good moment to introduce the model space\footnote{It differs from a standard model space in regularity structures, see the discussion in Subsection \ref{Sect.ext}.} $\gls{modelspace}$ and its dual $\T^*$. 
We define $\T^*\subset\R[[\z_k,\z_\n]]$ to be the direct product over the multi-indices $\gamma$ with 
\begin{equation}\label{newreference}
[\gamma]\geq0
\quad
\textnormal{or}
\quad
\gamma\in\{e_\n\}_{\n\neq \0}.
\end{equation} 
This restriction of $\mathbb{R}[[\z_k,\z_{\bf n}]]$ to $\mathsf{T}^*$ is motivated by the fact that the model component $\Pi_\gamma$ is only non-vanishing when $[\gamma]$ $\ge 0$ or $\gamma$ $\in\{e_{\bf n}\}_{{\bf n}\not=\0}$. This can be read off from \eqref{extra1}; 
the same holds for \eqref{drivenODE} and \eqref{SHE}, see \eqref{ckh03} and \eqref{gpam22}, respectively\footnote{ Actually, such statements may be shown inductively under some uniqueness assumption for the corresponding equation, which guarantees that the only solution to the homogeneous problem is $0$. This more analytic remark is outside the scope of this paper; we refer to \cite{LO,LOTT} for a full argument.}. We denote by $\mathsf{\tilde T}^*$ the subspace of elements of the dual space $\T^*$ that vanish on the space $\mathsf{\bar T}$ introduced in \eqref{ao106}; in particular $\mathsf{\tilde T}^*$ is the direct product over the multi-indices satisfying $[\gamma]\geq 0$.
Then $\T^*=\mathsf{\bar T} ^* \oplus \mathsf{\tilde T}^*$, where $\mathsf{\bar T} ^*$ is the direct product over the multi-indices $\gamma\in\{e_\n\}_{\n\neq \0}$. 
Thus we can identify the model space $\T$ with the direct
sum of the polynomial sector $\mathsf{\bar T}$ introduced in \eqref{ao106} and the space $\gls{nonpolynomialsector}$
spanned by all monomials $\gls{monomialT}$ with $[\gamma]\ge 0$. 
Since $[\gamma]\ge 0$ 
is closed under addition of multi-indices, and thus under multiplication, $\mathsf{\tilde T}^*$ is a sub-algebra.
We note that the derivations (\ref{ao37}) map $\mathsf{\tilde T}^*$ into
$\mathsf{\tilde T}^*$:
\begin{align}\label{ao48}
D\mathsf{\tilde T}^*\subset\mathsf{\tilde T}^*
\quad\mbox{for}\quad
D\in \{ D^{({\bf n})}\}_{{\bf n}} \cup \{\partial_i\}_i,
\end{align}
which on the level of the coordinate representation means
\begin{align}\label{ao49}
D_\beta^\gamma=0\quad\mbox{for}\quad[\beta]<0\quad\mbox{and}\quad[\gamma]\ge 0,
\end{align}
and follows from (\ref{ao40}), (\ref{ao47}), and (\ref{ao46}).

%%%%%%%%%%%%%%%%%%%%%%%%%%%%%%%%%%%%%%%%%%%%%%%%%%%%%%%%%%%%%%%%%%%%%%%%%%%%

\subsection{The infinitesimal generators of variable tilt $\{\z^\gamma D^{(\n)}\}_{\gamma,\n}$} \label{Sect3.8}
\mbox{}

After introducing the building blocks (\ref{ao37}), 
we now specify the full collection of derivations on 
$\mathbb{R}[[\z_k,\z_{\bf n}]]$ that will act as the basis of $\L$.
Again, we start with a motivation: The purpose of the structure group 
$\mathsf{G}\subset{\rm End}(\T)$, or
rather its pointwise dual $\mathsf{G}^*\subset{\rm End}(\mathsf{T}^*)$, 
is to provide the transformations of the $\mathsf{T}^*$-valued model $\Pi_x$ when passing
from one base-point $x$ to another, see \cite[Definition 3.3]{Hairer}. 
This re-centering involves
subtracting a Taylor polynomial. Denoting the coefficients of such a polynomial
by $\{\pi^{(\bf n)}\}_{\bf n}$, and treating the constant part (i.~e.~the part
with ${\bf n}=\0$) differently in line with
(\ref{ao28}) and (\ref{ao27bis}), this corresponds to the action (\ref{ao29bis}).

\medskip

In the inductive construction of the $\mathsf{\tilde T}^*$-valued centered model $\Pi_x$, 
the coefficients $\pi^{({\bf n})}$ depend on the $\mathsf{\tilde T}^*$-valued $\Pi_x$
itself, which for us means
$\pi^{({\bf n})}\in\mathsf{\tilde T}^*\subset\mathbb{R}[[\z_k,\z_{\bf n}]]$.
We pass from $\pi^{({\bf n})}\in\mathsf{\tilde T}^*$ to a finite linear combination\footnote{We
will free ourselves from this restriction later.}
of monomials $\z^{\gamma}$ with $[\gamma]\ge 0$.
Hence on an infinitesimal level, in view of the characterization (\ref{ao21}) of $D^{(\0)}$
and the definition (\ref{ao30}) of $D^{(\bf n)}$ for ${\bf n}\not=\0$,
transformations of the type (\ref{ao29bis}) give rise to the derivations
\begin{align}\label{ao32}
\gls{generatorVariableTilt} \quad\mbox{for}\quad[\gamma]\ge 0\;\;\mbox{and}\;\;
{\bf n}.
\end{align}
Since $\mathsf{\tilde T}^*$ is closed under multiplication, it follows that (\ref{ao48}) is preserved:
\begin{align}\label{ao61}
D\mathsf{\tilde T}^*\subset\mathsf{\tilde T}^*
\quad\mbox{for}\quad
D\in \{\z^{\gamma}D^{({\bf n})}\}_{[\gamma]\ge 0,{\bf n}}\cup \{\partial_i\}_{i}.
\end{align}
However, even for $[\gamma]\ge 0$, multiplication with $\z^{\gamma}$ does not map
$\mathsf{\bar T}^*$ into $\mathsf{T}^*$. Luckily, the composition
$\z^{\gamma}D^{({\bf n})}$ does; we have
\begin{align}\label{hk02}
D\mathsf{T}^*\subset\mathsf{T}^*
\quad\mbox{for}\quad
D\in\{\z^{\gamma}D^{({\bf n})}\}_{[\gamma]\ge 0,{\bf n}}\cup \{\partial_i\}_{i}.
\end{align}
Indeed, this is an immediate consequence of \eqref{ao35} and \eqref{ao30} together with \eqref{ao61}.

\medskip

On the level of the matrix representation,
\begin{align}\label{ao58}
(\z^{\gamma} D^{(\bf n)})_{\beta'}^{\gamma'}
=(D^{(\bf n)})_{\beta'-\gamma}^{\gamma'};
\end{align}
this implies that for all these operators, and not just for $\partial_1$ and $\partial_2$,
the finiteness property (\ref{ao23}) holds. 
Moreover, when passing from $D=D^{({\bf n})}$ to $D=\z^{\gamma} D^{(\bf n)}$,
(\ref{ao56}) is preserved so that (\ref{ao57}) can be upgraded to
\begin{align}\label{fw25}
D^\dagger\mathsf{\bar T}\subset\mathsf{\bar T}\quad\mbox{for}\quad
D\in\{\z^{\gamma}D^{({\bf n})}\}_{[\gamma]\ge 0,{\bf n}}\cup \{\partial_i\}_{i}.
\end{align}

%%%%%%%%%%%%%%%%%%%%%%%%%%%%%%%%%%%%%%%%%%%%%%%%%%
\subsection{A pre-Lie structure $\prelie$ and bigrading ${\rm bi}$}\label{Sect3.10}
\mbox{}

The Lie algebra ${\rm Der}(\mathbb{R}[[\z_k,\z_\n]])$ of derivations on the
algebra $\mathbb{R}[[\z_k,\z_\n]]$ can be seen as the space of vector fields
on the linear span of $\{\z_k,\z_\n\}$. Since $\mathbb{R}[[\z_k,\z_\n]]$ as an affine space is flat, 
the Lie bracket $[\cdot,\cdot]$ arises from the
pre-Lie product $\gls{preLieProduct}$ that is
given by the covariant derivative of one vector field along another vector field,
see e.~g.~\cite{Manchon}; the relation between the bracket and the product is given 
by $[D,D']=D\prelie D'-D'\prelie D$.
In case of our derivations we find for arbitrary $D\in{\rm Der}(\mathbb{R}[[\z_k,\z_\n]])$
\begin{align}\label{fs01}
D\prelie \z^\gamma D^{(\n)}=(D\z^\gamma) D^{(\n)}\quad\mbox{and}\quad
\z^\gamma D^{(\n)}\prelie\partial_1= n_1 \z^\gamma D^{(\n-(1,0))}
\end{align}
and an analogous formula with $\partial_1$ replaced by $\partial_2$.
However, $\partial_1\prelie\partial_1$ cannot be expressed in terms
of a linear combination of $\{\partial_i\}_i\cup\{\z^\gamma D^{(\n)}\}_{\gamma,\n}$,
so that the span of the latter is not closed under $\prelie$. 
Note that it is not possible to fix this by postulating $\partial_1\prelie\partial_1=0$, 
since then the (left) pre-Lie identity is not satisfied.\footnote{
A simple counterexample is 
$(\z^\gamma D^{(\n)}\prelie\partial_1)\prelie\partial_1-\z^\gamma D^{(\n)}\prelie(\partial_1\prelie\partial_1) = n_1(n_1-1)\z^\gamma D^{(\n-(2,0))}$, whereas $(\partial_1\prelie\z^\gamma D^{(\n)})\prelie\partial_1-\partial_1\prelie(\z^\gamma D^{(\n)}\prelie\partial_1) =  (\partial_1\z^\gamma)D^{(\n)}\prelie\partial_1-\partial_1\prelie n_1\z^\gamma D^{(\n-(1,0))} = 0$.}
Nevertheless, it follows from (\ref{fs01}) and (\ref{ao67}) that the span of
$\{\partial_i\}_i\cup\{\z^\gamma D^{(\n)}\}_{\gamma,\n}$ is closed under $[\cdot,\cdot]$,
which will be used in Subsection \ref{Sect3.7}.

\medskip

The presence of a pre-Lie structure connects to the pre-Lie algebras in rough paths \cite{BCFP19} and regularity structures \cite{BCCH19}. Indeed, as we shall see in Section \ref{Sect5} in the specific case of driven ODEs, $\prelie$ is related to the grafting pre-Lie product (up to combinatorial factors, see Subsection \ref{Sect5.4} for a detailed discussion).

\medskip

We now come to an important observation:
There is a bigrading\footnote{This is just a compact way of saying that there exist two gradings, which we put together in a two-component vector, cf. \eqref{fs02}.} on the index set 
$\{1,2\}\cup\{(\gamma,\n) \, | \, [\gamma]\ge 0,\n\not=\0\}$ of our (linearly independent)
family of derivations that is 
compatible with the pre-Lie product $\prelie$. 
Indeed, we associate a pair of integers to every index by the following map $\gls{bigrading}$:
\begin{equation}\label{fs02}
\begin{split}
&{\rm bi}(\gamma,\n):=\big(1+[\gamma],\sum_{\m\not=\0}|\m|\gamma(\m)-|\n|\big),\\
&{\rm bi}\, 1 := (0,|(1,0)|),\ \  {\rm bi}\, 2 := (0,|(0,1)|).
\end{split}
\end{equation}
By compatibility we mean that for any two elements $D,D'$ of our family, provided
not both are of the form $\partial_i$, the product $D\prelie D'$ is a linear combination 
of elements of our family that only correspond to indices such that their bigrading 
is the sum of the bigradings of the index for $D$ and for $D'$.
This is obvious for the second item in (\ref{fs01}).
Expanding the first item in (\ref{fs01}) as 
\begin{align*}%\label{fs03}
\z^{\gamma'} D^{(\n')}\prelie \z^{\gamma} D^{(\n)}
&=\sum_{\beta}(\z^{\gamma'}D^{(\n')})_\beta^\gamma \z^\beta D^{(\n)},\\
\partial_1\prelie \z^{\gamma} D^{(\n)}
&=\sum_{\beta}(\partial_1)_\beta^\gamma \z^\beta D^{(\n)}
\end{align*}
and appealing to definition (\ref{fs02}) we see that our claim amounts to
\begin{align}\label{fs04}
\lefteqn{(\z^{\gamma'}D^{({\bf n}')})_\beta^\gamma\not=0\quad
\underset{\eqref{ao58}}{\Longrightarrow}
\quad(D^{({\bf n}')})_{\beta-\gamma'}^\gamma\not=0}\nonumber\\
&\Longrightarrow\quad
\left\{\begin{array}{ccl}
[\beta]&=&[\gamma]+[\gamma']+1,\\
{\displaystyle\sum_{{\bf m}\not=\0}|{\bf m}|\beta({\bf m})}&=&
{\displaystyle\sum_{{\bf m}\not=\0}|{\bf m}|\gamma({\bf m})
+\sum_{{\bf m}\not=\0}|{\bf m}|\gamma'({\bf m})}-|{\bf n}'|
\end{array}\right\}
\end{align}
and to (\ref{ao46}).
The second part of this implication also follows from Lemma \ref{Lem3.1}.

\medskip

Bigraded spaces appear in the context of regularity structures in \cite{BHZ}. In the tree-based setting, one chooses a bigrading \cite[(2.4)]{BHZ} which encodes the size of the tree, on the one hand, and the decorations, on the other. The same guiding principle is present in \eqref{fs02}: the quantity $1+[\gamma]$ is the number of edges of the trees represented by the multi-index $\gamma$, whereas the second component is, roughly speaking, counting the polynomial decorations. We refer to Sections \ref{Sect5} and \ref{Sect6} for more details.

%%%%%%%%%%%%%%%%%%%%%%%%%%%%%%%%%%%%%%%%%%%%%%%%%%
\subsection{Homogeneities $\lhom\cdot\rhom\in\mathsf{A}$, and gradedness of $\T$}\label{Sect3.6}
\mbox{}

We now return to the strict triangular structure with respect to (\ref{ao50})
and in particular the deficiency of (\ref{ao47}). The choices we make now are guided by the application to the quasi-linear equation \eqref{as34} with a driver $\xi$ of regularity $\alpha - 2$. 
Inspired by \eqref{fs02} we choose an $\alpha>0$ and define the homogeneity of a multi-index $\gamma$ as
\begin{align}\label{ao52}
\gls{homogeneity} =\alpha([\gamma]+1)+\sum_{{\bf n}\not=\0}|{\bf n}|\gamma({\bf n}),
\end{align}
where the normalization $\lhom 0\rhom=\alpha$, which destroys additivity,
is made such that, in line with \cite[Assumption 3.20]{Hairer}, 
\begin{align}\label{ao59}
\lhom e_{\bf n}\rhom=|{\bf n}|\quad\mbox{for}\;{\bf n}\not=\0;
\end{align}
in particular, on the index set $\{e_{\bf n}\}_{{\bf n}\not=\0}$
of $\mathsf{\bar T}$ we have $\lhom e_\n \rhom\in\mathbb{N}$. On the index
set $\{\gamma \, | \, [\gamma]\ge 0\}$ of $\mathsf{\tilde T}$, we have that 
$\lhom\gamma\rhom\in\alpha\mathbb{N}+\mathbb{N}_0$. Hence 
$\gls{homogeneities} := \{\lhom\gamma\rhom \, | \, \z_\gamma\in\T \}$
satisfies the assumptions of \cite[Definition 3.1]{Hairer} of being bounded from below (namely by $\min\{\alpha,1\}$)
and locally finite.
Note that in our setting, the entire index set of $\T$ has positive homogeneity, and only corresponds to the ``integrated" part of the model; for a detailed connection to Hairer's set of homogeneities see Subsection \ref{Sect.ext}.

\medskip

Provided
the monomials that appear as multiplication operators in (\ref{ao32}) are constrained by
\begin{align}\label{ao60}
\z^{\gamma}D^{({\bf n})}\quad\mbox{for}\;[\gamma]\ge 0\;\mbox{and}\;
\lhom\gamma\rhom>|{\bf n}|,
\end{align}
as a consequence of combining \eqref{ao24}, \eqref{ao37} and \eqref{ao58}, the finiteness property \eqref{ao23} is uniform over
the entire collection $\{\z^{\gamma}D^{({\bf n})}\}_{[\gamma]\ge 0,{\bf n}}\cup \{\partial_i\}_{i}$:
\begin{align}\label{yourchoice1}
\{(\gamma',(\gamma,{\bf n}))\,|\,(\z^{\gamma}D^{({\bf n})})_{\beta'}^{\gamma'}\not=0\}
\;\;\mbox{is finite for all}\;\beta'.
\end{align}
This strengthening of \eqref{ao23} will be crucial when constructing $\Delta$.
Moreover, we have the following strict triangular structure:
\begin{lemma}
For $D\in\{\z^{\gamma}D^{({\bf n})}\}_{[\gamma]\ge 0,\lhom\gamma\rhom >|{\bf n}|}\cup \{\partial_i\}_{i}$ it holds
\begin{align}\label{ao55}
D_{\beta'}^{\gamma'}&\not=0\quad\implies\quad\lhom\gamma'\rhom<\lhom\beta'\rhom.
\end{align}
\end{lemma}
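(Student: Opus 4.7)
The plan is to treat the two families $\{\partial_i\}_i$ and $\{\z^{\gamma}D^{({\bf n})}\}$ separately, reducing each time to the information already encoded in Lemma \ref{Lem3.1} and in the matrix identity \eqref{ao58}. The central observation will be that the change in $\lhom\cdot\rhom$ under the action of a generator is a single, readable number: for $\partial_i$ it is $|(1,0)|$ or $|(0,1)|$, and for $\z^{\gamma}D^{({\bf n})}$ it is exactly $\lhom\gamma\rhom-|{\bf n}|$. In both situations the assumption that $|(1,0)|,|(0,1)|>0$ (respectively the constraint $\lhom\gamma\rhom>|{\bf n}|$ in \eqref{ao60}) makes this quantity strictly positive, which is exactly \eqref{ao55}.

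For the case $D=\partial_1$ (the case $\partial_2$ is identical), I would substitute the two conclusions of \eqref{ao46} into the definition \eqref{ao52} of homogeneity. Namely, $[\gamma']=[\beta']$ gives $\alpha([\gamma']+1)=\alpha([\beta']+1)$, while $\sum_{{\bf m}\neq \0}|{\bf m}|\gamma'({\bf m})+|(1,0)|=\sum_{{\bf m}\neq \0}|{\bf m}|\beta'({\bf m})$ contributes the difference. Adding these gives $\lhom\beta'\rhom=\lhom\gamma'\rhom+|(1,0)|$, and $|(1,0)|>0$ closes this case.

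For $D=\z^{\gamma}D^{({\bf n})}$, I would first use \eqref{ao58} to rewrite $D^{\gamma'}_{\beta'}=(D^{({\bf n})})^{\gamma'}_{\beta'-\gamma}$, so non-vanishing forces $(D^{({\bf n})})^{\gamma'}_{\beta'-\gamma}\neq 0$. Applying \eqref{ao40} if ${\bf n}={\bf 0}$ and \eqref{ao47} if ${\bf n}\neq\0$, and using that the functionals $[\cdot]$ and $\sum_{{\bf m}\neq\0}|{\bf m}|(\cdot)({\bf m})$ are additive in the multi-index (so that $[\beta'-\gamma]=[\beta']-[\gamma]$ and likewise for the second functional), I would obtain in both sub-cases the identities
\begin{equation*}
[\beta']=[\gamma']+[\gamma]+1,\qquad \sum_{{\bf m}\neq\0}|{\bf m}|\beta'({\bf m})=\sum_{{\bf m}\neq\0}|{\bf m}|\gamma'({\bf m})+\sum_{{\bf m}\neq\0}|{\bf m}|\gamma({\bf m})-|{\bf n}|,
\end{equation*}
where the term $|{\bf n}|$ is absent in the case ${\bf n}={\bf 0}$ (which is anyway compatible with $|{\bf 0}|=0$). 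Plugging into \eqref{ao52} collapses the expression to $\lhom\beta'\rhom=\lhom\gamma'\rhom+\lhom\gamma\rhom-|{\bf n}|$, and the constraint \eqref{ao60} gives $\lhom\gamma\rhom-|{\bf n}|>0$.

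I do not anticipate any real obstacle: the statement is essentially a bookkeeping consequence of Lemma \ref{Lem3.1} together with \eqref{ao58} and the constraint \eqref{ao60}. The only small care needed is in treating ${\bf n}={\bf 0}$ and ${\bf n}\neq{\bf 0}$ uniformly (so that both feed into the same formula $\lhom\beta'\rhom-\lhom\gamma'\rhom=\lhom\gamma\rhom-|{\bf n}|$), and in noting that positivity of $|(1,0)|,|(0,1)|$, together with the strict inequality in \eqref{ao60}, is precisely what upgrades the weak triangularity already visible in Lemma \ref{Lem3.1} to the strict one in \eqref{ao55}.
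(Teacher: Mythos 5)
Your proof is correct and follows essentially the same route as the paper: extract the implications of Lemma \ref{Lem3.1} via \eqref{ao58}, and feed them into the definition \eqref{ao52} of $\lhom\cdot\rhom$. The only difference is cosmetic -- you package both tilt cases into the single identity $\lhom\beta'\rhom=\lhom\gamma'\rhom+\lhom\gamma\rhom-|{\bf n}|$ (and then invoke $\lhom\gamma\rhom>|{\bf n}|$), whereas the paper treats ${\bf n}=\0$ and ${\bf n}\neq\0$ separately, using $[\gamma]\geq0$ directly in the former case.
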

\begin{proof}
For $D=\partial_1$ (and $\partial_2$), this follows immediately from (\ref{ao46}) by definition
(\ref{ao52}). 
We turn to $D=\z^{\gamma} D^{(\0)}$. From (\ref{ao40}) and (\ref{ao58}) we read off that
$(\z^{\gamma}D^{(\0)})_{\beta'}^{\gamma'}$ %$=(D^{(0)})_\beta^{\gamma+\bar\gamma}$
$\not=0$ implies $[\gamma'+\gamma]$ $<[\beta']$ and
$\sum_{{\bf n}\not=\0}|{\bf n}|(\gamma'+\gamma)({\bf n})$ 
$\le\sum_{{\bf n}\not=\0}|{\bf n}|\beta'({\bf n})$. The latter, due to $\alpha>0$, implies
$\alpha([\gamma'+\gamma]+1)+\sum_{{\bf n}\not=\0}|{\bf n}|(\gamma'+\gamma)({\bf n})$
$<\alpha([\beta']+1)+\sum_{{\bf n}\not=\0}|{\bf n}|\beta'({\bf n})$,
which because of $[\gamma]\ge 0$ in turn yields the desired
$\alpha([\gamma']+1)+\sum_{{\bf n}\not=\0}|{\bf n}|\gamma'({\bf n})$
$<\alpha([\beta']+1)+\sum_{{\bf n}\not=\0}|{\bf n}|\beta'({\bf n})$. By definition
(\ref{ao52}), this establishes
(\ref{ao55}) for $D=\z^{\gamma}D^{(\0)}$.
We now turn to $D=\z^{\gamma}D^{({\bf n})}$ with ${\bf n}\not=\0$ and note
that by \eqref{ao41} and \eqref{ao58} we have $D^{\gamma'}_{\beta'}\not=0$ only for
$\gamma'+\gamma$ $=\beta'+e_{{\bf n}}$, which by 
(\ref{ao52}) implies $\lhom\gamma'\rhom$ $+\lhom\gamma\rhom$
$=\lhom\beta'\rhom$ $+\lhom e_{{\bf n}}\rhom$. By (\ref{ao59}) and the condition
in (\ref{ao60}), this yields as desired $\lhom\gamma'\rhom$
$<\lhom\beta'\rhom$.
\end{proof}

\medskip

Property (\ref{ao55}) results in the following gradedness: For $\kappa\in\mathsf{A}$ let
$\mathsf{T}_\kappa\subset\mathsf{T}$ denote the subspace corresponding to the indices $\gamma$ with $\lhom\gamma\rhom=\kappa$; we obviously have 
\begin{equation*}
\mathsf{T}=\bigoplus_{\kappa\in\mathsf{A}}\mathsf{T}_\kappa,
\end{equation*} 
in line with \cite[Definition 3.1]{Hairer}. Then (\ref{ao55}) can be reformulated as 
\begin{align}\label{ao68}
D^\dagger\mathsf{T}_\kappa&\subset\bigoplus_{\kappa'<\kappa}\mathsf{T}_{\kappa'}
\nonumber\\
&\mbox{for}\;D\in\{\z^{\gamma}D^{({\bf n})}\}_{[\gamma]\ge 0,\lhom\gamma\rhom >|{\bf n}|}\cup \{\partial_i\}_{i},
\end{align}
with the implicit understanding that $\kappa,\kappa'\in\mathsf{A}$.
We note that because of the presence of the $\z_0$-variable,
and thus the $\gamma(k=0)$-component on which $\lhom\gamma\rhom$ is not coercive,
$\mathsf{T}_\kappa$ is not finite dimensional. 
However, in the practice of \eqref{as34}, 
this is of no concern since the model $\Pi(x) \in\mathbb{R}[[\z_k,\z_{\bf n}]]$,
which a priori is a formal power series, 
actually is analytic in $\z_0$, 
which plays the role of a constant coefficient in $\partial_2 - (1+\z_0)\partial_1^2$.

%%%%%%%%%%%%%%%%%%%%%%%%%%%%%%%%%%%%%%%%%%%%%%%%%%%%%%%%%%%%%%%%%%%%%%%%%%%%
\subsection{The Lie algebra $\mathsf{L}$}\label{Sect3.7}
\mbox{}

\begin{lemma}
The span of
\begin{equation}\label{ao74}
\{\z^{\gamma}D^{({\bf n})}\}_{[\gamma]\ge 0,\lhom\gamma\rhom >|{\bf n}|}\cup \{\partial_i\}_{i} ,
\end{equation}
as derivations
on $\mathbb{R}[[\z_k,\z_{\bf n}]]$, defines a bigraded Lie algebra $\gls{LieAlgebra}$.
\end{lemma}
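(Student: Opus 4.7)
The plan is to verify two things: (a) that the span is closed under the commutator bracket, and (b) that the bigrading ${\rm bi}$ from \eqref{fs02} descends to a grading compatible with $[\cdot,\cdot]$. Recall from Subsection \ref{Sect3.10} that although the span is not closed under $\triangleleft$ (because $\partial_i\triangleleft\partial_j$ escapes it), the two ``defective'' products $\partial_i\triangleleft\partial_j$ vanish after antisymmetrization by \eqref{ao67}, so closure under $[\cdot,\cdot]$ reduces to three separate cases.

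\smallskip

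\emph{Case 1: two shifts.} By \eqref{ao67}, $[\partial_i,\partial_j]=0$, which trivially lies in the span.

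\emph{Case 2: a shift against a tilt.} Using the pre-Lie formulas \eqref{fs01},
\begin{equation*}
[\partial_i,\z^{\gamma}D^{(\n)}] \;=\; (\partial_i\z^{\gamma})D^{(\n)} \;-\; n_i\,\z^{\gamma}D^{(\n-e_i)}.
\end{equation*}
By the matrix representation \eqref{ao42} and \eqref{ao58}, the first summand is a finite linear combination of terms $\z^{\beta}D^{(\n)}$; by the shift-case of Lemma \ref{Lem3.1} we have $[\beta]=[\gamma]\ge 0$ and $\lhom\beta\rhom=\lhom\gamma\rhom+|e_i|>|\n|$, so each summand belongs to the family \eqref{ao74}. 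The second summand lies in the family because $[\gamma]\ge 0$ is unchanged and additivity of $|\cdot|$ gives $\lhom\gamma\rhom>|\n|\geq|\n-e_i|$ whenever $n_i\ge 1$ (and the term vanishes otherwise).

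\emph{Case 3: two tilts.} Again by \eqref{fs01},
\begin{equation*}
[\z^{\gamma'}D^{(\n')},\z^{\gamma}D^{(\n)}] \;=\; (\z^{\gamma'}D^{(\n')}\z^{\gamma})D^{(\n)} \;-\; (\z^{\gamma}D^{(\n)}\z^{\gamma'})D^{(\n')}.
\end{equation*}
Both terms expand into finite sums of elements $\z^{\beta}D^{(\n)}$ and $\z^{\beta'}D^{(\n')}$. Using Lemma \ref{Lem3.1} together with \eqref{ao58}, one reads off that any surviving $\beta$ satisfies $[\beta]=[\gamma']+[\gamma]+1\ge 0$ and the identity
\begin{equation*}
\lhom\beta\rhom \;=\; \big(\lhom\gamma'\rhom-|\n'|\big) \;+\; \lhom\gamma\rhom \;>\; |\n|,
\end{equation*}
since by assumption $\lhom\gamma'\rhom>|\n'|$ and $\lhom\gamma\rhom>|\n|$; the analogous bound holds for $\beta'$ against $|\n'|$. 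Hence each summand belongs to the family \eqref{ao74}.

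\smallskip

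This establishes closure; the Jacobi identity is inherited from the ambient Lie algebra ${\rm Der}(\R[[\z_k,\z_\n]])$, and linear independence of the family \eqref{ao74} is immediate from the explicit matrix representations. Finally, the bigradedness is the content of the computation already done in Subsection \ref{Sect3.10}: the pre-Lie product $\triangleleft$ is compatible with ${\rm bi}$ wherever it is defined on our family, as shown by \eqref{fs04} together with its counterparts for the other cases of \eqref{fs01}; and the only missing pre-Lie products $\partial_i\triangleleft\partial_j$ are irrelevant since $[\partial_i,\partial_j]=0$. Since the commutator is obtained as a difference of pre-Lie products, ${\rm bi}$ descends to an $\N_0^2$-grading on $\L$ with $[\L_{a},\L_{b}]\subset\L_{a+b}$, completing the proof. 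The only step that requires genuine care is verifying that the homogeneity constraint $\lhom\cdot\rhom>|\cdot|$ survives in Case 3; this is precisely where the strict inequality in \eqref{ao60} (as opposed to $\ge$) is used.
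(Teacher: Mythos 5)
Your proof is correct and takes essentially the same route as the paper: establish closure by separately treating the three commutator cases (shift–shift, shift–tilt, tilt–tilt) using \eqref{fs01}, Lemma \ref{Lem3.1}, and the constraint \eqref{ao60}, and then deduce bigradedness from the compatibility of ${\rm bi}$ with $\triangleleft$ via \eqref{fs04}/\eqref{ao46} together with $[\partial_1,\partial_2]=0$. The only cosmetic difference is that you track the exact homogeneity identities from Lemma \ref{Lem3.1} explicitly, whereas the paper relies on the already-proved strict triangular inequality \eqref{ao55}; both routes are valid and equivalent in content.
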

\begin{proof}
We need to show that this sub-space of ${\rm Der}(\mathbb{R}[[\z_k,\z_{\bf n}]])$
is closed under taking the commutator $[D,D']$ $=D \prelie D'-D' \prelie D$, cf. Subsection \ref{Sect3.10}. To this purpose,
for any $D,D'$ $\in \{\z^{\gamma}D^{({\bf n})}\}_{[\gamma]\ge 0,\lhom\gamma\rhom >|{\bf n}|}\cup \{\partial_i\}_{i}$, we have to identify
$[D,D']$ as a linear combination of elements of this set. 

\medskip

We first note that by \eqref{ao67} we have $[\partial_1,\partial_2]=0$.
By \eqref{fs01}, written in its component-wise form, we obtain
\begin{align}\label{ao62}
\lefteqn{[\z^\gamma D^{({\bf n})},\z^{\gamma'}D^{({\bf n}')}]}\nonumber\\
&=\sum_{\beta'}(\z^\gamma D^{({\bf n})})_{\beta'}^{\gamma'}
\,\z^{\beta'} D^{({\bf n}')}
-\sum_{\beta}(\z^{\gamma'} D^{({\bf n}')})_{\beta}^{\gamma}
\,\z^{\beta}D^{({\bf n})},
\end{align}
where both sums are finite due to \eqref{ao64}.
By (\ref{ao49}), we learn that because of
$[\gamma],[\gamma']\ge 0$, the sums restrict to $[\beta],[\beta']\ge 0$.
Due to \eqref{ao55} they restrict to $\lhom\beta\rhom>\lhom\gamma\rhom$ and $\lhom\beta'\rhom>\lhom\gamma'\rhom$. 
Moreover, by assumption \eqref{ao60} we have
$\lhom\gamma\rhom>|{\bf n}|$ and $\lhom\gamma'\rhom>|{\bf n}'|$; hence as desired, the sums in \eqref{ao62} involve only multi-indices with $\lhom\beta'\rhom>|{\bf n}'|$ and 
$\lhom\beta\rhom>|{\bf n}|$.

\medskip

Finally, again by \eqref{fs01}, we have
\begin{align}\label{ao66}
[\z^\gamma D^{({\bf n})},\partial_1]=n_1\z^{\gamma}D^{({\bf n}-(1,0))}
-\sum_{\beta}(\partial_1)_{\beta}^\gamma\,\z^\beta D^{({\bf n})},
\end{align}
where (\ref{ao64}) again ensures the effective finiteness of the sum.
We note that (\ref{ao66}) has the desired form: The first r.~h.~s. term, which only is
present for $n_1\ge 1$, is admissible since obviously  
$\lhom\gamma\rhom$ $>|{\bf n}|$ $>|{\bf n}-(1,0)|$.
For the second r.~h.~s. term we note that 
by (\ref{ao49}) the sum is limited to $[\beta]\ge 0$, and by (\ref{ao55}) it is limited
to $\lhom\beta\rhom$ $>\lhom\gamma\rhom$ $>|{\bf n}|$.

\medskip

It remains to show that $\mathsf{L}$ is bigraded; this is clear from \eqref{ao46} and \eqref{fs04}, which show that the pre-Lie product (and thus the Lie bracket) is compatible with \eqref{fs02}, together with the commutation relation $[\partial_1,\partial_2]=0$.
\end{proof}

%%%%%%%%%%%%%%%%%%%%%%%%%%%%%%%%%%%%%%%%%%%%%%%%%%%%%%%%%%%%%%%%%%%%%%%%%%%%%%%%%%
%%%%%%%%%%%%%%%%%%%%%%%%%%%%%%%%%%%%%%%%%%%%%%%%%%%%%%%%%%%%%%%%%%%%%%%%%%%%%%%%%%

\section{The Hopf algebra structure}\label{Sect4}

\subsection{The universal enveloping algebra ${\rm U}(\mathsf{L})$, $\T^*$ as a module over ${\rm U}(\mathsf{L})$}\label{Sect4.1}
\mbox{}

We now adopt a more abstract point of view and consider the elements of the Lie algebra $\L$ as 
%the linear span of \eqref{ao74}, where we now think of 
mere symbols rather than endomorphisms, and we interpret \eqref{ao67}, \eqref{ao62} and \eqref{ao66} as a coordinate representation of the Lie bracket in terms of the basis \eqref{ao74}. %
We denote by $\gls{UniversalEnvelope}$ the corresponding universal enveloping
algebra \cite[p. 28]{Abe}, an algebra which is based on the tensor algebra formed by $\L$
and quotiented through the ideal generated by the relations defining the Lie bracket.
We may think of the tensor algebra as the direct sum indexed by words. 

\medskip

Due to the mapping properties \eqref{hk02}, the canonical Lie algebra morphism $\gls{action}: \L \to \End(\T^*)$, which replaces every abstract symbol $D\in\L$ with its corresponding endomorphism, is well defined, and as a consequence of Subsection \ref{Sect3.7}, $\rho$ is a Lie algebra morphism.
By the universality property \cite[(U), p.29]{Abe}, such $\rho$ extends in a unique way to an algebra morphism $\rho : {\rm U}(\L) \to \End (\T^*)$; 
in particular, concatenation of words turns into composition of endomorphisms. 
However, this representation is not faithful\footnote{
i.~e.~one-to-one: consider $\z^{2e_1}D^{(1,0)}\z^{e_2}D^{(1,0)}$ and $\z^{e_1}D^{(1,0)}\z^{e_1 + e_2}D^{(1,0)}$, which are different words in ${\rm U}(\mathsf{L})$, but the same as endomorphisms.}. 
In a canonical way, we may rewrite $\rho$ as a map ${\rm U}(\L)\otimes \T^* \to \T^*$, so that $\T^*$ as a linear space becomes a left module over ${\rm U}(\L)$.

\medskip

The universal enveloping algebra ${\rm U}(\L)$ is naturally a Hopf algebra, cf. \cite[Examples 2.5, 2.8]{Abe}; the product is given by the concatenation of words, whereas the coproduct is characterized by its action on the elements $D\in\dum{L}$ (which we call primitive elements), namely
\begin{equation}\label{cop0203}
\gls{cop} D = 1\otimes D + D \otimes 1,
\end{equation}
and in general by the compatibility with the product, meaning that for all $U,U'\in {\rm U}(\L)$
\begin{equation}\label{prodcop}
\cop U U'=(\cop U) \, (\cop U').
\end{equation}
%%%%%%%%%%%%%%%%%%%%%

\subsection{The derived algebra $\tilde{\mathsf{L}}$ and the pre-Lie structure $\prelie$ revisited}\label{SectAd}
\mbox{}

As mentioned in Subsection \ref{Sect3.10}, the Lie algebra $\mathsf{L}$ is not closed
under the pre-Lie product $\prelie$. However, the only failure, namely
$\partial_i\prelie\partial_{i'}\not\in\mathsf{L}$, turns out to be peripheral.
This follows from the fact that 
$[D,D']$ does not have a $\partial_i$-component, see (\ref{ao62}) and (\ref{ao66}).
In other words we have for the derived algebra $[\mathsf{L},\mathsf{L}]\subset\tilde{\mathsf{L}}$, 
where the Lie sub-algebra $\tilde{\mathsf{L}}\subset\mathsf{L}$ is defined as
\begin{equation}\label{p26}
\gls{derivedAlgebra}:={\rm span}\{\z^\gamma D^{(\n)}\}_{[\gamma]\ge 0,\lhom\gamma\rhom>|\n|}.
\end{equation}
Since $\tilde{\mathsf{L}}$ is also an ideal, the quotient Lie algebra $\mathsf{L}/\tilde{\mathsf{L}}$
is Abelian, see \cite[Lemma 1.2.5]{HGK10}, and thus is isomorphic to 
$\{\partial_1,\partial_2\}$.
Moreover, the
Lie algebra morphism $\mathsf{L}\rightarrow \mathsf{L}/\tilde{\mathsf{L}}\cong\{\partial_1,\partial_2\}$ 
induces
an algebra morphism ${\rm U}(\mathsf{L})\rightarrow {\rm U}(\mathsf{L}/\tilde{\mathsf{L}})$
$\cong\{\partial^{\m}\}_{\m\in\mathbb{N}_0^2}$,
see \cite[p. 29]{Abe}. This algebra morphism in turn induces the decomposition
\begin{align}\label{dec}
{\rm U}(\mathsf{L})=\bigoplus_{\m\in\mathbb{N}_0^2} {\rm U}_{\m}.
\end{align}
By definition, ${\rm U}_{\0}$ is canonically isomorphic to ${\rm U}(\tilde{\mathsf{L}})$.
Since $\tilde{\mathsf{L}}$ is closed under $\prelie$, the pre-Lie structure
provides a canonical isomorphism, as cocommutative coalgebras, 
between ${\rm U}(\tilde{\mathsf{L}})$ and the
symmetric tensor algebra ${\rm S}(\tilde{\mathsf{L}})$, see \cite[Theorem 2.12]{GuinOudom}.
Via the definition (\ref{fs07}), the pre-Lie structure 
$\prelie\colon\mathsf{L}\times\tilde{\mathsf{L}}\rightarrow\mathsf{L}$
provides a natural isomorphism between the linear spaces ${\rm U}_{\m}$ 
and ${\rm U}(\tilde{\mathsf{L}})$, as will become apparent in Subsection \ref{Sect4.2}.
These natural isomorphisms, of which we will make no explicit use, will guide our construction of a basis in Subsection \ref{Sect4.2}.

\medskip

We now will be more precise on how we salvage the pre-Lie structure $\prelie$.
We use $\prelie$ in terms of the product
\begin{align}\label{fs07}
\mathsf{L}\times\tilde{\mathsf{L}}\ni\;(D,\tilde{D})\mapsto D\tilde{D}-D\prelie \tilde{D}\;\in{\rm U}(\mathsf{L}).
\end{align}
Fixing the second factor $\tilde{D}=\z^\gamma D^{(\n)}$, we extend this product from $D\in\mathsf{L}$
to $U\in{\rm U}(\mathsf{L})$ 
\begin{align}\label{yc11}
{\rm U}(\mathsf{L})\ni\; U\mapsto \z^\gamma U D^{(\n)}\;\in{\rm U}(\mathsf{L}),
\end{align}
The map (\ref{yc11}) is inductively defined in the length of $U$ by anchoring
through $\z^{\gamma} 1 D^{({\bf n})}$ 
$=\z^{\gamma}D^{({\bf n})}$ and postulating
for any $D\in\mathsf{L}\subset{\rm U}(\mathsf{L})$
\begin{align}\label{yc12}
\z^{\gamma} D U D^{({\bf n})}
=D \z^{\gamma} U D^{({\bf n})}
-\sum_\beta D_\beta^{\gamma}\z^\beta U D^{({\bf n})}.
\end{align}
Let us comment on (\ref{yc12}): First of all, the identity (\ref{yc12}) is consistent with
the map $\rho\colon{\rm U}(\mathsf{L})\rightarrow{\rm End}(\mathsf{T}^*)$
in the sense of
\begin{align}\label{fw24}
\rho \mathsf{z}^\gamma U D^{({\bf n})}=\mathsf{z}^\gamma (\rho U) D^{({\bf n})},
\end{align}
since $D$ as an element of $\End(\T^*)$ is a derivation.
As an identity in ${\rm U}(\mathsf{L})$, it is to be read as follows:
On the l.~h.~s., we first multiply $U$ by $D$ via concatenation,
and then apply (\ref{yc11}). For the first r.~h.~s. term, we reverse this order.
The second r.~h.~s. term is a linear combination of several versions of (\ref{yc11}) 
(with $\gamma$ replaced by $\beta$); the coefficients are given by identifying $D\in\mathsf{L}$
with $D\in{\rm End}(\mathsf{T}^*)$, and \eqref{ao55} shows that $(\beta,{\bf n})$ is in the index set of $\mathsf{L}$.
Hence (\ref{yc12}) indeed provides an inductive definition of (\ref{yc11}).

\medskip

A first crucial observation is that the maps (\ref{yc11}) commute\footnote{which is at the basis of the canonical identification of ${\rm S}(\mathsf{L})$ with ${\rm U}(\mathsf{L})$ in \cite[Theorem 3.14]{GuinOudom}}:
\begin{lemma}\label{Lem4.1} It holds %For all $(\gamma,\n)$, $(\gamma',\n')$,
\begin{align}\label{com}
\z^{\gamma'}\z^{\gamma} U D^{({\bf n})}D^{({\bf n}')}
=\z^{\gamma}\z^{\gamma'} U D^{({\bf n}')}D^{({\bf n})}.
\end{align}
\end{lemma}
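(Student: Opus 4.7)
I would induct on the length of the word $U \in {\rm U}(\mathsf{L})$, using the shorthand $\Phi_{\gamma,\mathbf{n}}(\cdot) := \mathsf{z}^\gamma(\cdot)D^{(\mathbf{n})}$ for the map \eqref{yc11} defined recursively by \eqref{yc12}. The claim \eqref{com} then reads $\Phi_{\gamma',\mathbf{n}'}\Phi_{\gamma,\mathbf{n}}(U) = \Phi_{\gamma,\mathbf{n}}\Phi_{\gamma',\mathbf{n}'}(U)$, and by linearity it suffices to treat basis words.

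For the base case $U=\mathsf{1}$, I would apply \eqref{yc12} once to each side, using the single generator $\mathsf{z}^\gamma D^{(\mathbf{n})}\in\mathsf{L}$ as ``$D$'' on one side and $\mathsf{z}^{\gamma'}D^{(\mathbf{n}')}$ as ``$D$'' on the other, in both cases with ``$U$'' $=\mathsf{1}$. Subtracting the two resulting expansions, the identity reduces to
\[
[\mathsf{z}^\gamma D^{(\mathbf{n})},\,\mathsf{z}^{\gamma'}D^{(\mathbf{n}')}] = \sum_\beta (\mathsf{z}^\gamma D^{(\mathbf{n})})^{\gamma'}_\beta\,\mathsf{z}^\beta D^{(\mathbf{n}')} - \sum_\beta (\mathsf{z}^{\gamma'}D^{(\mathbf{n}')})^{\gamma}_\beta\,\mathsf{z}^\beta D^{(\mathbf{n})},
\]
which is precisely \eqref{ao62}, a defining Lie bracket relation in ${\rm U}(\mathsf{L})$.

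For the inductive step, I would decompose $U = DU_1$ with $D\in\mathsf{L}$ a single generator and $U_1$ of length one less. Iterating \eqref{yc12} twice --- first to pull $D$ out of the inner extension applied to $DU_1$, then to pull the same $D$ out of the outer extension applied to the result --- yields
\[
\Phi_{\gamma',\mathbf{n}'}\Phi_{\gamma,\mathbf{n}}(U) = D\,\Phi_{\gamma',\mathbf{n}'}\Phi_{\gamma,\mathbf{n}}(U_1) - \sum_\beta D^{\gamma'}_\beta\,\Phi_{\beta,\mathbf{n}'}\Phi_{\gamma,\mathbf{n}}(U_1) - \sum_\beta D^{\gamma}_\beta\,\Phi_{\gamma',\mathbf{n}'}\Phi_{\beta,\mathbf{n}}(U_1).
\]
An entirely parallel computation produces the analogous expansion of $\Phi_{\gamma,\mathbf{n}}\Phi_{\gamma',\mathbf{n}'}(U)$, obtained by swapping $(\gamma,\mathbf{n})\leftrightarrow(\gamma',\mathbf{n}')$. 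The induction hypothesis applied to $U_1$, with three different pairs of parameter choices, swaps each of the three $\Phi\circ\Phi(U_1)$ expressions; the two expansions then agree summand by summand.

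The only substantive content sits in the base case, where the defining relations of ${\rm U}(\mathsf{L})$ enter through \eqref{ao62}; the inductive step is essentially bookkeeping that \eqref{yc12} treats its two ``extension slots'' symmetrically. No finiteness issue arises, since all sums over $\beta$ are finite by \eqref{ao64}, and by \eqref{ao55} each $(\beta,\mathbf{n})$ appearing remains in the index set of $\mathsf{L}$ so that subsequent applications of $\Phi_{\beta,\mathbf{n}}$ are legitimate. I do not anticipate any hidden difficulty.
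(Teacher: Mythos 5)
Your proof is correct and follows the same route as the paper: induction on the length of $U$, with the base case $U=\mathsf{1}$ reduced to \eqref{ao62} by two applications of \eqref{yc12}, and the inductive step obtained by applying \eqref{yc12} twice to expand each side of \eqref{com} into the same three-term form and invoking the hypothesis on $U_1$ three times. Your additional remark that \eqref{ao55} keeps the indices $(\beta,\mathbf{n})$ inside the index set of $\mathsf{L}$ correctly records the well-definedness point the paper makes when introducing \eqref{yc12}.
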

\begin{proof}
We argue by induction. 
The base case of $U=1$ follows from using \eqref{yc12} twice (once for $D=\z^{\gamma'} D^{(\n')}$ and $U=1$) and connecting the outcomes via \eqref{ao62}. 
We now assume that \eqref{com} is satisfied for some $U$ and give ourselves an element $D\in \mathsf{L}$. Applying \eqref{yc12} twice, we obtain
\begin{align*}
\z^{\gamma'}\z^{\gamma}D U D^{(\n)}&D^{(\n')} = D\z^{\gamma'}\z^{\gamma} U D^{(\n)}D^{(\n')} \\
&\quad- \sum_{\beta'} D_{\beta'}^{\gamma'} \z^{\beta'}\z^{\gamma}U D^{(\n)}D^{(\n')}- \sum_{\beta} D_{\beta}^{\gamma}\z^{\gamma'}\z^{\beta}U D^{(\n)}D^{(\n')},
\end{align*}
and the analogous expression in case of $\z^{\gamma}\z^{\gamma'}D U D^{(\n')}D^{(\n)}$; by the induction hypothesis, both are equal.
\end{proof}

\medskip

A second crucial observation is that the maps (\ref{yc11}) commute with 
the coproduct on ${\rm U}(\mathsf{L})$ in the following sense: 

\begin{lemma}\label{lem4.2}
If \footnote{ Here and in the sequel we use Sweedler's notation, see e.~g.~\cite[p. 56]{Abe}.}
\begin{align}\label{copU01}
{\rm cop}\,U=\sum_{(U)}U_{(1)}\otimes U_{(2)}
\end{align}
then
\begin{align}\label{yc14}
{\rm cop}\,\z^\gamma U D^{({\bf n})}=\sum_{(U)}\big(\z^\gamma U_{(1)}D^{({\bf n})}
\otimes U_{(2)}+U_{(1)}\otimes \z^\gamma U_{(2)}D^{({\bf n})}\big).
\end{align}
\end{lemma}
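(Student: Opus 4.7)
I plan to prove \eqref{yc14} by induction on the word length of $U$ as a product of elements of $\mathsf{L}$, combining the defining relation \eqref{yc12} of the map \eqref{yc11} with the multiplicativity \eqref{prodcop} of $\cop$ and the primitivity \eqref{cop0203} of generators $D\in\mathsf{L}$. Both sides of \eqref{yc14} are linear in $U$, so it suffices to treat $U=1$ and $U=DV$ with $D\in\mathsf{L}$ and $V$ of strictly shorter length. For the base case $U=1$, the element $\z^\gamma D^{(\n)}$ lies in $\mathsf{L}$ (the admissibility $[\gamma]\geq 0$, $\lhom\gamma\rhom>|\n|$ is implicit in writing \eqref{yc11}) and is therefore primitive by \eqref{cop0203}, so \eqref{yc14} reduces to the identity $\cop \z^\gamma D^{(\n)} = \z^\gamma D^{(\n)}\otimes 1 + 1\otimes \z^\gamma D^{(\n)}$.

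\textbf{Inductive step.} Applying \eqref{yc12} to $U=DV$ and then $\cop$ yields
\[
\cop \z^\gamma DV D^{(\n)} = \cop(D)\,\cop\bigl(\z^\gamma V D^{(\n)}\bigr) - \sum_\beta D_\beta^\gamma\,\cop\bigl(\z^\beta V D^{(\n)}\bigr).
\]
Expanding $\cop D = 1\otimes D + D\otimes 1$ and inserting the induction hypothesis for both $\cop(\z^\gamma V D^{(\n)})$ and $\cop(\z^\beta V D^{(\n)})$ in terms of $\cop V = \sum_{(V)}V_{(1)}\otimes V_{(2)}$ produces six explicit families of tensors. On the other side, $\cop(DV) = \sum_{(V)}\bigl(V_{(1)}\otimes DV_{(2)} + DV_{(1)}\otimes V_{(2)}\bigr)$, so the target right-hand side of \eqref{yc14} splits into four families indexed by $(V)$.

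\textbf{Matching and main obstacle.} Two of the four target families, namely $\z^\gamma V_{(1)}D^{(\n)}\otimes DV_{(2)}$ and $DV_{(1)}\otimes \z^\gamma V_{(2)}D^{(\n)}$, appear verbatim among the six families just computed. For the remaining two, $\z^\gamma DV_{(1)}D^{(\n)}\otimes V_{(2)}$ and $V_{(1)}\otimes \z^\gamma DV_{(2)}D^{(\n)}$, I invoke \eqref{yc12} once more to rewrite each $\z^\gamma DV_{(i)}D^{(\n)}$ as $D\z^\gamma V_{(i)}D^{(\n)} - \sum_\beta D_\beta^\gamma \z^\beta V_{(i)}D^{(\n)}$. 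The leading $D\z^\gamma V_{(i)}D^{(\n)}$ pieces match exactly the contributions of $1\otimes D$ and $D\otimes 1$ acting on $\cop(\z^\gamma VD^{(\n)})$, while the correction sums in $\beta$ cancel precisely the subtracted term $-\sum_\beta D_\beta^\gamma\cop(\z^\beta VD^{(\n)})$ by the induction hypothesis. The main obstacle is thus purely combinatorial bookkeeping: verifying that the six resulting terms on each side pair up without duplication. No deeper issue arises, since Lemma \ref{Lem4.1} has already ensured that \eqref{yc11} is well-defined independently of the chosen presentation $U=DV$, so the inductive rewriting is unambiguous.
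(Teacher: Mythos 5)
Your proof is correct and follows essentially the same inductive route as the paper: anchor at $U=1$ via primitivity \eqref{cop0203}, apply \eqref{yc12} and multiplicativity \eqref{prodcop} of $\cop$, feed in the induction hypothesis, and recombine the correction terms back into \eqref{yc12}. One small inaccuracy: Lemma~\ref{Lem4.1} establishes commutativity of the maps \eqref{yc11}, not well-definedness of \eqref{yc11} under the presentation $U=DV$ — the latter is built into the inductive definition \eqref{yc12} itself and is not what you need to invoke here, but this side remark does not affect the validity of the argument.
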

\begin{proof}
Once more we argue by induction. The base case of $U=1$ is included in \eqref{cop0203} and our definition $\z^\gamma 1 D^{(\n)} = \z^\gamma D^{(\n)}$. 
We now assume that \eqref{yc14} is satisfied for some $U$ and give ourselves an element $D\in\mathsf{L}$. 
Using the inductive definition (\ref{yc12})
\begin{align*}
\lefteqn{\cop\mathsf{z}^\gamma DUD^{({\bf n})}
=\cop D\mathsf{z}^\gamma UD^{({\bf n})}-\sum_{\beta}D_\beta^\gamma \,
\cop \mathsf{z}^\beta UD^{({\bf n})}}\\
&\underset{\eqref{cop0203},\eqref{prodcop}}{=}
(1\otimes D+D\otimes 1)\cop \mathsf{z}^\gamma UD^{({\bf n})}
%+D\otimes {\rm cop}\mathsf{z}^\gamma UD^{({\bf n})}
-\sum_{\beta}D_\beta^\gamma \, \cop \mathsf{z}^\beta UD^{({\bf n})},
\end{align*}
we may feed in the induction hypothesis, leading to
\begin{align*}
\lefteqn{\cop \mathsf{z}^\gamma DUD^{({\bf n})}}\nonumber\\
&=\sum_{(U)}\Big(\mathsf{z}^\gamma U_{(1)}D^{({\bf n})}\otimes DU_{(2)}
+U_{(1)}\otimes D\mathsf{z}^\gamma U_{(2)}D^{({\bf n})}\nonumber\\
&\mbox{}\qquad+D \mathsf{z}^\gamma U_{(1)}D^{({\bf n})}\otimes U_{(2)}
+D U_{(1)}\otimes \mathsf{z}^\gamma U_{(2)}D^{({\bf n})}\nonumber\\
&\mbox{}\qquad-\sum_{\beta}D_\beta^\gamma\big(\mathsf{z}^\beta U_{(1)}D^{({\bf n})}\otimes U_{(2)}
+U_{(1)}\otimes \mathsf{z}^\beta U_{(2)}D^{({\bf n})}\big)\Big),
\end{align*}
which by the inductive definition (\ref{yc12}) compactifies to

\begin{align*}
\cop \mathsf{z}^\gamma DUD^{({\bf n})}
&=\sum_{(U)}\Big(\mathsf{z}^\gamma U_{(1)}D^{({\bf n})}\otimes DU_{(2)}
+U_{(1)}\otimes \mathsf{z}^\gamma D U_{(2)}D^{({\bf n})}\nonumber\\
&+\mathsf{z}^\gamma D U_{(1)}D^{({\bf n})}\otimes U_{(2)}
+D U_{(1)}\otimes \mathsf{z}^\gamma U_{(2)}D^{({\bf n})}\Big).
\end{align*}
Since by \eqref{cop0203} and \eqref{prodcop}
\begin{equation}\label{copU02}
\cop DU = \sum_{(U)} (DU_{(1)}\otimes U_{(2)} + U_{(1)}\otimes D U_{(2)}),
\end{equation}
the proof is complete.
\end{proof}

\medskip

A third crucial observation is that the maps \eqref{yc11} connect product and coproduct in the following sense: 
\begin{lemma}
	Under the assumption \eqref{copU01},
\begin{equation}\label{yc15}
U \z^\gamma D^{(\n)}
=
\sum_{(U),\beta} (U_{(1)})_\beta^\gamma \, \z^\beta U_{(2)} D^{(\n)}.
\end{equation}
\end{lemma}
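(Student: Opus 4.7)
My plan is to prove \eqref{yc15} by induction on the length of $U$ in $\mathrm{U}(\mathsf{L})$, in parallel with the strategy that worked for Lemmas \ref{Lem4.1} and \ref{lem4.2}. The base case $U=\mathsf{1}$ is immediate: the counit-like relation $\operatorname{cop}\mathsf{1}=\mathsf{1}\otimes\mathsf{1}$ gives $(\mathsf{1})_\beta^\gamma=\delta_\beta^\gamma$, and by the anchoring $\z^\gamma\mathsf{1}D^{(\n)}=\z^\gamma D^{(\n)}$ both sides of \eqref{yc15} reduce to $\z^\gamma D^{(\n)}$.

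For the inductive step, I assume \eqref{yc15} for $U$ and deduce it for $DU$ with $D\in\mathsf{L}$. Multiplying the induction hypothesis on the left by $D$ in $\mathrm{U}(\mathsf{L})$ and invoking the inductive definition \eqref{yc12} in the rearranged form
\begin{equation*}
D\z^\beta U_{(2)}D^{(\n)}=\z^\beta DU_{(2)}D^{(\n)}+\sum_{\beta'}D_{\beta'}^{\beta}\z^{\beta'}U_{(2)}D^{(\n)},
\end{equation*}
I obtain
\begin{equation*}
DU\,\z^\gamma D^{(\n)}=\sum_{(U),\beta}(U_{(1)})_\beta^\gamma\,\z^\beta DU_{(2)}D^{(\n)}+\sum_{(U),\beta,\beta'}D_{\beta'}^{\beta}(U_{(1)})_\beta^\gamma\,\z^{\beta'}U_{(2)}D^{(\n)}.
\end{equation*}
All sums are effectively finite: the Sweedler sum is finite by definition of the coproduct on $\mathrm{U}(\mathsf{L})$, while the $\beta,\beta'$ sums are finite by the matrix-entry finiteness \eqref{ao64} together with \eqref{ao55}.

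The key identification is that the projection $\rho\colon\mathrm{U}(\mathsf{L})\to\End(\mathsf{T}^*)$ is an algebra morphism, so the matrix entries of a concatenation obey $(DU_{(1)})_\beta^\gamma=\sum_{\beta'}D_\beta^{\beta'}(U_{(1)})_{\beta'}^\gamma$; relabelling the summation indices in the second term above then converts it into $\sum_{(U),\beta}(DU_{(1)})_\beta^\gamma\,\z^\beta U_{(2)}D^{(\n)}$. Combining with the first term yields exactly
\begin{equation*}
\sum_{(U),\beta}\bigl[(DU_{(1)})_\beta^\gamma\,\z^\beta U_{(2)}D^{(\n)}+(U_{(1)})_\beta^\gamma\,\z^\beta DU_{(2)}D^{(\n)}\bigr],
\end{equation*}
which by the identity $\operatorname{cop}(DU)=\sum_{(U)}(DU_{(1)}\otimes U_{(2)}+U_{(1)}\otimes DU_{(2)})$ recorded in \eqref{copU02} is precisely the right-hand side of \eqref{yc15} for $DU$.

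I expect the main subtlety to be purely bookkeeping: checking that ``matrix entries'' of a word $U\in\mathrm{U}(\mathsf{L})$ are interpreted via $\rho$, and that the algebra-morphism property of $\rho$ is what reassembles the second term into $(DU_{(1)})_\beta^\gamma$. There is no genuine obstacle beyond this, since \eqref{yc12} is tailor-made for commuting a primitive $D$ past the $\z^\gamma$-prefactor, and \eqref{copU02} is the standard compatibility of the coproduct with products of primitive elements used analogously in Lemma \ref{lem4.2}.
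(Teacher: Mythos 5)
Your proof is correct and follows essentially the same route as the paper's: induction on the length of $U$, with the inductive step rearranging \eqref{yc12} to move the primitive $D$ past the $\z^\beta$-prefactor, then recombining via the matrix-multiplication identity $(DU_{(1)})_{\beta}^\gamma=\sum_{\beta'}D_{\beta}^{\beta'}(U_{(1)})_{\beta'}^\gamma$ and \eqref{copU02}. The additional remarks on finiteness and the algebra-morphism property of $\rho$ are accurate but just make explicit what the paper leaves implicit.
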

\begin{proof}
Again we argue by induction. For $U=1$ the identity follows by noting that $\cop 1  = 1\otimes 1$. Assume now that \eqref{yc15} is satisfied for some $U$, then for $D\in \L$ by the induction hypothesis
\begin{align*}
D U \z^\gamma D^{(\n)} =
D \sum_{(U),\beta} (U_{(1)})_\beta^\gamma \, \z^\beta U_{(2)} D^{(\n)} ,
\end{align*}
which by \eqref{yc12} leads to
\begin{align*}
D U \z^\gamma D^{(\n)}=
\sum_{(U),\beta} (U_{(1)})_\beta^\gamma \Big( 
\z^\beta D U_{(2)} D^{(\n)} + \sum_{\beta'} D_{\beta'}^\beta \z^{\beta'} U_{(2)} D^{(\n)} \Big) .
%= \sum_{(U),\beta} (U_{(1)})_\beta^\gamma \, \z^\beta D U_{(2)} D^{(\n)} +
%\sum_{(U),\beta} (D U_{(1)})_\beta^\gamma \, \z^\beta U_{(2)} D^{(\n)}.
\end{align*}
Using $\sum_{\beta} D_{\beta'}^\beta \, (U_{(1)})_\beta^\gamma = (D U_{(1)})_{\beta'}^\gamma$ together with \eqref{copU02} finishes the proof.
\end{proof}

\medskip

A final observation is an intertwining of the maps \eqref{yc11} with $\partial_1$:
\begin{lemma}
It holds
\begin{equation}\label{yc16}
\z^\gamma U D^{(\n)}\partial_1 = \z^\gamma U\partial_1 D^{(\n)} + n_1 \z^\gamma U D^{(\n-(1,0))},
\end{equation}
and an analogous statement holds for $\partial_1$ replaced by $\partial_2$.
\end{lemma}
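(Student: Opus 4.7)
The plan is to prove \eqref{yc16} by induction on the length of $U$, mirroring the inductive definition \eqref{yc12} of the map \eqref{yc11}. The identity is linear in $U$, so one may assume $U$ is a pure concatenation of elements of $\mathsf{L}$.

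For the base case $U=\mathsf{1}$, the claim reduces to
\[
\z^\gamma D^{(\n)}\partial_1 \;=\; \z^\gamma \partial_1 D^{(\n)} + n_1\, \z^\gamma D^{(\n-(1,0))}
\]
in $\mathrm{U}(\mathsf{L})$. Applying the inductive definition \eqref{yc12} to the middle term, with $D=\partial_1$ and $U=\mathsf{1}$, yields
\[
\z^\gamma \partial_1 D^{(\n)} \;=\; \partial_1 \z^\gamma D^{(\n)} \;-\; \sum_\beta (\partial_1)_\beta^\gamma\, \z^\beta D^{(\n)},
\]
so what needs to be shown is
\[
\z^\gamma D^{(\n)}\partial_1 - \partial_1 \z^\gamma D^{(\n)} \;=\; n_1 \z^\gamma D^{(\n-(1,0))} - \sum_\beta (\partial_1)_\beta^\gamma\, \z^\beta D^{(\n)}.
\]
This is exactly the bracket computation \eqref{ao66}, once the Lie bracket is interpreted as the commutator in $\mathrm{U}(\mathsf{L})$.

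For the inductive step, fix $D\in\mathsf{L}$ and suppose \eqref{yc16} is known for $U$; the goal is to establish it for $DU$. Using \eqref{yc12} to expand the left-hand side and then applying the induction hypothesis to each of the resulting pieces $\z^\gamma U D^{(\n)}\partial_1$ and $\z^\beta U D^{(\n)}\partial_1$, I would arrive at
\begin{align*}
\z^\gamma DU D^{(\n)}\partial_1 \;&=\; D\,\z^\gamma U\partial_1 D^{(\n)} \;-\; \sum_\beta D_\beta^\gamma\, \z^\beta U\partial_1 D^{(\n)} \\
&\quad +\, n_1\Bigl(D\,\z^\gamma U D^{(\n-(1,0))} \;-\; \sum_\beta D_\beta^\gamma\, \z^\beta U D^{(\n-(1,0))}\Bigr).
\end{align*}
Reading \eqref{yc12} backwards — once with $U$ replaced by $U\partial_1$, once with $\n$ replaced by $\n-(1,0)$ — the two bracketed expressions collapse respectively to $\z^\gamma DU\partial_1 D^{(\n)}$ and $\z^\gamma DU D^{(\n-(1,0))}$, producing the desired identity. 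The analogous statement for $\partial_2$ is obtained by the same argument with $(1,0)$ replaced by $(0,1)$.

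There is no real obstacle here: once the base case is recognized as a restatement of \eqref{ao66}, the induction is a direct book-keeping exercise, with the inductive definition \eqref{yc12} being used in both directions. The only mild care concerns the degenerate case $n_1=0$, where $\n-(1,0)$ would not be a valid multi-index; this is harmless since the coefficient $n_1$ kills the term.
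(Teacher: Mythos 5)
Your proof is correct and follows essentially the same route as the paper's: the base case is the Lie-bracket relation \eqref{ao66} combined with \eqref{yc12}, and the inductive step is exactly the paper's double application of \eqref{yc12} sandwiching the induction hypothesis. Your observation that the $n_1=0$ degeneracy is harmless is a sensible remark the paper leaves implicit.
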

\begin{proof}
We argue by induction. If $U=1$, 
\begin{align*}
\z^\gamma  D^{(\n)} \partial_1 &\underset{\eqref{ao66}}{=}\partial_1 \z^\gamma D^{(\n )} + n_1 \z^\gamma D^{(\n-(1,0))} - \sum_\beta (\partial_1)_\beta^\gamma \z^\beta D^{(\n)}\\
&\underset{\eqref{yc12}}{=} \z^\gamma \partial_1 D^{(\n)} + n_1 \z^\gamma D^{(\n-(1,0))}.\\
\end{align*}
We now assume that \eqref{yc16} is true for a given $U$ and aim to prove it for $DU$, where $D\in\mathsf{L}$. Then by \eqref{yc12}
\begin{align*}
\z^\gamma DU D^{(\n)}\partial_1 = D\z^\gamma U D^{(\n)}\partial_1 - \sum_\beta D_\beta^\gamma \z^\beta U D^{(\n)}\partial_1,
\end{align*}
and appealing to the induction hypothesis yields
\begin{align*}
\z^\gamma DU D^{(\n)}\partial_1 = D\z^\gamma U \partial_1 D^{(\n)} + n_1 D\z^\gamma U D^{(\n-(1,0))} \\
- \sum_\beta D_\beta^\gamma \z^\beta U \partial_1 D^{(\n)} - n_1\sum_\beta D_\beta^\gamma \z^\beta U D^{(\n-(1,0))},
\end{align*}
which by a second application of \eqref{yc12} takes the desired form.
\end{proof}

\medskip

\subsection{The choice of basis $\{\D_{(J,\m)}\}_{(J,\m)}$}\label{Sect4.2}
\mbox{}

We now define a basis in ${\rm U}(\mathsf{L})$ based on the structure derived from the pre-Lie structure in the previous Subsection \ref{SectAd}.
Applying iteratively the map \eqref{yc11} to an element $\partial^\m$, and making use of Lemma \ref{Lem4.1} we may define
\begin{align}\label{ao75}
\gls{basisU}:=
\frac{1}{J!{\bf m}!}
\prod_{(\gamma,{\bf n})} (\z^{\gamma})^{J(\gamma,{\bf n})} \,
\partial_1^{m_1}
\partial_2^{m_2}
\prod_{(\gamma,{\bf n})} (D^{({\bf n})})^{J(\gamma,{\bf n})}\in {\rm U}(\mathsf{L}).
\end{align}
Here $J$ denotes a multi-index on tuples $(\gamma,\n)$ with $[\gamma]\geq 0$ and $\lhom\gamma\rhom > |\n|$; we set $J!:= \prod_{(\gamma,\n)} J(\gamma,\n)!$, so that the normalization constant $J!\m!$ may be seen as the multi-index factorial $(J,\m)!$.
This normalization is chosen such that the basis representation
of the coproduct is standard, see \eqref{cop01} below. 
The collateral damage of this normalization is that the
basis representation of (\ref{yc11}) acquires a combinatorial factor:
\begin{align}\label{combfac}
\z^\gamma \D_{(J,\m)}D^{(\n)}=(J(\gamma,\n)+1)\D_{(J+e_{(\gamma,\n)},\m)}.
\end{align}
For $J\equiv0$,
(\ref{ao75}) reduces to the standard basis $\{\frac{1}{\m!}\partial^\m\}_{\m\in\mathbb{N}_0^2}$
for the coalgebra of differential operators, characterized as dual to the standard basis
$\{x^\m\}_{\m\in\mathbb{N}_0^2}$ of the algebra $\mathbb{R}[x_1,x_2]$ under the pairing of
\cite[Example 2.2]{HairerKelly}.
The concatenation of these elements leads as well to a combinatorial factor:
\begin{equation}\label{proddel}
D_{(0,\m')}D_{(0,\m'')} = \tbinom{\m'+\m''}{\m'}D_{(0,\m'+\m'')}.
\end{equation}
For later purpose, let us define the length of $(J,\m)$ by \[\gls{lengthJm} := \sum_{(\gamma,\n)} J(\gamma,\n) + m_1 + m_2.\]
\begin{lemma}
The set $\{\D_{(J,{\bf m})}\}_{(J,{\bf m})}$ is a basis of ${\rm U}(\L)$.
\end{lemma}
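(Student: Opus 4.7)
The plan is to apply the Poincar\'e-Birkhoff-Witt (PBW) theorem to the standard length filtration on ${\rm U}(\L)$, and show that the symbols of $\{\D_{(J,\m)}\}$ in the associated graded ${\rm S}(\L)$ coincide, up to the normalization $\tfrac{1}{J!\,\m!}$, with the standard monomial basis of ${\rm S}(\L)$ built on the basis $\{\z^\gamma D^{(\n)}\}_{[\gamma]\ge 0,\lhom\gamma\rhom>|\n|}\cup\{\partial_1,\partial_2\}$ of $\L$. Since the lift from the associated graded to the filtered space is automatic, the proof reduces to this symbol computation.

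Concretely, equip ${\rm U}(\L)$ with the filtration $\{F_n{\rm U}(\L)\}_n$ in which $F_n{\rm U}(\L)$ is spanned by concatenations of at most $n$ elements of $\L$; then PBW gives ${\rm gr}\,{\rm U}(\L)\cong{\rm S}(\L)$ as graded vector spaces. The key intermediate lemma is that the operation \eqref{yc11}, namely $U\mapsto \z^\gamma U D^{(\n)}$, raises the filtration degree by exactly $1$ and acts on the associated graded by multiplication with $\z^\gamma D^{(\n)}\in\tilde{\L}\subset\L\subset{\rm S}(\L)$. This is proved by induction on the length of $U$: writing $U=DV$ for $D\in\L$ and $V\in F_{n-1}{\rm U}(\L)$, identity \eqref{yc12} reads
\begin{equation*}
\z^\gamma D V D^{(\n)} \;=\; D\,\z^\gamma V D^{(\n)} \;-\; \sum_\beta D_\beta^\gamma\,\z^\beta V D^{(\n)},
\end{equation*}
and the inductive hypothesis turns each correction $\z^\beta V D^{(\n)}$ into an element of $F_n{\rm U}(\L)$, hence absorbs the correction, while the main term $D\,\z^\gamma V D^{(\n)}$ has length $n+1$ and symbol $D\cdot[V]\cdot\z^\gamma D^{(\n)}=[U]\cdot\z^\gamma D^{(\n)}$.

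Chaining this intermediate result with \eqref{combfac} and the base case $\D_{(0,\m)}=\tfrac{1}{\m!}\partial_1^{m_1}\partial_2^{m_2}$, a secondary induction on $|J|:=\sum_{(\gamma,\n)}J(\gamma,\n)$ produces
\begin{equation*}
[\D_{(J,\m)}] \;=\; \frac{1}{J!\,\m!}\,\prod_{(\gamma,\n)} (\z^\gamma D^{(\n)})^{J(\gamma,\n)}\,\partial_1^{m_1}\partial_2^{m_2}\quad\text{in }{\rm S}^{|(J,\m)|}(\L),
\end{equation*}
so the symbols form, up to non-zero scalars, the standard monomial basis of ${\rm S}(\L)$ attached to the basis of $\L$ displayed above. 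Consequently $\{[\D_{(J,\m)}]\}$ is a basis of ${\rm gr}\,{\rm U}(\L)$, and by the standard graded-to-filtered lift $\{\D_{(J,\m)}\}$ is itself a basis of ${\rm U}(\L)$. The main obstacle is the intermediate lemma on the symbol of the operation \eqref{yc11}; its content is that the correction term produced by \eqref{yc12} is strictly shorter than the leading term, and once this is verified everything else is routine bookkeeping from the definitions \eqref{ao75} and \eqref{combfac}.
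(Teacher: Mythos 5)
Your proof is correct and rests on the same pillar as the paper's, namely the Poincar\'e--Birkhoff--Witt theorem together with an induction on length driven by \eqref{yc12}. The difference is one of packaging. The paper invokes PBW in its ``ordered monomial basis'' form: it fixes an auxiliary order $\prec$ on the index set, defines the ordered products $\B_{(J,\m)}$ of \eqref{bas01}, and asserts (by iterating \eqref{yc12}) that $\D_{(J,\m)}=\B_{(J,\m)}+\text{span}\{\B_{(J',\m')}\,:\,|(J',\m')|<|(J,\m)|\}$, from which the basis property follows by unitriangularity. You instead use PBW in its graded form $\operatorname{gr}{\rm U}(\L)\cong{\rm S}(\L)$ and compute the symbol of $\D_{(J,\m)}$ directly: the intermediate lemma that $U\mapsto\z^\gamma U D^{(\n)}$ sends $F_n$ to $F_{n+1}$ and descends to multiplication by $\z^\gamma D^{(\n)}$ on the associated graded is exactly the content that the paper compresses into ``applying \eqref{yc12} iteratively, one can show the representation\dots'', and your inductive proof of it is sound (the correction $\sum_\beta D_\beta^\gamma\,\z^\beta V D^{(\n)}$ lands in $F_n$ by the inductive hypothesis, so only the leading term $D\,\z^\gamma V D^{(\n)}$ contributes to the degree-$(n+1)$ symbol). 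The graded-to-filtered lift then gives the result. What your version buys is the avoidance of the arbitrary order $\prec$ --- which is pleasant, and in fact in the spirit of the remark the paper itself makes right after this lemma about the order-independence of the $\D$-basis; what the paper's version buys is a concrete intermediate basis to compare against, which some readers find more explicit. Mathematically the two are equivalent.
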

\begin{proof}
 As a consequence of the Poincaré-Birkhoff-Witt Theorem, cf. \cite[Theorem 1.9.6]{HGK10}, after a choice of an order $\prec$ on the set of pairs $(\gamma,\n)$, the set of elements of the form
\begin{align}\label{bas01}
\B_{(J,\m)}&:=\frac{1}{J!\m!}\partial_1^{m_1}\partial_2^{m_2} \z^{\gamma_1}D^{(\n_1)}\cdots \z^{\gamma_k}D^{(\n_k)},\\
&\quad\textnormal{where } J=e_{(\gamma_1,\n_1)}+...+e_{(\gamma_k,\n_k)} \nonumber\\
&\quad\textnormal{and }(\gamma_1,\n_1)\preceq...\preceq (\gamma_k,\n_k)\nonumber
\end{align}
is a basis of ${\rm U}(\mathsf{L})$. Applying \eqref{yc12} iteratively, one can show the representation 
\begin{equation*}
\D_{(J,\m)} = \B_{(J,\m)} + \mbox{span} \{B_{(J',\m')} \, \big| \, |(J',\m')|<|(J,\m)| \}.
%\sum_{|(J',\m')|<|(J,\m)|} \mathcal{R}_{(J,\m)}^{(J',\m')}\B_{(J',\m')}
\end{equation*}
Since $\{B_{(J,\m)}\}_{(J,\m)}$ is a basis, it is easy to deduce from this identity that also $\{D_{(J,\m)}\}_{(J,\m)}$ is a basis.
\end{proof}

\medskip

The advantage of the basis \eqref{ao75} over a Poincaré-Birkhoff-Witt basis of the form \eqref{bas01} is that the former does not rely on the choice of an order in $\mathsf{L}$, cf. \eqref{com}, whereas the latter crucially does. 
The only choice to be made is the order of the three symbols: having first the $\z^\gamma$'s, then the $\partial$'s and last the $D^{(\n)}$'s generates the only basis for which the analogue of \cite[(4.14)]{Hairer}, namely \eqref{Hai4.14}, is true.
In addition, with the basis \eqref{ao75} we obtain the most direct identification of our group elements as exponentials of shift and tilt parameters, cf. Proposition \ref{exp01}.

\medskip

The coproduct has the following simple structure in the basis \eqref{ao75}, which is reminiscent of the Hopf algebra of constant-coefficient differential operators over the algebra of smooth functions, cf. \cite{Brouder}.
\begin{lemma}\label{lemcop}
	It holds
	\begin{equation}\label{cop01}
	{\rm cop}\,\D_{(J,{\bf m})}=\sum_{(J',{\bf m'})+(J'',{\bf m''})
		=( J,{\bf m})} \hspace{-6ex} \D_{(J',{\bf m'})}\otimes\D_{(J'',{\bf m''})}.
	\end{equation}
\end{lemma}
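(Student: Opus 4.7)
The plan is to induct on the length $|J|=\sum_{(\gamma,\n)}J(\gamma,\n)$, with $\m$ a free parameter; all the structural lemmas already in place (in particular \eqref{combfac}, \eqref{yc14}, and \eqref{cop0203}) have been set up to make this induction work.

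For the base case $J=0$, the element $\D_{(0,\m)}=\frac{1}{\m!}\partial_1^{m_1}\partial_2^{m_2}$ is a divided power of commuting primitive elements: $\partial_1,\partial_2$ are primitive by \eqref{cop0203} and commute by \eqref{ao67}. Since $\cop$ is an algebra morphism \eqref{prodcop} and $1\otimes\partial_i$ commutes with $\partial_j\otimes 1$ in ${\rm U}(\mathsf{L})\otimes{\rm U}(\mathsf{L})$, the binomial theorem gives
\[
\cop(\partial_1^{m_1}\partial_2^{m_2})=\sum_{\m'+\m''=\m}\tbinom{\m}{\m'}\partial^{\m'}\otimes\partial^{\m''},
\]
and dividing by $\m!$, using $\tbinom{\m}{\m'}/\m!=1/(\m'!\,\m''!)$, yields the claim for $J=0$.

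For the inductive step, assume the formula holds for all $(J,\m)$ with $|J|=N$, and fix $J_0$ with $|J_0|=N+1$. Pick any $(\gamma,\n)$ with $J_0(\gamma,\n)\geq 1$ and set $J:=J_0-e_{(\gamma,\n)}$. By \eqref{combfac} we have $\D_{(J_0,\m)}=\frac{1}{J(\gamma,\n)+1}\z^\gamma\D_{(J,\m)}D^{(\n)}$. Applying $\cop$ and using \eqref{yc14} with $U=\D_{(J,\m)}$ together with the inductive hypothesis, followed by \eqref{combfac} applied to each summand, we obtain
\begin{align*}
(J(\gamma,\n)+1)\cop\D_{(J_0,\m)}&=\sum_{(J',\m')+(J'',\m'')=(J,\m)}\Bigl((J'(\gamma,\n)+1)\D_{(J'+e_{(\gamma,\n)},\m')}\otimes\D_{(J'',\m'')}\\
&\qquad\qquad\qquad\quad+(J''(\gamma,\n)+1)\D_{(J',\m')}\otimes\D_{(J''+e_{(\gamma,\n)},\m'')}\Bigr).
\end{align*}
It remains to reindex: a target pair $(\tilde J',\tilde\m')+(\tilde J'',\tilde\m'')=(J_0,\m)$ arises from the first family precisely when $\tilde J'(\gamma,\n)\geq 1$, contributing coefficient $\tilde J'(\gamma,\n)$, and from the second family precisely when $\tilde J''(\gamma,\n)\geq 1$, contributing coefficient $\tilde J''(\gamma,\n)$. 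Since $\tilde J'(\gamma,\n)+\tilde J''(\gamma,\n)=J_0(\gamma,\n)=J(\gamma,\n)+1$, the aggregate coefficient of $\D_{(\tilde J',\tilde\m')}\otimes\D_{(\tilde J'',\tilde\m'')}$ is always $J(\gamma,\n)+1$ (the boundary cases where one of $\tilde J'(\gamma,\n),\tilde J''(\gamma,\n)$ vanishes are handled by a single family). Dividing through by $J(\gamma,\n)+1$ gives \eqref{cop01} for $J_0$.

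The main obstacle is this last combinatorial bookkeeping: verifying that after using \eqref{combfac} to extract the raising maps $\z^\gamma(\cdot)D^{(\n)}$, the two asymmetric contributions recombine into the symmetric sum over decompositions of $J_0$. This is the divided-power identity that motivates the normalization $1/(J!\m!)$ built into \eqref{ao75} and is the reason this particular basis, rather than a Poincar\'e--Birkhoff--Witt basis, yields the clean coproduct \eqref{cop01}.
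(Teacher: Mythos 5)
Your proof is correct, and the inductive step for $J\neq 0$ is essentially the paper's own argument (use \eqref{combfac} to write $(J(\gamma,\n)+1)\D_{(J_0,\m)}=\z^\gamma\D_{(J,\m)}D^{(\n)}$, apply $\cop$ via \eqref{yc14}, feed in the inductive hypothesis, and re-index). The one genuine difference is the choice of induction variable and, consequently, the handling of the $J=0$ case. The paper inducts on the full length $|(J,\m)|$, so the $J=0$, $\m\neq\0$ case sits \emph{inside} the inductive step and is peeled apart one $\partial_i$ at a time via \eqref{proddel}, with the re-summation delegated to Lemma \ref{lemsum01}. You instead induct only on $|J|$, which lets you dispose of $J=0$ in a single clean base case: since $\partial_1,\partial_2$ are commuting primitives, $\cop$ is an algebra morphism, and the binomial theorem applies directly, the divided-power formula drops out. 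This is slightly slicker and makes the role of the normalization $1/\m!$ transparent. You also carry out the final re-indexing explicitly rather than appealing to the paper's Lemma \ref{lemsum01}; the combinatorial content is the same, and your inline argument (total coefficient $\tilde J'(\gamma,\n)+\tilde J''(\gamma,\n)=J(\gamma,\n)+1$, with the boundary cases absorbed because a vanishing coefficient exactly matches a missing term) is correct and arguably easier to follow than a separate lemma. Both routes are sound; yours buys a cleaner base case at the cost of unpacking one auxiliary lemma.
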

\begin{proof}
We proceed by induction in the length $|(J,\m)|$. 
The base case $|(J,\m)|$ $=0$ reduces to the trivial $\cop 1 = 1\otimes 1$. 
For the induction step, we give ourselves an element $D_{(J,\m)}$ and distinguish two cases. 
If $J=0$, we assume without loss of generality $m_1\neq 0$ and with help of \eqref{proddel} we write $m_1 D_{(0,\m)} = D_{(0,\m-(1,0))}D_{(0,(1,0))}$ in order to access the induction hypothesis. 
Then by \eqref{cop0203}, \eqref{prodcop} and the induction hypothesis,
\begin{equation*}
\begin{split}
&\cop D_{(0,\m-(1,0))}D_{(0,(1,0))} \\
&\quad \,\,= \sum_{\m' + \m'' = \m - (1,0)} \hspace{-5ex} D_{(0,\m')}D_{(0,(1,0))}\otimes D_{(0,\m'')} + D_{(0,\m')}\otimes D_{(0,\m'')}D_{(0,(1,0))}\\
&\quad\underset{\eqref{proddel}}{=}  \sum_{\m' + \m'' = \m - (1,0)} \hspace{-5ex} m_1' D_{(0,\m'+(1,0))}\otimes D_{(0,\m'')} + m_1'' D_{(0,\m')}\otimes D_{(0,\m''+(1,0))}.
\end{split}
\end{equation*}
By \eqref{ao75} and Lemma \ref{lemsum01}, this equals $m_1 \sum_{\m' + \m'' = \m} D_{(0,\m')}\otimes D_{(0,\m'')}$, which proves \eqref{cop01} in the case $J=0$. 

\medskip

We now address the case $J\neq 0$. We take a pair $(\gamma,\n)$ such that $J(\gamma,\n)\neq 0$ and use \eqref{combfac} to write \[(J(\gamma,\n) + 1)D_{(J,\m)} = \z^\gamma D_{(J-e_{(\gamma,\n)},\m)}D^{(\n)}.\] Then by \eqref{yc14} and the induction hypothesis,
\begin{equation*}
\begin{split}
&\cop \z^\gamma D_{(J-e_{(\gamma,\n)},\m)}D^{(\n)} \\
&\quad\,\, =\hspace{-5ex} \sum_{(J',\m') + (J'',\m'') = (J-e_{(\gamma,\n)},\m)}\hspace{-8.5ex} \z^\gamma D_{(J',\m')}D^{(\n)}\otimes D_{(J'',\m'')} + D_{(J',\m')}\otimes \z^\gamma D_{(J'',\m'')}D^{(\n)}\\
&\quad\underset{\eqref{combfac}}{=} \hspace{-4ex} \sum_{(J',\m') + (J'',\m'') = (J-e_{(\gamma,\n)},\m)}\hspace{-8.5ex} (J'(\gamma,\n) + 1) D_{(J'+e_{(\gamma,\n)},\m')}\otimes D_{(J'',\m'')} \\
&\quad\quad\quad\quad\quad\quad\quad+ (J''(\gamma,\n) + 1) D_{(J',\m')}\otimes D_{(J''+e_{(\gamma,\n)},\m'')}.
\end{split}
\end{equation*}
By \eqref{ao75} and Lemma \ref{lemsum01}, this as desired reduces to 
\begin{equation*}
(J(\gamma,\n)+1)\hspace{-3ex} \sum_{(J',\m')+(J'',\m'') = (J,\m)}\hspace{-6ex} D_{(J'\m')}\otimes D_{(J'',\m'')}.\qedhere
\end{equation*}
\end{proof}

\medskip

Recall that by construction, the concatenation product on ${\rm U}(\mathsf{L})$
is an abstract lifting of the composition product on ${\rm End}(\mathsf{T}^*)$,
as can be seen by applying $\rho$.
The upcoming lemma is a projection of
this fact onto $\mathsf{\tilde L}$, see (\ref{p26}). More precisely, we resolve
$\mathsf{\tilde L}$ in terms of ${\bf n}$ by introducing the family of maps
$\gls{iotan}\colon{\rm U}(\mathsf{L})\rightarrow\mathsf{T}^*$,
determined by how it acts on elements of the basis (\ref{ao75}):
\begin{align}\label{io01}
\iota_{\bf n}D_{(J,{\bf m})}=\left\{\begin{array}{cl}
\mathsf{z}^\beta&\mbox{if}\;(J,{\bf m})=(e_{(\beta,{\bf n})},\0)\\
0&\mbox{otherwise}
\end{array}\right\} .
\end{align}

\medskip

As it turns out, the next lemma involves generalized counits\footnote{
$\varepsilon_\0$ is the plain counit of ${\rm U}(\mathsf{L})$}
$\gls{counitn} \colon
{\rm U}(\mathsf{L})\rightarrow\mathbb{R}$ defined on the basis through
\begin{align}\label{seemynotesonp38}
\varepsilon_{\bf n} D_{(J,{\bf m})}=\left\{\begin{array}{cl}
1&\mbox{if}\;(J,{\bf m})=(0,{\bf n})\\
0&\mbox{otherwise}
\end{array}\right\} .
\end{align}

\begin{lemma}
It holds as an identity in $\T^*$
	\begin{equation}\label{io02}
	\iota_\n U_1 U_2 = (\rho U_1)\, \iota_\n U_2 + \sum_\m \tbinom{\n+\m}{\m} (\varepsilon_\m  U_2)\iota_{\n+\m}U_1.
	\end{equation}
\end{lemma}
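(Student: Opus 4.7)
The plan is to argue by induction on the length $|U_2|$, where by bilinearity we may take $U_1, U_2$ as basis elements $D_{(J_i,\m_i)}$. The base case $U_2 = \mathsf{1}$ is immediate: $\iota_\n\mathsf{1} = 0$ and $\varepsilon_\m.\mathsf{1} = \delta_{\m,\mathbf{0}}$, so both sides collapse to $\iota_\n U_1$.

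The crux of the induction lies in verifying two length-one base cases corresponding to the generators of $\mathsf{L}$. \textit{Case (a)}: $U_2 = \z^\gamma D^{(\n')}$. Here $\iota_\n U_2 = \delta_{\n,\n'}\z^\gamma$ and $\varepsilon_\m . U_2 = 0$ for every $\m$, so \eqref{io02} reduces to the identity $\iota_\n(U_1\z^\gamma D^{(\n')}) = \delta_{\n,\n'}(\rho U_1)\z^\gamma$. I would expand $U_1\z^\gamma D^{(\n')} = \sum_{(U_1),\beta}(U_{1,(1)})^\gamma_\beta\z^\beta U_{1,(2)}D^{(\n')}$ using \eqref{yc15}. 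Combining \eqref{combfac} with the definition \eqref{io01} shows that each summand $\z^\beta U_{1,(2)}D^{(\n')}$ contributes to $\iota_\n$ precisely when $\n'=\n$ and $U_{1,(2)}$ is scalar, giving $(\varepsilon_0 . U_{1,(2)})\z^\beta$; the counit axiom $\sum_{(U_1)}U_{1,(1)}\varepsilon_0(U_{1,(2)}) = U_1$ then collapses the sum to $\delta_{\n,\n'}(\rho U_1)\z^\gamma$. \textit{Case (b)}: $U_2 = \partial_i$. Now $\iota_\n U_2 = 0$ and $\varepsilon_\m . U_2 = \delta_{\m,e_i}$, so \eqref{io02} reduces to $\iota_\n(U_1\partial_i) = (n_i+1)\iota_{\n+e_i}U_1$, which I would establish by repeated use of \eqref{yc16} on the basis expansion of $U_1$, observing via \eqref{combfac} that only the components of $U_1$ of the form $c\,\z^\beta D^{(\n+e_i)}$ produce a surviving $\iota_\n$-contribution, with the combinatorial multiplicity $n_i+1$.

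For the inductive step proper, write $U_2 = V\tilde U_2$ with $V\in\mathsf{L}$ a length-one generator and $|\tilde U_2|<|U_2|$. By associativity $U_1 U_2 = (U_1 V)\tilde U_2$, the inductive hypothesis applied to $(U_1 V,\tilde U_2)$ gives
\begin{equation*}
\iota_\n(U_1 U_2) = (\rho U_1)(\rho V)\iota_\n\tilde U_2 + \sum_\m\binom{\n+\m}{\m}(\varepsilon_\m.\tilde U_2)\iota_{\n+\m}(U_1 V).
\end{equation*}
Substituting Cases (a)/(b) for $\iota_{\n+\m}(U_1 V)$ and reassembling the identity for $V$ itself (which absorbs $(\rho V)\iota_\n\tilde U_2 + \sum_\m\binom{\n+\m}{\m}(\varepsilon_\m.\tilde U_2)\iota_{\n+\m}V$ into $(\rho U_1)\iota_\n(V\tilde U_2)$), the remaining work reduces to the Vandermonde-type identity
\begin{equation*}
\binom{\n+\m}{\m}\binom{\n+\m+\m'}{\m'} = \binom{\n+\m+\m'}{\m+\m'}\binom{\m+\m'}{\m}
\end{equation*}
combined with the product rule
\begin{equation*}
\varepsilon_{\m''}(V\tilde U_2) = \sum_{\m+\m'=\m''}\tbinom{\m''}{\m}(\varepsilon_{\m'}.V)(\varepsilon_\m.\tilde U_2),
\end{equation*}
which follows directly from \eqref{proddel}: only pure-$\partial$ parts of the factors can combine to give a pure-$\partial$ part of the product, and their combination is precisely governed by \eqref{proddel}.

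The main obstacle will be the combinatorial bookkeeping in Case (b): iterating \eqref{yc16} across a general basis element of $U_1$ generates a cascade of shifted $D^{(\n_\ell-e_i)}$ contributions, and one must carefully use \eqref{combfac} to identify exactly which of them produce a surviving $\iota_\n$-component and with what multiplicity. Once Case (b) is in place, the matching of the binomial weights in the general inductive step is a purely algebraic verification.
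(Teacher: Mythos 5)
Your proof is correct in spirit and takes a genuinely different route from the paper. You factor $U_2$ from the left as a word, $U_2 = V\tilde U_2$ with $V\in\mathsf{L}$, appeal to associativity and the inductive hypothesis applied to $(U_1V,\tilde U_2)$, and close the induction by a Chu--Vandermonde recombination together with the coproduct rule $\varepsilon_{\m''}(V\tilde U_2)=\sum_{\m+\m'=\m''}\binom{\m''}{\m}(\varepsilon_{\m'}.V)(\varepsilon_\m.\tilde U_2)$. The paper instead inducts on the length of the basis element $D_{(J_2,\m_2)}$, distinguishing $J_2=0$ from $J_2\neq 0$: the first is reduced to \eqref{io07} (your Case (b)), and the second is handled by peeling off a factor $\z^{\gamma'}(\cdot)D^{(\n')}$, applying \eqref{yc15} with \eqref{cop01}, and exhibiting a cancelation between \eqref{io03} and \eqref{io04}. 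The price of the paper's approach is that it stays entirely inside the basis \eqref{ao75} and needs the somewhat delicate cancelation argument; the payoff of yours is that the mixed case dissolves into two clean length-one base cases plus a purely combinatorial recombination. Both proofs rely on \eqref{yc15}, \eqref{yc16}, and the counit, and both hit the same hard spot: showing $\iota_\n(U_1\partial_i)=(n_i+1)\iota_{\n+e_i}U_1$, which the paper isolates as \eqref{io07} and proves by a nested induction using \eqref{yc16}. Your Case (a) is, in effect, a streamlined version of the paper's $J_2\neq0$ argument: the fact that only the $\varepsilon_\0$-component of $U_{1,(2)}$ survives the $\iota_\n$ is precisely \eqref{det02}, and the counit axiom then replaces the paper's manual cancelation.

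Two points deserve care before this would compile into a full proof. First, you leave Case (b) as a sketch, correctly diagnosing it as the main combinatorial obstacle; the paper proves it as \eqref{io07} by splitting on $J=0$ versus $J\neq 0$ and applying \eqref{yc16} plus \eqref{det02}. The route you indicate is the right one, but it is a non-trivial sub-lemma, not a corollary of \eqref{combfac}. Second, the assertion that $\varepsilon_{\m''}(V\tilde U_2)=\sum_{\m+\m'=\m''}\binom{\m''}{\m}(\varepsilon_{\m'}.V)(\varepsilon_\m.\tilde U_2)$ ``follows directly from \eqref{proddel}'' is imprecise. Identity \eqref{proddel} only gives the correct binomial coefficient once one already knows that both factors must themselves be of the form $D_{(0,\cdot)}$ for the product to have a non-vanishing $D_{(0,\m'')}$-component. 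That vanishing statement is an additional fact; it follows from the bigradation of Lemma \ref{lemhom} (the first component of ${\rm bi}(J,\m)$ is positive as soon as $J\neq 0$, and is additive under concatenation), or equivalently from \eqref{prod05} extended by multiplicativity of $\Delta^+$. Your final induction therefore leans on Lemma \ref{lemhom} as a hidden ingredient, whereas the paper's proof does not need it at this step.
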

This identity should be seen as the dual of the forthcoming intertwining relation of $\Delta^+$ and $\Delta$ via $\mathcal{J}_\n$, cf. \eqref{Hai4.14}.

\begin{proof}
	By linearity of $\iota_\n$, it is enough to show \eqref{io02} for $U_1$ and $U_2$ in the set of basis elements \eqref{ao75}, namely $U_1 = D_{(J_1,\m_1)}$ and $U_2 = D_{(J_2,\m_2)}$. Moreover, we assume $(J_1,\m_1) \neq (0,\0)$; otherwise the statement is trivial, since $\rho 1 = {\rm id}$ and $\iota_\n 1 = 0$, cf. \eqref{io01}.
	
	\medskip
	
	We argue by induction in the length $|(J_2,\m_2)|$; the base case amounts to $U_2 = 1$, which is immediate because of $\iota_{\bf n} 1 =0$ and $\varepsilon_\m  1 = \delta_\0^\m$. For the induction step, we distinguish the cases $J_2=0$ and $J_2\neq 0$. The former implies $U_2=D_{(0,\m_2)}$, so that, by $\iota_\n D_{(0,\m_2)} = 0$ and $\varepsilon_{\m} D_{(0,\m_2)} = \delta_\m^{\m_2}$, \eqref{io02} assumes the form
	\begin{equation}\label{*p32}
	\iota_\n D_{(J_1,\m_1)} D_{(0,\m_2)} = \tbinom{\n+\m_2}{\m_2}\iota_{\n+\m_2}D_{(J_1,\m_1)}.
	\end{equation}
	We assume without loss of generality $\m_2 \geq (1,0) $. Recalling that $D_{(0,\m)} = \frac{1}{\m!}\partial^\m$, cf. \eqref{ao75}, we rewrite the l.~h.~s.~of \eqref{*p32} as
	\begin{align}\label{io08}
	\iota_\n D_{(J_1,\m_1)} D_{(0,\m_2)} = \tfrac{1}{(\m_2)_1} \iota_\n D_{(J_1,\m_1)} D_{(0,\m_2-(1,0))} \partial_1.
	\end{align}
	We will now use the following identity, which holds for all $U\in {\rm U}(\mathsf{L})$:
	\begin{equation}\label{io07}
	\iota_\n U\partial_1 = (n_1 + 1)\iota_{\n+(1,0)} U.
	\end{equation}
	To prove \eqref{io07}, by linearity it is enough to take $U$ of the form \eqref{ao75}. If $J=0$, then \eqref{io07} is clear since $\iota_\n D_{(0,\m)}\partial_1 = (m_1 +1)\iota_\n D_{(0,\m + (1,0))}$, cf. \eqref{proddel}, and that vanishes by construction \eqref{io01}. If $J\neq 0$, we choose a pair $(\gamma',\n')$ such that $U = \frac{1}{J(\gamma',\n')+1} \z^{\gamma'} D_{(J,\m)} D^{(\n')}$ and write, with help of \eqref{yc16},
	\begin{equation}\label{det01}
	\begin{split}
	\iota_\n \z^{\gamma'} D_{(J,\m)} D^{(\n')} \partial_1=\; &\iota_\n \z^{\gamma'} D_{(J,\m)} \partial_1 D^{(\n')} \\
	&+ n'_1 \iota_\n \z^{\gamma'} D_{(J,\m)} D^{(\n' - (1,0))}. 
	\end{split} 
	\end{equation}
	We now note that \eqref{io01} implies the following:
	\begin{equation}\label{det02}
	\iota_\n \z^\gamma D_{(J,\m)} D^{(\n)} \neq 0\; \implies (J,\m)=0.
	\end{equation}
	Then by \eqref{det02} the first r.~h.~s.~contribution of \eqref{det01} is always vanishing, since terms coming from $D_{(J,\m)} \partial_1$ have strictly positive length, cf. \eqref{yc16}. The second r.~h.~s.~contribution of \eqref{det01} yields the output of \eqref{io07}.
	
	\medskip
	
	We now combine \eqref{io08}, \eqref{io07} and the induction hypothesis, yielding \eqref{*p32} in form of
	\begin{equation*}
	\begin{split}
	\iota_\n D_{(J_1,\m_1)} D_{(0,\m_2)} &= \tfrac{n_1+1}{(\m_2)_1}\iota_{\n+(1,0)} D_{(J_1,\m_1)} D_{(0,\m_2-(1,0))}\\ 
	&= \tfrac{n_1+1}{(\m_2)_1} \tbinom{\n+\m_2}{\m_2-(1,0)}\iota_{\n+\m_2}D_{(J_1,\m_1)};
	\end{split}
	\end{equation*} 
	this concludes the proof for $J_2 = 0$.
	
	\medskip
	
	In the case of $J_2\neq 0$, and thus $\varepsilon_\m D_{(J_2,\m_2)} =0$, once more we choose a pair $(\gamma',\n')$ such that $J_2(\gamma',\n')\neq 0$ and use \eqref{ao75} to write $U_2 = \tfrac{1}{J'(\gamma',\n')+1} \z^{\gamma'} D_{(J'_2,\m'_2)} D^{(\n')}$. By \eqref{yc15} applied to $U=D_{(J'_2,\m'_2)}$, combined with \eqref{cop01}, we have
	\begin{align*}
	&\iota_\n D_{(J_1,\m_1)} \z^{\gamma'}D_{(J'_2,\m'_2)} D^{(\n')} \\
	&\, = \iota_\n D_{(J_1,\m_1)} \bigg(D_{(J'_2,\m'_2)} \z^{\gamma'} D^{(\n')} -\hspace{-9ex} \sum_{\substack{\beta' \\ (J''_2,\m''_2)+ (J'''_2,\m'''_2) = (J'_2,\m'_2) \\ (J''_2,\m''_2)\neq (0,\0)}}\hspace{-8ex} (D_{(J''_2,\m''_2)})_{\beta'}^{\gamma'}\,\z^{\beta'}D_{(J'''_2,\m'''_2)} D^{(\n')} \bigg).
	\end{align*}
	On the first r.~h.~s. term we apply once more \eqref{yc15} to $U=D_{(J_1,\m_1)} D_{(J'_2,\m'_2)}$. 
	Since by assumption $(J_1,\m_1)\neq (0,\0)$, the product $D_{(J_1,\m_1)} D_{(J'_2,\m'_2)}$ is a linear combination of basis elements \eqref{ao75} with strictly positive length, as may be seen by an iterative application of \eqref{yc15}. We thus appeal to \eqref{det02} to the effect of
	\begin{equation}\label{io03}
	\iota_\n D_{(J_1,\m_1)} D_{(J'_2,\m'_2)} \z^{\gamma'}D^{(\n')} = (\rho D_{(J_1,\m_1)} D_{(J'_2,\m'_2)})\iota_\n \z^{\gamma'}D^{(\n')}.
	\end{equation}
	For the second r.~h.~s. term, we note that if $(J'_2,\m'_2)= (0,\0)$  the sum is empty, so that \eqref{io02} follows from \eqref{io03}.  If $(J'_2,\m'_2)= (0,\0)$ , the length of $\z^{\beta'} D_{(J'''_2,\m'''_2)} D^{(\n')}$ is strictly smaller than that of $D_{(J_2,\m_2)}$, so that by the induction hypothesis the second r.~h.~s. term is given by
	\begin{equation}\label{io04}
	\sum_{\substack{\beta' \\ (J''_2,\m''_2)+ (J'''_2,\m'''_2) = (J'_2,\m'_2) \\ (J''_2,\m''_2)\neq (0,\0)}}\hspace{-6ex} (D_{(J''_2,\m''_2)})_{\beta'}^{\gamma'}\;(\rho D_{(J_1,\m_1)})\,\iota_\n\z^{\beta'}D_{(J'''_2,\m'''_2)} D^{(\n')}.
	\end{equation}
	Note that by \eqref{det02}, $\iota_\n \z^{\beta'} D_{(J'''_2,\m'''_2)} D^{(\n')}$ vanishes for $(J'''_2,\m'''_2)\neq (0,\0)$, so \eqref{io04} further reduces to
	\begin{equation*}
	\sum_{\beta'} (D_{(J'_2,\m'_2)})_{\beta'}^{\gamma'}\; (\rho D_{(J_1,\m_1)}) \iota_\n \z^{\beta'} D^{(\n')}.
	\end{equation*}
	By definition of the algebra morphism $\rho$, this equals \[(\rho D_{(J_1,\m_1)}D_{(J'_2,\m'_2)})\;\iota_\n \z^{\gamma'}D^{(\n')},\] which cancels with \eqref{io03}. Since $(J'_2,\m'_2)\neq (0,\0)$ implies by \eqref{det02} that $\iota_\n \z^{\gamma'} D_{(J'_2,\m'_2)} D^{(\n')} = 0$, this shows that \eqref{io02} holds.
\end{proof}

\medskip

\subsection{The bigrading revisited and finiteness properties}\label{Sect4.5}
\mbox{}

Recall that $\mathsf{L}$ is a bigraded Lie algebra with respect to \eqref{fs02}, and thus ${\rm U}(\mathsf{L})$ becomes a bigraded Hopf algebra. This means that there exists a decomposition ${\rm U}(\mathsf{L}) = \bigoplus_{\mathbf{b} \in \N_0 \times \mathbb{Z}} {\rm U}_{\mathbf{b}}$ such that the concatenation product maps ${\rm U}_{\mathbf{b}'}\otimes {\rm U}_{\mathbf{b}''}$ to ${\rm U}_{\mathbf{b}' + \mathbf{b}''}$ and the coproduct $\cop$ maps ${\rm U}_{\mathbf{b}}$ to $\bigoplus_{\mathbf{b}' + \mathbf{b}'' = \mathbf{b}} {\rm U}_{\mathbf{b}'}\otimes {\rm U}_{\mathbf{b}''}$. Note that this decomposition is different from \eqref{dec}. It turns out that our basis elements \eqref{ao75} are homogeneous:
\begin{lemma}\label{lemhom}
	$D_{(J,\m)}\in {\rm U}_{{\rm bi}(J,\m)}$, where 
	\begin{equation}\label{bigrad}
	\gls{bigrading}(J,\m) := \sum_{(\gamma,\n)} J(\gamma,\n) {\rm bi}(\gamma,\n) + (0,|\m|).
	\end{equation}
\end{lemma}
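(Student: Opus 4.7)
The plan is to proceed by induction on the length $|(J,\m)|$, using the recursions \eqref{combfac} and \eqref{proddel} to peel off one factor at a time. The crucial technical input, which I would establish first as an auxiliary claim, is the following: for every homogeneous $U \in {\rm U}_\mathbf{b}$ and every pair $(\gamma,\n)$ with $[\gamma]\ge 0$ and $\lhom\gamma\rhom>|\n|$, one has $\z^\gamma U D^{(\n)} \in {\rm U}_{\mathbf{b} + {\rm bi}(\gamma,\n)}$. In other words, the map $U \mapsto \z^\gamma U D^{(\n)}$ introduced in \eqref{yc11} shifts bidegree by ${\rm bi}(\gamma,\n)$.

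I would prove the auxiliary claim by a secondary induction on the word length of $U$ as a product of homogeneous elements of $\L$. The base case $U = 1$ is immediate since $\z^\gamma D^{(\n)}$ lies in $\L$ with bigradation ${\rm bi}(\gamma,\n)$ by \eqref{fs02}. For the inductive step with $U = D U'$ where $D \in \L$ is homogeneous, the defining identity \eqref{yc12} gives
\[\z^\gamma D U' D^{(\n)} = D\,\z^\gamma U' D^{(\n)} - \sum_\beta D^\gamma_\beta\,\z^\beta U' D^{(\n)}.\]
The inductive hypothesis places $\z^\gamma U' D^{(\n)}$ in ${\rm U}_{(\mathbf{b}-{\rm bi}(D))+{\rm bi}(\gamma,\n)}$, so the first term lies in ${\rm U}_{\mathbf{b}+{\rm bi}(\gamma,\n)}$ after concatenation with $D$. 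For the correction term, the decisive observation is that $D^\gamma_\beta \neq 0$ forces ${\rm bi}(\beta,\n) = {\rm bi}(\gamma,\n) + {\rm bi}(D)$: when $D = \z^{\gamma'}D^{(\n')}$ this follows from \eqref{fs04} together with \eqref{fs02}, and when $D = \partial_i$ it follows from \eqref{ao46} together with the definition of ${\rm bi}(\partial_i)$ in \eqref{fs02}. The inductive hypothesis then places each summand in the desired graded piece.

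With the auxiliary claim in hand, the main induction on $|(J,\m)|$ becomes routine. The base case $|(J,\m)| = 0$ gives $D_{(0,\0)} = 1 \in {\rm U}_{(0,0)} = {\rm U}_{{\rm bi}(0,\0)}$. For the inductive step, if $J = 0$ and $\m \neq \0$ I may assume $m_1 \geq 1$; then \eqref{proddel} rewrites $D_{(0,\m)} = \tfrac{1}{m_1}\,D_{(0,\m-(1,0))}\,\partial_1$, which the inductive hypothesis and bigradedness of concatenation place in ${\rm U}_{(0,|\m-(1,0)|)+(0,|(1,0)|)} = {\rm U}_{{\rm bi}(0,\m)}$ by additivity of $|\cdot|$. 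If instead $J \neq 0$, I pick any pair $(\gamma,\n)$ with $J(\gamma,\n) > 0$ and invoke \eqref{combfac} in the form $D_{(J,\m)} = \tfrac{1}{J(\gamma,\n)}\,\z^\gamma D_{(J-e_{(\gamma,\n)},\m)} D^{(\n)}$; combining ${\rm bi}(J-e_{(\gamma,\n)},\m) = {\rm bi}(J,\m) - {\rm bi}(\gamma,\n)$ from the inductive hypothesis with the auxiliary claim delivers the desired bigradation. The main obstacle is the bigradation compatibility of the pre-Lie correction in \eqref{yc12}, but this is precisely what Subsection \ref{Sect3.10} and Lemma \ref{Lem3.1} were set up to guarantee, so no further computation is needed.
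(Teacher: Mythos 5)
Your proof is correct and follows essentially the same route as the paper: isolate the auxiliary claim that $U\mapsto \z^\gamma U D^{(\n)}$ shifts bidegree by ${\rm bi}(\gamma,\n)$, prove it by induction on word length via \eqref{yc12} using the ${\rm bi}$--$\triangleleft$ compatibility from Subsection \ref{Sect3.10}, then conclude. The paper simply compresses your "main induction" on $|(J,\m)|$ into the one-line observation that $D_{(J,\m)}$ is built from $\frac{1}{\m!}\partial^\m\in{\rm U}_{(0,|\m|)}$ by iterating \eqref{yc11}, while you spell out the $J=0$ and $J\neq 0$ cases explicitly via \eqref{proddel} and \eqref{combfac}.
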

We adopt the same notation ${\rm bi}$ as in \eqref{fs02}, without any risk of confusion. 
\begin{proof}
	Let us fix a pair $(\gamma,\n)$. We will show that
	\begin{equation}\label{bigrad02}
	U\in {\rm U}_\mathbf{b}\; \implies \z^\gamma U D^{(\n)} \in {\rm U}_\mathbf{b + {\rm bi}(\gamma,\n)};
	\end{equation}
	this clearly proves the lemma, since $D_{(J,\m)}$ is built starting from $\frac{1}{\m!}\partial^\m \in {\rm U}_{(0,|\m|)}$ and applying \eqref{yc11} iteratively. We argue in favor of \eqref{bigrad02} by induction. The case $U=1$ holds since by construction $1\in{\rm U}_{(0,0)}$. We now assume \eqref{bigrad02} to be true for a given $U$ and give ourselves a $D\in \mathsf{L}$  such that $D\in {\rm U}_{\mathbf{b}'}$. We express $\z^\gamma DU D^{(\n)}$ using \eqref{yc12}. Since $DU\in {\rm U}_{\mathbf{b} + \mathbf{b'}}$, our goal is to prove $\z^\gamma DU D^{(\n)}\in {\rm U}_{\mathbf{b} + \mathbf{b'} + {\rm bi}(\gamma,\n)}$. Indeed, for the first r.~h.~s. term of \eqref{yc12}, by the induction hypothesis,
	\begin{equation*}
	D\z^\gamma U D^{(\n)} \in {\rm U}_{\mathbf{b}'+\mathbf{b} + {\rm bi}(\gamma,\n)}.
	\end{equation*}
	For the second r.~h.~s. term, we note that by the compatibility of ${\rm bi}$ and $\prelie$, see Subsection \ref{Sect3.10}, $D_\beta^\gamma \neq 0$ implies
	\begin{equation*}
	{\rm bi}(\beta,\n) = \mathbf{b}' + {\rm bi}(\gamma,\n),
	\end{equation*}
	which combined with the induction hypothesis yields
	\begin{equation*}
	D_\beta^\gamma \z^\beta U D^{(\n)}\in {\rm U}_{\mathbf{b}'+\mathbf{b} + {\rm bi}(\gamma,\n)}.\qedhere
	\end{equation*}
	\end{proof}
	
	\medskip
	
Any linear combination of the bigrading defines a new grading compatible with the Hopf algebra structure of ${\rm U}(\mathsf{L})$. In view of the definition \eqref{ao52} of the homogeneity, a natural choice is to consider the first component weighted by $\alpha$ and the second weighted by $1$; by \eqref{ao52}, this defines
\begin{equation}\label{grad01}
\gls{grading} := \sum_{(\gamma,\n)} J(\gamma,\n)(\lhom\gamma\rhom -|\n|) + |\m|.
\end{equation} 
Thanks to our restriction $\lhom\gamma\rhom > |\n|$, cf. \eqref{ao60}, it holds that 
\begin{equation}\label{posit}
|(J,\m)|_{\rm gr}\geq 0,\text{ and } |(J,\m)|_{\rm gr}= 0 \iff (J,\m) =(0,\0). 
\end{equation}

\medskip

With help of the bigrading \eqref{bigrad} and the grading \eqref{grad01} we will now establish finiteness properties of the action and the product. For this, we first write the basis representations of both maps. It is tautological that the basis representation of the action ${\rm U}(\L)\otimes \T^* \to \T^*$ with respect to \eqref{ao75} and the monomial ``basis''\footnote{
with a slight abuse of language, since the elements $\z^\gamma$ do not constitute a basis of $\T^*$}, i.~e.~
\[ D_{(J,\m)} \otimes \z^\gamma \mapsto \sum_\beta \Delta_{\beta\;(J,{\bf m})}^{\gamma} \z^\beta, \]
is given by
\begin{align}\label{ao77}
\Delta_{\beta\;(J,{\bf m})}^{\gamma}=(\D_{(J,{\bf m})})_\beta^\gamma,
\end{align}
where $(\D_{(J,{\bf m})})_\beta^\gamma$ is the matrix representation
of $\rho \D_{(J,{\bf m})}$ $\in{\rm End}(\T^*)$. 
We choose the notation $\Delta$ since it will give rise to a coaction, cf. \eqref{SG01}, that plays the role of the one in \cite[Subsection 4.2]{Hairer}.

\begin{lemma}\label{Lem4.9}
\begin{equation}\label{SG06}
\begin{split}
&\{ \big((J,\m),\gamma\big) \, \mbox{with }[\gamma]\geq 0\mbox{ or }\gamma =e_\n\; | \, \Delta_{\beta \, (J,\m)}^\gamma \neq 0 \}\\& \quad \mbox{is finite for all}\;\beta.
%\\ \textrm{ for all } \beta\text{ with }[\beta]\geq 0\textrm{ or }\beta\in\{e_\n\}_{\n\neq \0}.
\end{split}
\end{equation}
Moreover, for $(J,\m)\neq (0,\0)$ we have the triangular structure
\begin{equation}\label{fin10}
\Delta_{\beta \, (J,\m)}^\gamma \neq 0 \implies \lhom\gamma\rhom < \lhom\beta\rhom.
\end{equation}
\end{lemma}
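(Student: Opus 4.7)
The plan is to establish the triangularity \eqref{fin10} first by exploiting the bigradation, and then to parlay \eqref{fin10} together with the explicit form of the basis \eqref{ao75} into the finiteness \eqref{SG06}.

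For \eqref{fin10}: The identities \eqref{ao46} and \eqref{fs04} show that $\T^*$ carries a bigradation compatible with that of $\mathsf{L}$, obtained by assigning to $\z^\gamma$ the bidegree $(1+[\gamma],\sum_{\mu\neq\0}|\mu|\gamma(\mu))$. This extends to a bigraded module structure over ${\rm U}(\mathsf{L})$, so Lemma \ref{lemhom} forces $\Delta^\gamma_{\beta(J,\m)}\neq 0$ to imply ${\rm bi}(\beta) = {\rm bi}(\gamma) + {\rm bi}(J,\m)$. Pairing both sides against the linear functional $\alpha\cdot(\cdot)_1 + (\cdot)_2$ on the bidegree and noting, via \eqref{ao52} and \eqref{grad01}, that this functional recovers $\lhom\cdot\rhom$ on $\T^*$-bidegrees and $|\cdot|_{\rm gr}$ on ${\rm U}(\mathsf{L})$-bidegrees, yields the sharper identity $\lhom\beta\rhom - \lhom\gamma\rhom = |(J,\m)|_{\rm gr}$. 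The strict positivity \eqref{posit} of $|\cdot|_{\rm gr}$ on non-trivial indices then delivers \eqref{fin10}.

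For \eqref{SG06}: Fix $\beta$. Since the $D^{(\n)}$'s commute pairwise by \eqref{ao36}, the basis element \eqref{ao75} factorises under $\rho$ as $\frac{1}{J!\m!}\, M_{\z^{\tilde\gamma}}\circ\partial^\m\circ D^{(J)}$, where $\tilde\gamma := \sum J(\gamma_i,\n_i)\gamma_i$, $M_{\z^{\tilde\gamma}}$ denotes multiplication by the monomial $\z^{\tilde\gamma}$, and $D^{(J)}$ is the commutative product of the $D^{(\n_i)}$'s to the $J(\gamma_i,\n_i)$-th powers. The coefficient of $\z^\beta$ in the image of $\z^\gamma$ is therefore proportional to the coefficient of $\z^{\beta-\tilde\gamma}$ in $\partial^\m D^{(J)}\z^\gamma$; in particular $\tilde\gamma\leq\beta$ component-wise. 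This confines every $\gamma'\neq 0$ in the support of $J$ to the finite set of multi-indices dominated by $\beta$ component-wise, and the admissibility condition $|\n'|<\lhom\gamma'\rhom$ from \eqref{ao60} — combined with the resulting uniform upper bound on $\lhom\gamma'\rhom$ — places the associated $\n'$ in a finite set. Combining with Step 1 in the form $|(J,\m)|_{\rm gr}\leq\lhom\beta\rhom$, the bounds $|\m|\leq|(J,\m)|_{\rm gr}$ and $J(\gamma',\n')(\lhom\gamma'\rhom-|\n'|)\leq|(J,\m)|_{\rm gr}$ (with strictly positive gap) confine $(J,\m)$ to a finite set. Finally, for each such $(J,\m)$, the composition $\partial^\m D^{(J)}$ inherits the dual finiteness \eqref{ao23} from its factors, so only finitely many $\gamma$ can yield a non-zero coefficient of $\z^{\beta-\tilde\gamma}$.

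The main obstacle is the degenerate case $\gamma'=0$ in the support of $J$, for which the component-wise constraint $\tilde\gamma\leq\beta$ is vacuous; it is precisely this case that forces one to invoke the combination of the homogeneity gap $\alpha$ — via $|\n'|<\lhom 0\rhom = \alpha$, which already confines $\n'$ to a finite set — together with the gradation bound $J(0,\n')(\alpha-|\n'|)\leq\lhom\beta\rhom$ to control the multiplicity, rather than arguing purely by component-wise monotonicity.
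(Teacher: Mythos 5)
Your proof is correct, and it takes a genuinely different route from the paper's, at least for \eqref{SG06}. For \eqref{fin10}, you and the paper are doing essentially the same thing: the paper establishes $\big(1+[\beta],\sum|\n'|\beta(\n')\big) = \big(1+[\gamma],\sum|\n'|\gamma(\n')\big) + {\rm bi}(J,\m)$ by an explicit induction in $|(J,\m)|$, whereas you deduce it from Lemma \ref{lemhom} together with the (correct) observation that \eqref{ao46} and \eqref{fs04} make $\T^*$ a bigraded ${\rm U}(\L)$-module; taking the $\alpha$-weighted combination then gives $\lhom\beta\rhom-\lhom\gamma\rhom=|(J,\m)|_{\rm gr}$ in both proofs. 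For \eqref{SG06} the strategies diverge. The paper runs a secondary induction in the length $|(J,\m)|$, using the uniform finiteness \eqref{yourchoice1} to conclude that for each fixed length there are only finitely many pairs $\big((J,\m),\gamma\big)$, and then bounds the length by $1+[\beta]+\lhom\beta\rhom$. You instead unpack the operator factorisation $\rho\D_{(J,\m)}=\frac{1}{J!\m!}M_{\z^{\tilde\gamma}}\circ\partial^\m\circ D^{(J)}$, read off the component-wise constraint $\tilde\gamma\le\beta$ to pin down the support of $J$ (up to the $\gamma'=0$ degeneracy, which you correctly resolve via $|\n'|<\alpha$), use $|(J,\m)|_{\rm gr}<\lhom\beta\rhom$ with the positive gap $\lhom\gamma'\rhom-|\n'|$ to bound the multiplicities and $|\m|$, and finally appeal to the stability of \eqref{ao23} under composition to confine $\gamma$. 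Your approach is more explicit and avoids the secondary induction, at the cost of needing the factorisation; the paper's is more uniform and stays entirely within the recursive framework. One small correction: the factorisation is not a consequence of the commutativity \eqref{ao36} (which only ensures the order of factors is immaterial, cf.\ Lemma \ref{Lem4.1}); it follows from iterating \eqref{fw24}, which is what translates the formal product \eqref{ao75} in ${\rm U}(\L)$ into a composition of genuine endomorphisms — this is the reference you should cite there.
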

We stress that the restriction of $\gamma$ is crucial in our approach, as will become apparent in the proof. Incidentally, by \eqref{hk02}, it implies the same restriction for $\beta$.

\begin{proof}
We first show by induction in the length $|(J,\m)|$ that 
\begin{equation}\label{bigrad01}
\begin{split}
(\D_{(J,\m)})_\beta^\gamma\neq 0
\implies
&\big(1+[\beta],\sum_{\n'\neq\0}|\n'|\beta(\n')\big) 
\\& \quad = \big( 1+[\gamma] , \sum_{\n'\neq\0} |\n'|\gamma(\n') \big) + {\rm bi}(J,\m).
\end{split}
\end{equation}	
The base case $|(J,\m)|=0$ is trivial, since this implies $\beta = \gamma$. In the induction step, we fix $(J,\m)$ and distinguish two cases: if $J=0$, then the claim follows from $(\partial^{\m})_\beta^\gamma=\sum_{\gamma'} (\partial^{\m-(1,0)})_\beta^{\gamma'}(\partial_1)_{\gamma'}^\gamma$ via \eqref{ao46} and the induction hypothesis in form of
\begin{equation*}
\begin{split}
(\partial^{\m-(1,0)})_\beta^{\gamma'}\neq 0 \implies 
&\big(1+[\beta],\sum_{\n'\neq\0}|\n'|\beta(\n')\big) 
\\& \quad = \big( 1+[\gamma'] , \sum_{\n'\neq\0} |\n'|\gamma'(\n') \big) + (0, |\m - (1,0)|).
\end{split}
\end{equation*}
If $J\neq 0$, the claim likewise follows via \eqref{yc15}, which we may use thanks to \eqref{cop01}, into which we insert \eqref{fs04}. 

\medskip

We now claim that \eqref{SG06} holds true when restricting $(J,\m)$ to be of fixed length. 
Indeed, this is again established by induction in the length $|(J,\m)|$: the argument is the very same as for \eqref{bigrad01}, just starting from \eqref{ao42} and \eqref{yourchoice1} instead of \eqref{ao46} and \eqref{fs04}. The next step is to show that the length $|(J,\m)|$ is bounded. To this end, we first note that for $(\gamma',\n')$ with $J(\gamma',\n')\neq0$ we have $[\gamma']\geq0$, so that the first component of ${\rm bi}(J,\m)$ dominates $|J|$, cf. \eqref{bigrad}. Therefore, assuming that $(D_{(J,\m)})_\beta^\gamma\neq0$, the first component in \eqref{bigrad01} yields $|J| \leq 1+[\beta]$, where we used $[\gamma]\geq -1$. Moreover, taking in \eqref{bigrad01} a linear combination of the first and the second component weighted by $\alpha$ and $1$, respectively, yields by definitions \eqref{ao52} and \eqref{grad01}
\begin{equation}\label{bigrad05}
\lhom\beta\rhom
= \lhom \gamma \rhom
+ |(J,\m)|_{\rm gr}.
\end{equation}
Since $\lhom\gamma'\rhom>|\n'|$ for pairs with $J(\gamma',\n')\neq0$, and since $|\gamma| > 0$, we now obtain $|\m| < |\beta|$. Summing up, we find $|(J,\m)| \leq 1 + [\beta] + \lhom\beta\rhom$, which finishes the proof of \eqref{SG06}. 
Finally, \eqref{fin10} is a straightforward consequence of \eqref{bigrad05} and the positivity of $|\cdot|_{\rm gr}$, cf. \eqref{posit}.
\end{proof}	

\medskip

We shall now give a characterization of the basis representation of the concatenation product, i.~e.~ 
	\begin{equation}\label{prod01}
	\D_{(J',{\bf m}')}\D_{(J'',{\bf m}'')}
	= \sum_{( J,{\bf  m})}(\Delta^+)_{(J',{\bf m}')\,(J'',{\bf m}'')}^{( J,{\bf m})}
	\D_{( J,{\bf m})}.
	\end{equation}
	Writing \eqref{prodcop} in coordinates and using \eqref{cop01}, we see that the numbers
	$(\Delta^+)_{(J',{\bf m}')\,(J'',{\bf m}'')}^{( J,{\bf m})}$
	are determined by the special case where the multi-index
	$( J,{\bf m})$ is of length one. This means either ${\bf m}\in\{(1,0),(0,1)\}$ and $J=0$ or ${\bf m}={\bf 0}$ and the multi-index $ J$ having just one non-trivial entry -- equal to one --
	at $(\gamma,{\bf n})$; for this, we write $ J=e_{(\gamma,{\bf n})}$. The former case is easy; indeed, by \eqref{yc12} and \eqref{yc16}, we see that $J=0$ implies $J'=J''=0$, which reduces all possible situations to formula \eqref{proddel}. In particular, this yields
	\begin{equation}\label{prod05}
	(\Delta^+)_{(J',\m')(J'',\m'')}^{(0,(1,0))} = \delta_{(J',\m')+(J'',\m'')}^{(0,(1,0))}, 
	\end{equation}
	and a similar statement for $\m=(0,1)$. The case $J=e_{(\gamma,\n)}$ is characterized by the application of \eqref{io02}; indeed, $(\Delta^+)_{(J',{\bf m}')\,(J'',{\bf m}'')}^{( e_{(\gamma,\n)}, \0)}$ is the coefficient of $\z^\gamma$ in $\iota_\n D_{(J',\m')} D_{(J'',\m'')}$.

\medskip

\begin{lemma}\label{Lem4.10}
	\begin{equation}\label{SG03}
	\begin{split}
	&\{ (J',\m'),(J'',\m'') \, | \, (\Delta^+)^{(J,\m)}_{(J',\m')(J'',\m'')} \neq 0 \}\\&\quad\mbox{is finite for all}\;(J,\m).
	\end{split}
	\end{equation}
\end{lemma}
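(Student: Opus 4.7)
The plan is to reduce to the case where $(J,\m)$ has length one, and then handle the two generator types directly using \eqref{io02} and Lemma \ref{Lem4.9}.

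First, dualizing the finite coproduct \eqref{cop01} via the non-degenerate pairing between $\mathsf{T}^+$ and ${\rm U}(\mathsf{L})$ endows $\mathsf{T}^+$ with the product $Z^{(J',\m')}Z^{(J'',\m'')}=Z^{(J'+J'',\m'+\m'')}$, making it the polynomial algebra on the generators $\{Z^{(e_{(\gamma,\n)},0)}\}_{(\gamma,\n)}$ and $\{Z^{(0,e_i)}\}_{i=1,2}$. Dualizing the bialgebra identity \eqref{prodcop}, $\Delta^+$ is formally an algebra morphism with respect to this product, so that $\Delta^+ Z^{(J,\m)}$ equals the product in the tensor square of $\mathsf{T}^+$ of the (finitely many) factors $\Delta^+$ applied to the generators appearing in the factorization of $Z^{(J,\m)}$. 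Since a finite product of finite sums is a finite sum, it suffices to prove \eqref{SG03} for $(J,\m)$ of length one.

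For $(J,\m)=(0,e_i)$, the claim follows from \eqref{prod05} (and its analogue for $i=2$), which forces $(J',\m')+(J'',\m'')=(0,e_i)$ and thus admits only two pairs. For $(J,\m)=(e_{(\gamma,\n)},0)$, the structure constant is the coefficient of $\z^\gamma$ in $\iota_\n D_{(J',\m')}D_{(J'',\m'')}$; inserting \eqref{io02}, I split into two contributions. The second contribution is non-zero only when $(J'',\m'')=(0,\m)$ and $(J',\m')=(e_{(\gamma,\n+\m)},0)$, and admissibility of the pair forces $|\m|<\lhom\gamma\rhom-|\n|$, which by coercivity of $|\cdot|$ on $\mathbb{N}_0^2$ leaves only finitely many $\m$. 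The first contribution is non-zero only when $(J'',\m'')=(e_{(\beta,\n)},0)$ with $[\beta]\geq 0$ and $\lhom\beta\rhom>|\n|$, and it equals $(D_{(J',\m')})_\gamma^\beta=\Delta^\beta_{\gamma(J',\m')}$ in the notation of \eqref{ao77}; invoking Lemma \ref{Lem4.9} with the fixed lower index there played by our $\gamma$, only finitely many triples $((J',\m'),\beta)$ with $\beta$ admissible have this matrix element non-zero.

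The main obstacle is to set up the reduction to generators via the formal algebra morphism property of $\Delta^+$, since $\Delta^+$ is a priori only defined as a map into the completion of $\mathsf{T}^+\otimes\mathsf{T}^+$; finiteness of \eqref{SG03} on generators then upgrades it to a genuine map into $\mathsf{T}^+\otimes\mathsf{T}^+$. The generator cases themselves only require a careful bookkeeping of which pairs survive each contribution in \eqref{io02} and a relabeling of indices in Lemma \ref{Lem4.9}.
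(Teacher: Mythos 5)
Your proposal is correct and follows essentially the same route as the paper's proof: reduce to $|(J,\m)|=1$ via the multiplicativity of $\mathrm{cop}$ (equivalently, the algebra morphism property of $\Delta^+$), dispatch the $\m$-generator case by \eqref{prod05}, and for $J=e_{(\gamma,\n)}$ split \eqref{io02} into the two contributions, finishing the first with \eqref{SG06} and the second with the admissibility constraint $\lhom\gamma\rhom>|\n+\m|$ together with coercivity of $|\cdot|$. The only differences are cosmetic relabelings of indices and your slightly more explicit remark about why the reduction to generators is legitimate despite $\Delta^+$ a priori only landing in a completion.
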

\begin{proof}
Once more by \eqref{prodcop} and \eqref{cop01} it is enough to show \eqref{SG03} for $|(J,\m)|=1$, see the discussion after \eqref{prod01}. The case $J=0$ and $|\m|=1$ is trivial from \eqref{prod05}. For $|J|=1$ and $\m=\0$, we write $J=e_{(\beta,\n)}$ and claim that \eqref{SG03} follows from \eqref{io02}. For this, we apply \eqref{io02} with $U_1=D_{(J',\m')}$ and $U_2=D_{(J'',\m'')}$, and consider the coefficient of the $\z^\beta$-term, which is non-vanishing by assumption. By \eqref{det02}, the first r.~h.~s. term in \eqref{io02} is non-vanishing only if $U_2 = \z^\gamma D^{(\n)}$ for some $\gamma$; applying \eqref{SG06} to the first factor $\rho U_1$, we see that there are only finitely many $\gamma$'s and $(J',\m')$'s which give a non-vanishing contribution to the $\z^\beta$-coefficient. Turning to the second r.~h.~s. term of \eqref{io02}, its $\z^\beta$-coefficient is non-zero unless $U_1 = \z^\beta D^{(\n+\m)}$ and $U_2 = D_{(0,\m)}$ for some $\m$. The constraint $|\n+\m|<\lhom\beta\rhom$ only allows for finitely many $\m$'s.
\end{proof}

%%%%%%%%%%%%%%%%%%%%%%%%%%%%%%%%

\subsection{Dualization leading to $\T^+$, $\Delta^+$ and $\Delta$}\label{Sect4.3}
\mbox{}

We consider the\footnote{
unique up to linear isomorphisms}
linear space $\gls{hopfSpace}$, with basis $\{ \gls{basisTplus} \}_{(J,{\bf m})}$ indexed by $(J,\m)$ and the canonical non-degenerate pairing between ${\rm U}(\L)$ and $\T^+$ given by
\begin{equation}\label{dual01}
\langle\D_{(J',{\bf m'})},\Z^{(J,{\bf m})}\rangle=\delta^{(J,{\bf m})}_{(J',{\bf m'})}.
\end{equation}
As a consequence of \eqref{dual01}, ${\rm U}(\L)$ canonically is a subspace of $(\T^+)^*$; note that the latter is much larger, since it is the direct product over the index set of all $(J,\m)$'s, whereas ${\rm U}(\L)$ is just the direct sum.

\medskip

Our next goal is to provide a structure for $\T^+$ by dualization of the Hopf algebra and the module structures of ${\rm U}(\mathsf{L})$. First, we note that the basis representation of a coproduct has the algebraic properties of a product, and thus \eqref{cop01} defines a product\footnote{
We will omit the dot in the notation.} $\cdot$
 in $\T^+$ given by
\begin{equation}\label{prodT1}
\Z^{(J',{\bf m'})} \Z^{(J'',{\bf m''})}=\Z^{(J',{\bf m'})+(J'',{\bf m''})}.
\end{equation}
This way $(\T^+,\cdot)$ becomes the (commutative) polynomial algebra over variables indexed by the index set of $\L$. 

\medskip

In a similar way, we want to transpose the action and the coproduct mentioned in the previous subsection. The transposition in these two cases is possible thanks to the finiteness properties which were stated in Lemmas \ref{Lem4.9} and \ref{Lem4.10}.
Starting with the action, analogously to \eqref{ao107}, from the basis representation \eqref{ao77} we define a map $\Delta:\T \to \T^+\otimes \T$ by 
\begin{equation}\label{SG01}
\gls{coaction} \z_\beta 
= \sum_{\gamma, \,(J,\m)} \Delta_{\beta \; (J,\m)}^{\gamma} \, \Z^{(J,\m)} \otimes \z_\gamma.
\end{equation}
The sum is finite due to \eqref{SG06}, and hence $\Delta$ is well-defined. We stress that the restriction to $\T$ is crucial for our argument; 
it does not seem possible to extend $\Delta$ to
$\mathbb{R}[[\z_k,\z_{\bf n}]]^\dagger
\rightarrow \mathsf{T}^+ \otimes \mathbb{R}[[\z_k,\z_{\bf n}]]^\dagger$.

\medskip

We now turn to the product; by the basis representation \eqref{prod01}, we define a map $\Delta^+:\T^+\to\T^+\otimes\T^+$ via
\begin{equation}\label{copT1}
\gls{coproduct} \Z^{(J,\m)}=\sum_{(J',\m'),\,(J'',\m'')}(\Delta^+)_{(J',\m')\,(J'',\m'')}^{(J,\m)}\,\Z^{(J',\m')}\otimes\Z^{(J'',\m'')}.
\end{equation}
Such a map has the algebraic properties of a coproduct in $\T^+$. The fact that this map is well-defined is a consequence of the finiteness property \eqref{SG03}. 

\medskip

The only missing ingredient to make $\T^+$ a Hopf algebra is an antipode\footnote{ The existence of unit and counit maps easily follows from transposing the counit and unit, respectively, of ${\rm U}(\mathsf{L})$. No finiteness properties are required.}  $\mathcal{S}$: since \eqref{grad01} and \eqref{posit} make $\T^+$ a connected\footnote{ i.~e.~the zero-degree subspace is $\R$, cf. \cite[Definition 2.10.6]{HGK10}} graded bialgebra, this is guaranteed by general theory, see \cite[Proposition 3.8.8]{HGK10}. 

\medskip

These observations are collected in the following result.
\begin{proposition}\label{cor02}
	Let $\Delta^+ : \T^+ \to \T^+\otimes \T^+$ 
	be given by \eqref{copT1}.	Then there exists a map $\mathcal{S}$ such that $(\T^+,\cdot,\Delta^+,\mathcal{S})$ is a Hopf algebra with antipode $\mathcal{S}$. Moreover, let $\Delta : \T \to \T^+ \otimes \T $
	be given by \eqref{SG01}. Then $(\T,\Delta)$ is a (left-) comodule over $\T^+$, i.~e.
	\begin{equation}\label{SG04}
	(\mathrm{id}\otimes \Delta) \Delta = (\Delta^+ \otimes \mathrm{id})\Delta.
	\end{equation}
\end{proposition}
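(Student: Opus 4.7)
The plan is to prove everything by pure dualization via the non-degenerate pairing \eqref{dual01}. The structure maps on $\T^+$ are transposes of the Hopf algebra structure on ${\rm U}(\L)$: the product \eqref{prodT1} transposes $\cop$ via \eqref{cop01}, the coproduct $\Delta^+$ of \eqref{copT1} transposes the concatenation product via \eqref{prod01}, and the comodule $\Delta$ of \eqref{SG01} transposes the left module $\rho\colon{\rm U}(\L)\otimes\T^*\to\T^*$ via \eqref{ao77}. The crucial role of Lemmas \ref{Lem4.9} and \ref{Lem4.10} is that their finiteness conclusions \eqref{SG06} and \eqref{SG03} make $\Delta\z_\beta$ and $\Delta^+\Z^{(J,\m)}$ well-defined elements of $\T^+\otimes\T$ and $\T^+\otimes\T^+$, respectively.

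First I would verify the bialgebra axioms for $(\T^+,\cdot,\Delta^+,\varepsilon)$, where the counit is $\varepsilon\Z^{(J,\m)} := \delta_{(J,\m)}^{(0,\0)}$. Associativity, commutativity, and unitality of $\cdot$ are immediate from \eqref{prodT1}. For coassociativity of $\Delta^+$, I would pair both $(\Delta^+\otimes\id)\Delta^+\Z^{(J,\m)}$ and $(\id\otimes\Delta^+)\Delta^+\Z^{(J,\m)}$ against a triple $\D_{(J_1,\m_1)}\otimes\D_{(J_2,\m_2)}\otimes\D_{(J_3,\m_3)}$ via \eqref{dual01}; both expressions reduce to the pairing of $\Z^{(J,\m)}$ with the associative triple product $\D_{(J_1,\m_1)}\D_{(J_2,\m_2)}\D_{(J_3,\m_3)}$ in ${\rm U}(\L)$. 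The compatibility that $\Delta^+$ is an algebra morphism is dual to \eqref{prodcop}: testing $\Delta^+(\Z^{(J,\m)}\Z^{(J',\m')})$ against $\D_{(J_1,\m_1)}\otimes\D_{(J_2,\m_2)}$ reduces, via \eqref{cop01} and \eqref{dual01}, to the identity $\cop(\D_{(J_1,\m_1)}\D_{(J_2,\m_2)}) = (\cop\D_{(J_1,\m_1)})(\cop\D_{(J_2,\m_2)})$.

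For the comodule identity \eqref{SG04} applied to $\z_\beta$, both sides expand via \eqref{SG01}, \eqref{ao77} and \eqref{copT1} into finite triple sums over monomials $\Z^{(J_1,\m_1)}\otimes\Z^{(J_2,\m_2)}\otimes\z_{\gamma}$; equating coefficients reduces to the matrix identity $\rho(\D_{(J_1,\m_1)}\D_{(J_2,\m_2)}) = (\rho\D_{(J_1,\m_1)})(\rho\D_{(J_2,\m_2)})$, which holds since $\rho$ is an algebra morphism. Finiteness of the sums and legitimacy of the rearrangement are precisely what \eqref{SG06} and \eqref{SG03} guarantee. For the antipode, the gradation $|\cdot|_{\rm gr}$ of \eqref{grad01} is an $\R$-linear combination of the bigradation components from Lemma \ref{lemhom} and hence is compatible with both product and coproduct, while by \eqref{posit} it turns $\T^+$ into a connected graded bialgebra; the antipode $\mathcal{S}$ then exists uniquely by the standard result \cite[Proposition 3.8.8]{HGK10}. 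The only real obstacle throughout is the bookkeeping of finiteness when dualizing the module into a comodule; once Lemmas \ref{Lem4.9} and \ref{Lem4.10} are invoked, the rest is a transparent transposition.
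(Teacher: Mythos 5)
Your proposal is correct and follows essentially the same route as the paper: the paper presents the proposition as a collection of the observations made in Subsection \ref{Sect4.3} (well-definedness of $\Delta$ and $\Delta^+$ from the finiteness Lemmas \ref{Lem4.9} and \ref{Lem4.10}, dualization of the Hopf/module structure of ${\rm U}(\mathsf{L})$), explicitly notes that \eqref{SG04} is the dualization of the morphism property of $\rho$, and obtains the antipode from the connected graded bialgebra structure via \cite[Proposition 3.8.8]{HGK10}. You have simply spelled out the transposition bookkeeping (pairing against triples, etc.) that the paper leaves implicit.
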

Note that \eqref{SG04} is the dualization of the morphism property of $\rho$.

\medskip

We now have introduced all the objects required to construct a structure group $\mathsf{G}\subset \textnormal{End}(\T)$ according to \cite[Section 4.2]{Hairer}. A minor difference is that, in our case, $(\T,\Delta)$ is a left comodule while in \cite[(4.15)]{Hairer} it is a right comodule, a fact that transfers to \eqref{Hai4.14} and more upcoming identities. This does not affect the construction. In fact, with a similar (though more cumbersome) definition of the Lie algebra $\dum{L}$, working at the level of the transposed endomorphisms from the beginning, we would have been able to recover the same structure, but paying the price of blurring the connection to the actions on $(a,p)$-space that served as a motivation in Section \ref{Sect1}.

%%%%%%%%%%%%%
\subsection{Intertwining of $\Delta$ and $\Delta^+$ through $\mathcal{J}_\n$}\label{Sect5.6}
\mbox{}

Let us define for every $\n$ a map $\gls{embeddingJ} : \T\to \T^+$ in coordinates by
\begin{equation}\label{hai01}
\J_{\n}\z_{\gamma}=\left\{\begin{array}{ll}
\n! \, \Z^{(e_{(\gamma,\n)},\0)}  &  \textnormal{ if }[\gamma]\geq 0,\ \lhom\gamma\rhom>|\n|\\
0  &  \textnormal{ otherwise} 
\end{array}
\right\}.
\end{equation}
Note that in view of \eqref{io01} $\mathcal{J}_\n$ is the transposition of $\iota_\n$ up to a combinatorial factor: 
\begin{align}\label{fw26}
\mathcal{J}_{\n} = \n! \iota_\n^\dagger. 
\end{align}
The normalization with $\n!$ is made such that the dualization of \eqref{io02} takes the form of the following intertwining relation between the coaction $\Delta$ and the coproduct $\Delta^+$, which is an identity in $\T^+ \otimes \T^+$:
\begin{equation}\label{Hai4.14}
\Delta^+ \J_{\n}\z_\gamma = (\mathrm{id} \otimes \J_{\n}) \Delta \z_\gamma + \sum_{\m} \J_{\m+\n}\z_\gamma \otimes \frac{\Z^{(0,\m)}}{\m!}.
\end{equation}
Indeed, \eqref{Hai4.14} amounts to \eqref{io02} once tested with $U_1\otimes U_2$, where we use the pairings \eqref{pl95} and \eqref{dual01}, the definitions \eqref{prod01} and \eqref{ao77}, and the fact that $\langle U,Z^{(0,\m)}\rangle = \varepsilon_\m U$ in view of \eqref{seemynotesonp38}.  
Combined with 
\begin{equation}\label{Hai4.14b}
\Delta^+ \Z^{(0,(1,0))} = \Z^{(0,(1,0))}\otimes 1 + 1\otimes\Z^{(0,(1,0))},
\end{equation}
which follows from \eqref{prod05}, we see that $\Delta^+$ is determined by $\Delta$ through $\mathcal{J}_\n$ in agreement with regularity structures, cf. \cite[(4.14)]{Hairer}.
Let us also mention that the coaction applied to the polynomial sector $\bar{\mathsf{T}}$ is in agreement\footnote{
up to the constant in $\bar{\mathsf{T}}$ which we modded out, cf. \eqref{as35} and \eqref{ao106}}
with \cite[p. 23]{Hairer}, 
\begin{align}\label{comodulePoly}
\Delta\mathsf{z}_{e_{\bf n}}
=\sum_{\substack{{\bf n}'+{\bf n}''={\bf n}\\{\bf n''}\not=\0}}
\tbinom{\n}{\n'} Z^{(0,{\bf n}')}\otimes\mathsf{z}_{e_{{\bf n}''}}.
\end{align}
This may be seen by \eqref{SG01} and \eqref{ao77}. 
First note that $\rho D_{(J,\n')}$ preserves $\mathsf{\tilde T}^*$, as a consequence of the same property of $\mathsf{L}$, see (\ref{ao61}), 
and that $\rho D_{(J,\n')}$ maps to $\bar{\mathsf{T}}^*$ only if $J=0$ as can be read off \eqref{ao75}.
Therefore, \eqref{comodulePoly} follows from $D_{(0,\n')} \z_{\n''} = \binom{\n'+\n''}{\n'} \z_{\n'+\n''}$, which is a consequence of \eqref{ao75} and \eqref{ao30}.

%%%%%%%%%%%%%%%%%%%%%%%%%%%%%%%%%%%%%%%%%%%%%%%%%%%%%%%%%%%%%%%%%%%%%%%%%%%%%%%%%%%%%%%%%%%%%

\section{The group structure}\label{Sect5old}

\subsection{The structure group $\dum{G}$}\label{Sect5.1old}
\mbox{}

With all the algebraic objects defined in Section \ref{Sect4}, we follow \cite[Subsection 4.2]{Hairer} in the construction of the structure group. Let us consider the space of multiplicative linear functionals on $\T^+$, which we denote by  $\Alg (\T^+,\R)$. Writing \begin{equation}\label{**p38}
f^{(J,\m)}:=f \Z^{(J,\m)}
\end{equation}
for $\gls{multFunc} \in (\T^+)^*$, by \eqref{prodT1} the space $\Alg (\T^+,\R)$ is characterized by
\begin{equation}\label{ope03}
f^{(J',\m')+(J'',\m'')}=f^{(J',\m')}f^{(J'',\m'')}\,\text{ and }\,f^{(0,\0)}=1.
\end{equation}
%Due to this property, the elements $f\in \text{Alg}(\T^+,\R)$ are parameterized by $h\in\R^{2}$ and $\{\pi^{(\n)}\}_{\n}\subset \mathsf{\tilde T}^*$ via\footnote{By \eqref{hai01} and \eqref{SG05}, the coefficients $\pi_\gamma^{(\n)}$ may be identified with \cite[(4.8)]{Hairer}.}
%\begin{equation}\label{SG05}
%f^{(J,\m)}=h^{\m}\prod_{(\gamma,\n)}(\pi_{\gamma}^{(\n)})^{J(\gamma,\n)}.
%\end{equation}
Due to this property, the elements $f\in \text{Alg}(\T^+,\R)$ are parameterized by\footnote{By \eqref{hai01} and \eqref{SG05}, the coefficients $\pi_\gamma^{(\n)}$ may be identified with \cite[(4.8)]{Hairer}.}
\begin{equation}\label{SG05}
f^{(J,\m)}=h^{\m}\prod_{(\gamma,\n)}(\pi_{\gamma}^{(\n)})^{J(\gamma,\n)},
\end{equation}
where $h\in\R^{2}$ and $\{\pi^{(\n)}\}_{\n}\subset \mathsf{\tilde T}^*$ is constrained by 
\begin{equation}\label{constraintPi}
\pi_\gamma^{({\bf n})}\neq 0 \quad\implies\quad |\gamma|>|{\bf n}|.
\end{equation}
From the Hopf algebra structure of $\T^+$, the space $\Alg (\T^+,\R)$ inherits a natural group structure, namely the convolution product of functionals:
\begin{equation}\label{conv01}
fg := (f\otimes g)\Delta^+.
\end{equation}
The neutral element $e$ of this group, which is the counit of $\T^+$, maps $Z^{(0,\0)}$ to $1$ and every other basis element of $\T^+$ to $0$, and the inverse elements are given by $f^{-1} = f  \mathcal{S}$, cf. \cite[Theorem 2.1.5]{Abe}.

\medskip

Following \cite[Subsection 4.2]{Hairer}, we now define a map $\Gamma: (\T^+)^* \to \End(\T)$ by
\begin{equation}\label{defgamma}
\gls{gammaF} := (f \otimes \mathrm{id})\Delta.
\end{equation}
Then the set 
\begin{equation*}
\gls{structuregroup}:= \{ \Gamma_f \, | \, f\in\Alg (\T^+,\R) \}\subset\End (\T)
\end{equation*}
inherits the group structure of $\Alg (\T^+,\R)$, where 
\begin{equation}\label{groupstruct01}
\Gamma_e=\mathrm{id},\ \Gamma_{fg} = \Gamma_f \Gamma_g\ \text{and }\Gamma_{f^{-1}} = \Gamma_f^{-1};
\end{equation}
we call $\mathsf{G}$ \textit{structure group}. Applying definition \eqref{defgamma} to $\z_\beta$, plugging in \eqref{SG01}, \eqref{**p38} and \eqref{ao77}, we obtain from \eqref{ao107} the representation\footnote{Here and in the sequel we identify $\D_{(J,\m)}$ with the corresponding endomorphism $\rho \D_{(J,\m)}$.}
\begin{equation}\label{SG07}
\Gamma_f = \sum_{(J,\m)} f^{(J,\m)} \, \D_{(J,\m)}^\dagger,
\end{equation}
and note that this sum is effectively finite because of \eqref{SG06}. Moreover, as a consequence of \eqref{fin10},
\begin{equation}\label{tri04}
 (\Gamma_f-\mathrm{id})_\beta^\gamma \neq 0 \implies \lhom\gamma\rhom < \lhom \beta \rhom;
\end{equation} 
this may be rewritten more in line with the corresponding requirement in \cite[Definition 3.1]{Hairer}:
\begin{equation}\label{tri01}
(\Gamma_f-\mathrm{id})\T_\kappa \subset \bigoplus_{\kappa'<\kappa} \T_{\kappa'}.
\end{equation}
The elements of the structure group behave nicely with the polynomial sector $\bar{\T}$, see Subsection \ref{Sect3.3}, in the sense that for $f\in\Alg (\T^+,\R)$ with $h\in\R^2$ being its parameter according to \eqref{SG05}, and for all $\n\neq \0$,
\begin{equation}\label{pol01}
\Gamma_f x^\n = \sum_{\m < \n} \tbinom{\n}{\m}h^\m x^{\n-\m},
\end{equation}
which follows from \eqref{defgamma}, \eqref{comodulePoly} and \eqref{SG05}.
In Subsection \ref{Sect.ext}, we argue that this is in line with
\cite[Assumption 3.20]{Hairer}.
%

%%%%%%%%%%%%%%%
\subsection{Consistency of $\dum{G}^*$ with our goals}\label{Sect5.2old}
\mbox{}

Our final goal is to identify the construction of Subsection \ref{Sect5.1old} with the group of transformations on $(a,p)$-space heuristically described in Section \ref{Sect1}. We define $\mathsf{G}^*:=\{\Gamma^*\,|\,\Gamma\in\mathsf{G}\}$, 
which due to \eqref{SG07} is formed by the maps
\begin{equation}\label{gstar}
\Gamma_f^*=\sum_{(J,\m)} f^{(J,\m)} \, \D_{(J,\m)}\in\textnormal{End}(\T^*)\ \textnormal{ where }\ f\in \textnormal{Alg}(\T^+,\R).
\end{equation}
From \eqref{groupstruct01}, $\mathsf{G}^*$ inherits a group structure given by
\begin{equation}\label{groupstruct02}
\Gamma_e^*=\mathrm{id},\ \Gamma_{fg}^* = \Gamma_g^* \Gamma_f^*\ \text{and }\Gamma_{f^{-1}}^* = (\Gamma_f^*)^{-1};
\end{equation}
note that the order in the composition rule is reversed as a consequence of transposition.

\medskip

We gather all our results in the following proposition.
\begin{proposition}\label{exp01}
Let $h\in\R^2$ and $\{\pi_{\gamma}^{(\n)}\}_{(\gamma,\n)}\subset \R$ generate $f$ through the characterization \eqref{SG05}. Let $\pi^{(\n)}\in\T^*$ for every $\n\in\N_0^2$ be given by	
	\begin{equation}\label{ext1}
	\pi^{(\n)}:=\sum_{[\gamma]\geq 0}\pi_{\gamma}^{(\n)}\z^{\gamma}+\sum_{\m\neq \0}\pi_{e_{\m}}^{(\n)}\z_{\m},
	\end{equation}
	where  
	\begin{equation}\label{poly_pi}
	\pi_{e_{\m}}^{(\n)}:=\left\{\begin{array}{cl}
	\tbinom{\m}{\n}h^{\m-\n} & \textnormal{if }\n<\m\\
	0 & \textnormal{otherwise}
	\end{array}\right\}.
	\end{equation}
		i) The following formula holds
		\begin{equation}\label{LOsg3}
		\Gamma_f^*=\sum_{k\geq 0}\frac{1}{k!}\sum_{\n_1,...,\n_k}\pi^{(\n_{1})}\cdots\pi^{(\n_k)}D^{(\n_k)}\cdots D^{(\n_1)}.
		\end{equation}
		In particular,
		\begin{align}
		\Gamma_f^* \z_k &= \sum_{l\geq 0} \tbinom{k+l}{k} (\pi^{(\0)})^l \, \z_{k+l}\text{ for all }k\geq 0, \label{act1}\\
		\Gamma_f^* \z_\n &= \z_\n +\pi^{(\n)}\text{ for all }\n\neq\0. \label{act2}
		\end{align}
%		{\color{darkgreen}
%		and the map $f\mapsto\Gamma^*_f$ is one-to-one.
%		}
		
		ii)  For all  $\pi_1,...,\pi_k \in \T^*$ such that $\pi_1\cdots\pi_k$ $\in\T^*$, 
		\begin{equation}\label{eqmult}
		\Gamma_f^*\pi_1 \cdots \pi_k=(\Gamma_f^*\pi_1)\cdots(\Gamma_f^*\pi_k).
		\end{equation}
		iii) The composition rule \eqref{as33} holds.
		
		iv) If $\{\pi^{(\n)}\}_\n \subset \T^* \cap \R[\z_k,\z_\n]$, then for all $(a,p)$ and $\pi\in\T^*\cap \R[\z_k,\z_\n]$,
		\begin{align}\label{act01}
		\Gamma_f^*\pi[a,p]=\pi\Big[a\big(\cdot+ \pi^{(\0)}[a,p] \big),
		p+\sum_{{\bf n}\not={\bf 0}}\pi^{({\bf n})}[a,p]x^{\bf n}\Big].	
		\end{align}
		v) The subset $\bar{\dum{G}}^*\subset \dum{G}^*$ generated via \eqref{gstar} by $f\in\textnormal{Alg}(\T^+,\R)$ of the form \eqref{SG05} with $\pi_{\gamma}^{(\n)}=0$ for all $(\gamma,\n)$ is a subgroup isomorphic to $(\R^2,+)$. Moreover, if $\Gamma_f^*\in\bar{\dum{G}}^*$ with $h\in\R^2$ as in \eqref{SG05}, then for all $(a,p)$ and $\pi\in\T^*\cap \R[\z_k,\z_\n]$,
		\begin{equation}\label{act03}
		\Gamma_f^*\pi[a,p]=\pi\Big[a\big(\cdot+ p(h) \big),	p(\cdot+h)-p(h)\Big].
		\end{equation}
		\\
		vi) The subset $\widetilde{\dum{G}}^*\subset\dum{G}^*$ generated via \eqref{SG07} by $f\in\textnormal{Alg}(\T^+,\R)$ of the form \eqref{SG05} with $h=(0,0)$ is a subgroup.

\end{proposition}
The reader should see \eqref{act01} as a variant of \eqref{ao81} for the functions on $(a,p)$-space given by polynomials $\pi\in\T^*\cap \R[\z_k,\z_\n]$, where we interpret $\pi^{(\n)}$'s as in \eqref{ext1}. The subgroups $\bar{\dum{G}}^*$ and $\widetilde{\dum{G}}^*$ correspond to shifts and ($(a,p)$-dependent) tilts, respectively: the shift \eqref{ao80} is recovered by \eqref{act03}, whereas \eqref{act01} translates into \eqref{ao81} (since also \eqref{ao81} includes \eqref{ao80}). It is however not possible to recover the tilt by an $(a,p)$-independent polynomial, namely \eqref{ao79}, because the $\pi^{(\n)}$'s are restricted by \eqref{ao60} which does not allow $(a,p)$-independent expressions for large $|\n|$. Finally, although \eqref{act01} and \eqref{act03} hold for all possible  $\pi\in \T^*\cap \R[\z_k,\z_\n]$, there is no hope to extend them to $\T^*$ because generic elements of $\R[[\z_k,\z_\n]]$ cannot be identified with functions of $(a,p)$; the same applies for $\{\pi^{(\n)}\}_\n$ in \eqref{act01}. 

\begin{proof}
We first show \eqref{eqmult}; indeed, it is a direct consequence of \eqref{ope03} and the following generalized Leibniz rule: For all $\pi_1,...,\pi_l\in\R[[\z_k,\z_\n]]$
\begin{equation}\label{leib01}
\Sig{\m}{J}\pi_1 \cdots \pi_l
=\sum (\Sig{\m_1}{J_1}\pi_1)\cdots(\Sig{\m_l}{J_l}\pi_l),
\end{equation}
where the sum runs over all $(J_1,\m_1),...,(J_l,\m_l)$ with $(J_1,\m_1)+\ldots+(J_l,\m_l)=(J,\m)$. It is easy to see that by induction, (\ref{leib01}) for general $l\in\mathbb{N}$ follows from 
the case $l=2$. In view of (\ref{cop01}), this case reduces to
\begin{align}\label{leib02}
U\pi_1\pi_2=\sum_{(U)}(U_{(1)}\pi_1)(U_{(2)}\pi_2)
\end{align}
for $U\in{\rm U}(\mathsf{L})$, where we do not distinguish between 
$\rho U \in{\rm End}(\mathsf{T}^*)$ and $U$. Formula (\ref{leib02}) is trivial for 
$U=1$; it is obvious for $U\in\mathsf{L}\subset{\rm Der}(\mathsf{T}^*)$.
It remains to pass from $U$ to $UD$ for some $D\in\mathsf{L}$. 
For the l.~h.~s.~of (\ref{leib02}) we note that by induction hypothesis
(and base case) we have
\begin{align*}
UD\pi_1\pi_2=\sum_{(U)}\big((U_{(1)} D \pi_1)(U_{(2)}\pi_2)+(U_{(1)}\pi_1)(U_{(2)} D\pi_2)\big).
\end{align*}
For the r.~h.~s.~of (\ref{leib02}) we have by the compatibility of the coproduct
with concatenation (composition) and (\ref{cop0203})
\begin{align*}
{\rm cop}UD=\sum_{(U)}\big(U_{(1)}D\otimes U_{(2)}+U_{(1)}\otimes U_{(2)}D\big).
\end{align*}

\medskip

We now turn to the proof of \eqref{LOsg3}. We first argue that the r.~h.~s. of \eqref{LOsg3}, when interpreted as an endomorphism of $\T^*$, is effectively finite (note that we already know that the l.~h.~s. is effectively finite from \eqref{SG06}). For this, we note that the r.~h.~s.~of \eqref{LOsg3} is an infinite sum of terms of the form
\begin{equation*}
\z^{\gamma_1}\cdots \z^{\gamma_k} D^{(\n_k)}\cdots D^{(\n_1)},
\end{equation*}
where either $[\gamma_i]\geq 0$ or $\gamma_i \in \{e_\n\}_{\n\neq \0}$ for $i=1,...,k$. We extend the family of derivations $\tilde{\mathsf{L}}$ by incorporating purely polynomial multi-indices, so that we consider the set $\{\z^{\gamma'} D^{(\n)}\}_{[\gamma']\geq 0,|\gamma'|>|\n|}$ $\cup \{\z_\m D^{(\n)}\}_{\m>\n}$. It can be easily checked that this family is closed under the standard pre-Lie product $\prelie$ given by the first item in \eqref{fs01}: For the mixed terms this follows from $(\z_\m D^{(\0)})\z^\gamma$ $=$ $\sum_{k\geq 0}(k+1)\gamma(k) \z^{\gamma -e_k + e_{k+1} + e_\m}$ with $|\gamma - e_k + e_{k+1} + e_\m| = |\gamma| + \alpha + |\m| > |\gamma|$, cf. \eqref{ao24}; from $(\z_\m D^{(\n)}) \z^{\gamma'} = \gamma'(\n) \z^{e_\m + \gamma - e_\n}$ with $|e_\m + \gamma - e_\n| = |\gamma'| + |\m| - |\n| > |\gamma| > |\n'|$, cf. \eqref{ao30};  and from $(\z^{\gamma'} D^{(\n')}) \z_\m = \delta_\m^{\n'} \z^{\gamma'}$.

\medskip

With this extension at hand, we follow the strategy of Lemma \ref{Lem4.9}. For this we need the two following properties:
\begin{itemize}
	\item extension of \eqref{fs04} to purely polynomial multi-indices, i.~e.~
	\begin{align*}
	&\lefteqn{(\z_\n D^{({\bf n}')})_\beta^\gamma\not=0}\\
	&\Longrightarrow\quad
	\left\{\begin{array}{ccl}
	[\beta]&=&[\gamma],\\
	{\displaystyle\sum_{{\bf m}\not=\0}|{\bf m}|\beta({\bf m})}&=&
	{\displaystyle\sum_{{\bf m}\not=\0}|{\bf m}|\gamma({\bf m})
		+|\n| - |\n'|}
	\end{array}\right\};
	\end{align*}
	
	\item extension of \eqref{yourchoice1} to purely polynomial multi-indices, i.~e.
	\begin{equation*}
	\{(\gamma', (e_\n,\n'))\;|\; (\z_\n D^{(\n')})_{\beta'}^{\gamma'} \neq 0\}\mbox{ is finite for all }\beta'.
	\end{equation*} 
\end{itemize}
Indeed, both follow from \eqref{ao24} and \eqref{ao30}. Under these hypotheses, the inductive argument in the proof of Lemma \ref{Lem4.9} may be used here to show the analogue of \eqref{SG06}, and thus effective finiteness of the r.~h.~s.~ of \eqref{LOsg3}.

\medskip

Hence, to show \eqref{LOsg3} it is enough to apply both sides to a monomial $\z^\gamma$ with $[\gamma]\geq 0$ or $\gamma\in\{e_\n\}_{\n\neq \0}$. Note that both sides are multiplicative\footnote{ Indeed, this holds for any effectively finite expression of the form of the r.~h.~s. of \eqref{LOsg3} with $D^{(\n)}$'s being commuting derivations and $\pi^{(\n)}$'s being multiplication operators.}, so it is enough to consider $\gamma$'s of length one. Thus, in view of \eqref{ao35} and \eqref{ao30}, showing \eqref{LOsg3} amounts to showing \eqref{act1} and \eqref{act2}.
We start with \eqref{act1}. 
Using \eqref{gstar} and applying \eqref{fw13} (with the roles of $l$ and $k$ flipped), we see that it is enough to show that 
$$
\sum_{(J,\m)} f^{(J,\m)} (\iota_\0 +\sum_{\n\neq\0} \varepsilon_\n\otimes\z_\n )^k D_{(J,\m)} = (\pi^{(\0)})^k .
$$
Here we interpret
$\iota_\0+\sum_{{\bf n}\not=\0}\varepsilon_{\bf n}\otimes\mathsf{z}_{\bf n}$ as a linear
map from ${\rm U}(\mathsf{L})$, a space endowed with coproduct, into
the algebra $\mathbb{R}[[\mathsf{z}_k,\mathsf{z}_{\bf n}]]$, 
so that powers make sense. 
By the binomial formula\footnote{
Note that the coproduct in $\mathrm{U}(\mathsf{L})$ is co-commutative, see \eqref{cop01}, therefore the product of linear maps from $\mathrm{U}(\mathsf{L})$ to any commutative algebra is Abelian.} 
applied to $(\iota_\0 +\sum_{\n\neq\0} \varepsilon_\n\otimes\z_\n )^k$ and \eqref{cop01}, the l.~h.~s. equals
\begin{align*}
\sum_{\substack{(J_1,\m_1),\dots,(J_k,\m_k) \\ 0\leq k'\leq k}} 
%\sum_{k'} \sum_{(J_1,\m_1)+\dots+(J_k,\m_k)=(J,\m)} 
\tbinom{k}{k'} f^{(J_1,\m_1)+\dots+(J_k,\m_k)} 
(\iota_\0 D_{(J_1,\m_1)}) \cdots (\iota_\0 D_{(J_{k'},\m_{k'})}) \\
\times \sum_{\n\neq\0} (\varepsilon_\n D_{(J_{k'+1},\m_{k'+1})}) \z_\n \cdots
\sum_{\n\neq\0} (\varepsilon_\n D_{(J_{k},\m_{k})}) \z_\n .
\end{align*}
By \eqref{SG05}, \eqref{io01} and \eqref{seemynotesonp38}, this equals
\begin{align*}
\sum_{0\leq k' \leq k} \tbinom{k}{k'} \Big( \sum_\gamma \pi_\gamma^{(\0)} \z^\gamma \Big)^{k'} \Big( \sum_{\m\neq\0} h^\m \z_\m \Big)^{k-k'},
\end{align*}
which by the binomial formula and \eqref{ext1} is seen to coincide with $(\pi^{(\0)})^k$.
To show \eqref{act2}, we appeal to \eqref{gstar} to see
\begin{align*}
\Gamma_f^* \z_\n = \z_\n + \sum_{J\neq0} f^{(J,\0)} D_{(J,\0)} \z_\n + \sum_{\m\neq\0} f^{(0,\m)} D_{(0,\m)} \z_\n
\end{align*}
since $D_{(J,\m)}$ with $J\neq0$ and $\m\neq\0$ would annihilate $\z_\n$, cf. \eqref{ao75}.
The first sum has contributions only from $J=e_{(\gamma,\n)}$, therefore by \eqref{ao30} and \eqref{SG05} we have
\begin{align*}
\Gamma_f^* \z_\n = \z_\n + \sum_{\gamma} \pi_\gamma^{(\n)} \z^\gamma + \sum_{\m\neq\0} \tbinom{\n+\m}{\n} h^\m \z_{\n+\m},
\end{align*}
which together with \eqref{ext1} yields \eqref{act2}.

\medskip

We now turn to the proof of the composition rule \eqref{as33}.
Given $\pi^{(\n)},\pi'^{(\n)}$ with corresponding $\Gamma^*,\Gamma'^*$, we define $\bar\pi^{(\n)}:=\pi^{(\n)}+\Gamma^*\pi'^{(\n)}$ and have to show that the corresponding $\bar\Gamma^*$ satisfies $\bar\Gamma^*=\Gamma^*\Gamma'^*$.
By multiplicativity \eqref{eqmult}, it is enough to show that $\bar\Gamma^*$ coincides with $\Gamma^*\Gamma'^*$ on the coordinates $\{\z_k,\z_{\n}\}_{k\geq 0,\n\neq \0}$.
For $\n\neq\0$, we obtain by applying \eqref{act2} three times
\[
\Gamma^*\Gamma'^* \z_\n
= \Gamma^* (\z_\n+\pi'^{(\n)}) 
= \z_\n + \pi^{(\n)} + \Gamma^*\pi'^{(\n)}
= \bar\Gamma^* \z_\n.
\]
For $k\geq0$ we apply \eqref{act1} twice and use multiplicativity \eqref{eqmult} to obtain
\begin{align*}
\Gamma^*\Gamma'^*\z_k 
&= \Gamma^* \sum_{l'\geq0} \tbinom{k+l'}{k} (\pi'^{(\0)})^{l'} \, \z_{k+l'} \\
&= \sum_{l'\geq0} \tbinom{k+l'}{k} (\Gamma^*\pi'^{(\0)})^{l'} \sum_{l\geq0} \tbinom{k+l'+l}{k+l'} (\pi^{(\0)})^l \, \z_{k+l'+l}.
\end{align*}
A re-summation together with the binomial formula and applying once more \eqref{act1} yield 
\[
\Gamma^*\Gamma'^*\z_k 
= \sum_{\bar{l}\geq0} \tbinom{k+\bar{l}}{k} (\pi^{(\0)} +\Gamma^*\pi'^{(\0)} )^{\bar{l}} \, \z_{k+\bar{l}}
= \bar\Gamma^* \z_k,
\]
which finishes the proof of \eqref{as33}.

\medskip

To prove \eqref{act01}, by \eqref{eqmult} it is enough to show it for $\pi\in\{\z_k,\z_{\n}\}_{k\geq 0,\n\neq \0}$. These special cases are a consequence of \eqref{act1}, \eqref{act2} and Taylor's formula.

\medskip

We now argue in favor of \textit{v)}. By \eqref{groupstruct02}, we have to show that for $f$ of the form
\begin{equation}\label{square}
f^{(J,\m)} = \left\{\begin{array}{cl}
h^\m & \mbox{if}\ J=0\\
0 & \mbox{otherwise}
\end{array}
\right\}
\end{equation}
the product \eqref{conv01} amounts to addition of $h\in\R^2$. Indeed, 
\begin{align*}
(f'f'')^{(J,\m)} 
&\,\underset{\eqref{conv01},\eqref{square}}{=} 
\sum_{\m',\m''} (\Delta^+)^{(J,\m)}_{(0,\m')(0,\m'')} (h')^{\m'} (h'')^{\m''}\\
&\underset{\eqref{proddel},\eqref{prod01}}{=}
\delta_{0}^{J} \sum_{\m'+\m''=\m} \tbinom{\m'+\m''}{\m'} (h')^{\m'} (h'')^{\m''},
\end{align*}
and we conclude by the binomial formula. 
For $f$ of the form \eqref{square}, and using \eqref{ao20bis}, \eqref{ext1} assumes the form $\sum_\n \pi^{(\n)}[a,p] x^\n = p(x+h)$, so that \eqref{act01} turns into \eqref{act03}.

\medskip

We finally turn to the proof of \textit{vi)}. By \eqref{groupstruct02}, it suffices to show that the set of $f\in\Alg (\T^+,\R)$ such that 
\[
f^{(J,\m)} = 0 \quad \mbox{for } \m\neq\0
\] is a subgroup; this is a direct consequence of ${\rm U}(\tilde{\mathsf{L}})$, cf. \eqref{p26}, being a sub-Hopf algebra of ${\rm U}(\mathsf{L})$.
\end{proof}

%%%%%%%%%%%%%%%%%%%%%%%%%%%%%%%%%%%%%%%%%%%%%%%%%%%%

\medskip

%%%%%%%%%%%%%%%%%%%%%%%%%%%%%%%%%%%%%%%%%%%%%%%%%%%%

%\subsection{Relation to Hairer's regularity structure}
\subsection{Relation to Hairer's regularity structure}\label{Sect.ext}
\mbox{}

In this subsection, while keeping $\mathsf{G}$ as an abstract group,
we enlarge the abstract model space $\mathsf{T}$ on which it acts. We do so in order to
draw a closer connection to \cite{Hairer}. We will proceed in two steps,
first enlarging the abstract model space by a placeholder for the omitted constants,
and then by a placeholder for the right-hand side of the equation.

\medskip

While thinking of $p$ only modulo constants was an important guiding principle 
in uncovering the algebraic structure, see Section \ref{Sect1},
we will now re-introduce constants into the polynomial sector $\mathsf{\bar T}$, 
cf.~Subsection \ref{Sect3.3}, by augmenting its basis 
$\{x^{\bf n}\}_{{\bf n}\not={\bf 0}}$ by the element $x^{\bf 0}$.
As a consequence, we pass from $\mathsf{T}$ to $\mathbb{R}\oplus\mathsf{T}$. 
We now argue that the action of $\mathsf{G}$ naturally extends to $\mathbb{R}\oplus\mathsf{T}$,
where we first adopt the point of view of Subsection \ref{Sect5.1old}:
Indeed, given $h\in\mathbb{R}^d$ and $\{\pi^{({\bf n})}\}_{{\bf n}}\subset\mathsf{\tilde T}^*$,
which gives rise to $\Gamma\in{\rm End}(\mathsf{T})$, the extension to an
endomorphism of $\mathbb{R}\oplus\mathsf{T}$ is visualized by the block structure\footnote{the
sum in the upper right entry is effectively finite and thus
defines an element of $\mathsf{T}^*$, in line with the meaning of this block}
\begin{align}\label{ud01}
\left(\begin{array}{cc}
{\rm id}&\pi^{({\bf 0})}+\sum_{{\bf m}\not={\bf 0}}h^{\bf m}\mathsf{z}_{\bf m}\\
0&\Gamma
\end{array}\right).
\end{align}
This form of extension completes the action (\ref{pol01}) on the (extended)
polynomial sector $\mathbb{R}\oplus\mathsf{\bar T}$ in the sense of \cite[Assumption 3.20]{Hairer}: 
Since an element of $\mathsf{\tilde T}^*$,
like $\pi^{({\bf 0})}$, is characterized by vanishing on $\mathsf{\bar T}$, 
(\ref{ud01}) maps the basis element $x^{\bf n}$ 
onto $\sum_{{\bf m}}{{\bf n}\choose {\bf n}}h^{\bf m}x^{{\bf n}-{\bf m}}$,
which formally can be written as $(x+h)^{\bf n}$. 
We will motivate the presence of $\pi^{({\bf 0})}$ in
(\ref{ud01}) below.

\medskip

For (\ref{ud01}) to define an action, we need to check that the composition of
two endomorphisms of the form of (\ref{ud01}) preserves this form. This is more
easily seen for the
induced dual action on $(\mathbb{R}\oplus\mathsf{T})^*\cong\mathbb{R}\oplus\mathsf{T}^*$,
which is of the block form 
\begin{align}\label{ud02}
\left(\begin{array}{cc}
{\rm id}&0\\
\pi^{({\bf 0})}+\sum_{{\bf m}\not={\bf 0}}h^{\bf m}\mathsf{z}_{\bf m}&\Gamma^*
\end{array}\right).
\end{align}
It is now convenient to adopt the point of view of Subsection \ref{Sect5.2old},
which amounts to viewing the lower left entry of (\ref{ud02}) as a single object,
as done in (\ref{ext1}) and (\ref{poly_pi}), still labelled by $\pi^{({\bf 0})}$.
The desired statement then follows from (\ref{as33}), which was rigorously established
in part \textit{iii)} of Proposition \ref{exp01}. 

\medskip

It is also on the level of this extended
definition (\ref{ext1}) of $\pi^{({\bf 0})}$ that we may motivate (\ref{ud02}): 
The purpose of $\mathsf{G}^*$ is to contain elements $\Gamma_{xy}^*$ that ``algebrize''
the re-centering of the model, which is a $\mathsf{T}^*$-valued function\footnote{ or distribution, depending on the application}
of space-time, from its version $\Pi_y$ centered at one base point $y$ 
to its version $\Pi_x$ centered at another base point $x$, see \cite[Definition 3.3]{Hairer}. 
In the application \cite[(2.41)]{LOTT} of our setting, 
this holds only up to a space-time constant $\pi^{({\bf 0})}_{xy}\in\mathsf{T}^*$, that is,
\begin{align}\label{ud09}
\Pi_x=\Gamma^*_{xy}\Pi_y+\pi^{({\bf 0})}_{xy},
\end{align}
which however is tied to $\Gamma^*_{xy}$ via (\ref{act1}). 
Now (\ref{ud02}), with $\Gamma^*$ and $\pi^{({\bf 0})}$ specified to
$\Gamma^*_{xy}$ and $\pi^{({\bf 0})}_{xy}$, 
is made such that (\ref{ud09}) exactly assumes the form of \cite[Definition 3.1]{Hairer} 
provided we augment the model by the constant space-time function of value 1, 
which lives in the $\mathbb{R}$-component of $\mathbb{R}\oplus\mathsf{T}^*$.

\medskip
%%%%%%%%%%%%%%%%%%%%%%%%%%%%%%%%%%%%%%%%%%%%%%%%%%%%%%%%%%%%%%%%%%%%%%%%%%%%%%%%%%%%%%

We now come to the second extension. Hairer's construction of a regularity structure
is bottom-up and combinatorial, in the sense that the index
set of the abstract model space encodes all combinations of 
integration\footnote{which in case of (\ref{as34}) involves the two kernels
$(\frac{\partial}{\partial z_2} - \frac{\partial^2}{\partial z_1^2})^{-1}$ and $(\frac{\partial}{\partial z_2} - \frac{\partial^2}{\partial z_1^2})^{-1}\frac{\partial^2}{\partial z_1^2}$; here we denote by $z$ the active variable of $\Pi_x$} 
and multiplication which are relevant for the equation. In particular, this implies that for every model component $\Pi_{x\beta}$,
also $\Pi^{-}_{x\beta}:=(\frac{\partial}{\partial z_2} - \frac{\partial^2}{\partial z_1^2})\Pi_{x\beta}$ is a component of the model.
Since in regularity structures one only cares for the equation up to polynomials,
it is natural to think of $\Pi^{-}_x$ as a $\mathsf{\tilde T}^*$-valued Schwartz distribution;
recall that $\mathsf{\tilde T}^*$ is canonically characterized as the space
of all linear functionals $\pi\in\mathsf{T}^*$ that vanish on the $x^{\bf n}$'s.
In order to capture this on the level of our abstract model space, or rather its dual,
we pass from $(\mathbb{R}\oplus\mathsf{T})^*$ to  
$(\mathbb{R}\oplus\mathsf{T})^*\oplus\mathsf{\tilde T}^*$. 

\medskip

We extend (\ref{ud02}) on this larger space $(\mathbb{R}\oplus\mathsf{T})^*
\oplus\mathsf{\tilde T}^*$ as follows
\begin{align}\label{ud03}
\left(\begin{array}{cc}
\mbox{(\ref{ud02})}&0\\
\sum_{{\bf m}}\pi^{({\bf m})}\otimes(\partial_2-\partial_1^2)^\dagger x^{\bf m}&\Gamma^*
\end{array}\right). 
\end{align}
In formulating (\ref{ud03}), 
we return to the perspective of Subsection \ref{Sect5.1old} in the sense that
$\pi^{({\bf m})}\in\mathsf{\tilde T}^*$, so that together with 
$(\partial_2-\partial_1^2)^\dagger x^{\bf m}\in\mathbb{R}\oplus\mathsf{T}$,
which canonically embeds into the bi-dual $(\mathbb{R}\oplus\mathsf{T})^{**}$,
the bottom left entry indeed defines
a linear map from $(\mathbb{R}\oplus\mathsf{T})^*$ into $\mathsf{\tilde T}^*$. Here, we extended the
definition of $\partial_i^\dagger$, see (\ref{ao51}), to $x^{\bf 0}$ in the obvious way, 
so that ${\bf m}=(0,0),(1,0)$ do not contribute, 
and the contribution of ${\bf m}=(0,1),(2,0)$ renders $x^{\bf 0}$ and $-2 x^{\bf 0}$,
respectively. The sum
is effectively finite according to the population condition \eqref{constraintPi}.
Since as a consequence of (\ref{ao61}), $\Gamma^*$ preserves $\mathsf{\tilde T}^*$,
it is legitimate as a bottom right entry.

\medskip

The motivation for the extension (\ref{ud03}) is again given by the application \cite[(2.40)]{LOTT} of our abstract structure. Indeed, the new model components transform 
according to $\Gamma_{xy}^*$ up to a $\mathsf{\tilde T}^*$-valued 
(formal) power series in space-time
\begin{align}\label{ud10}
\Pi_{x}^{-}=\Gamma_{xy}^*\Pi_y^{-}+\sum_{{\bf m}}
\big((\frac{\partial}{\partial z_2} - \frac{\partial^2}{\partial z_1^2})(\cdot-y)^{\bf m}\big)\pi^{({\bf m})}_{xy},
\end{align}
where again the coefficients are tied to $\Gamma_{xy}^*$ via (\ref{act2});
note that the terms ${\bf m}=(0,0),(1,0)$ do not contribute. We note that by
\cite[(2.23)]{LOTT}, which is in line with the axioms 
\cite[first item in (3.11)]{Hairer}, and (\ref{ao51}), we may write
$(\frac{\partial}{\partial z_2} - \frac{\partial^2}{\partial z_1^2})(\cdot-y)^{\bf m}$ $=\langle\Pi_y,(\partial_2-\partial_1^2)^\dagger x^{\bf m}\rangle$. 
Hence we see that (\ref{ud10}) assumes the axiomatic form \cite[Definition 3.3]{Hairer},
which is free of polynomial corrections,
under the extension (\ref{ud03}). 

\medskip

In order to establish that (\ref{ud03}) provides indeed a representation of $\mathsf{G}$,
we now argue that it is compatible with composition. 
By the compatibility of (\ref{ud02}), and of the bottom right block of (\ref{ud03}),
we are left with the bottom left block of the product, which consists of the summands
\begin{align}\label{ud07}
\pi^{({\bf m})}\otimes
\big(\mbox{$(\ref{ud01})'$ applied to}\;(\partial_2-\partial_1^2)^\dagger x^{\bf m}\big)
+\Gamma^*{\pi'}^{({\bf m})}\otimes (\partial_2-\partial_1^2)^\dagger x^{\bf m},
\end{align}
where $(\ref{ud01})'$ stands for (\ref{ud01}) with $\Gamma$ and $\pi^{({\bf 0})}$ replaced by
$\Gamma'$ and ${\pi'}^{({\bf 0})}$, respectively. We note that 
$p_{\bf m}:=(\partial_2-\partial_1^2)^\dagger x^{\bf m}$ is an element of the (extended) polynomial
sector $\mathbb{R}\oplus\mathsf{\bar T}$; hence applying $(\ref{ud01})'$ to it,
we obtain\footnote{to be interpreted in a formal sense} $p_{\bf m}(\cdot+h')$, 
as we have shown above.
Hence (\ref{ud07}) assumes the form
\begin{align}\label{ud05}
\pi^{({\bf m})}\otimes p_{\bf m}(\cdot+h')
+\Gamma^*{\pi'}^{({\bf m})}\otimes p_{\bf m}\quad\mbox{with}
\quad p_{\bf m}=(\partial_2-\partial_1^2)^\dagger x^{\bf m}.
\end{align}
We need to re-express (\ref{ud05}) in terms of the extended definition (\ref{ext1}) 
in order to (eventually) apply (\ref{as33}). To this purpose we introduce
\begin{align*}
{\rm id}-P=\sum_{{\bf n}\not={\bf 0}}\mathsf{z}_{\bf n}\otimes x^{\bf n},
\end{align*}
which since $\{\mathsf{z}_{\bf n}\}_{{\bf n}\not={\bf 0}}$
is dual to $\{x^{\bf n}\}_{{\bf n}\not={\bf 0}}$ 
defines a projection $\gls{projectionP}$ from $\mathsf{T}^*$ onto $\mathsf{\tilde T}^*$
that allows to pass from the extended definition (\ref{ext1}) of $\pi^{({\bf m})}$
to the original one used in (\ref{ud05}).
%$P={\rm id}-\sum_{{\bf n}\not={\bf 0}}\mathsf{z}_{\bf n}\otimes x^{\bf n}$
%$P\pi.x^{\bf m}=\pi.x^{\bf m}-\sum_{{\bf n}\not={\bf 0}}(\mathsf{z}_{\bf n}.x^{\bf m})
%(\pi.x^{\bf n})$
Clearly, the second summand in (\ref{ud05}) calls for the commutator of
$\Gamma^*$ and $P$, of which we need a representation:
Using once more that $\Gamma^*$ preserves $\mathsf{\tilde T}^*$, which can be written
as $\Gamma^*P=P(\Gamma^*-\Gamma^*({\rm id}-P))$, we obtain by (\ref{act2})
\begin{align*}
\Gamma^*P=P\big(\Gamma^*-\sum_{{\bf n}\not={\bf 0}}\pi^{({\bf n})}\otimes x^{\bf n}\big).
\end{align*}
In view of (\ref{ud05}), we apply this to ${\pi'}^{({\bf m})}$, which by 
(\ref{poly_pi}) yields
\begin{align*}
\Gamma^*P{\pi'}^{({\bf m})}=P\big(\Gamma^*{\pi'}^{({\bf m})}
-\sum_{{\bf n}>{\bf m}}\tbinom{\bf n}{\bf m}{h'}^{{\bf n}-{\bf m}}\pi^{({\bf n})}\big).
\end{align*}
Hence in terms of the extended definition (\ref{ext1}), (\ref{ud05}) assumes the form
\begin{align}
&P\Big(\pi^{({\bf m})}\otimes p_{\bf m}(\cdot+h')
+\big(\Gamma^*{\pi'}^{({\bf m})}
-\sum_{{\bf n}>{\bf m}}\tbinom{\bf n}{\bf m}{h'}^{{\bf n}-{\bf m}}
\pi^{({\bf n})}\big)\otimes p_{\bf m}\Big)\nonumber\\
&=P\Big(\big(\pi^{({\bf m})}+\Gamma^*{\pi'}^{({\bf m})}\big)\otimes p_{\bf m}\label{ud06}\\
&+\pi^{({\bf m})}\otimes(p_{\bf m}(\cdot+h')-p_{\bf m})
-\sum_{{\bf n}>{\bf m}}\tbinom{\bf n}{\bf m}{h'}^{{\bf n}-{\bf m}}
\pi^{({\bf n})}\otimes p_{\bf m}\Big).\label{ud08}
\end{align}
According to (\ref{as33}), the term in line (\ref{ud06}) is the desired output.
Hence it remains to argue that the term in (\ref{ud08}) vanishes when summed over ${\bf m}$,
which by resummation and relabelling amounts to
\begin{align*}
p_{\bf m}(\cdot+h')-p_{\bf m}
=\sum_{{\bf n}<{\bf m}}\tbinom{\bf m}{\bf n}{h'}^{{\bf m}-{\bf n}}p_{\bf n}.
\end{align*}
Recalling that $p_{\bf n}=(\partial_2-\partial_1^2)^\dagger x^{\bf n}$,
and by (\ref{ao51}), this follows from the same formula with
$p_{\bf n}$ replaced by $x^{\bf n}$, which amounts to Leibniz' rule.

\medskip

We note that $\mathsf{\tilde T}\subset\mathsf{T}$ is canonically defined as
consisting of those elements that are annihilated by 
$\{\mathsf{z}_{\bf n}\}_{{\bf n}\not={\bf 0}}\subset\mathsf{T}^*$;
it is a natural complement of $\mathsf{\bar T}$ in $\mathsf{T}$.
Moreover, $(\mathbb{R}\oplus\mathsf{T})^*\oplus\mathsf{\tilde T}^*$ is
the dual of $(\mathbb{R}\oplus\mathsf{T})\oplus\mathsf{\tilde T}$.
Passing to this primal side
$(\mathbb{R}\oplus\mathsf{T})\oplus\mathsf{\tilde T}$, (\ref{ud03}) turns into
\begin{align}\label{ud11}
{\bf \Gamma}=\left(\begin{array}{cc}
\mbox{(\ref{ud01})}
%\left(\begin{array}{cc}
%{\rm id}&\pi^{({\bf 0})}+\sum_{{\bf m}\not={\bf 0}}h^{\bf m}\mathsf{z}_{\bf m}\\
%0&\Gamma
%\end{array}\right)
&\sum_{{\bf m}}\pi^{({\bf m})}\otimes(\partial_2-\partial_1^2)^\dagger x^{\bf m}\\
0&P^\dagger\Gamma
\end{array}\right),
\end{align}
since the bottom r.~h.~s.~of (\ref{ud03}) can be rewritten as
$\Gamma^*P$, and since $\gls{projectionPdagger}$ defined through
\begin{align}\label{ud14}
{\rm id}-P^\dagger=\sum_{{\bf n}\not={\bf 0}}x^{\bf n}\otimes \mathsf{z}_{\bf n}
\end{align}
is the dual of $P$, and a projection from $\mathsf{T}$ onto $\mathsf{\tilde T}$.
Hairer's integration map \cite[Assumption 3.21]{Hairer} has the block form
\begin{align}\label{ud12}
{\bf I}=\left(\begin{array}{cc}
0&\iota\\
0&0
\end{array}\right),
\end{align}
where $\iota$ is the injection $\mathsf{\tilde T}\subset\mathsf{T}$.
By definition (\ref{ud01}), the commutator ${\bf \Gamma I-I\,\Gamma}$ has the block form
\begin{align}\label{ud13}
{\bf \Gamma I-I\,\Gamma}=\left(\begin{array}{cc}
0&({\rm id}-P^\dagger)\Gamma\\
0&0
\end{array}\right),
\end{align}
so that by (\ref{ud14}),
the image of ${\bf \Gamma I-I\,\Gamma}$ is contained in $\mathsf{\bar T}$,
in line with the axiom \cite[(3.9)]{Hairer}.

\medskip

We close this subsection by arguing, in line with the axiom \cite[Definition 3.1]{Hairer}, 
that the matrix representation of
${\bf \Gamma}-{\rm id}$ is strictly triangular, provided we extend the basis
$\{\mathsf{z}_\beta\}_{\{[\beta]\ge 0\}\cup\{\beta=e_{\bf n}\}}$ of
$\mathsf{T}$ to a basis of $\mathbb{R}\oplus\mathsf{T}\oplus\mathsf{\tilde T}$, 
and extend the homogeneity of its index set as follows: 
On the $\mathbb{R}$-component, we take the dual basis vector to $x^{\bf 0}$, 
and endow this single index, which we (momentarily) denote by ${\bf 0}$, with homogeneity $0$;
on the $\mathsf{\tilde T}$-component, we take the basis $\{\mathsf{z}_\beta\}_{[\beta]\ge 0}$,
endowing the index $\beta$ with the homogeneity $|\beta|-2$. In particular, 
the integration map (\ref{ud12}) increases the homogeneity by $2$, 
which is the order of $\frac{\partial}{\partial z_2} - \frac{\partial^2}{\partial z_1^2}$.  
We first turn to (\ref{ud01}), where it remains to consider the upper right entry,
which is given by
\begin{align*}
{\bf \Gamma}^{\bf 0}_{\beta}=\left\{\begin{array}{cc}
\pi^{({\bf 0})}_{\beta}&\mbox{for}\;[\beta]\ge 0\\
h^{\bf m}&\mbox{for}\;\beta=e_{\bf m}\;\mbox{for some}\;{\bf m}\not={\bf 0}\end{array}\right\},
\end{align*}
and satisfies triangularity because of $|\beta|>0=|{\bf 0}|$.
We now turn to (\ref{ud11}), where once more it suffices to consider the upper
right entry, which is given by
\begin{align}\label{ud18}
{\bf \Gamma}_{\beta}^{\gamma}=\sum_{{\bf m}}
\pi^{({\bf m})}_{\beta}\langle\mathsf{z}^{\gamma},(\partial_2-\partial_1^2)^\dagger x^{\bf m}\rangle.
\end{align}
This expression vanishes unless $\gamma\in\{e_{{\bf m}-(0,1)},e_{{\bf m}-(2,0)}\}$
and thus $|\gamma|+2=|{\bf m}|$, and unless $|{\bf m}|<|\beta|$, by the population 
condition (\ref{constraintPi}). Hence (\ref{ud18}) vanishes unless $|\gamma|<|\beta|-2$.

%%%%%%%%%%%%%%%%%%%%%%%%%%%%%%%%%%%%%
%%%%%%%%%%%%%%%%%%%%%%%%%%%%%%%%%%%%%

\section{Homomorphism to the Connes-Kreimer Hopf algebra}\label{Sect5}
In this logically independent and rather combinatorial 
section, we specify to the particularly simple
case of scalar\footnote{i.~e.~in a one-dimensional state space} branched rough paths.
We will argue that our coproduct $\Delta^+$ on $\mathsf{T}^+$, when suitable restricted, 
arises from
the Connes-Kreimer coproduct. More precisely, there is a linear subspace
$\mathsf{T}_{RP}\subset\mathsf{T}$ and a pre-Lie subalgebra 
$\mathsf{L}_{RP}\subset\mathsf{L}$ (which as a linear space is isomorphic to
$\mathsf{T}_{RP}$) of derivations $D$ (which are such that $D^\dagger$ preserves $\mathsf{T}_{RP}$)
such that the corresponding restriction of $\Delta^+$ intertwines with the 
Connes-Kreimer coproduct on forests, see Subsection \ref{Sect5.1}.
The intertwining is provided by the linear one-to-one map $\phi$ that relates our
model, which is indexed by multi-indices, to branched rough paths indexed by trees, 
see Subsection \ref{Sect5.2}. %This section is the most combinatorial of all.

\subsection{Relating the model $\Pi$ to branched rough paths}
\mbox{}

Since it does not affect the algebraic insight of this section,
we consider a qualitatively smooth driver $\xi$ to avoid renormalization.
Following our initial discussion in Subsection \ref{Sect1.11}, 
we consider the solution of the initial value problem
\begin{align}\label{ckh01}
\frac{du}{dx_2}=a(u)\xi,\quad u(x_2=0)=0
\end{align}
as a functional $u=u[a](x_2)$ of the (polynomial) nonlinearity $a$. It lifts
to a function of the coordinates $\{\z_k\}_{k\geq 0}$ 
introduced in (\ref{ao20bis}).
Hence we may take derivatives with respect to these coordinates evaluated at $\z_k=0$;
these partial derivatives are indexed by multi-indices\footnote{
which are maps $\mathbb{N}_0\ni k\mapsto\beta(k)\in\mathbb{N}_0$ with finitely many non-zero values} $\beta$.
It is easy to (formally) verify that the resulting partial derivatives $\Pi_\beta$ satisfy
\begin{align}\label{ckh02}
\frac{d\Pi_\beta}{dx_2}
=\sum_{k\ge 0}\sum_{e_k+\beta_1+\cdots+\beta_k=\beta}\Pi_{\beta_1}\cdots\Pi_{\beta_k}\xi,
\quad \Pi_\beta(x_2=0)=0,
\end{align}
with the understanding that $\Pi_{e_0}(x_2)=\int_0^{x_2}\xi$.
These components combine to the centered\footnote{Centered at time $x_2=0$, 
which however we suppress in our notation.} model $\gls{model}=\{\Pi_\beta\}_{\beta}$.
Incidentally, interpreting $x_2\mapsto\Pi(x_2)\in\mathbb{R}[[\z_k]]$,
(\ref{ckh02}) can be compactly written as $\frac{d\Pi}{dx_2}$
$=\sum_{k\ge 0}\z_k\Pi^k\xi$.
While this derivation of (\ref{ckh02}) is formal, $\Pi=\{\Pi_\beta\}_{\beta}$ can be, 
inductively in the length of $\beta$, constructed rigorously for sufficiently
regular $\xi$. 

\medskip

Based on (\ref{ckh02}) we may read off that not all the multi-indices are populated. 
More precisely, we claim that $\Pi_\beta\not=0$ implies 
\begin{align}\label{ckh03}
\sum_{k\ge 0}(k-1)\beta(k)=-1.
\end{align}
We establish (\ref{ckh03}) in its negated form by induction in $\sum_{k\ge 0}k\beta(k)$.
In the base case $\sum_{k\ge 0}k\beta(k)=0$, which is equivalent to $\beta\in\mathbb{N}_0e_0$, 
in which case
the r.~h.~s. of (\ref{ckh02}) reduces to $k=0$ and thus $\beta=e_0$, which satisfies (\ref{ckh03}).
Turning to the induction step, we note that the r.~h.~s. of (\ref{ckh01}) restricts to $k\ge 1$
so that the induction hypothesis can be applied to $\beta_1,\dots,\beta_k$. Hence
the induction step follows from the fact that (\ref{ckh03}) is preserved when
passing from $\beta_1,\dots,\beta_k$ to $\beta=e_k+\beta_1+\cdots+\beta_k$.

\medskip

We now compare (\ref{ckh02}) to the standard definition of branched rough paths,
which is based on (if not otherwise stated: rooted and thus non-empty and undecorated) 
trees $\gls{genericTree}$ instead of multi-indices $\beta$.
We recall that for a collection $\tau_1,\dots,\tau_k,\tau$ of such trees, the notation
\begin{align}\label{fs16}
\tau={\mathcal B}_+(\tau_1\cdots \tau_k)
\end{align}
means that $\tau$ is the tree that is obtained from 
%%%%%%%%%%%%%%%%%%%%%%%%%%%%%%%%%%%%%%%
%%%%%%%%%%%%%%%%%%%%%%%%%%%%%%%%%%%%%%%
attaching an edge to each of the trees $\tau_1,\dots,\tau_k$ and merging them in a common root,
with the understanding that ${\mathcal B}_+(\emptyset)$ gives the
tree with a single node\footnote{which is the root}, denoted by $\treeZero$. 
We recall from \cite[Section 4]{Gub10} that the branched rough path\footnote{the canonical
lift of $\xi$}
$\{\mathbb{X}_\tau\}_{\tau}$ is, inductively in the number of edges, defined through
\begin{align}\label{fs17}
\frac{d\mathbb{X}_\tau}{dx_2}=\mathbb{X}_{\tau_1}\cdots\mathbb{X}_{\tau_k}\xi,
\quad\mathbb{X}_\tau(x_2=0)=0
\quad\mbox{provided (\ref{fs16}) holds},
\end{align}
which includes $\mathbb{X}\treeZeroo (x_2)=\int_0^{x_2}\xi$. 

\medskip

It is clear from
(\ref{ckh02}) and (\ref{fs17}) that every $\Pi_\beta$ is a linear combination
of the $\mathbb{X}_\tau$'s.
\begin{lemma}\label{lemrp} For every multi-index $\beta$,
\begin{align}\label{fs18}
\Pi_\beta=\sum_{\tau\in{\mathcal T}_\beta}\frac{\sigma(\beta)}{\sigma(\tau)}\mathbb{X}_\tau.
\end{align}
\end{lemma}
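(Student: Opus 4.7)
First I would unpack the definitions hinted at in the statement. The set $\mathcal{T}_\beta$ consists of (isomorphism classes of) rooted trees in which, for every $k \geq 0$, exactly $\beta(k)$ nodes have $k$ children; equivalently, $\mathcal{T}_\beta$ is the fibre over $\beta$ of the obvious map from rooted trees to multi-indices. A short induction in the number of edges shows that this fibre is non-empty if and only if (\ref{ckh03}) holds. The symmetry factor $\sigma(\tau)$ is the order of the automorphism group of the rooted tree $\tau$; recursively, if $\tau = \mathcal{B}_+(\tau_1 \cdots \tau_k)$ and the multiset $\{\tau_1, \ldots, \tau_k\}$ consists of pairwise distinct classes $\tau_1', \ldots, \tau_r'$ with multiplicities $m_1, \ldots, m_r$, then $\sigma(\tau) = \prod_i m_i! \, \sigma(\tau_i')^{m_i}$. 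The corresponding combinatorial weight attached to a multi-index should be $\sigma(\beta) := \prod_k (k!)^{\beta(k)}$; this definition is chosen so that whenever $\beta = e_k + \beta_1 + \cdots + \beta_k$ one has the factorization
\begin{equation*}
\sigma(\beta) = k! \prod_{i=1}^k \sigma(\beta_i),
\end{equation*}
which follows at once by reading $e_k + \sum_i \beta_i = \beta$ coordinate-wise.

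The proof then proceeds by induction on $\sum_k \beta(k)$, the number of nodes. The base case, which by (\ref{ckh03}) forces $\beta = e_0$, is immediate: both (\ref{ckh02}) and (\ref{fs17}) give $\Pi_{e_0}(x_2) = \mathbb{X}_{\treeZeroo}(x_2) = \int_0^{x_2} \xi$, while $\sigma(e_0) = \sigma(\treeZeroo) = 1$. For the induction step, one feeds the induction hypothesis into the right-hand side of (\ref{ckh02}); since, for each ordered choice of subtrees $(\tau_1, \ldots, \tau_k)$, the ordered multi-index decomposition is forced by $\beta_i = \beta(\tau_i)$, the result is
\begin{equation*}
\Pi_\beta(x_2) = \int_0^{x_2} \sum_{k \geq 1} \sum_{\substack{(\tau_1, \ldots, \tau_k) \\ e_k + \sum_i \beta(\tau_i) = \beta}} \prod_{i=1}^k \frac{\sigma(\beta(\tau_i))}{\sigma(\tau_i)} \, \mathbb{X}_{\tau_1} \cdots \mathbb{X}_{\tau_k} \, \xi.
\end{equation*}

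The crucial step is to group ordered tuples $(\tau_1, \ldots, \tau_k)$ according to the tree $\tau = \mathcal{B}_+(\tau_1 \cdots \tau_k) \in \mathcal{T}_\beta$ they produce; each such $\tau$ is obtained from exactly $k! / \prod_i m_i!$ orderings of its children. Using the recursive identity $\sigma(\tau) = \prod_i m_i! \cdot \prod_j \sigma(\tau_j)$ together with the factorization of $\sigma(\beta)$ noted above, one obtains the single-line combinatorial identity
\begin{equation*}
\frac{k!}{\prod_i m_i!} \prod_{j=1}^k \frac{\sigma(\beta(\tau_j))}{\sigma(\tau_j)} = \frac{\sigma(\beta)}{\sigma(\tau)},
\end{equation*}
so the double sum collapses to $\sum_{\tau \in \mathcal{T}_\beta} \frac{\sigma(\beta)}{\sigma(\tau)} \mathbb{X}_{\tau_1} \cdots \mathbb{X}_{\tau_k}$, and (\ref{fs17}) applied termwise delivers (\ref{fs18}).

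The main obstacle is precisely this bookkeeping of combinatorial factors, which hinges on choosing the correct $\sigma(\beta)$ from the outset: the naive candidate $\prod_k \beta(k)!$ already fails on $\beta = 3 e_0 + 2 e_2$ (where the single $\tau \in \mathcal{T}_\beta$ has $\sigma(\tau) = 2$ and the recursion produces $\Pi_\beta = 2\,\mathbb{X}_\tau$, forcing $\sigma(\beta) = 4$). With $\sigma(\beta) = \prod_k (k!)^{\beta(k)}$ the factorization along the root is transparent and the rest of the argument is a routine inductive identity in the spirit of the Butcher / B-series coefficients underlying branched rough paths.
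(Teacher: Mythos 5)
Your proof is correct and takes essentially the same approach as the paper's: induction on tree size, the factorization $\sigma(\beta)=k!\prod_j\sigma(\beta_j)$ and $\sigma(\tau)=(\prod_i m_i!)\prod_j\sigma(\tau_j)$ at the root, and the re-summation between the tree-indexed sum and the sum over ordered tuples of subtrees. The paper packages that re-summation in an abstract helper (Lemma~\ref{lemsum02}) and runs the computation in the opposite direction (defining $\tilde\Pi_\beta$ and showing it satisfies \eqref{ckh02}), whereas you expand \eqref{ckh02} directly and count the $k!/\prod_i m_i!$ orderings by hand, but the combinatorial content is identical.
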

Here ${\mathcal T}_\beta$ is the set of trees that have $\beta(k)$ nodes with $k$ 
children\footnote{Note that thanks to the restriction (\ref{ckh03}), this set is not empty.},
and where $\sigma(\beta)$ and $\sigma(\tau)$ are symmetry factors defined as follows:
\begin{align}\label{fs24}
\sigma(\beta):=\prod_{k\ge 0}(k!)^{\beta(k)}
\end{align}
is the size of the group of all transformations
of a tree  $\tau\in{\mathcal T}_\beta$ that are obtained by permuting the children (with their
descendants attached) at every node; $\sigma(\tau)$ is the size of the subgroup that leaves
a particular tree $\tau\in{\mathcal T}_\beta$ invariant\footnote{For a more explicit definition, see \cite[p. 430]{Brouder}.}; hence 
$\frac{\sigma(\beta)}{\sigma(\tau)}$ is the size of the orbit of $\tau$ under all
above transformations\footnote{Thus it is an integer.}.

\begin{proof}
We proceed by induction in the number of edges 
$\sum_{k\ge 0}k\beta(k)$.
In the base case when $\sum_{k\ge 0}k\beta(k)=0$,
(\ref{ckh03}) implies that $\beta=e_0$, so that ${\mathcal T}_\beta$ $=\{\treeZero\}$
and $\sigma(\beta)=\sigma(\treeZero)=1$, and hence the claim follows. For the induction
we give ourselves a $\beta$ and consider %have to check that 
$\tilde\Pi_\beta:=\sum_{\tau\in{\mathcal T}_\beta}\frac{\sigma(\beta)}{\sigma(\tau)}\mathbb{X}_\tau$
so that by (\ref{fs17}) 
\begin{align}\label{fs25}
\frac{d\tilde\Pi_\beta}{dx_2}=
\sum_{\tau\in{\mathcal T}_\beta}
\frac{\sigma(\beta)}{\sigma(\tau)}\mathbb{X}_{\tau_1}\cdots\mathbb{X}_{\tau_k}\xi,
\end{align}
where $k$ and the $k$-tuple $(\tau_1,\dots,\tau_k)$ of trees
is, uniquely up to permutation, determined by (\ref{fs16}). 
More precisely, it is the multi-index $J=e_{\tau_1}+\cdots+e_{\tau_k}$ of trees. 
Note that $\tau={\mathcal B}_+(\tau_1\cdots\tau_k)$ implies 
$\sigma(\tau)=J!\sigma(\tau_1)\cdots\sigma(\tau_k)$, cf. \cite[p. 430]{Brouder}; it also yields
\begin{equation}\label{fs101}
\beta=e_k+\beta_1+\cdots+\beta_k,
\end{equation}
where $\tau_j\in{\mathcal T}_{\beta_j}$,
which in turn implies by definition \eqref{fs24}
\begin{align}\label{fs32}
\sigma(\beta)=k!\sigma(\beta_1)\cdots\sigma(\beta_k).
\end{align}
We now appeal to the re-summation in Lemma \ref{lemsum02}, which yields
\begin{align*}
\lefteqn{\sum_{\tau\in{\mathcal T}_\beta}
	\frac{\sigma(\beta)}{\sigma(\tau)}\mathbb{X}_{\tau_1}\cdots\mathbb{X}_{\tau_k}\xi}\nonumber\\
&=\sum_{k\ge 0}\sum_{e_k+\beta_1+\cdots+\beta_k=\beta}
\big(\sum_{\tau_1\in{\mathcal T}_{\beta_1}}\frac{\sigma(\beta_1)}{\sigma(\tau_1)}\mathbb{X}_{\tau_1}\big)
\cdots
\big(\sum_{\tau_k\in{\mathcal T}_{\beta_k}}\frac{\sigma(\beta_k)}{\sigma(\tau_k)}\mathbb{X}_{\tau_k}\big)
\xi.
\end{align*}
By induction hypothesis, the r.~h.~s. assumes the desired form of (\ref{ckh02}).
\end{proof}

%%%%%%%%%%%%%%%%%%%%%%%%%%%%%%%%%%%%%%%%%%%%%%%%%%%%%%%%%%%%%%%%%%%%%%%%%%%%%%%%%%%

\subsection{Relating the abstract model space $\mathsf{T}_{RP}$ to ${\mathcal B}$}\label{Sect5.2}
\mbox{}

According to the previous subsection, in our setting, 
the abstract model space $\gls{ModelSpaceRP}$ relevant for branched rough paths
is the linear sub-space of $\mathsf{T}$, see Subsection \ref{Sect3.4}, 
corresponding to the multi-indices $\beta$
that satisfy (\ref{ckh03}) and only depend on $k$ (and thus trivially satisfies $[\beta]\ge 0$).
In the standard setting, the abstract model space $\gls{modelSpaceBranched}$ relevant for branched rough
paths is the direct sum indexed by all $\tau$'s. 
Following \cite[Definition 3.3]{Hairer}, we think of the model\footnote{centered in
one point, here $x_2=0$} as a linear map
from the abstract model space into the space of distributions ${\mathcal S}'(\mathbb{R})$.
Then the relation (\ref{fs18}) between the components of the two models
defines a linear map 
$\gls{phi} \colon\mathsf{T}_{RP}\rightarrow{\mathcal B}$ such that
\begin{align*}
\Pi=\gls{modelBranched}\phi.
\end{align*}
Applying this identity to the (dual) basis vector $\z_\beta$, we read off 
from (\ref{fs18}) that $\phi$ acts as\footnote{Where we denote the basis elements of $\mathcal{B}$ by $\z_\tau$.}
\begin{align}\label{fs30}
\phi\z_\beta=\sum_{\tau\in{\mathcal T}_\beta}
\frac{\sigma(\beta)}{\sigma(\tau)}\z_\tau,
\end{align}
from which we learn that $\phi$ is one-to-one (but not onto).
Hence the matrix representation of $\phi$ is given by 
\begin{align}\label{fs18bis}
\phi^\beta_\tau=
\left\{\begin{array}{cl}\frac{\sigma(\beta)}{\sigma(\tau)}&\mbox{if}\;\tau\in{\mathcal T}_\beta\\
0&\mbox{otherwise}\end{array}\right\}.
\end{align}

%%%%%%%%%%%%%%%%%%%%%%%%%%%%%%%%%%%%%%%%%%%%%%%%%%%%%%%%%%%%%%%%%%%%%%%%%%%%%%%%%%%

\subsection{Relating the pre-Lie algebra $\mathsf{L}_{RP}$ to $\mathcal{L}^1$}
\label{Sect5.4}\mbox{}

In our setting, the subspace $\gls{LieAlgebraRP}$ of $\mathsf{L}$ relevant for branched rough paths
is spanned by $\z^{\bar\gamma} D^{(\0)}\subset{\rm Der}(\mathbb{R}[[\z_k]])$, with
multi-indices ${\bar\gamma}$ only depending on $k$ and satisfying (\ref{ckh03}). 
As opposed to $\mathsf{L}$, $\mathsf{L}_{RP}$ is closed under the pre-Lie product $\prelie$
on ${\rm Der}(\mathbb{R}[[\z_k]])$. Indeed, this follows from fact that
by (\ref{ao24}), 
$\z^{\bar\gamma} (D^{(\0)}\z^\gamma)$ 
is a linear combination of $\z^{\beta}$'s
with $\beta+e_k=\bar\gamma+\gamma+e_{k+1}$ for some $k\ge 0$, so that (\ref{ckh03}) is preserved.
Moreover, for $D\in\mathsf{L}_{RP}$ we have that $D^\dagger$ preserves $\mathsf{T}_{RP}$.
Indeed, this follows from \eqref{ao24} and \eqref{ao58} by the same reasoning.
Finally, there is a canonical (non-degenerate) pairing between $\mathsf{T}_{RP}$ and
$\mathsf{L}_{RP}$, which are isomorphic as linear spaces, defined through
\begin{equation}\label{fs102}
\langle\z_\gamma,\z^{\bar\gamma}D^{(\0)}\rangle
=\delta_{\gamma}^{\bar\gamma}.
\end{equation}

%\newpage
\medskip

On the classical side, we consider the (pre-)Lie algebra $\gls{GLpreLie}$ introduced by Connes and Kreimer \cite{ConnesKreimer},
which as a linear space is the direct sum indexed by trees $\tau$ and thus isomorphic
to ${\mathcal B}$; we denote by $\gls{basisGL}$ the standard basis. It comes
with the pre-Lie product 
\begin{align}\label{fs27}
\Z_{\tau_1} \gls{preLieGL} \Z_{\tau_2}=\sum_{\tau}n(\tau_1,\tau_2;\tau)\Z_{\tau},
\end{align}
where $n(\tau_1,\tau_2;\tau)$ is the number of single cuts of $\tau$ such that the branch
is $\tau_1$ and the trunk is $\tau_2$. In other words, $n(\tau_1,\tau_2;\tau)$
is the number of edges of $\tau$ that when removed yields
the trees $\tau_1$ and $\tau_2$, where the second one is defined to be the 
one containing the root of $\tau$. The fact that (\ref{fs27}) satisfies the axiom of
a pre-Lie product is established in \cite[(103)]{ConnesKreimer}. We note that
the pre-Lie product can also be recovered from
\begin{align}\label{fs20}
(\sigma(\tau_1)\Z_{\tau_1})\leadsto (\sigma(\tau_2)\Z_{\tau_2})
=\sum_{\tau}m(\tau_1,\tau_2;\tau)\sigma(\tau)\Z_{\tau},
\end{align}
where $m(\tau_1,\tau_2;\tau)$ is the number of nodes of $\tau_2$ to which
$\tau_1$ may be attached to yield $\tau$. Passing from (\ref{fs27}) to (\ref{fs20})
relies on the combinatorial identity\footnote{The notation in \cite[Proposition 4.3]{Hoffman}
is opposite to ours, which is the one of \cite{ConnesKreimer}.}
%
%\begin{align*}
$n(\tau_1,\tau_2;\tau)\sigma(\tau_1)\sigma(\tau_2)$ $=m(\tau_1,\tau_2;\tau)\sigma(\tau)$,
%\end{align*}
%
which is established in \cite[Proposition 4.3]{Hoffman}. We note that it is the product
\begin{equation}\label{fs103}
\tau_1\gls{grafting}\tau_2:=\sum_{\tau}m(\tau_1,\tau_2;\tau)\tau
\end{equation}
that comes with the intuition of grafting the tree $\tau_1$ onto the tree $\tau_2$; 
it can be extended
to a pre-Lie product on ${\mathcal B}$ by linearity. While (\ref{fs20}) shows that
the two pre-Lie structures on ${\mathcal L}^1$ and on ${\mathcal B}$ are isomorphic
it is helpful to distinguish them here. This is related to the fact that
we consider the standard pairing between ${\mathcal B}$ and ${\mathcal L}^1$,
i.~e.~we think of $\z_\tau\in{\mathcal B}$ and 
$\Z_\tau\in{\mathcal L}^1$ as dual bases\footnote{Alternatively,
one could work with $\curvearrowright$ but impose the pairing $\Z_\tau.\z_{\tau'}$
$=\sigma(\tau)\delta_{\tau}^{\tau'}$, 
see more in \cite[Subsection 3.3]{BCCH19} on the choice of pairings
viz.~inner products.}. These pre-Lie products have been evoked in branched rough paths
\cite[Subsection 3.2.2]{BCFP19} and in regularity structures \cite[Remark 4.1]{BCCH19}.

\medskip

In view of the obvious finiteness properties of $\phi$, see (\ref{fs18bis}),
the two above-mentioned non-degenerate pairings define a linear map
$\phi^\dagger\colon{\mathcal L}^1\rightarrow\mathsf{L}_{RP}$ by duality. From (\ref{fs18bis})
we learn that it acts as
\begin{align}\label{fs21}
\phi^\dagger\sigma(\tau)\Z_\tau=\sigma(\beta)\z^\beta D^{(\0)}
\quad\mbox{provided}\;\tau\in{\mathcal T}_\beta.
\end{align}
\begin{lemma}\label{lembrp}
	$\phi^\dagger$ is a pre-Lie algebra morphism:
\begin{align}\label{fs19}
\phi^\dagger(\Z_{\tau_1}\leadsto\Z_{\tau_2})
=(\phi^\dagger\Z_{\tau_1})\prelie(\phi^\dagger\Z_{\tau_2}).
\end{align}
\end{lemma}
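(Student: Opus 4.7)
The plan is a direct computation: I would expand both sides of \eqref{fs19} using the explicit formulas for $*$, $\triangleleft$, and $\phi^\dagger$, and then reduce matters to a combinatorial identity on symmetry factors.

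For the left-hand side, \eqref{fs20} together with \eqref{fs21} (which reads $\phi^\dagger(\sigma(\tau)\Z_\tau)=\sigma(\beta)\z^\beta D^{(\0)}$ for $\tau\in\mathcal{T}_\beta$) gives, after grouping trees by their multi-index,
\[
\sigma(\tau_1)\sigma(\tau_2)\,\phi^\dagger(\Z_{\tau_1}*\Z_{\tau_2})
=\sum_\beta \Big(\sum_{\tau \in \mathcal{T}_\beta} m(\tau_1,\tau_2;\tau)\Big)\,\sigma(\beta)\,\z^\beta D^{(\0)}.
\]
For the right-hand side, fixing $\tau_i\in\mathcal{T}_{\beta_i}$ and using \eqref{fs21} to write $\phi^\dagger\Z_{\tau_i}=\tfrac{\sigma(\beta_i)}{\sigma(\tau_i)}\z^{\beta_i}D^{(\0)}$, the first item of \eqref{fs01} combined with \eqref{ao35} yields
\[
(\z^{\beta_1}D^{(\0)})\triangleleft(\z^{\beta_2}D^{(\0)})
=\z^{\beta_1}(D^{(\0)}\z^{\beta_2})D^{(\0)}
=\sum_{k\geq 0}(k+1)\beta_2(k)\,\z^{\beta_1+\beta_2-e_k+e_{k+1}}D^{(\0)},
\]
whence
\[
\sigma(\tau_1)\sigma(\tau_2)\,(\phi^\dagger\Z_{\tau_1})\triangleleft(\phi^\dagger\Z_{\tau_2})
=\sigma(\beta_1)\sigma(\beta_2)\sum_{k\geq 0}(k+1)\beta_2(k)\,\z^{\beta_1+\beta_2-e_k+e_{k+1}}D^{(\0)}.
\]

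Matching coefficients of $\z^\beta D^{(\0)}$, and using that for fixed $\beta_1,\beta_2$ the relation $\beta=\beta_1+\beta_2-e_k+e_{k+1}$ determines $k$ uniquely, the identity \eqref{fs19} reduces to
\[
\sigma(\beta)\sum_{\tau \in \mathcal{T}_\beta} m(\tau_1,\tau_2;\tau)=(k+1)\,\beta_2(k)\,\sigma(\beta_1)\sigma(\beta_2).
\]
I would close this via two independent observations. First, the sum $\sum_{\tau\in\mathcal{T}_\beta}m(\tau_1,\tau_2;\tau)$ counts the nodes $v$ of $\tau_2$ at which grafting $\tau_1$ produces a tree in $\mathcal{T}_\beta$; since grafting at a node with $k$ children yields a tree of multi-index $\beta_1+\beta_2-e_k+e_{k+1}$, this sum equals $\beta_2(k)$. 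Second, \eqref{fs24} directly gives $\sigma(\beta)=\sigma(\beta_1)\sigma(\beta_2)\cdot(k+1)!/k!=(k+1)\sigma(\beta_1)\sigma(\beta_2)$. The main substantive step is the tree-level counting in the first observation---it is exactly the bookkeeping that transports the grafting pre-Lie product on $\mathcal{L}^1$ to the derivation pre-Lie product on $\mathsf{L}_{RP}$; once this is in place, the identity closes and \eqref{fs19} follows.
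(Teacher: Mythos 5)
Your proof is correct and takes essentially the same route as the paper's: multiply by $\sigma(\tau_1)\sigma(\tau_2)$, translate via \eqref{fs20} and \eqref{fs21} into an identity in $\mathsf{L}_{RP}$, reduce to the combinatorial statement $\sum_{\tau\in\mathcal{T}_\beta}m(\tau_1,\tau_2;\tau)=\beta_2(k)$ together with $\sigma(\beta)=(k+1)\sigma(\beta_1)\sigma(\beta_2)$, and close by counting nodes of $\tau_2$ with $k$ children. The paper organizes the reduction by first decomposing over $k$ and then ``dividing'' by $\z_k$, while you match coefficients of $\z^\beta D^{(\0)}$ directly and observe that $\beta$ determines $k$; these are cosmetic variants of the same argument.
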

\begin{proof}
	Multiplying (\ref{fs19}) by $\sigma(\tau_1)\sigma(\tau_2)$, appealing to
(\ref{fs20}) for the l.~h.~s. and then to (\ref{fs21}), we see that (\ref{fs19}) follows from
\begin{align}\label{fs22}
\sum_\tau m(\tau_1,\tau_2;\tau)\sigma(\beta)\z^\beta
=\sigma(\beta_1)\sigma(\beta_2)\z^{\beta_1}(D^{(\0)}\z^{\beta_2}),
\end{align}
with the understanding that $\beta$, $\beta_1$ and $\beta_2$ are determined by $\tau\in{\mathcal T}_{\beta}$, $\tau_1\in{\mathcal T}_{\beta_1}$
and $\tau_2\in{\mathcal T}_{\beta_2}$. We note that $m(\tau_1,\tau_2;\tau)\not =0$ only if
the fertilities are related by
$\beta_1+\beta_2+e_{k+1}=\beta+e_{k}$ for some $k\ge 0$, which amounts to taking
a node of $\tau_2$ with $k$ children and attaching $\tau_1$ to it (via a new edge).
Hence the l.~h.~s. naturally decomposes into a sum over $k\ge 0$ -- and
so does the r.~h.~s. in view of (\ref{ao35}). Hence (\ref{fs22}) follows from
\begin{align*}
\sum_{\tau:\beta_1+\beta_2+e_{k+1}=\beta+e_{k}}m(\tau_1,\tau_2;\tau)\sigma(\beta)\z^\beta
=(k+1)\sigma(\beta_1)\sigma(\beta_2)\z^{\beta_1}\z_{k+1}
\partial_{\z_k}\z^{\beta_2},
\end{align*}
which, upon multiplication by $\z_k$, reduces to the purely combinatorial
\begin{align*}
\sum_{\tau:\beta_1+\beta_2+e_k=\beta+e_{k+1}}m(\tau_1,\tau_2;\tau)\sigma(\beta)
=(k+1)\sigma(\beta_1)\sigma(\beta_2)\beta_2(k).
\end{align*}
According to the definition (\ref{fs24}) of $\sigma(\beta)$ this reduces to
\begin{align*}
\sum_{\tau:\beta_1+\beta_2+e_k=\beta+e_{k+1}}m(\tau_1,\tau_2;\tau)
=\beta_2(k).
\end{align*}
This last identity holds because also the l.~h.~s. is the number of nodes of $\tau_2$
with $k$ children on which $\tau_1$ can be attached (via a new edge).
\end{proof}

The fact that $\phi^\dagger$ is not one-to-one reflects that our (pre-)Lie algebra is not free, as opposed to $\mathcal{L}^1$ (cf. \cite[Theorem 1.9]{Chapoton}). Moreover, $\mathsf{L}_{RP}$ is isomorphic to $\mathcal{L}^1$ quotiented by an ideal.

\begin{corollary} 
As pre-Lie algebras, $(\mathsf{L}_{RP},  \prelie)$ and $(\mathcal{L}^1 / \mathcal{R}, \leadsto)$ are isomorphic, where
	\begin{equation}\label{fs104}
	\mathcal{R} := \mbox{\rm span} \{ \sigma(\tau_1) Z_{\tau_1} - \sigma(\tau_2)Z_{\tau_2}\,|\, \tau_1,\tau_2\in \mathcal{T}_\beta \}_\beta.
	\end{equation}
\end{corollary}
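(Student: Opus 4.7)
\emph{Proof plan.} The strategy is to produce the claimed isomorphism as the map induced by $\phi^\dagger$ from Lemma \ref{lembrp}. Since $\phi^\dagger$ is already established to be a pre-Lie algebra morphism, it suffices to check that (a) $\phi^\dagger$ is surjective and (b) $\ker\phi^\dagger = \mathcal{R}$. The remainder is a first-isomorphism argument that works verbatim in the pre-Lie setting: the kernel of a morphism between pre-Lie algebras is automatically a pre-Lie ideal, since for $x\in\ker\phi^\dagger$ and any $y\in\mathcal{L}^1$ we have $\phi^\dagger(x*y) = 0 \triangleleft \phi^\dagger(y) = 0$ and $\phi^\dagger(y*x) = \phi^\dagger(y)\triangleleft 0 = 0$ by bilinearity of $*$ and $\triangleleft$; hence the quotient $\mathcal{L}^1/\mathcal{R}$ inherits a pre-Lie structure and $\phi^\dagger$ descends to a pre-Lie algebra isomorphism onto $\mathsf{L}_{RP}$.

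For surjectivity, I would read off directly from (\ref{fs21}): for every multi-index $\beta$ satisfying (\ref{ckh03}), the set $\mathcal{T}_\beta$ is non-empty, and for any $\tau\in\mathcal{T}_\beta$ one has $\phi^\dagger\bigl(\tfrac{\sigma(\tau)}{\sigma(\beta)}Z_\tau\bigr) = \z^\beta D^{(\0)}$. Since $\{\z^\beta D^{(\0)}\}_\beta$ spans $\mathsf{L}_{RP}$, surjectivity follows.

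For the inclusion $\mathcal{R}\subset\ker\phi^\dagger$, note that (\ref{fs21}) assigns to every $\sigma(\tau)Z_\tau$ with $\tau\in\mathcal{T}_\beta$ the same image $\sigma(\beta)\z^\beta D^{(\0)}$; hence every generator $\sigma(\tau_1)Z_{\tau_1}-\sigma(\tau_2)Z_{\tau_2}$ of $\mathcal{R}$ lies in the kernel. For the reverse inclusion, I would decompose a general element $u = \sum_\tau c_\tau Z_\tau\in\ker\phi^\dagger$ according to the disjoint partition $\{\mathcal{T}_\beta\}_\beta$ of the set of trees. By (\ref{fs21}) and the linear independence of $\{\z^\beta D^{(\0)}\}_\beta$, the condition $\phi^\dagger u = 0$ is equivalent to $\sum_{\tau\in\mathcal{T}_\beta} c_\tau/\sigma(\tau) = 0$ for every $\beta$ appearing in $u$. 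Choosing for each such $\beta$ a reference tree $\tau_\beta^0\in\mathcal{T}_\beta$, the $\beta$-component of $u$ can be rewritten as
\begin{equation*}
\sum_{\tau\in\mathcal{T}_\beta} c_\tau Z_\tau \;=\; \sum_{\tau\in\mathcal{T}_\beta}\frac{c_\tau}{\sigma(\tau)}\bigl(\sigma(\tau)Z_\tau - \sigma(\tau_\beta^0)Z_{\tau_\beta^0}\bigr) \;+\; \Big(\sum_{\tau\in\mathcal{T}_\beta}\frac{c_\tau}{\sigma(\tau)}\Big)\,\sigma(\tau_\beta^0)Z_{\tau_\beta^0},
\end{equation*}
where the second summand vanishes by the kernel condition and the first is manifestly in $\mathcal{R}$ by (\ref{fs104}); summing over $\beta$ yields $u\in\mathcal{R}$.

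I do not foresee any serious obstacle: the whole argument reduces to Lemma \ref{lembrp} together with the explicit formula (\ref{fs21}) and a routine reference-tree bookkeeping. The only mildly non-standard point is the first-isomorphism theorem in the pre-Lie category, but this is dispatched by the bilinearity observation at the start.
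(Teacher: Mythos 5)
Your proof is correct and takes essentially the same route as the paper's: establish the morphism property via Lemma \ref{lembrp}, read off surjectivity from \eqref{fs21} together with the non-emptiness of $\mathcal{T}_\beta$ for populated $\beta$, and identify $\ker\phi^\dagger$ with $\mathcal{R}$ via the condition $\sum_{\tau\in\mathcal{T}_\beta}c_\tau/\sigma(\tau)=0$. Your version is somewhat more explicit than the paper's — you spell out the first-isomorphism argument for pre-Lie algebras (that kernels are two-sided ideals, so the quotient is well-defined) and carry out the reference-tree bookkeeping for the inclusion $\ker\phi^\dagger\subset\mathcal{R}$, both of which the paper leaves implicit — but the underlying approach is the same.
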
 
\begin{proof}
	By Lemma \ref{lembrp}, $\phi^\dagger:\mathcal{L}^1\to\mathsf{L}_{RP}$ is a morphism. Thanks to the restriction \eqref{ckh03}, it is onto. It only remains to show that $\mathrm{ker}\phi^\dagger = \mathcal{R}$. By \eqref{fs21}, $\phi^\dagger$ vanishes on the generating set \eqref{fs104}, hence $\mathcal{R}\subset \mathrm{ker}\phi^\dagger$.
	To show the opposite inclusion, we fix $\sum_\tau c_\tau Z_\tau \in \mathrm{ker}\phi^\dagger$, and by linearity it is enough to show that $\sum_{\tau\in\mathcal{T}_\beta} c_\tau Z_\tau \in \mathcal{R} $ for every $\beta$. 
	By the representation \eqref{fs21}, $\sum_\tau c_\tau Z_\tau$ is in the kernel of $\phi^\dagger$ if and only if $\sum_{\tau\in\mathcal{T}_\beta} \frac{c_\tau}{\sigma(\tau)}=0$ for all $\beta$.	
\end{proof}
\begin{remark}\label{rem01}
	Note that, although freeness is lost, the generation property is preserved; in this setting, $\z_0 D^{(\0)}$ is the generator of $\mathsf{L}_{RP}$.
\end{remark}

%%%%%%%%%%%%%%%%%%%%%%%%%%%%%%%%%%%%%%%%%%
\medskip
%%%%%%%%%%%%%%%%%%%%%%%%%%%%%%%%%%%%%%%%%

\subsection{Relating the coproduct $\Delta^+_{RP}$ to Butcher's}\label{Sect5.1}
\mbox{}

The pre-Lie algebra morphism property (\ref{fs19}) of $\phi^\dagger$ obviously implies
that it is also a Lie-algebra morphism between ${\mathcal L}^1$
and $\mathsf{L}_{RP}$. By the characterizing property of universal
envelopes, $\phi^\dagger$ lifts to a morphism between the Hopf algebras ${\rm U}({\mathcal L}^1)$ 
and ${\rm U}(\mathsf{L}_{RP})$. According to \cite[Theorem 3 b)]{ConnesKreimer} the
standard pairing \cite[(105)]{ConnesKreimer} between the 
Hopf algebra ${\rm U}({\mathcal L}^1)$ and the
Connes-Kreimer Hopf algebra ${\mathcal H}$ respects the Hopf algebra structures.
We recall that as an algebra, ${\mathcal H}$ is the polynomial
algebra $\mathbb{R}[\tau]$ over trees $\tau$, and the coproduct $\gls{coproductButcher}$ 
is defined according to Butcher via cutting-off sub-trees (``pruning''),
see e.~g.~\cite[Section 3]{Brouder}.
Defining $\gls{TplusRP}$ and $\gls{coproductRP}$, based on the Lie algebra
$\mathsf{L}_{RP}$, in analogy to
$\mathsf{T}^+$ and $\Delta^+$, see Subsection \ref{Sect4.3}, 
we thus obtain that $\phi:\mathsf{T}^+_{RP}\to\mathcal{H}$ is a Hopf algebra morphism, in particular
\begin{align*}
(\phi\otimes\phi)\Delta^+_{RP}=\Delta_B \phi.
\end{align*}
Here, we used that on $\T^+_{RP}\otimes\T^+_{RP}$, which is naturally a subspace of $(\mathrm{U}(\mathsf{L}_{RP})\otimes\mathrm{U}(\mathsf{L}_{RP}))^*$, we have that $(\phi^\dagger\otimes\phi^\dagger)^*=\phi\otimes\phi$.

%%%%%%%%%%%%%%%%%%%%%%%%%%%%%%%%%%%%%%%%%%
\medskip
%%%%%%%%%%%%%%%%%%%%%%%%%%%%%%%%%%%%%%%%%

\subsection{Relating $\phi^\dagger$ to $\Upsilon$}\label{Sect5.5}
\mbox{}

The morphism property (\ref{fs19}) is closely related to the ones that appear
in regularity structures \cite[Corollary 4.15]{BCCH19} and branched rough paths
\cite[Lemma 3.7]{BonnefoiCMW},
as we shall explain in this subsection for the latter: 
In view of its canonical pairing \eqref{fs102} with $\mathsf{T}_{RP}$, $\mathsf{L}_{RP}$ can be canonically identified with a subspace\footnote{
	Both have the same
	index set, but while $\mathsf{L}_{RP}$ is a direct sum, $\mathsf{T}_{RP}^*$
	is a direct product.} 
of $\mathsf{T}_{RP}^*$, so that
we may think of $\phi^\dagger$ as mapping into $\mathsf{T}_{RP}^*$
and then interpret (\ref{fs19}) as the following identity in 
$\mathsf{T}_{RP}^*\subset\mathbb{R}[[\z_k]]$:
\begin{align}\label{fs28}
\phi^\dagger(\Z_{\tau_1}\leadsto\Z_{\tau_2})
=(\phi^\dagger\Z_{\tau_1})(D^{(\0)}\phi^\dagger\Z_{\tau_2}).
\end{align}
We note that the image of $\phi^\dagger$ is actually contained in the polynomial
subspace $\mathbb{R}[\z_k]$, and thus in view of (\ref{ao20bis}) 
in the space of functions on $a$-space. Hence we
may apply $\phi^\dagger\Z_{\tau}$ to a polynomial\footnote{even to a formal
	power series} $a$, and thus also to $a(\cdot+u)$ for some shift $u\in\mathbb{R}$.
We also note that $D^{(\0)}$ preserves $\mathbb{R}[\z_k]$,
see (\ref{ao35}). Hence we may ``test'' (\ref{fs28}) with $a(\cdot+u)$ and obtain
by definition (\ref{ao21}) of $D^{(\0)}$
\begin{align}\label{fs29}
\phi^\dagger(\Z_{\tau_1}\leadsto\Z_{\tau_2})[a(\cdot+u)]
=(\phi^\dagger\Z_{\tau_1})[a(\cdot+u)]
\big(\frac{d}{du}\phi^\dagger\Z_{\tau_2}[a(\cdot+u)]\big).
\end{align}
With the abbreviation
\begin{align}\label{fs31}
\gls{upsilon}^a[\tau](u):=\phi^\dagger\sigma(\tau)\Z_\tau[a(\cdot+u)]
\end{align}
and the help of (\ref{fs20}) and \eqref{fs103},
(\ref{fs29}) turns into the following simple\footnote{Simple because our scalar setting
	does not require node decorations, and since we did not extend to forests.} 
version of \cite[Lemma 3.4]{BonnefoiCMW}
\begin{align*}
\Upsilon^a[\tau_1\curvearrowright\tau_2]=\Upsilon^a[\tau_1](\frac{d}{du}\Upsilon^a[\tau_2]),
\end{align*}
which states that for fixed $a$, 
$\Upsilon^a$ is a pre-Lie algebra morphism from $({\mathcal B},\curvearrowright)$ into
the pre-Lie algebra of functions of $u\in\mathbb{R}$.

\medskip

We remark that the object (\ref{fs31}) coincides with the one recursively defined in
\cite[Definition 2.13]{BonnefoiCMW}. This follows from the fact that under the
assumption (\ref{fs16}), we learn that by (\ref{fs21}) and \eqref{fs101}, (\ref{fs32}) translate into
the following identity in $\mathsf{T}_{RP}^*\subset\mathbb{R}[[\z_k]]$
\begin{align*}
\phi^\dagger\sigma(\tau)\Z_\tau=k!\z_k
(\phi^\dagger\sigma(\tau_1)\Z_{\tau_1})\cdots
(\phi^\dagger\sigma(\tau_k)\Z_{\tau_k}).
\end{align*}
This identity, when tested with $a(\cdot+u)$, by definitions (\ref{ao20bis}) 
and (\ref{fs31}), turns into
\begin{align*}
\Upsilon^a[\tau]=(\frac{d^ka}{du^k})
\Upsilon^a[\tau_1]\cdots\Upsilon^a[\tau_k],
\end{align*}
which coincides with the induction \cite[(2.11)]{BonnefoiCMW}. 
The base case is obvious: For $\tau=\treeZero$
we learn from (\ref{fs21}) that $\phi^\dagger\sigma(\tau)\Z_\tau$ $=\z_0$,
and from (\ref{ao20bis}) that $\z_0[a(\cdot+u)]$ $=a(u)$. 

\ignore{
	\medskip
	
	Here comes the argument for (\ref{fs19}).
	Multiplying (\ref{fs19}) by $\sigma(\tau_1)\sigma(\tau_2)$, appealing to
	(\ref{fs20}) for the l.~h.~s., and then to (\ref{fs21}), we see that (\ref{fs19}) follows from
	\begin{align}\label{fs22}
	\sum_\tau m(\tau_1,\tau_2;\tau)\sigma(\beta)\z^\beta
	=\sigma(\beta_1)\sigma(\beta_2)\z^{\beta_1}(D^{(\0)}\z^{\beta_2}),
	\end{align}
	with the understanding that $\beta_1$ and $\beta_2$ are determined by 
	$\tau_1\in{\mathcal T}_{\beta_1}$
	and $\tau_2\in{\mathcal T}_{\beta_2}$. We note that $m(\tau_1,\tau_2;\tau)\not =0$ only if
	the fertilities are related by 
	$\beta_1+\beta_2+e_{k+1}=\beta+e_{k}$ for some $k\ge 0$, which amounts to taking
	a node of $\tau_2$ with $k$ children and attaching $\tau_1$ to it (via a new edge).
	Hence the l.~h.~s. naturally decomposes into a sum over $k\ge 0$ -- and
	so does the r.~h.~s. in view of (\ref{ao35}). Hence (\ref{fs22}) follows from 
	\begin{align*}
	\sum_{\tau:\beta_1+\beta_2+e_{k+1}=\beta+e_{k}}m(\tau_1,\tau_2;\tau)\sigma(\beta)\z^\beta
	=(k+1)\sigma(\beta_1)\sigma(\beta_2)\z^{\beta_1}\z_{k+1}
	\partial_{\z_k}\z^{\beta_2},
	\end{align*}
	which, by multiplication with $\z_k$, is seen to reduce to the purely combinatorial
	\begin{align*}
	\sum_{\tau:\beta_1+\beta_2+e_k=\beta+e_{k+1}}m(\tau_1,\tau_2;\tau)\sigma(\beta)
	=(k+1)\sigma(\beta_1)\sigma(\beta_2)\beta_2(k).
	\end{align*}
	According to the definition (\ref{fs24}) of $\sigma(\beta)$ this reduces to
	\begin{align*}
	\sum_{\tau:\beta_1+\beta_2+e_k=\beta+e_{k+1}}m(\tau_1,\tau_2;\tau)
	=\beta_2(k).
	\end{align*}
	This last identity holds because also the l.~h.~s. is the number of nodes of $\tau_2$
	with $k$ children on which $\tau_1$ can be attached (via a new edge).
}

%%%%%%%%%%%%%%%%%%%%%%%%%%%%%%%%%%%%%%%%%%%%%%
\medskip
%%%%%%%%%%%%%%%%%%%%%%%%%%%%%%%%%%%%%%%%%%%%%

\subsection{Renormalization of rough paths via multi-indices}\label{Sect6.6}
\mbox{}

We now give some details on future directions of our research, namely that renormalization can be carried out within the multi-index description without passing via trees. From the analytic and stochastic viewpoint, this is carried out in the case of quasi-linear SPDEs in the work \cite{LOTT}. In this section, we reveal the algebraic structure that guides renormalization in the simple case of branched rough paths, in line with \cite{BCFP19}.

\medskip

Let us consider the following generalized version of \eqref{ckh01}: 
\begin{equation}\label{tra04}
\frac{du}{dx_2} = a_0(u) + a_1(u)\xi, \quad u(x_2 = 0)=0.
\end{equation}
Following Subsection \ref{Sect1.11}, we see the solution $u$ as a function of both nonlinearities, $u = u[a_0,a_1]$. For simplicity, we will only study transformations in $a_0$ by addition of a function of $a_1$, i.~e.\footnote{ 
	This class includes the It\^o-Stratonovich conversion in SDEs: Assume that $\xi$ is the time derivative of a standard Brownian motion and take $c = \frac{1}{2} \z_0^1 \z_1^1$, so that $c[a_1](u) = \frac{1}{2} a_1(u) a'_1(u)$. Then \eqref{tra04} transforms into
	\begin{equation*}
	\frac{du}{dx_2} = a_0(u) + a_1(u)\xi + \frac{1}{2} a_1(u) a'_1(u).
	\end{equation*}}
\begin{equation}\label{tra01}
(a_0,a_1) \mapsto (a_0 + c[a_1], a_1);
\end{equation}
here,
\begin{equation*}
c[a_1] (u) := c[a_1(\cdot + u)],
\end{equation*}
so that shift-covariance is built in. As in Section \ref{Sect1}, we lift \eqref{tra01} to the space of functions of $(a_0,a_1)$ by means of an algebra morphism $\gls{rmap}$:
\begin{equation}\label{tra02}
M_c \pi [a_0,a_1] = \pi [a_0 + c[a_1],a_1].
\end{equation}
In analogy with \eqref{ao20bis}, we introduce coordinates on $(a_0,a_1)$-space
\begin{equation}\label{tra03}
\z_k^0 [a_0,a_1] = \frac{1}{k!} \frac{d^k a_0}{dv^k}(0),\;\;\; \z_k^1 [a_0,a_1] = \frac{1}{k!} \frac{d^k a_1}{dv^k}(0),
\end{equation}
and multi-indices which now involve both families of variables, i.~e.
\begin{equation*}
\z^\beta := \prod_{k\ge 0} (\z_k^0)^{\beta(0,k)}\, (\z_k^1)^{\beta(1,k)}.
\end{equation*}
We still denote by $\T^*_{RP}$ the dual of the model space, now characterized by the population condition 
\begin{equation}\label{ckh03bis}
\sum_{k \geq 0}(k+1)\big(\beta(0,k) + \beta(1,k)\big) = -1;
\end{equation}
the argument is the same as that of \eqref{ckh03}. 

\medskip

We moreover define the infinitesimal generator of tilt by a constant in line with \eqref{ao21}:
\begin{equation}
D^{(\0)}\pi [a_0,a_1] = \frac{d}{dv}_{|v=0}\pi[a_0(\cdot + v), a_1(\cdot + v)].
\end{equation}
Then \eqref{tra02} acts on the coordinate functionals \eqref{tra03} as
\begin{equation}\label{tra07}
M_c \z_k^0 = \z_k^0 + \frac{1}{k!}(D^{(\0)})^kc,\;\;\; M_c \z_k^1 = \z_k^1,
\end{equation}
which thanks to the algebra morphism property defines the action of $M_c$ on the polynomial algebra $\R[\z_k^0,\z_k^1]$. Since $M_c$ amounts to plugging a power series into a power series, cf. \eqref{tra02}, this action extends to the full power series space $\R[[\z_k,\z_\n]]$ as long as we impose
\begin{equation}\label{tra11}
c_{\beta = 0} = 0.
\end{equation}
The following result gathers the properties of the map $M_c$.
\begin{lemma}\label{lem6.5}
	Let $c\in \T_{RP}^*\, \cap \, \R[[\z_k^1]]$. Then $M_c \in \textnormal{End}(\T_{RP}^*)$. In addition, for all $\pi_1,\pi_2 \in \T_{RP}^*$,
	\begin{equation}\label{tra15}
	M_c ((\pi_1 D^{(\0)}) \pi_2) = (M_c \pi_1) D^{(\0)} (M_c \pi_2),
	\end{equation}
	which implies that $M_c$ is a pre-Lie morphism in $\mathsf{L}_{RP}\subset \T_{RP}^*$. Moreover, the following composition rule holds:
	\begin{equation}\label{tra16}
	M_{c_1}M_{c_2} = M_{c_1 + c_2},
	\end{equation}
	which yields an Abelian group structure.
\end{lemma}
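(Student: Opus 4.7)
The plan is to establish the three claims in order: (i) $M_c \in \End(\T_{RP}^*)$, (ii) the pre-Lie morphism identity \eqref{tra15}, (iii) the composition rule \eqref{tra16} and the ensuing group structure. The basic strategy is to reduce every identity to a check on the generators $\{\z_k^0, \z_k^1\}_{k\geq 0}$, exploiting that $M_c$ is by definition an algebra morphism and that $D^{(\0)}$ is a derivation.

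For (i), I would first extend the prescription \eqref{tra07} via the multiplicativity \eqref{tra02} to an algebra endomorphism of $\R[[\z_k^0,\z_k^1]]$. Well-definedness on formal power series requires that each substitution $\z_k^0 \mapsto \z_k^0 + \tfrac{1}{k!}(D^{(\0)})^k c$ introduces no constant term; this follows from the hypothesis \eqref{tra11}, since $D^{(\0)}$ sends each monomial to monomials of the same total degree and hence preserves the absence of a constant term. To show $M_c$ preserves $\T_{RP}^*$, note that $D^{(\0)}\z_k^i = (k+1)\z_{k+1}^i$ changes the count in \eqref{ckh03bis} by $+1$ on every monomial. Consequently, each monomial of $(D^{(\0)})^k c$ has count $-1+k=k-1$, which is exactly the count of $\z_k^0$. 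Expanding $M_c \z^\beta$ and using this match, one sees that every monomial appearing has the same count as $\z^\beta$, so the population condition is preserved.

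For (ii), the crux is the commutation $[D^{(\0)}, M_c]=0$, which I would check on generators. On $\z_k^1$ both sides yield $(k+1)\z_{k+1}^1$. On $\z_k^0$,
\begin{align*}
D^{(\0)} M_c\z_k^0 &= (k+1)\z_{k+1}^0 + \tfrac{1}{k!}(D^{(\0)})^{k+1}c,\\
M_c D^{(\0)}\z_k^0 &= (k+1)\z_{k+1}^0 + \tfrac{k+1}{(k+1)!}(D^{(\0)})^{k+1}c,
\end{align*}
which coincide. Since $M_c$ is an algebra morphism and $D^{(\0)}$ a derivation, Leibniz propagates this agreement from generators to all of $\R[[\z_k^0,\z_k^1]]$. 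Then \eqref{tra15} follows in one line:
\begin{align*}
M_c\bigl((\pi_1 D^{(\0)})\pi_2\bigr) = M_c\bigl(\pi_1\cdot D^{(\0)}\pi_2\bigr) = (M_c\pi_1)(M_c D^{(\0)}\pi_2) = (M_c\pi_1)\,D^{(\0)}(M_c\pi_2).
\end{align*}
In view of the first formula in \eqref{fs01}, this is precisely the pre-Lie morphism property on $\mathsf{L}_{RP}$.

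For (iii), both $M_{c_1}M_{c_2}$ and $M_{c_1+c_2}$ are algebra endomorphisms, so it suffices to compare them on generators. Both fix $\z_k^1$. On $\z_k^0$, using the commutation $[D^{(\0)},M_{c_1}]=0$ from (ii) and the fact that $M_{c_1}$ acts as the identity on $\R[[\z_k^1]] \ni c_2$ (since it only modifies $\z_k^0$),
\begin{align*}
M_{c_1}M_{c_2}\z_k^0
&= M_{c_1}\bigl(\z_k^0+\tfrac{1}{k!}(D^{(\0)})^k c_2\bigr) \\
&= \z_k^0 + \tfrac{1}{k!}(D^{(\0)})^k c_1 + \tfrac{1}{k!}(D^{(\0)})^k M_{c_1} c_2 \\
&= \z_k^0 + \tfrac{1}{k!}(D^{(\0)})^k(c_1+c_2) = M_{c_1+c_2}\z_k^0.
\end{align*}
The Abelian group structure is immediate: $M_0=\id$ is neutral, $M_{-c}$ is the inverse of $M_c$, and commutativity follows from the symmetry of $c_1+c_2$. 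The main obstacle, in my view, is part (i): making sure that every formal-power-series manipulation is legitimate on $\R[[\z_k^0,\z_k^1]]$ (which is where \eqref{tra11} enters) and verifying that the population count carving out $\T_{RP}^*$ is preserved by the substitution; once this is settled, the remaining identities reduce to transparent computations on generators plus the commutation $[D^{(\0)},M_c]=0$.
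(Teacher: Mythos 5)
Your proof is correct, and it overlaps with the paper's in its central step --- establishing the commutation $[D^{(\0)},M_c]=0$ on generators and deducing \eqref{tra15} by Leibniz --- but it handles the two endpoints differently. For $M_c\T_{RP}^*\subset\T_{RP}^*$ the paper does \emph{not} argue by a monomial count: it first proves \eqref{tra15}, notes that this makes $M_c$ a pre-Lie morphism, invokes the generation of $\mathsf{L}_{RP}$ by $\z_0^0,\z_0^1$ (Remark \ref{rem01}), and thereby reduces the stability claim to the single instances $M_c\z_0^0$ and $M_c\z_0^1$, which are immediate from \eqref{tra07}. Your direct bookkeeping --- $D^{(\0)}$ raises the weight $\sum_k(k-1)(\beta(0,k)+\beta(1,k))$ by $1$, so each monomial of $(D^{(\0)})^k c$ has weight $k-1$ matching that of $\z_k^0$, and the weight is additive under multiplication --- is more elementary and self-contained, and in particular does not depend on the generation statement of Remark \ref{rem01}; it also sidesteps the paper's slight hand-wave of "enough to check on finite sums". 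For the composition rule \eqref{tra16}, the paper simply reads it off from the functional definition \eqref{tra02} as a plug-in of power series, while you verify it on generators using $[D^{(\0)},M_{c_1}]=0$ and $M_{c_1}|_{\R[[\z_k^1]]}=\mathrm{id}$. Both routes are valid; yours stays entirely at the coordinate level, the paper's is shorter by exploiting the $(a_0,a_1)$-space picture.
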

\begin{proof}
	Since the condition \eqref{tra11} is contained in \eqref{ckh03bis}, $M_c$ is well-defined. We first argue that $M_c$ commutes with $D^{(\0)}$; indeed, this follows from \eqref{ao35} and \eqref{tra07} via
	\begin{equation*}
	D^{(\0)}M_c \z_k^0 = (k+1)\z_{k+1}^0 + \frac{1}{k!}(D^{(\0)})^{k+1} c = (k+1) M_c \z_{k+1}^0 = M_c D^{(\0)} \z_k^0,
	\end{equation*}
	and is tautological for $\z_k^1$. It then extends to the general case by the algebra morphism property of $M_c$ and Leibniz rule for $D^{(\0)}$. As a consequence, \eqref{tra15} is satisfied:
	\begin{equation*}
	M_c ((\pi_1 D^{(\0)}) \pi_2) = (M_c \pi_1) M_c(D^{(\0)}\pi_2) = (M_c \pi_1) D^{(\0)} (M_c \pi_2).
	\end{equation*}
	We now argue that $M_c\T_{RP}^*\subset \T_{RP}^*$. It is enough to show it for the space of finite sums, which is isomorphic to $\mathsf{L}_{RP}$\footnote{ See the discussion at the beginning of Subsection \ref{Sect5.5}.}. Since \eqref{tra15} implies that $M_c$ is a pre-Lie morphism, and since the elements $\z_0^0,\z_0^1$ generate $\mathsf{L}_{RP}$, cf. Remark \ref{rem01}, it suffices to show $M_c \z_0^0, M_c \z_0^1 \in \T_{RP}^*$; this follows from \eqref{tra07} under the assumption $c\in \T_{RP}^*$. Finally, the composition rule \eqref{tra16} may be read off from \eqref{tra02}. 
\end{proof}

Combining \eqref{tra07} and \eqref{tra15}, the map $M_c$ may be seen as a shift of the form $\z_0^0 \mapsto \z_0^0 + c$ which, in addition, is a pre-Lie morphism. This connects the approach to the translation of (branched) rough paths as described in \cite[Definition 14]{BCFP19}. In the specific setting of this section, given an element\footnote{
The general setting of \cite{BCFP19} allows $v$ to be a Lie series in the Grossman-Larson Hopf algebra generated by two nodes distinguished by decorations corresponding to each nonlinearity (in the specific setting of transformations of the form \eqref{tra01} only one decoration matters, so it is legitimate to think of $v$ as non-decorated). Although we can also work with (infinite) Lie series, for notational convenience we restrict to finite sums and write $v \in \mathcal{L}^1$.} $v\in \mathcal{L}^1$, its associated translation map, which we denote by $\gls{rmapBCFP}$, is defined as the unique pre-Lie morphism that extends
\begin{equation}\label{tra09}
M_v^{BCFP} Z \treeThirty{0} = Z {\treeThirty{0}} +v,\;\;\; M_v^{BCFP} Z {\treeThirty{1}} = Z {\treeThirty{1}}.
\end{equation}

We follow the construction of previous subsections and build a dictionary $\phi$ from our model space $\T_{RP}$ to the linear space of trees decorated by $0$ and $1$, so that \eqref{fs30} still holds with $\mathcal{T}_\beta$ given by the set of trees which contain $\beta(i,k)$ nodes decorated by $i$ and with $k$ children. This dictionary\footnote{ This time we consider $\phi^\dagger: \mathcal{L}^1 \to \T_{RP}^*$ instead of $\phi^\dagger: \mathcal{L}^1 \to \L_{RP}$, in line with Subsection \ref{Sect5.5}.} intertwines with the translation maps:

\begin{lemma}
	\begin{equation}\label{tra08}
	\phi^\dagger M_v^{BCFP} = M_{\phi^\dagger v} \phi^\dagger.
	\end{equation}
\end{lemma}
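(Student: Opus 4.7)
The plan is to exploit the fact that $\mathcal{L}^1$, the space of rooted trees decorated by $\{0,1\}$, is the free pre-Lie algebra on its two single-node generators (Chapoton--Livernet), so that any two pre-Lie morphisms out of $\mathcal{L}^1$ that agree on these generators must coincide. I will therefore (i) exhibit both $\phi^\dagger M_v^{BCFP}$ and $M_{\phi^\dagger v} \phi^\dagger$ as pre-Lie morphisms from $(\mathcal{L}^1, *)$ into the pre-Lie algebra $(\T_{RP}^*,\, (\pi_1,\pi_2)\mapsto \pi_1 D^{(\0)} \pi_2)$, and (ii) check \eqref{tra08} on the two single-node generators, which I denote $Z_{\bullet_0}$ and $Z_{\bullet_1}$. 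I assume implicitly that $v$ is chosen so that $\phi^\dagger v \in \T_{RP}^* \cap \R[[\z_k^1]]$, which is the well-definedness condition needed by Lemma \ref{lem6.5}.

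For step (i), the decorated analogue of Lemma \ref{lembrp} (whose proof carries over verbatim once nodes are tagged by decorations) shows that $\phi^\dagger$ is a pre-Lie morphism; combined with the defining property \eqref{tra09} of $M_v^{BCFP}$ as a pre-Lie endomorphism of $\mathcal{L}^1$, this makes $\phi^\dagger M_v^{BCFP}$ a pre-Lie morphism. On the other side, \eqref{tra15} of Lemma \ref{lem6.5} asserts that $M_c$ is a pre-Lie endomorphism of $\T_{RP}^*$ for the same pre-Lie product, so $M_{\phi^\dagger v} \phi^\dagger$ is also a pre-Lie morphism.

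For step (ii), the decorated analogue of \eqref{fs21} yields $\phi^\dagger Z_{\bullet_0} = \z_0^0$ and $\phi^\dagger Z_{\bullet_1} = \z_0^1$ (both with trivial symmetry factors), under the canonical identification of $\mathsf{L}_{RP}$ with a subspace of $\T_{RP}^*$ described in Subsection \ref{Sect5.5}. Combining this with \eqref{tra09} and \eqref{tra07} gives
\[
\phi^\dagger M_v^{BCFP} Z_{\bullet_0} = \phi^\dagger(Z_{\bullet_0} + v) = \z_0^0 + \phi^\dagger v = M_{\phi^\dagger v} \z_0^0 = M_{\phi^\dagger v} \phi^\dagger Z_{\bullet_0},
\]
while on the other generator both sides equal $\z_0^1$ since $M_v^{BCFP}$ fixes $Z_{\bullet_1}$ and $M_{\phi^\dagger v}$ fixes $\z_0^1$.

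The main obstacle is whether to invoke the Chapoton--Livernet freeness abstractly or to replace it with an explicit induction on the number of nodes of $\tau$. The abstract route is conceptually cleaner; an explicit inductive argument would use \eqref{fs20} to write $Z_\tau$ as a pre-Lie product of strictly smaller trees and then apply the morphism property of both sides together with the induction hypothesis. Either way, the genuinely new content is already contained in the pre-Lie morphism properties established in Lemmas \ref{lembrp} and \ref{lem6.5}; \eqref{tra08} is essentially a formality once these are in hand.
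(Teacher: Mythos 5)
Your proof is correct and follows essentially the same strategy as the paper: establish that both $\phi^\dagger M_v^{BCFP}$ and $M_{\phi^\dagger v}\phi^\dagger$ are pre-Lie morphisms (via the decorated analogue of Lemma \ref{lembrp} and via \eqref{tra15} of Lemma \ref{lem6.5}), then check agreement on the two single-node generators using \eqref{tra09} and \eqref{tra07}. One minor overstatement: you invoke Chapoton--Livernet freeness of $\mathcal{L}^1$, but to conclude that two \emph{given} pre-Lie morphisms coincide it suffices that the single-node trees \emph{generate} $\mathcal{L}^1$ under grafting (freeness is only needed to construct $M_v^{BCFP}$ in the first place, cf.~\eqref{tra09}), a distinction the paper itself keeps in view since $\phi^\dagger$ kills freeness but preserves generation (see Remark \ref{rem01}).
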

\begin{proof}
	Since both $\phi^\dagger M_v^{BCFP}$ and $M_{\phi^\dagger v} \phi^\dagger$ are pre-Lie morphisms, cf. \eqref{fs28} and \eqref{tra15}, it is enough to show  \eqref{tra08} for the generators $Z {\treeThirty{0}}$ and $Z {\treeThirty{1}}$. The case $Z {\treeThirty{0}}$ follows from
	\begin{equation*}
	\phi^\dagger M_v^{BCFP} Z {\treeThirty{0}} = \phi^\dagger (Z {\treeThirty{0}} +v) = \z_0^0 + \phi^\dagger v = M_{\phi^\dagger v} \z_0^0 =M_{\phi^\dagger v} \phi^\dagger Z {\treeThirty{0}}.
	\end{equation*}
	The case $Z {\treeThirty{1}}$ is trivial from \eqref{tra09}.
\end{proof}

\medskip

The map $M_c$ induces a (purely algebraic) transformation of the model. More precisely, if $\Pi$ is the model constructed inductively from the ODE in $\R[[\z_k^0,\z_k^1]]$
\begin{equation*}
\frac{d}{dx_2} \Pi = \sum_{k\geq 0} \z_k^0 \Pi^k + \sum_{k \geq 0} \z_k^1 \Pi^k \xi,
\end{equation*}
which arises from \eqref{tra04}, then thanks to the morphism property and \eqref{tra07}, $\tilde{\Pi} := M_c \Pi$ solves
\begin{equation}\label{tra05}
\frac{d}{dx_2} \tilde{\Pi} = \sum_{k\geq 0} \z_k^0 \tilde{\Pi}^k + \sum_{k \geq 0} \z_k^1 \tilde{\Pi}^k \xi + \sum_{k \geq 0} \frac{1}{k!} \tilde{\Pi}^k(D^{(\0)})^k c.
\end{equation}
The form of the last r.~h.~s. term in \eqref{tra05} connects to the form of the counter-term in the model equations for quasi-linear SPDEs which was postulated in \cite[Subsection 1.1]{OSSW} and systematically constructed in \cite{LOTT}.

\medskip

Finally, we address the transformation of the equation \eqref{tra04}; under the action \eqref{tra01}, it tautologically turns into
\begin{equation}\label{tra17}
\frac{du}{dx_2} = a_0(u) + a_1(u)\xi + c[a_1(\cdot + u)].
\end{equation}
We now argue that any translation in the sense of \cite{BCFP19}, which is expressed in terms of trees, may be expressed in terms of multi-indices in the form of \eqref{tra17}. For this purpose, we note that, in the notation of \cite[p. 37]{BCFP19},
\begin{equation*}
a_{v} (u) = \phi^\dagger v [a_0(\cdot + u), a_1(\cdot + u)].
\end{equation*}
Indeed, this follows from \eqref{tra03} for $v= Z \treeThirty{0}, Z \treeThirty{1}$, and is extended by the morphism property \eqref{fs29}. Therefore by \cite[Theorem 38 (ii)]{BCFP19} equation \eqref{tra04} assumes the form
\begin{equation}\label{tra18}
\frac{du}{dx_2} = a_0(u) + a_1(u)\xi + \phi^\dagger v[a_1(\cdot + u)].
\end{equation}

\medskip

Comparing \eqref{tra17} and \eqref{tra18}, we see that the greedier setting of multi-indices loses no information with respect to the tree-based approach; on the contrary, it reduces the complexity by grouping trees which give rise to the same renormalization procedure into a single multi-index. We expect this to extend to SPDEs, reducing the size of the renormalization group. 
%%%%%%%%%%%%%%%%%%%%%%%%%%%%%%%%%%%%%%%%%%%%%%%%%%%%%%%%%%%%%%
%%%%%%%%%%%%%%%%%%%%%%%%%%%%%%%%%%%%%%%%%%%%%%%%%%%%%%%%%%%%%%
\section{Homomorphism to the SHE structure}\label{Sect6}

In the spirit of the previous section we show that our algebraic structure is compatible with the ones in regularity structures when it comes to semi-linear SPDEs. More precisely, in Subsections \ref{Sect7.1} to \ref{Sect7.3} we connect our model space and pre-Lie structure to \cite[Subsection 4.1]{BCCH19}; in Subsection \ref{Sect7.4}, we show compatibility of our Hopf algebra $\T^+$ with the one in \cite[Subsection 4.2]{Hairer}. We will establish this in the specific case of the stochastic heat equation \eqref{SHE}.
In this specific case, the model is defined through the hierarchy of linear PDEs\footnote{ 
Since in our setting we work not with kernels but directly with the PDE, in order to guarantee uniqueness of the model we need to impose some extra conditions. Such conditions, which are irrelevant in the algebraic context of this article, amount to growth bounds that in turn allow for the application of Liouville principles; cf. e.~g. \cite[Proposition 5.3]{LOTT} or \cite[Lemma 4.9]{LO}.}
\begin{align}\label{gpam23}
\big(\frac{\partial }{\partial x_2}
- \frac{\partial^2 }{\partial x_1^2}\big)\Pi_\beta
=\sum_{k\ge 0}\sum_{e_k+\beta_1+\cdots+\beta_k=\beta}\Pi_{\beta_1}\cdots\Pi_{\beta_k}\xi,
\end{align}
for $[\beta]\geq0$,
together with $\Pi_{e_\n} (x) = x^\n$. 
This recursive definition leads to the following population condition: $\Pi_{\beta}\neq 0$ implies
\begin{equation}\label{gpam22}
\sum_{k\geq 0} (k-1)\beta (k) - \sum_{{\bf n\neq \0}} \beta(\n) = -1.
\end{equation}

\medskip

\subsection{Relating the abstract model space $\T$ to $\mathscr{B}$}\label{Sect7.1}
\mbox{}

We first fix the dictionary between our model space\footnote{
The reason why we do not restrict the model space $\T$ to the populated subspace, as we did in the rough path case, is that even then, our dictionary $\phi$ will no longer be one-to-one, see Subsection \ref{Sect7.4}.}
$\T$ and the space $\gls{BspaceHairer}$ of linear combinations of trees $\tau$ with expanded polynomial decorations\footnote{
As opposed to the space $\mathscr{V}$ of combinatorial decorated trees \cite[(3.5)]{BCCH19}, which have simpler polynomial decorations, and contain the actual model space in the tree-based framework; see Subsection \ref{Sect7.4}.}
defined in \cite[Subsection 4.1]{BCCH19}. 
In the specific case of SHE, the set of trees \cite[(4.3)]{BCCH19} is inductively defined through\footnote{
In order to connect to \cite{Hairer}, we switch the notation in \cite{BCCH19} and use $\mathcal{I}$ for the abstract integration and $\mathscr{I}$ for the polynomial labelling. 
Here, $\mathscr{I}$ serves as a reminder that we cannot multiply polynomials. It is understood that \eqref{gpam01} is independent of the order.}
\begin{equation}\label{gpam01}
\tau = \treeZero (\prod_{i\in\mathsf{I}} \mathscr{I} X^{\n_i}) (\prod_{j\in\mathsf{J}}\mathcal{I}\tau_j)
\end{equation} 
and its integrated version $\mathcal{I} \tau$, 
where $\gls{intHairer}$ is the placeholder for integration, i.~e. application of the kernel of the solution operator, $\treeZero$ is the placeholder for the noise\footnote{
denoted in \cite{BCCH19} as $\Xi$} 
and with the understanding that $\n_i \neq \0$ for all $i\in\mathsf{I}$. 
Although the factor $\treeZero (\prod_{i\in\mathsf{I}} \mathscr{I} X^{\n_i})$ is considered in \cite{BCCH19} as a node decoration, we will instead think of $\mathscr{I} X^{\n_i}$ as a subtree; 
for example, the graphical representation of $\treeZero (\mathscr{I} X^{\mathbf{n}_1} ) ( \mathscr{I} X^{\mathbf{n}_2}) \, \mathcal{I}\treeZero$ is given by
\begin{equation}\label{gpam05}
\treeNineteen.
\end{equation}
The symmetry factor is defined accordingly, so that, for example,
\begin{equation*}
\sigma ( \treeTwenty ) = 2.
\end{equation*}
For later purpose, we recursively define $N(\treeZero)=1$ and 
\begin{equation}\label{gpam16}
N(\tau) = \prod_{i\in\mathsf{I}}\n_i ! \prod_{j\in\mathsf{J}} N(\tau_j),
\end{equation}
so that $N(\tau)$ is the product of the factorials of all polynomial decorations of $\tau$.

\medskip

Let us now define the two linear maps\footnote{ Following \cite{BCCH19}, we adopt the notation $\circ$ to refer to the setting of expanded polynomial decorations; we will drop it in the contracted setting of Subsection \ref{Sect7.4}.} 
$\gls{phiomin}:\mathsf{\tilde T} \to \mathscr{B}$ and $\gls{phio}: \T \to \mathscr{B}$ by $\mathring{\phi}_- \z_{\beta = 0} = 0$, and then recursively in the length of $\beta$ by
\begin{equation}\label{gpam02}
\mathring{\phi}_- \z_\beta = \sum_{k\geq 0}\sum_{e_k+\beta_1+\cdots+\beta_k=\beta}  \mathring{\phi} \z_{\beta_1}\cdots \mathring{\phi}\z_{\beta_k} \treeZero
\end{equation}
and
\begin{equation}\label{gpam03}
\mathring{\phi} \z_\beta = \left\{
\begin{array}{cl}
\mathscr{I} X^\n & \mbox{if}\;\beta = e_\n\\
\mathcal{I}\mathring{\phi}_- \z_\beta & \mbox{otherwise}
\end{array}
\right\}.
\end{equation}
As pointed out in Subsection \ref{Sect.ext}, the reason why we need both maps $\mathring{\phi}_-$ and $\mathring{\phi}$ is that the traditional model space in regularity structures relates to ours via $\mathsf{\tilde T}\oplus \R\oplus \T$; 
in trees, this means that every non-purely polynomial multi-index encodes both a linear combination of rooted and of integrated (planted) trees. 
In line with branched rough paths in Section \ref{Sect5}, when restricted to populated multi-indices, cf. \eqref{gpam22}, both maps $\mathring{\phi}_-$ and $\mathring{\phi}$ are one-to-one but not onto. 
Moreover we have the following characterization:
\begin{lemma}\label{lem:gpamdic}
	For every multi-index $\beta$,
	\begin{equation}\label{gpam04}
	\mathring{\phi}_- \z_\beta = \sum_{\tau\in\mathcal{T}_\beta } \frac{\sigma(\beta)}{\sigma(\tau)}\tau.
	\end{equation}
\end{lemma}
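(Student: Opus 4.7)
The plan is to mimic the proof of Lemma~\ref{lemrp} and argue by induction on the length of $\beta$. The base case $\beta=0$ is trivial: $\mathring{\phi}_-\z_0=0$ by definition and $\mathcal{T}_0=\emptyset$ since trees in $\mathscr{B}$ are non-empty.

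For the inductive step with $\beta\neq 0$, one unfolds the right-hand side of \eqref{gpam02} and substitutes \eqref{gpam03} into each factor $\mathring{\phi}\z_{\beta_i}$: whenever $\beta_i=e_{\n_i}$ for some $\n_i\neq \0$, this produces a polynomial leaf $\mathscr{I}X^{\n_i}$, and otherwise it yields $\mathcal{I}\mathring{\phi}_-\z_{\beta_i}$, to which the induction hypothesis applies. After this substitution, $\mathring{\phi}_-\z_\beta$ appears as a sum over ordered data: a value of $k$, an ordered decomposition $e_k+\beta_1+\cdots+\beta_k=\beta$, a polynomial-or-integrated type for each $\beta_i$, and a choice of subtree $\tau_j\in\mathcal{T}_{\beta_j}$ for each integrated slot.

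The crux is a re-summation in the spirit of Lemma~\ref{lemsum02}: for a fixed tree $\tau=\treeZero(\prod_{i\in\mathsf{I}}\mathscr{I}X^{\n_i})(\prod_{j\in\mathsf{J}}\mathcal{I}\tau_j)\in\mathcal{T}_\beta$ with $k=|\mathsf{I}|+|\mathsf{J}|$, one counts the number of ordered data above that collapse to $\tau$ once the products are read as unordered; this number equals $k!/(\prod_\n n_\n!\,\prod_{\tau'}n_{\tau'}!)$, where $n_\n$ counts the polynomial children $\mathscr{I}X^\n$ and $n_{\tau'}$ counts the integrated children $\mathcal{I}\tau'$ at the root of $\tau$. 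Combining this with the recursive identity
\begin{equation*}
\sigma(\tau)=\Bigl(\prod_\n n_\n!\Bigr)\Bigl(\prod_{\tau'}n_{\tau'}!\Bigr)\prod_{j\in\mathsf{J}}\sigma(\tau_j)
\end{equation*}
and with $\sigma(\beta)=k!\prod_{j\in\mathsf{J}}\sigma(\beta_j)$, which follows from $\sigma(\beta)=\prod_{k\geq 0}(k!)^{\beta(k)}$ applied to the decomposition $\beta=e_k+\sum_{i}e_{\n_i}+\sum_{j}\beta_j$ (since polynomial contributions $e_{\n_i}$ factor trivially, their $k$-component being zero), the weight on $\tau$ collapses to exactly $\sigma(\beta)/\sigma(\tau)$, closing the induction.

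The main obstacle is this last combinatorial bookkeeping step: the mixed nature of children (polynomial versus integrated) and the distinction between $\mathring{\phi}_-$ and $\mathring{\phi}$ require care to ensure the recursive structure of $\sigma(\tau)$ meshes with the decomposition of $\sigma(\beta)$; in particular, one must verify that the polynomial leaves contribute no $k!$-type factor and that two orderings collapse to the same tree precisely when they differ by a permutation of identical children at the root.
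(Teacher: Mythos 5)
Your proof is correct and follows precisely the route the paper hints at (the proof is omitted in the paper with the remark that it is similar to Lemma~\ref{lemrp}): induction combined with the recursive multiplicativity of the symmetry factors and an orbit-counting re-summation in the spirit of Lemma~\ref{lemsum02}. You correctly identify the two needed identities $\sigma(\tau)=(\prod_\n n_\n!)(\prod_{\tau'}n_{\tau'}!)\prod_j\sigma(\tau_j)$ and $\sigma(\beta)=k!\prod_j\sigma(\beta_j)$, and correctly observe that the polynomial leaves $\mathscr{I}X^\n$ contribute trivially to $\sigma(\beta)$ while still entering the root-level multiplicities in $\sigma(\tau)$ and the orbit count $k!/(\prod_\n n_\n!\prod_{\tau'}n_{\tau'}!)$, which is the new bookkeeping wrinkle relative to Lemma~\ref{lemrp}. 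One cosmetic point: the ``polynomial-or-integrated type'' of each slot is already determined by whether $\beta_i\in\{e_\n\}_{\n\neq\0}$, so it is not an independent choice in the ordered data; this does not affect the count.
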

Here $\mathcal{T}_\beta$ is the set of trees with $\beta(k)$ nodes with $k$ children and $\beta(\n)$ decorations $\mathscr{I} X^\n
$. 
Recall that we think of the latter as subtrees, so they count as children. For example, \eqref{gpam05} belongs to $\mathcal{T}_{\beta}$ for $\beta= e_0 + e_3 + e_{\n_1} + e_{\n_2}$. 
The number $\sigma(\beta)$ is the same as \eqref{fs24}; in particular, it does not depend on $\beta(\n)$ for any $\n$. 
The proof of Lemma \ref{lem:gpamdic} is similar to that of Lemma \ref{lemrp}, and thus we omit it.

\medskip

Identity \eqref{gpam04} establishes the matrix representation of $\mathring{\phi}_-$, namely
\begin{equation}\label{ab01}
(\mathring{\phi}_-)_\tau^{\beta} = \left\{\begin{array}{cl}
\frac{\sigma(\beta)}{\sigma(\tau)} & \mbox{if }\tau\in \mathcal{T}_\beta\\
0 &\mbox{otherwise}
\end{array}\right\};
\end{equation}
this, in turn, defines the transposed map $\mathring{\phi}_-^\dagger$.

\medskip

\subsection{Relating $\mathsf{L}$ to grafting operators on $\mathscr{B}$}
\mbox{}

There are two operations on $\mathscr{B}$, namely grafting and increasing polynomial decorations, cf. \cite[Definition 4.7]{BCCH19}. 
Let us start with grafting, and define\footnote{
It is straightforward to check that this definition is the non-recursive version of \cite[Definition 4.7]{BCCH19}.} 
in line with \eqref{fs103}
\begin{equation}\label{gpam24}
\tau_1 \gls{arrown} \tau_2 = \sum_{\tau} m_\n (\tau_1,\tau_2;\tau) \tau,
\end{equation}
where $m_\0 (\tau_1,\tau_2;\tau) = m (\tau_1,\tau_2;\tau)$, cf. \eqref{fs20}, and $m_{\n\neq \0} (\tau_1,\tau_2;\tau)$ is the number of decorations $\mathscr{I} X^\n$ in $\tau_2$ that when replaced by $\mathcal{I}\tau_1$ yield $\tau$. As in Subsection \ref{Sect5.4}, let us denote by $Z_\tau$ the standard basis of a linear space indexed by trees $\tau$. We define for every $\n$
\begin{equation}\label{gpam08}
Z_{\tau_1} \gls{preLien} Z_{\tau_2} = \sum_\tau n_\n (\tau_1,\tau_2;\tau) Z_\tau,
\end{equation}
where $n_\0 (\tau_1,\tau_2;\tau) = n (\tau_1,\tau_2;\tau)$, cf. \eqref{fs27}, and $n_{\n\neq \0}(\tau_1,\tau_2;\tau)$ is the number of single cuts of $\tau$ such that the branch is $\tau_1$ and, after adjoining the decoration $\mathscr{I} X^\n$ to the trunk at the place the cut was made, yield $\tau_2$. 
The combinatorial identity $n_\n(\tau_1,\tau_2;\tau)\sigma(\tau_1)\sigma(\tau_2)$ $=m_\n(\tau_1,\tau_2;\tau)\sigma(\tau)$ still holds\footnote{ The proof follows the argument in \cite[Proposition 4.3]{Hoffman}.}, so that \eqref{gpam08} may be rewritten as
\begin{equation}\label{gpam09}
\sigma(\tau_1)Z_{\tau_1} \leadsto_\n \sigma(\tau_2)Z_{\tau_2} = \sum_\tau m_\n (\tau_1,\tau_2;\tau) \sigma(\tau)Z_\tau.
\end{equation}
In particular, \eqref{gpam08} defines a pre-Lie product which is isomorphic to \eqref{gpam24}.
\begin{lemma}\label{lem:gpam1}For every $\n$ and every $\tau_1$, $\tau_2$
	\begin{equation}\label{gpam10}
	\mathring{\phi}_-^\dagger (Z_{\tau_1}\leadsto_\n Z_{\tau_2}) = (\mathring{\phi}_-^\dagger Z_{\tau_1}) ( D^{(\n)} \mathring{\phi}_-^\dagger Z_{\tau_2}).
	\end{equation}
\end{lemma}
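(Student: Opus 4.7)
The plan is to follow the scheme of Lemma \ref{lembrp} by reducing \eqref{gpam10} to a purely combinatorial identity. I fix representatives $\tau_1 \in \mathcal{T}_{\beta_1}$ and $\tau_2 \in \mathcal{T}_{\beta_2}$; since $\mathring\phi_-^\dagger$ is determined by transposition of \eqref{ab01}, we have $\mathring\phi_-^\dagger \sigma(\tau)Z_\tau = \sigma(\beta)\z^\beta$ for $\tau\in\mathcal{T}_\beta$. Multiplying \eqref{gpam10} by $\sigma(\tau_1)\sigma(\tau_2)$ and appealing to \eqref{gpam09} on the left-hand side, the identity reduces to
\begin{equation*}
\sum_\tau m_\n(\tau_1,\tau_2;\tau)\,\sigma(\beta)\,\z^\beta \;=\; \sigma(\beta_1)\sigma(\beta_2)\,\z^{\beta_1}(D^{(\n)}\z^{\beta_2}),
\end{equation*}
where on the left $\beta$ is the fertility determined by $\tau\in\mathcal{T}_\beta$. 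I would then split into the cases $\n=\0$ and $\n\neq\0$, since the grafting operation $*_\n$ has qualitatively different behavior in each.

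The case $\n=\0$ essentially reproduces the argument of Lemma \ref{lembrp}: grafting at a node of $\tau_2$ with $k$ children yields $\beta=\beta_1+\beta_2+e_{k+1}-e_k$, and on the right-hand side $D^{(\0)} = \sum_{k\geq 0}(k+1)\z_{k+1}\partial_{\z_k}$ naturally decomposes by $k$. The identity then follows from the purely combinatorial
\begin{equation*}
\sum_{\tau:\,\beta_1+\beta_2+e_k=\beta+e_{k+1}} m_\0(\tau_1,\tau_2;\tau) \;=\; \beta_2(k),
\end{equation*}
together with the calculation $\sigma(\beta)=(k+1)\sigma(\beta_1)\sigma(\beta_2)$, which is immediate from \eqref{fs24} since the polynomial entries $\beta(\n')$ are unchanged.

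The case $\n\neq\0$ is the genuinely new part. By \eqref{ao30}, the right-hand side equals $\sigma(\beta_1)\sigma(\beta_2)\,\beta_2(\n)\,\z^{\beta_1+\beta_2-e_\n}$. For the left-hand side, $m_\n(\tau_1,\tau_2;\tau)\neq 0$ forces $\tau$ to arise from $\tau_2$ by replacing a single decoration $\mathscr{I} X^\n$ with $\mathcal{I}\tau_1$, so its fertility is the common value $\beta=\beta_1+\beta_2-e_\n$ regardless of which decoration is chosen, and the multiplicities sum to $\sum_\tau m_\n(\tau_1,\tau_2;\tau)=\beta_2(\n)$ (the total number of $\mathscr{I} X^\n$-decorations in $\tau_2$). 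The main obstacle—mild but conceptually decisive—is to observe that, in \eqref{fs24}, the factor $\sigma(\beta)$ depends only on the non-polynomial entries $\beta(k)$, which are preserved under $\beta\mapsto\beta-e_\n$; consequently $\sigma(\beta)=\sigma(\beta_1+\beta_2)=\sigma(\beta_1)\sigma(\beta_2)$, and the combinatorial identity closes. This ``blindness'' of the symmetry factor to the polynomial decorations is precisely what makes grafting via $\mathscr{I} X^\n$-decorations compatible with the elementary action of $\partial_{\z_\n}$ on multi-indices.
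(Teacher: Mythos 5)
Your proposal is correct and follows essentially the same route as the paper: multiply by $\sigma(\tau_1)\sigma(\tau_2)$, use \eqref{gpam09} and \eqref{gpam04} to reduce to \eqref{gpam12}, then exploit the fact that $\sigma(\beta)$ in \eqref{fs24} is blind to the polynomial components $\beta(\n)$ to identify $\sigma(\beta)=\sigma(\beta_1)\sigma(\beta_2)$, and close with the counting identity $\sum_\tau m_\n(\tau_1,\tau_2;\tau)=\beta_2(\n)$. The only cosmetic difference is that you spell out the $\n=\0$ case, whereas the paper simply cites the argument of Lemma \ref{lembrp}.
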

Identity \eqref{gpam10} may be regarded as a pre-Lie morphism into $\tilde{\mathsf{L}}$, cf. \eqref{p26}, namely
\begin{equation*}
\mathring{\phi}_-^\dagger (Z_{\tau_1}\leadsto_{\n_1} Z_{\tau_2})D^{(\n_2)} = (\mathring{\phi}_-^\dagger Z_{\tau_1}) D^{(\n_1)}\prelie (\mathring{\phi}_-^\dagger Z_{\tau_2})D^{(\n_2)}.
\end{equation*}
Note that, unlike \cite[Proposition 4.21]{BCCH19} and in line with Subsection \ref{Sect5.4}, our construction does not define a free (multi) pre-Lie algebra.
\begin{proof}
	The case $\n = \0$ follows from the arguments of Lemma \ref{lembrp}. Let us now fix $\n\neq \0$; multiplying \eqref{gpam10} by $\sigma(\tau_1)\sigma(\tau_2)$ and combining \eqref{gpam04} and \eqref{gpam09}, we see that \eqref{gpam10} follows from
	\begin{equation}\label{gpam12}
	\sum_\tau m_\n (\tau_1,\tau_2;\tau) \sigma(\beta)\z^\beta = \sigma(\beta_1)\sigma(\beta_2) (\z^{\beta_1} D^{(\n)})\z^{\beta_2},
	\end{equation}
	with the understanding that $\beta_1$ and $\beta_2$ are determined by $\tau_1 \in \mathcal{T}_{\beta_1}$ and $\tau_2 \in \mathcal{T}_{\beta_2}$. Note that $m_\n (\tau_1,\tau_2;\tau)\neq 0$ only if $\beta_1 + \beta_2 = \beta + e_\n$; moreover, in such a case $\sigma(\beta_1)\sigma(\beta_2) = \sigma(\beta)$. Since $(\z^{\beta_1} D^{(\n)})\z^{\beta_2} = \beta_2 (\n) \z^{\beta_1 + \beta_2 - e_\n}$, \eqref{gpam12} reduces to 
	\begin{equation*}
	\sum_{\tau: \beta_1 + \beta_2 = \beta + e_\n} m_\n (\tau_1,\tau_2;\tau) = \beta_2 (\n).
	\end{equation*}
	This clearly holds because $\beta_2 (\n)$ is the number of decorations $\mathscr{I} X^\n$ that $\tau_2$ contains.
\end{proof}

\medskip

We now turn to the second operation, namely $\gls{arrowup}$, which increases polynomial decorations. 
As for $\curvearrowright_\n$, there is a non-recursive expression of \cite[Definition 4.7]{BCCH19} given in this case by\footnote{ The coefficient $m_\n (X^{\n + (1,0)},\tau;\tau')$ is meaningful because we think of the decoration $\mathscr{I} [X^{\n + (1,0)}]$ as a subtree, and thus grafting $X^{\n + (1,0)}$ makes sense.} 
\begin{equation}\label{gpam13}
\uparrow_1 \tau = \sum_{{\bf n}} \sum_{\tau'} m_\n (X^{\n + (1,0)},\tau;\tau') \tau',
\end{equation}
and a similar expression for $\uparrow_2$ (in the sequel, we will focus on $\uparrow_1$).
On the dual side, we define the operator $\gls{sharp}$ as
\begin{equation}\label{gpam14}
\sharp_1 Z_\tau = \sum_\n (n_1 + 1)\sum_{\tau'} n_\n (X^{\n + (1,0)},\tau;\tau') Z_{\tau'},
\end{equation}
and analogously for $\sharp_2$. 
Since by \eqref{gpam16}
\begin{equation*}
n_\n(X^{\n + (1,0)},\tau ;\tau')\neq 0\;\implies (n_1 +1) N(\tau) = N(\tau'), 
\end{equation*}
and thanks to $n_\n(X^{\n + (1,0)},\tau ;\tau')\sigma(\tau)$ $=m_\n(X^{\n + (1,0)},\tau ;\tau')\sigma(\tau')$, we may pass from \eqref{gpam14} to \eqref{gpam13} by
\begin{equation}\label{gpam17}
\sharp_1 \sigma(\tau) N(\tau) Z_\tau =\sum_{{\bf n}} \sum_{\tau'} m_\n (X^{\n + (1,0)},\tau;\tau') \sigma(\tau') N(\tau') Z_{\tau'}. 
\end{equation}
\begin{lemma}\label{lem:gpam2}
	For every $\tau$
	\begin{equation}\label{gpam15}
	\mathring{\phi}_-^\dagger \sharp_1 Z_\tau = \partial_1 \mathring{\phi}_-^\dagger Z_\tau.
	\end{equation}
\end{lemma}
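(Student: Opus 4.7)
The plan is to evaluate both sides of \eqref{gpam15} in the monomial basis $\{\z^\beta\}$ of $\tilde{\mathsf T}^*$. The crucial observation is that every tree $\tau$ belongs to exactly one class $\mathcal{T}_\beta$, namely the unique $\beta_\tau$ whose $k$-children counts and polynomial decorations match those of $\tau$; thus the matrix representation \eqref{ab01} of $\mathring{\phi}_-$ collapses to
\begin{equation*}
\mathring{\phi}_-^\dagger Z_\tau=\frac{\sigma(\beta_\tau)}{\sigma(\tau)}\z^{\beta_\tau}.
\end{equation*}
Applying $\partial_1=\sum_{\n}(n_1+1)\z_{\n+(1,0)}D^{(\n)}$ (cf.~\eqref{ao30}) to this monomial and expanding $D^{(\0)}$ via \eqref{ao35}, the right-hand side of \eqref{gpam15} splits naturally into a part with $\n\neq\0$ of the form $(n_1+1)\beta_\tau(\n)\z^{\beta_\tau-e_\n+e_{\n+(1,0)}}$, coming from action on existing polynomial decorations, and a $\n=\0$ part $\sum_{k\geq0}(k+1)\beta_\tau(k)\z^{\beta_\tau+e_{(1,0)}-e_k+e_{k+1}}$ coming from $\z_{(1,0)}D^{(\0)}$; both are weighted by $\sigma(\beta_\tau)/\sigma(\tau)$.

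On the left-hand side I would use the combinatorial identity $\sigma(\tau)\,n_\n(X^{\n+(1,0)},\tau;\tau')=\sigma(\tau')\,m_\n(X^{\n+(1,0)},\tau;\tau')$ (which uses $\sigma(X^{\n+(1,0)})=1$) to rewrite \eqref{gpam14} in terms of $m_\n$, and then match the $\n$-summands against the two terms in the expansion above. For $\n\neq\0$, the operation $\uparrow_1$ upgrades an existing decoration $\mathscr{I} X^\n$ to $\mathscr{I} X^{\n+(1,0)}$, so $m_\n(X^{\n+(1,0)},\tau;\tau')\neq0$ forces $\beta_{\tau'}=\beta_\tau-e_\n+e_{\n+(1,0)}$; since $\sigma(\beta)$ depends only on the $\beta(k)$-components by \eqref{fs24}, we have $\sigma(\beta_{\tau'})=\sigma(\beta_\tau)$, and summing $m_\n$ over all $\tau'$ simply counts the $\mathscr{I} X^\n$-decorations, yielding $\beta_\tau(\n)$. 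For $\n=\0$, where by convention $*_\0=*$ is the Grossman--Larson grafting, the output $\tau'$ is obtained by attaching the subtree $\mathscr{I} X^{(1,0)}$ as a new child at some node of $\tau$ with $k$ children, so $\beta_{\tau'}=\beta_\tau-e_k+e_{k+1}+e_{(1,0)}$; the ratio $\sigma(\beta_{\tau'})/\sigma(\beta_\tau)=(k+1)$ produced by \eqref{fs24} exactly reproduces the factor $(k+1)$ in \eqref{ao35}, while summing $m_\0$ over all $\tau'$ corresponding to a fixed $k$ counts $k$-children nodes of $\tau$, giving $\beta_\tau(k)$.

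The main obstacle, as I see it, is notational rather than conceptual: the two summands $\n=\0$ and $\n\neq\0$ in $\partial_1$ encode qualitatively different operations on trees --- grafting a fresh subtree, which alters fertilities and hence shifts $\sigma(\beta)$, versus relabeling an existing polynomial decoration, which leaves $\sigma(\beta)$ fixed. All combinatorial factors $(n_1+1)$, $\sigma$-ratios and, implicitly via $N$ in \eqref{gpam17}, polynomial-weight factorials must be tracked through both mechanisms simultaneously. Once the closed form for $\mathring{\phi}_-^\dagger Z_\tau$ is in hand, however, the argument proceeds in close parallel to the combinatorial last step in the proof of Lemma \ref{lem:gpam1}, just with the extra bookkeeping of the $\n=\0$ case.
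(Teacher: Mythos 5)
Your proposal is correct and follows essentially the same route as the paper: both write $\mathring{\phi}_-^\dagger Z_\tau$ in closed form, reduce via the combinatorial identity $n_\n\sigma(\tau)=m_\n\sigma(\tau')$ to a fixed-$\n$ identity, and then split into the $\n=\0$ case (new-child grafting, where the $\sigma$-ratio produces the factor $k+1$) and the $\n\neq\0$ case (decoration upgrade, where $\sigma$ is unchanged), each ending in a tautological node/decoration count. The only small misdirection is your worry about the $N$-factorials from \eqref{gpam17}: these only enter the comparison of $\sharp_1$ with $\uparrow_1$ and play no role in proving \eqref{gpam15} itself, since the $(n_1+1)$ weight already matches between \eqref{gpam14} and \eqref{ao30}.
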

\begin{proof}
By definitions (\ref{ao30}) of $\partial_1$ and (\ref{gpam14}) of $\sharp_1$,
it suffices to establish for fixed ${\bf n}$ (and $\tau$)
\begin{align*}
\mathring\phi_-^\dagger\sum_{\tau'}n_{\bf n}(X^{{\bf n}+(1,0)},\tau;\tau')Z_{\tau'}
=\mathsf{z}_{{\bf n}+(1,0)}(D^{({\bf n})}\mathring\phi_-^\dagger Z_\tau).
\end{align*}
Denoting by $\beta$ and $\beta'$ the multi-indices with $\tau\in{\mathcal T}_\beta$
and $\tau'\in{\mathcal T}_{\beta'}$, respectively, and appealing to the definition
(\ref{gpam04}) of $\mathring\phi_-$ (and thus its transpose) this reduces to 
the following identity in $\mathsf{T}^*$
\begin{align*}
\sum_{\tau'}n_{\bf n}(X^{{\bf n}+(1,0)},\tau;\tau')
\frac{\sigma(\beta')}{\sigma(\tau')}\mathsf{z}^{\beta'}
=\frac{\sigma(\beta)}{\sigma(\tau)}\mathsf{z}_{{\bf n}+(1,0)}(D^{({\bf n})}\mathsf{z}^\beta),
\end{align*}
which by the combinatorial identity can be reformulated as
\begin{align}\label{ffw02}
\sum_{\tau'}m_{\bf n}(X^{{\bf n}+(1,0)},\tau;\tau')
\sigma(\beta')\mathsf{z}^{\beta'}
=\sigma(\beta)\mathsf{z}_{{\bf n}+(1,0)}(D^{({\bf n})}\mathsf{z}^\beta).
\end{align}
We distinguish the case ${\bf n}\not=\0$ and the remaining case of
\begin{align}\label{ffw03}
\sum_{\tau'}m(X^{(1,0)},\tau;\tau')
\sigma(\beta')\mathsf{z}^{\beta'}
=\sigma(\tau)\mathsf{z}_{(1,0)}(D^{({\bf 0})}\mathsf{z}^\beta),
\end{align}
which we treat first. 

\medskip

We note that by definition, $m(X^{(1,0)},\tau;\tau')\not=0$
implies that there exists a $k\ge 0$ such that
\begin{align}\label{ffw01}
\beta'=\beta-e_k+e_{k+1}+e_{(1,0)}.
\end{align}
Hence by definition (\ref{ao35}) of $D^{({\bf 0})}$,
for (\ref{ffw03}) it suffices to show for fixed $k$
\begin{align*}
\sum_{\tau':k}m(X^{(1,0)},\tau;\tau')
\sigma(\beta')\mathsf{z}^{\beta'}
=\sigma(\beta)\mathsf{z}_{(1,0)}(k+1)\mathsf{z}_{k+1}
(\partial_{\mathsf{z}_k}\mathsf{z}^\beta),
\end{align*}
where the sum is over all trees $\tau'$ that arise from attaching the decoration
$X^{(1,0)}$ to a node of the tree $\tau$ with $k$ children.
By (\ref{ffw01}) and definition (\ref{fs24}) of $\sigma(\beta)$, this reduces to the combinatorial
identity
\begin{align*}
\sum_{\tau':k}m(X^{(1,0)},\tau;\tau')=\beta(k),
\end{align*}
which is tautological by definition of $m$.

\medskip

We now turn for (\ref{ffw02}) for ${\bf n}\not=\0$.
By definition, $m_{\bf n}(X^{{\bf n}+(1,0)},\tau;\tau')\not=0$ implies
%
%\begin{align*}
$\beta'$ $=\beta-e_{{\bf n}}+e_{{\bf n}+(1,0)}$.
%\end{align*}
%
Thus by definition (\ref{ao30}) of $D^{({\bf n})}$, and by
definition (\ref{fs24}) of $\sigma(\beta)$, (\ref{ffw02}) reduces to
\begin{align*}%\label{fw02}
\sum_{\tau'}m_{\bf n}(X^{{\bf n}+(1,0)},\tau;\tau')
%\mathsf{z}^{\beta'}
=\beta({\bf n}),%\mathsf{z}_{{\bf n}+(1,0)}(D^{({\bf n})}\mathsf{z}^\beta),
\end{align*}
which again is tautological by definition of $m_{\bf n}$.
\end{proof}

\medskip

\subsection{Relating $\mathring{\phi}_-^\dagger$ to $\mathring{\Upsilon}$}\label{Sect7.3}
\mbox{}

In the spirit of Subsection \ref{Sect5.5}, we want to relate \eqref{gpam10} and \eqref{gpam15} with the morphism properties established in \cite[Lemma 4.8]{BCCH19}.
Fixing a nonlinearity $a$, for arbitrary 
polynomial\footnote{This notation is chosen to agree with \cite{BCCH19};
we would rather replace ${\mathbf u}$ by $p$.} ${\bf u}$ and
shift\footnote{Again, the notation is aligned upon
\cite{BCCH19}; for us, $y=h$ would be more natural} $y$, 
we take inspiration from (\ref{ao27bis}) to generalize definition (\ref{fs31}) to
\begin{align}\label{ffw17}
\gls{upsilonCirc}^a[\tau]({\bf u},y):=
\mathring\phi_{-}^\dagger N(\tau)\sigma(\tau)
Z_{\tau}[a(\cdot+{\bf u}(y)),{\bf u}(\cdot+y)-{\bf u}(y)].
\end{align}
By linearity, (\ref{ffw17}) extends from trees $\tau$ to linear combinations thereof.
It is an easy consequence of Lemmas \ref{lem:gpam1} and \ref{lem:gpam2} that 
for fixed $a$,
this linear map $\mathring\Upsilon^{a}$ from ${\mathscr B}$ into the space of functions  
of polynomial/base-point $({\bf u},y)$ is a morphism w.~r.~t.~to $\curvearrowright$
and $\uparrow$, in line with \cite[Lemma 4.8]{BCCH19}:

\begin{corollary}
\begin{align}
\mathring\Upsilon^a[\tau_1\curvearrowright_{\bf n}\tau_2]&=\mathring\Upsilon^a[\tau_1]
\big(\frac{d}{du^{({\bf n})}}\mathring\Upsilon^a[\tau_2]\big),\label{ffw18}\\
\mathring\Upsilon^a[\uparrow_1\tau]&=\frac{d}{dy_1}\mathring\Upsilon^a[\tau],\label{ffw20}
\end{align}
where the coefficients $u^{({\bf n})}$ are defined through 
${\bf u}(x)=\sum_{\bf n}\frac{1}{{\bf n}!}u^{({\bf n})}x^{\bf n}$.
\end{corollary}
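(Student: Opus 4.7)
The plan is to reduce both identities, via the dictionary $\mathring\phi_-^\dagger$ of \eqref{ab01}, to the operator-level statements of Lemmas \ref{lem:gpam1} and \ref{lem:gpam2}, and then to translate the resulting abstract derivations on $\mathsf{T}^*$ (namely $D^{(\n)}$ and $\partial_1$) into concrete differentiations on functions of $(\mathbf{u}, y)$ through the defining formulas \eqref{ao21}, \eqref{hk05} and \eqref{ao25}.

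For the simpler identity \eqref{ffw20}, I would first use the bookkeeping relation \eqref{gpam17} to recognize the sum appearing in $\mathring\Upsilon^a[\uparrow_1\tau](\mathbf{u}, y)$ as $\sigma(\tau)N(\tau)(\mathring\phi_-^\dagger \sharp_1 Z_\tau)[\tilde a, \tilde p]$, where $\tilde a(v) := a(v+\mathbf{u}(y))$ and $\tilde p(x) := \mathbf{u}(x+y) - \mathbf{u}(y)$. Lemma \ref{lem:gpam2} then converts $\mathring\phi_-^\dagger \sharp_1 Z_\tau$ into $\partial_1 \mathring\phi_-^\dagger Z_\tau$, and the task reduces to verifying that $\partial_1\pi[\tilde a, \tilde p] = \frac{d}{dy_1}\pi[\tilde a, \tilde p]$ for every $\pi \in \mathsf{T}^*$. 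This identification is immediate from \eqref{ao25} once one observes that $\tilde a(\cdot + \tilde p(h_1,0)) = a(\cdot + \mathbf{u}(y+(h_1,0)))$ and $\tilde p(\cdot + (h_1,0)) - \tilde p(h_1,0) = \mathbf{u}(\cdot + y+(h_1,0)) - \mathbf{u}(y+(h_1,0))$, so that the $h_1$-derivative at the origin coincides with the $y_1$-derivative at $y$.

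The plan for \eqref{ffw18} is analogous. Combining \eqref{gpam09} with the observation that $N(\tau) = N(\tau_1)N(\tau_2)/\n!^{\mathbf{1}_{\n\neq\0}}$ whenever $m_{\n}(\tau_1,\tau_2;\tau)\neq 0$ allows one to rewrite $\sum_\tau m_{\n}(\tau_1,\tau_2;\tau)N(\tau)\sigma(\tau)Z_\tau$ as a scalar multiple of $Z_{\tau_1}*_{\n} Z_{\tau_2}$. Applying Lemma \ref{lem:gpam1} and exploiting that evaluation at $(\tilde a, \tilde p)$ is an algebra homomorphism yields $\mathring\Upsilon^a[\tau_1 \curvearrowright_{\n}\tau_2] = \mathring\Upsilon^a[\tau_1]\cdot \frac{1}{\n!^{\mathbf{1}_{\n\neq\0}}} D^{(\n)}\pi_2[\tilde a, \tilde p]$ with $\pi_2 = N(\tau_2)\sigma(\tau_2)\mathring\phi_-^\dagger Z_{\tau_2}$. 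To conclude, I would identify this second factor with $\frac{d}{du^{(\n)}}\mathring\Upsilon^a[\tau_2]$ by the chain rule: \eqref{ao21} handles $\n=\0$ (varying $u^{(\0)}$ shifts the argument of $\tilde a$ and leaves $\tilde p$ unchanged), while \eqref{hk05} handles $\n\neq\0$, where the mismatch between $D^{(\n)} = \partial_{\z_{\n}}$ and the Taylor-coefficient derivative absorbs the factor $1/\n!$.

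I expect the main obstacle to lie in the careful bookkeeping of the multiplicative factors $N(\tau)$, $\sigma(\tau)$ and $\n!$ that intertwine the tree side with the multi-index side: the $\n!$ deficit picked up when $\tau_1$ is grafted onto a decoration $\mathscr{I}X^{\n}$ of $\tau_2$ must cancel exactly against the mismatch between \eqref{hk05} and $\frac{d}{du^{(\n)}}$, and the analogous consistency must hold in the transition from $\sharp_1$ to $\uparrow_1$.
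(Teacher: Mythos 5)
Your proposal is correct and follows essentially the same route as the paper: using the $N(\tau)$, $\sigma(\tau)$ bookkeeping together with \eqref{gpam09}/\eqref{gpam17} to rewrite the tree-grafting operations in terms of $*_\n$ and $\sharp_1$, applying Lemmas \ref{lem:gpam1} and \ref{lem:gpam2} to land on $D^{(\n)}$ and $\partial_1$, and then translating these into $\frac{d}{du^{(\n)}}$ and $\frac{d}{dy_1}$ via the characterizations \eqref{ao21}, \eqref{hk05}, \eqref{ao25}. The only cosmetic difference is that your $\n!^{\mathbf{1}_{\n\neq\0}}$ is an unnecessary flourish — since $\0!=1$, the plain $\n!$ already gives the right factor in all cases, which is what the paper uses.
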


\begin{proof} 
We start with (\ref{ffw18}) by inserting definition (\ref{gpam24}) of $\curvearrowright_{\bf n}$ and then
use the linearity and definition (\ref{ffw17}) of $\mathring\Upsilon^a[\tau]$ to obtain
\begin{align*}
\mbox{l.~h.~s.~of (\ref{ffw18})}=\mathring\phi_-^\dagger\sum_{\tau}N(\tau)\sigma(\tau)m_{\bf n}(\tau_1,\tau_2;\tau)
Z_\tau,
\end{align*}
where here and in the sequel we suppress the argument 
$[a(\cdot+{\bf u}(y)),$ ${\bf u}(\cdot+y)-{\bf u}(y)]$.
Since $m_{\bf n}(\tau_1,\tau_2;\tau)\not=0$ implies $N(\tau_1)N(\tau_2)$ $={\bf n}!N(\tau)$, 
this yields by (\ref{gpam09})
\begin{align*}
\mbox{l.~h.~s.~of (\ref{ffw18})}=\mathring\phi_-^\dagger\frac{1}{{\bf n}!}
N(\tau_1)\sigma(\tau_1)Z_{\tau_1}\leadsto_{\bf n}N(\tau_2)\sigma(\tau_2)Z_{\tau_2}.
\end{align*}
This allows us to appeal to (\ref{gpam10}) yielding
\begin{align}\label{ffw19}
\mbox{l.~h.~s.~of (\ref{ffw18})}=(\mathring\phi_-^\dagger N(\tau_1)\sigma(\tau_1)Z_{\tau_1})
\frac{1}{{\bf n}!}(D^{({\bf n})}\mathring\phi_-^\dagger N(\tau_2)\sigma(\tau_2) Z_{\tau_2}).
\end{align}
Using that the argument of $\mathring\Upsilon^{a}$, see (\ref{ffw17}), can also be written as
$[a(\cdot+u^{({\bf 0})}),$ 
$\sum_{{\bf n}\not=0}\frac{1}{{\bf n}!}u^{({\bf n})}x^{\bf n}]$, we learn
from the definitions (\ref{ao21}) 
and (\ref{hk05}) of $D^{({\bf n})}$ that (\ref{ffw19}) is identical to the
r.~h.~s.~of (\ref{ffw18}).

\medskip

We now turn to the l.~h.~s.~of (\ref{ffw20}). By definition (\ref{gpam13}) of
$\uparrow_1$, definition (\ref{ffw17}) of $\mathring\Upsilon^{a}$, and (\ref{gpam17})
we have (still with suppressed argument)
\begin{align*}
\mbox{l.~h.~s.~of (\ref{ffw20})}=\mathring\phi_-^\dagger\sharp_1 N(\tau)\sigma(\tau)Z_{\tau},
\end{align*}
so that we may appeal to (\ref{gpam15}) to the effect of
\begin{align*}
\mbox{l.~h.~s.~of (\ref{ffw20})}=(\partial_1\mathring\phi_-^\dagger N(\tau)\sigma(\tau)Z_{\tau})
[a(\cdot+{\bf u}(y),{\bf u}(\cdot+y)-{\bf u}(y)].
\end{align*}
By the characterization (\ref{ao25}) of $\partial_1$ and definition 
(\ref{ffw17}) of $\mathring\Upsilon^{a}$,
this turns into the l.~h.~s.~of (\ref{ffw20}).
\end{proof}

\medskip

\subsection{Relating $(\Delta,\Delta^+)$ to $(\Delta_H,\Delta_H^+)$}\label{Sect7.4}
\mbox{}

Our final goal is to connect our Hopf algebra structure to that of \cite{Hairer}. A first strong hint that the structures are compatible is \eqref{Hai4.14}; 
however, the analogue \eqref{comoduleHairer} for the coaction $\Delta$ is missing. 

\medskip

We will pass to a coarser tree-based description, which no longer distinguishes different polynomial decorations of a given node but contracts them by multiplication. 
We identify the space of linear combinations of trees with contracted decorations with the model space\footnote{ In the more general context of \cite{BCCH19}, this contracted space corresponds to $\mathscr{V}$, but we will directly work with the restriction to relevant trees.} in \cite[Subsection 4.2]{Hairer}, and denote it by $\gls{ModelSpaceH}$. 
More precisely, the model space $\T_H$ consists of linear combinations of trees of the inductive form
\begin{equation*}
\tau = \treeZero X^\n (\prod_{j\in\mathsf{J}}\mathcal{I}\tau_j),
\end{equation*}
as well as their integrated versions $\mathcal{I}\tau$. 
The passage from the detailed to the contracted description is encoded, as in \cite[Subsection 4.1]{BCCH19}, by a linear map $\gls{contraction}: \mathscr{B} \to \T_H$ recursively given, for $\tau$ as in \eqref{gpam01}, by
\begin{equation*}
\mathcal{Q}\tau = \treeZero  X^{\sum_{i\in\mathsf{I}} \n_i}(\prod_{j\in\mathsf{J}} \mathcal{I}\mathcal{Q}\tau_j).
\end{equation*}
For example, \eqref{gpam05} turns into
\begin{equation*}
\mathcal{Q} \treeNineteen = \treeNine{\n_1 + \n_2}.
\end{equation*}
The map $\mathcal{Q}$ defines a new dictionary $\phi_-,\phi: \T\to\T_H$ by means of\footnote{ Here, we are implicitly extending $\mathring{\phi}_-$ to the whole model space $\T$ by projecting onto the complement of the polynomial sector, i.~e.~ $\mathring{\phi}_-\z_{e_{\n}} = 0$.}
\begin{equation}\label{ab02}
\phi_- = \mathcal{Q}\mathring{\phi}_- \, , \quad \phi = \mathcal{Q}\mathring{\phi}.
\end{equation} 
Equivalently, due to \eqref{gpam02} and \eqref{gpam03}, $\phi_-$ and $\phi$ are determined by $\phi_- \z_{\beta = 0} = 0$ and then recursively in the length of $\beta$ by
\begin{equation}\label{gp01}
\phi_- \z_\beta 
= \sum_{k\geq0} \sum_{e_k+\beta_1+\dots+\beta_k = \beta} 
\phi \z_{\beta_1} \cdots \phi \z_{\beta_k} \treeZero ,
\end{equation}
and
\begin{align}\label{gp13}
\phi \z_\beta = \left\{
\begin{array}{cl}
X^\n &\mbox{if} \; \beta = e_\n \\
\I \phi_- \z_\beta &\mbox{otherwise}
\end{array}\right\}.
\end{align}
For example, we have
\begin{equation}\label{ex01}
\begin{split}
\phi_- \z_{e_0} = \treeZero&, \quad
\phi \z_{e_0} = \treeOne,\\
\phi \z_{e_0+e_1+e_2+e_{(2,0)}} &= \treeThree.
\end{split}
\end{equation}
Moreover, $\phi_-\z_\beta$ and $\phi\z_\beta$ vanish unless \eqref{gpam22} is satisfied.
Due to $\mathcal{Q}$, unlike $\mathring{\phi}_-$ and $\mathring{\phi}$, $\phi_-$ and $\phi$ are not one-to-one even if we restrict to populated indices; e.~g. $\z_{e_1+e_{(2,0)}}$ and $\z_{e_2+2e_{(1,0)}}$ are both mapped to $\treeFour$ by $\phi_-$.

\medskip

We denote by $\gls{homogeneityHairer}$ the homogeneity in \cite[p.23]{Hairer}, which is defined as follows: $|X^{\n'}|_H:=|\n'|$, $|\treeZero|_H := \alpha-2$ and then inductively via  $|\tau_1\tau_2|_H := |\tau_1|_H+|\tau_2|_H$ and $|\I\tau|_H := |\tau|_H +2$. 
We now argue that 
%if
$\phi_-\z_\beta$ 
%is non-vanishing, then it 
is a linear combination of trees $\tau$ satisfying $|\tau|_H=\lhom\beta\rhom-2$. 
The analogue for $\phi$ holds true with $|\tau|_H=\lhom\beta\rhom$. 
Indeed, this can be seen by induction in the length of $\beta$, where we may restrict to $\beta$'s satisfying \eqref{gpam22}. 
For $\beta$'s of length one we only have to consider $\phi_-\z_{e_0}$, $\phi \z_{e_0}$ and $\phi \z_{e_\n}$, for which the statement is clear by \eqref{gp13}, \eqref{ex01} and recalling that $\lhom e_0 \rhom = \alpha$ and $\lhom e_\n \rhom = |\n|$, cf. \eqref{ao52}. 
In the induction step, the statement for $\phi_-\z_\beta$ follows from \eqref{gp01}, by using the induction hypothesis, the definition of $|\cdot|_H$ and $\lhom\beta\rhom = \lhom e_k+\beta_1 + \dots + \beta_k | = \alpha + \lhom \beta_1 \rhom + \dots +\lhom \beta_k \rhom$ in every summand of \eqref{gp01}. As a consequence we obtain the corresponding statement for $\phi \z_\beta$ from \eqref{gp13}.

\medskip

The analogue of the space $\T^+$ of \cite[p. 24]{Hairer} is denoted by $\gls{TplusH}$ with its basis elements $X^\m \prod_i \J_{\n_i}^H \tau_i$, where $\tau_i$'s are elements of $\T_H$.
Recall that $\J_\n^H \tau$ vanishes for $\tau=X^{\n'}$ for every $\n'$, and for $|\n| \geq |\tau|_H +2$.
Moreover, we define $\gls{Phi}:\T^+ \to\T^+_H$ by postulating it to be multiplicative and
\begin{equation}\label{gp02}
%\Phi \Z^{(0,0)} = 1, \quad
\Phi \Z^{(0,(1,0))} = X_1, \quad
\Phi \Z^{(0,(0,1))} = X_2, \quad
\Phi \J_\n \z_\beta = \gls{embeddingHairer} \phi_- \z_\beta,
\end{equation}
for $|\n|<\lhom\beta\rhom$.
By the compatibility of the homogeneities $|\cdot|_H$ and $\lhom\cdot\rhom$, see above, the last equality in \eqref{gp02} can also be seen to hold for arbitrary $\beta$.

\medskip

Finally, we recall the definition of the flipped\footnote{ 
i.~e. $\textnormal{tw} \circ \Delta_H$ is the coaction in \cite{Hairer}, where $\textnormal{tw}(x\otimes y) = y\otimes x$, and the same for the coproduct} 
coaction $\gls{comoduleH}$ and coproduct $\gls{coproductH}$ of \cite[pp. 25-26]{Hairer}.
The coaction $\Delta_H:\T_H\to\T^+_H\otimes\T_H$ is by multiplicativity $\Delta_H \tau\tau' = (\Delta_H\tau)(\Delta_H\tau')$ recursively defined via
\begin{equation}\label{comoduleHairer}
\begin{split}
\Delta_H 1 = 1\otimes 1,\quad
&\Delta_H\treeZero = 1\otimes \treeZero,\quad
\Delta_H X_i = 1\otimes X_i + X_i \otimes 1, \\
%\Delta_H \tau\tau' = (\Delta_H\tau)(\Delta_H\tau'), \quad
\Delta_H \mathcal{I}\tau &= (\mathrm{id}\otimes \mathcal{I}) \Delta_H\tau 
+ \sum_{\n} \mathcal{J}_\n^H \tau \otimes \frac{X^\n}{\n!}. 
\end{split}
\end{equation}
The coproduct $\Delta^+_H:\T^+_H\to\T^+_H\otimes\T^+_H$ is by multiplicativity $\Delta^+_H \tau\tau' = (\Delta^+_H\tau)(\Delta^+_H\tau')$ via the coaction recursively defined by
\begin{equation}\label{coproductHairer}
\begin{split}
\Delta^+_H 1 &= 1\otimes 1,\quad
\Delta^+_H X_i = 1\otimes X_i + X_i \otimes 1, \\
%\Delta^+_H \tau\tau' = (\Delta^+_H\tau)(\Delta^+_H\tau'), \nonumber\\
\Delta^+_H \mathcal{J}_\n^H \tau &= (\mathrm{id}\otimes \mathcal{J}_\n^H) \Delta_H\tau 
+ \sum_{\m} \mathcal{J}_{\n+\m}^H \tau \otimes \frac{X^\m}{\m!}.
\end{split}
\end{equation}

With this definitions at hand we may formulate the following intertwining property, which is the main result of this section.
\begin{proposition}\label{lemrs}
\begin{align}
\Delta_H\phi_-&=(\Phi\otimes\phi_-)\Delta,\label{fw01}\\
\Delta_H^+\Phi&=(\Phi\otimes\Phi)\Delta^+.\label{gp03}
\end{align}
\end{proposition}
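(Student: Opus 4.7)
My plan is to prove \eqref{fw01} first by induction on the length of the multi-index $\beta$, and then deduce \eqref{gp03} from it using the intertwining \eqref{Hai4.14} and the multiplicativity of all three maps $\Phi$, $\Delta^+$, and $\Delta_H^+$.

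By linearity, it suffices to check \eqref{fw01} on basis elements $\z_\beta$. The base cases $\beta=0$, $\beta=e_k$ for $k\geq 1$, and $\beta=e_{\n}$ for $\n\neq\0$ all give $\phi_-\z_\beta=0$ by direct inspection of \eqref{gp01}, while on the right-hand side $\Delta\z_\beta$ has its second tensor factor either in the polynomial sector $\bar{\T}$ (killed by $\phi_-$, cf.~\eqref{comodulePoly}) or equal to $\z_\beta$ itself (also killed). For the inductive step with $[\beta]\geq 0$ and length at least two, I expand $\phi_-\z_\beta$ via \eqref{gp01} as a sum over decompositions $e_k+\beta_1+\dots+\beta_k=\beta$ of products $\phi\z_{\beta_1}\cdots\phi\z_{\beta_k}\treeZero$. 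Applying $\Delta_H$ by multiplicativity with $\Delta_H\treeZero=1\otimes\treeZero$ and using \eqref{comoduleHairer}, each factor $\Delta_H\phi\z_{\beta_i}$ is evaluated from \eqref{gp13}: if $\beta_i=e_{\n_i}$, via the binomial expansion of $\Delta_H X^{\n_i}$; otherwise via $\Delta_H\I\phi_-\z_{\beta_i}=(\mathrm{id}\otimes\I)\Delta_H\phi_-\z_{\beta_i}+\sum_{\n}\J_{\n}^H\phi_-\z_{\beta_i}\otimes X^{\n}/\n!$, which by the induction hypothesis and \eqref{gp02} becomes $(\Phi\otimes\I\phi_-)\Delta\z_{\beta_i}+\sum_{\n}\Phi\J_{\n}\z_{\beta_i}\otimes X^{\n}/\n!$.

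The central obstacle is matching the resulting expression for $\Delta_H\phi_-\z_\beta$ with $(\Phi\otimes\phi_-)\Delta\z_\beta$; this requires establishing a parallel recursion for $\Delta\z_\beta$ over the same decompositions $e_k+\beta_1+\dots+\beta_k=\beta$. I intend to derive this recursion by exploiting the Leibniz rule for the derivations constituting the basis elements $\D_{(J,\m)}$ of $\mathrm{U}(\L)$, cf.~\eqref{ao75}, together with the identification \eqref{hai01} of $\J_{\n}\z_\beta$ with $\n!\,\Z^{(e_{(\beta,\n)},\0)}$, which absorbs the polynomial correction terms generated on the tree side by the $\I$-recursion into the first tensor factor of $\Delta\z_\beta$. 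An equivalent conceptual reformulation, which may be cleaner to carry out, is obtained by pairing \eqref{fw01} with an arbitrary character $f\in\Alg(\T_H^+,\R)$: since $\Phi$ is multiplicative, $f\circ\Phi\in\Alg(\T^+,\R)$, and \eqref{fw01} is then equivalent to the intertwining $\Gamma_f^H\phi_-=\phi_-\Gamma_{f\Phi}$, which expresses the compatibility of $\phi_-$ with the two structure groups and can be approached via the exponential formula \eqref{LOsg3}.

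Finally, \eqref{gp03} reduces to \eqref{fw01} via a generator argument: both sides are multiplicative maps $\T^+\to\T_H^+\otimes\T_H^+$ ($\Phi$ by construction, $\Delta^+$ and $\Delta_H^+$ by \eqref{prodcop}), so it suffices to verify the identity on the generators $\Z^{(0,(1,0))}$, $\Z^{(0,(0,1))}$, and $\J_{\n}\z_\beta/\n!$ of the polynomial algebra $\T^+$. The first two cases are immediate from \eqref{Hai4.14b} and $\Delta_H^+ X_i=1\otimes X_i+X_i\otimes 1$. For $\J_{\n}\z_\beta/\n!$, I apply \eqref{gp02} to rewrite $\Phi\J_{\n}\z_\beta=\J_{\n}^H\phi_-\z_\beta$, expand $\Delta_H^+\J_{\n}^H\phi_-\z_\beta$ via \eqref{coproductHairer}, apply \eqref{fw01} to convert the $\Delta_H\phi_-\z_\beta$ term into $(\Phi\otimes\phi_-)\Delta\z_\beta$, and recognize the resulting expression as $(\Phi\otimes\Phi)$ applied to the right-hand side of \eqref{Hai4.14} for $\Delta^+\J_{\n}\z_\beta$, thereby yielding $(\Phi\otimes\Phi)\Delta^+(\J_{\n}\z_\beta/\n!)$.
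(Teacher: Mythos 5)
Your overall skeleton matches the paper: a base case plus induction on the length of $\beta$ for \eqref{fw01}, and a reduction of \eqref{gp03} to \eqref{fw01} via multiplicativity of $\Phi$, $\Delta^+$, $\Delta_H^+$ and the intertwining \eqref{Hai4.14} tested on the generators of the polynomial algebra $\T^+$; the latter is essentially the paper's Step 1. However, the inductive step of \eqref{fw01} — which is where all the real work sits — contains a genuine gap. You correctly identify the ``central obstacle'' (matching the tree-side expansion against $(\Phi\otimes\phi_-)\Delta\z_\beta$), but the proposed resolution remains a sketch that does not address what actually makes the matching go through.

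Concretely, what is missing: (i) The induction hypothesis \eqref{fw01} alone is insufficient, because the children $\beta_i$ in the decomposition $e_k+\beta_1+\cdots+\beta_k=\beta$ include purely polynomial multi-indices $e_{\n_i}$, for which \eqref{fw01} is vacuous ($\phi_-\z_{e_\n}=0$). You handle these via $\Delta_H X^{\n_i}$ on the tree side, but then nothing forces the $\bar\T$-contributions appearing in $\Delta\z_{\beta_i}$ to reassemble correctly. The paper resolves this by first upgrading \eqref{fw01} to a uniform identity \eqref{fw02} valid on all of $\T$ (with explicit correction terms living in $\T^+\otimes\{\1\}$), and it is \eqref{fw02}, not \eqref{fw01}, that serves as the effective induction hypothesis. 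Establishing \eqref{fw02} from \eqref{fw01} in turn requires the nontrivial verification \eqref{fw05}, i.e.\ $(\rho U)\z_\n=\iota_\n U$ modulo $\bar\T^*$, checked on the basis $\D_{(J,\m)}$. (ii) Even granting the upgraded hypothesis, combining the $k$-fold product in \eqref{gp01} with the comodule requires a generalized Leibniz rule \eqref{fw12} for linear maps $\R[[\z_k,\z_\n]]^\dagger\to\T_H$ (not merely the Leibniz rule for derivations in $\L$, which acts in the wrong space), and — crucially — the resulting Sweedler-form expansion must be resummed via a Fa\`a di Bruno-type identity \eqref{fw13}, $(\rho U)\z_l=\sum_{k\ge l}\binom{k}{l}\z_k(\iota_\0+\sum_{\n\neq\0}\varepsilon_\n\otimes\z_\n)^{k-l}U$, to recover the $\treeZero\sum_k\z_k\phi^k$ structure of $\phi_-$. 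These are the technical lemmas that convert your heuristic ``exploiting the Leibniz rule \dots{} absorbs the polynomial correction terms'' into an actual proof; without them, there is no mechanism by which the correction terms cancel. Your alternative character-based reformulation ($\Gamma_f^H\phi_-=\phi_-\Gamma_{f\Phi}$) is a correct restatement of the claim but does not sidestep these combinatorial identities: plugging in the exponential formula \eqref{LOsg3} would simply reproduce the same Fa\`a di Bruno resummation in another guise.
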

Since by definition, $\Phi$ is an algebra morphism, and bialgebra morphisms between Hopf algebras are automatically Hopf algebra morphisms, $\Phi$ is in particular a Hopf algebra morphism.

\begin{proof}
{\sc Step 1}. From coaction to coproduct by intertwining. 
We claim that (\ref{fw01}) implies (\ref{gp03}).
By multiplicativity of $\Phi$, $\Delta_H^+$, and $\Delta^+$, it is enough
to establish (\ref{gp03}) when applied to $\mathsf{Z}^{(e_{(\beta,{\bf n})},{\bf 0})}$,
$\mathsf{Z}^{(0,(1,0))}$, and $\mathsf{Z}^{(0,(0,1))}$. The two latter cases follow from
(\ref{Hai4.14b}), (\ref{gp02}), and (\ref{comoduleHairer}). For the former case we 
start from (\ref{Hai4.14}) to which we apply $\Phi\otimes\Phi$,
so that by (\ref{gp02}) (and the multiplicativity of $\Phi$)
\begin{align*}
(\Phi\otimes\Phi)\Delta^+{\mathcal J}_{\bf n}\mathsf{z}_\beta
=({\rm id}\otimes{\mathcal J}_{\bf n}^H)
(\Phi\otimes\phi_-)\Delta\mathsf{z}_\beta
+\sum_{{\bf m}}{\mathcal J}_{{\bf m}+{\bf n}}^H\phi_-
\mathsf{z}_\beta\otimes\frac{X^{\bf m}}{{\bf m}!}.
\end{align*}
On the other hand, we use (\ref{coproductHairer}) for $\tau=\phi_-\mathsf{z}_\beta$
so that by (\ref{gp02})
\begin{align*}
\Delta_H^+\Phi{\mathcal J}_{\bf n}\mathsf{z}_\beta=
({\rm id}\otimes{\mathcal J}_{\bf n}^H)\Delta_H\phi_-\mathsf{z}_\beta
+\sum_{{\bf m}}{\mathcal J}_{{\bf m}+{\bf n}}^H\phi_-
\mathsf{z}_\beta\otimes\frac{X^{\bf m}}{{\bf m}!}.
\end{align*}
We now see that (\ref{fw01}) implies $\Delta_H^+\Phi{\mathcal J}_{\bf n}$
$=(\Phi\otimes\Phi)\Delta^+{\mathcal J}_{\bf n}$.

\medskip
%%%%%%%%%%%%%%%%%%%%%%%%%%%%%%%%%%%%%%%%%%%%%%%%%%%%%%%%%%%%%%%%%%%%%%%%%%%%%%%%%%%%

{\sc Step 2}. Taking care of the polynomial sector. 
We claim that (\ref{fw01}) implies\footnote{
Here and in the sequel, $\mathsf{Z}^{(0,\n)}\otimes\z_\n\in \mathsf{T}^+\otimes \mathsf{T}^*$ is identified with its representative in $\mathcal{L}(\mathsf{T},\mathsf{T}^+)$.}
\begin{align}\label{fw02}
\Delta_H\phi=(\Phi\otimes\phi)\Delta+\Phi({\mathcal J}_\0
+\sum_{{\bf n}\not=\0}\mathsf{Z}^{(0,{\bf n})}
\otimes\mathsf{z}_{\bf n})\otimes\mathsf{1}
\end{align}
componentwise, by which we mean that for fixed $\beta$ with $[\beta]\ge 0$,
(\ref{fw01}) applied on $\mathsf{z}_\beta$ implies
(\ref{fw02}) applied on $\mathsf{z}_\beta$, and that (\ref{fw02}) automatically
holds when applied to $\mathsf{z}_{e_{\bf n}}$.
We split (\ref{fw02}) into
\begin{align}
\Delta_H\phi&=(\Phi\otimes\phi)\Delta+\Phi{\mathcal J}_\0\otimes\mathsf{1}
\quad\mbox{on}\;\mathsf{\tilde T},\label{fw03}\\
\Delta_H\phi&=(\Phi\otimes\phi)\Delta+\Phi(\sum_{{\bf n}\not=\0}\mathsf{Z}^{(0,{\bf n})}
\otimes\mathsf{z}_{\bf n})\otimes\mathsf{1}
\quad\mbox{on}\;\mathsf{\bar T}.\label{fw06}
\end{align}
By (\ref{gp13}) we have $\phi$ $={\mathcal I}\phi_-$ on $\mathsf{\tilde T}$,
by (\ref{gp02}) we have $\Phi{\mathcal J}_{\bf n}$ $={\mathcal J}_{\bf n}^{H}\phi_-$,
so that by (\ref{comoduleHairer}) we obtain
\begin{align*}
\Delta_H\phi=({\rm id}\otimes{\mathcal I})\Delta_H\phi_-+\sum_{{\bf n}}
\Phi{\mathcal J}_{\bf n}\otimes\frac{X^{\bf n}}{{\bf n}!}\quad\mbox{on}\;\mathsf{\tilde T}.
\end{align*}
We now insert (\ref{fw01}):
\begin{align*}
\Delta_H\phi=(\Phi \otimes{\mathcal I}\phi_-)\Delta+\sum_{{\bf n}}
\Phi{\mathcal J}_{\bf n}\otimes\frac{X^{\bf n}}{{\bf n}!}\quad\mbox{on}\;\mathsf{\tilde T}.
\end{align*}
Hence (\ref{fw03}) follows from
\begin{align}\label{fw04}
\big( {\rm id}\otimes(\phi-{\mathcal I}\phi_-) \big) \Delta=\sum_{{\bf n}\not=\0}{\mathcal J}_{\bf n}
\otimes\frac{X^{\bf n}}{{\bf n}!}\quad\mbox{on}\;\mathsf{\tilde T}.
\end{align}

\medskip

Here comes the argument for (\ref{fw04}): By (\ref{gp13}), $\phi-{\mathcal I}\phi_-$
is the projection onto the polynomial sector $\bar{\T}_H$. Hence (\ref{fw04}) assumes the form
\begin{align*}
({\rm id}\otimes (1-P^\dagger))\Delta=\sum_{{\bf n}\not=\0}{\mathcal J}_{\bf n}
\otimes\frac{\mathsf{z}_{e_{\bf n}}}{{\bf n}!}\quad\mbox{on}\;\mathsf{\tilde T},
\end{align*}
where $P^\dagger$ denotes the projection from $\mathsf{T}$ onto $\mathsf{\tilde T}$
(defined through the direct sum $\mathsf{T}=\mathsf{\tilde T}\oplus\mathsf{\bar T}$).
In view of (\ref{ao77}) and (\ref{fw26}), the last statement amounts to
\begin{align}\label{fw05}
((\rho U)\mathsf{z}_{\bf n})=\iota_{\bf n}U
\quad\mbox{mod}\;\mathsf{\bar T}^*\quad\mbox{for all}\;U\in{\rm U}(\mathsf{L})
\quad\mbox{and}\;{\bf n}\not=\0.
\end{align}
We check (\ref{fw05}) on the basis elements $U=D_{(J,{\bf m})}$, cf.~(\ref{ao75}):
By definition (\ref{io01}) of $\iota_{\bf n}$, the r.~h.~s.~of (\ref{fw05}) does not vanish
iff $(J,{\bf m})=(e_{(\beta,{\bf n})},\0)$, in which case both sides coincide
with $\mathsf{z}^\beta$. The l.~h.~s.~of (\ref{fw05}) is also non-zero
for $J=0$, it is then given by $(\frac{1}{{\bf m}!}\partial^{\bf m}\mathsf{z}_{\bf n})$
$\in\mathsf{\bar T}^*$, see (\ref{ao30}), so that the statement is not affected.

\medskip

We now turn to (\ref{fw06}), which can be rephrased as
\begin{align}
\Delta_H\phi \z_{e_{\bf n}}&=(\Phi\otimes\phi)\Delta \z_{e_{\bf n}}
+\Phi \mathsf{Z}^{(0,{\bf n})}\otimes\mathsf{1}
\quad\mbox{for all}\;{\bf n}\not=\0.\label{fw07}
\end{align}
In view of (\ref{comoduleHairer}) and the multiplicativity of $\Delta_H$ we have
\begin{equation*}
\Delta_H X^{\bf n}=\sum_{{\bf n}'+{\bf n}''={\bf n}} \tbinom{\n}{\n'} 
X^{{\bf n}'}\otimes X^{{\bf n}''}.
\end{equation*}
In view of the multiplicativity of $\Phi$ in conjunction with (\ref{gp02}) we
have $\Phi Z^{(0,{\bf n})}=X^{\bf n}$ (tautologically
including ${\bf n}=\0$). Together with (\ref{gp13}), we thus see
that (\ref{fw07}) reduces to \eqref{comodulePoly}.

\medskip
%%%%%%%%%%%%%%%%%%%%%%%%%%%%%%%%%%%%%%%%%%%%%%%%%%%%%%%%%%%%%%%%%%%%%%%%%%%%%%%%%%%%%

{\sc Step 3}. Inductive proof of (\ref{fw01}). We carry out an induction in the length
of $\beta$ with $[\beta]\ge 0$, the component of (\ref{fw01}). We start with
the base case of $\beta=0$. The l.~h.~s.~vanishes since (\ref{gp01}) includes
$\phi_-\mathsf{z}_{\beta=0}=0$. For the r.~h.~s.~to also vanish, we just need to
convince ourselves that 
\begin{align*}
\Delta \z_{\beta=0}=\sum_{{\bf n}\not=\0}
\mathsf{Z}^{(e_{(0,{\bf n})},\0)}\otimes\mathsf{z}_{e_{\bf n}}.
\end{align*}
By \eqref{SG01} and \eqref{ao77}, this amounts to the fact that the image of $\rho D_{(J,{\bf m})}$
$\in{\rm End}(\mathsf{T}^*)$, see (\ref{ao75}), 
contains $\mathsf{1}$ iff $(J,{\bf m})$ $=(e_{(0,{\bf n})},\0)$ for some ${\bf n}\not=\0$,
in which case the pre-image must be $\mathsf{z}_{\bf n}$.

\medskip

Obviously, the pre-dual $\mathbb{R}[[\z_k,\z_\n]]^\dagger$ of $\mathbb{R}[[\z_k,\z_\n]]$, which is isomorphic to $\mathbb{R}[\z_k,\z_\n]$, is endowed with a coproduct;
since $\mathsf{T}_H$ is endowed with a product, we may multiply linear maps from
$\mathbb{R}[[\mathsf{z}_k,\mathsf{z}_{\bf n}]]^\dagger$ into $\mathsf{T}_H$.
Extending $\phi$ (next to $\phi_-$) trivially from $\mathsf{T}$ to 
$\mathbb{R}[[\mathsf{z}_k,\mathsf{z}_{\bf n}]]^\dagger$, we may apply this to $\phi$
in order to give a sense to $\phi^k$. Furthermore, we may identify
elements of $\mathbb{R}[[\mathsf{z}_k,\mathsf{z}_{\bf n}]]$, 
like $\mathsf{z}_k$, with a linear map (of rank 1)
from $\mathbb{R}[[\mathsf{z}_k,\mathsf{z}_{\bf n}]]^\dagger$ into $\mathsf{T}_H$. 
Note finally that (\ref{gp01}) extends from populated, cf.~(\ref{newreference}), 
to all $\beta$; indeed, if $\beta$
is not populated, (\ref{gpam22}) is violated, hence if $e_k+\beta_1+\cdots+\beta_k$ $=\beta$,
one of the $\beta_l$'s violates (\ref{gpam22}) and thus the r.~h.~s.~of (\ref{gp01})
vanishes. 
This allows us to re-interpret (\ref{gp01}) in the more compact way
\begin{align}\label{fw11}
\phi_-=\treeZero\sum_{k\ge 0}\mathsf{z}_k\phi^k
\end{align}
as an identity in ${\mathcal L}(\mathbb{R}[[\mathsf{z}_k,\mathsf{z}_{\bf n}]]^\dagger,
\mathsf{T}_H)$.
By the multiplicativity of $\Delta_H$ and
(\ref{comoduleHairer}), this implies
\begin{align*}
\Delta_H\phi_-=(1\otimes\treeZero)\sum_{k\ge 0}\mathsf{z}_k(\Delta_H\phi)^k,
\end{align*}
an identity in the space of linear maps from
$\mathbb{R}[[\mathsf{z}_k,\mathsf{z}_{\bf n}]]^\dagger$ into the algebra
$\mathsf{T}_H^+\otimes\mathsf{T}_H$.

\medskip

It will be convenient to test this identity with elements $\mathsf{Z}$ of a space endowed with a non-degenerate pairing with $\mathsf{T}^+_H$, which we choose to be the\footnote{
unique up to linear isomorphisms} 
direct sum indexed by the basis elements of $\T^+_H$ with the corresponding canonical pairing. 
Since $\mathsf{T}^+_H$ is endowed with the polynomial product, this space carries a coproduct, so that in Sweedler's notation
\begin{align}\label{fw10}
\langle\mathsf{Z},\Delta_H\phi_-\rangle=\treeZero\sum_{k}\mathsf{z}_k\sum_{(Z)}
\langle\mathsf{Z}_{(1)},\Delta_H\phi\rangle\cdots
\langle\mathsf{Z}_{(k)},\Delta_H\phi\rangle,%\quad\mbox{on}\;\mathsf{\tilde T},
\end{align}
with the understanding that the pairing acts on the $\T^+_H$ component, 
and for $k=0$ the $\sum_{(Z)}$-term is the counit applied to $\mathsf{Z}$.
Due to the presence of $\mathsf{z}_k$, the r.~h.~s.~of the $\beta$-component of this identity
only involves components $\gamma$'s of smaller length so that we may appeal to
the induction hypothesis (\ref{fw01}). 
We use (\ref{fw01}) in its upgraded form (\ref{fw02}),
tested with $\mathsf{Z}$. 
Note that the matrix representation of $\Phi$ in the standard bases of the polynomial algebras $\T^+$ and $\T^+_H$ has the finiteness property that allows us to define $\Phi^\dagger\mathsf{Z}\in\mathrm{U}(\mathsf{L})$. 
Indeed, this finiteness property is inherited from the corresponding one of $\mathring{\phi}_-$, see \eqref{ab01}, via \eqref{gp02} and \eqref{ab02}.
Using the argument from the end of Subsection \ref{Sect5.1}, we have 
$\langle\mathsf{Z},(\Phi\otimes\phi)\Delta\rangle$
$=\phi(\rho U)^\dagger$, where $U$ $:=\Phi^\dagger\mathsf{Z}$ $\in{\rm U}(\mathsf{L})$. 
Hence \eqref{fw02} tested with $\mathsf{Z}$ takes the form of an identity in
${\mathcal L}(\mathbb{R}[[\mathsf{z}_k,\mathsf{z}_{\bf n}]]^\dagger,\mathsf{T}_H)$
$\supset\mathsf{T}^*\otimes\mathsf{T}_H$
\begin{align}\label{fw09}
\langle\mathsf{Z},\Delta_H\phi\rangle=\phi(\rho U)^\dagger+\big(\iota_\0 U
+\sum_{{\bf n}\not=\0}(\varepsilon_{\bf n} U)\mathsf{z}_{\bf n}\big)\otimes\mathsf{1}
\quad\mbox{on}\;\mathsf{T}.
\end{align}
Extending $(\rho U)^\dagger$ trivially, we may think of (\ref{fw09})
as holding not just on $\mathsf{T}$ but all of $\mathbb{R}
[[\mathsf{z}_k,\mathsf{z}_{\bf n}]]^\dagger$, which allows us to insert (\ref{fw09})
into (\ref{fw10}).
Since $\Phi$ is multiplicative, $\Phi^\dagger$ preserves the coproduct, so that
we obtain
\begin{align}\label{fw14}
&\langle\mathsf{Z},\Delta_H\phi_-\rangle \nonumber \\
&=\treeZero\sum_{k}\mathsf{z}_k
\sum_{(U)}\prod_{l=1}^k\Big(\phi(\rho U_{(l)})^\dagger+\big(\iota_\0 U_{(l)}
+\sum_{{\bf n}\not=\0}(\varepsilon_{\bf n} U_{(l)})\mathsf{z}_{\bf n}\big)\otimes\mathsf{1}\Big),
%\quad\mbox{on}\;\mathsf{T},
\end{align}
again with the understanding that for $k=0$ the term $\sum_{(U)}$ 
reduces to $\varepsilon_{\bf 0} U$. 
Here we interpret
$\iota_\0+\sum_{{\bf n}\not=\0}\varepsilon_{\bf n}\otimes\mathsf{z}_{\bf n}$ as a linear
map from ${\rm U}(\mathsf{L})$, a space endowed with a coproduct, into
the algebra $\mathbb{R}[[\mathsf{z}_k,\mathsf{z}_{\bf n}]]$, 
so that powers make sense. 
To further rewrite \eqref{fw14}, we first claim that linear maps $\phi_1,\phi_2\in{\mathcal L}(\mathbb{R}[[\mathsf{z}_k,\mathsf{z}_{\bf n}]]^\dagger,\mathsf{T}_H)$ satisfy a generalized Leibniz rule, i.~e. for $U\in{\rm U}(\mathsf{L})$ we have
\begin{align}\label{fw12}
(\phi_1\phi_2)(\rho U)^\dagger=\sum_{(U)}(\phi_1(\rho U_{(1)})^\dagger)
(\phi_2(\rho U_{(2)})^\dagger),
\end{align}
which we show by duality.
Therefore, we consider elements $Z$ of a space endowed with a non-degenerate pairing with $\mathsf{T}_H$, which we choose to be the direct sum indexed by the basis elements of $\T_H$ with the corresponding canonical pairing.
This space carries a coproduct, since the product on $\mathsf{T}_H$ satisfies the right finiteness property.\footnote{Compare to ${\rm U}(\mathsf{L})$ and $\T^+$ in Section \ref{Sect4}.}
Applying the l.~h.~s. of \eqref{fw12} to $\z_\beta\in \mathbb{R}[[\z_k,\z_\n]]^\dagger$ and testing with $Z$, we obtain using the argument from the end of Subsection \ref{Sect5.1} by duality
\begin{equation*}
\langle Z, (\phi_1\phi_2)(\rho U)^\dagger \z_\beta\rangle
= 
\langle(\rho U) (\phi_1^*\phi_2^*) Z , \z_\beta\rangle.
\end{equation*}
Using \eqref{leib02} and again duality yields
\begin{align*}
\langle(\rho U) (\phi_1^*\phi_2^*) Z , \z_\beta\rangle
&= \big\langle \sum_{(U)} \big( ((\rho U_{(1)})\phi_1^*) ((\rho U_{(2)})\phi_2^*) \big) Z , \z_\beta\big\rangle \\
&= \big\langle Z , \sum_{(U)} \big( (\phi_1 (\rho U_{(1)})^\dagger) (\phi_2 (\rho U_{(2)})^\dagger) \big) \z_\beta\big\rangle,
\end{align*}
finishing the argument for \eqref{fw12}.
By (\ref{fw12}) in its multi-linear form, (\ref{fw14}) thus assumes the form
\begin{align}\label{fw15}
&\langle\mathsf{Z},\Delta_H\phi_-\rangle \nonumber \\
&=\treeZero\sum_{k}\mathsf{z}_k\sum_{(U)}
\sum_{l=0}^k \tbinom{k}{l} \phi^{l}(\rho U_{(1)})^\dagger\Big(\big(\iota_\0
+\sum_{{\bf n}\not=\0}\varepsilon_{\bf n}\otimes\mathsf{z}_{\bf n}\big)^{k-l}U_{(2)}
\otimes\mathsf{1}\Big).
%\quad\mbox{on}\;\mathsf{T}.
\end{align}
Changing the order of summation in $k\ge l$ and applying (\ref{fw13}) 
with $U$ replaced by $U_{(2)}$, this collapses to
\begin{align*}
\langle\mathsf{Z},\Delta_H\phi_-\rangle=\treeZero\sum_{l}\sum_{(U)}
\phi^{l}(\rho U_{(1)})^\dagger((\rho U_{(2)})\mathsf{z}_{l}).
%\quad\mbox{on}\;\mathsf{T}.
\end{align*}
Using once more (\ref{fw12}), this yields $\langle\mathsf{Z},\Delta_H\phi_-\rangle$ 
$=\treeZero\sum_{l}\mathsf{z}_{l}\phi^{l}(\rho U)^\dagger$. % on $\mathsf{T}$.
Appealing once more to (\ref{fw11}), this gives
$\langle\mathsf{Z},\Delta_H\phi_-\rangle$ $=\phi_-(\rho U)^\dagger$, % on $\mathsf{T}$,
which is the tested/dual form of (\ref{fw01}).
\end{proof}

\medskip

\appendix

%\section{Summation formulas for multi-indices}
\section{}
We first provide summation formulas for multi-indices which are repeatedly used throughout the text.
We assume that $J,J',J''$ are multi-indices over a set $\mathsf{I}$ and $A,B$ are finite sequences indexed by multi-indices over $\mathsf{I}$ in a real vector space.
\begin{lemma}\label{lemsum01}
	It holds
	\begin{multline*}
	(J(i)+1) \sum_{J'+J'' = J + e_i} \frac{1}{J'!}A_{J'}\otimes \frac{1}{J''!} B_{J''} \\= \sum_{J'+J''=J} \big(\frac{1}{J'!}A_{J'+e_i}\otimes \frac{1}{J''!} B_{J''} + \frac{1}{J'!}A_{J'}\otimes \frac{1}{J''!} B_{J''+e_i}\big).
	\end{multline*}
\end{lemma}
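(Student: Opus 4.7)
The plan is to reindex both sums on the right-hand side so that their summation variables run over the same set $\{(J',J'')\,|\,J'+J'' = J+e_i\}$ as on the left-hand side, and then recognize that the sum of the two resulting numerical prefactors equals $J(i)+1$ pointwise.

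Concretely, for the first sum on the right I will substitute $\tilde J' := J'+e_i$ and leave $\tilde J'' := J''$. Then $\tilde J' + \tilde J'' = J + e_i$ with the constraint $\tilde J'(i) \geq 1$, and the combinatorial identity
\[
\frac{1}{(\tilde J' - e_i)!} = \frac{\tilde J'(i)}{\tilde J'!}
\]
converts the factorial prefactor. The constraint $\tilde J'(i)\geq 1$ can then be dropped at no cost, because on the complementary set $\tilde J'(i)=0$ the inserted factor $\tilde J'(i)$ vanishes; thus the first right-hand side sum equals
\[
\sum_{\tilde J' + \tilde J'' = J+e_i}\frac{\tilde J'(i)}{\tilde J'!\tilde J''!}\,A_{\tilde J'}\otimes B_{\tilde J''}.
\]
The analogous substitution $\tilde J'' := J'' + e_i$ in the second right-hand side sum produces a prefactor $\tilde J''(i)$ in place of $\tilde J'(i)$.

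Adding the two reindexed sums and using that $\tilde J'(i) + \tilde J''(i) = (J+e_i)(i) = J(i)+1$ whenever $\tilde J'+\tilde J'' = J+e_i$, the prefactor becomes the constant $J(i)+1$, which pulls out of the sum and yields the left-hand side. There is no genuine obstacle; the only point that requires a moment of care is the justification for extending each shifted sum to all of $\{J'+J''=J+e_i\}$, and this is immediate from the vanishing of the inserted combinatorial factor on the missing index. The whole argument is thus a one-line substitution followed by the identity $J'(i)+J''(i) = J(i)+1$ on the shifted index set.
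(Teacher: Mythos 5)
Your proof is correct and follows essentially the same route as the paper's: you reindex each right-hand side sum via the shift $J'\mapsto J'+e_i$ (resp.\ $J''\mapsto J''+e_i$), use the factorial identity to pick up a factor $\tilde J'(i)$ (resp.\ $\tilde J''(i)$), observe that the boundary constraint is absorbed by the vanishing of this factor, and finish with $\tilde J'(i)+\tilde J''(i)=J(i)+1$. The paper phrases the same manipulation as a term-by-term rewriting rather than a substitution, but the mathematical content is identical.
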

\begin{proof}
	We rewrite each of the terms of the r.~h.~s. as
	\begin{multline*}
	(J'(i)+1)\frac{1}{(J'+e_i)!}A_{J'+e_i}\otimes \frac{1}{J''!} B_{J''} \\+ (J''(i)+1)\frac{1}{J'!}A_{J'}\otimes \frac{1}{(J''+e_i)!} B_{J''+e_i}.
	\end{multline*}
	The sum then turns into
	\begin{multline*}
	\sum_{\substack{J'+J'' = J + e_i \\ J'\geq e_i}} J'(i)\frac{1}{J'!}A_{J'}\otimes \frac{1}{J''!} B_{J''} + \sum_{\substack{J'+J'' = J + e_i \\ J''\geq e_i}} J''(i)\frac{1}{J'!}A_{J'}\otimes \frac{1}{J''!} B_{J''}.
	\end{multline*}
	Note that the restrictions $J'\geq e_i$ and $J''\geq e_i$ are immaterial because of the factors $J'(i)$ and $J''(i)$. We may then combine the sums, and since $J'(i)+J''(i)=J(i)+1$, this concludes the proof.
\end{proof}

\medskip

\begin{lemma}\label{lemsum02}
	It holds
	\begin{equation*}
	\sum_J \frac{1}{J!} A_J = \sum_{k\geq 0} \frac{1}{k!} \sum_{i_1,...,i_k\in\mathsf{I}} A_{e_{i_1}+...+e_{i_k}}.
	\end{equation*}
\end{lemma}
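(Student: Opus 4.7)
The plan is to establish the identity by reorganizing the right-hand side according to the multi-index $J := e_{i_1} + \cdots + e_{i_k}$ associated with each tuple $(i_1,\dots,i_k)$. First I would observe that any tuple $(i_1,\dots,i_k)\in\mathsf{I}^k$ gives rise to a multi-index $J$ of total length $|J| := \sum_i J(i) = k$, and that $A_{e_{i_1}+\cdots+e_{i_k}} = A_J$ depends only on $J$, not on the order or even on the specific choice of representation.

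Next I would fix a multi-index $J$ with $|J| = k$ and count how many tuples $(i_1,\dots,i_k)$ satisfy $e_{i_1}+\cdots+e_{i_k} = J$. By the standard multinomial identity, this count is exactly $\frac{k!}{J!} = \frac{k!}{\prod_i J(i)!}$. Substituting this into the right-hand side yields
\begin{equation*}
\sum_{k\geq 0} \frac{1}{k!} \sum_{i_1,\dots,i_k\in\mathsf{I}} A_{e_{i_1}+\cdots+e_{i_k}}
= \sum_{k\geq 0} \frac{1}{k!} \sum_{|J|=k} \frac{k!}{J!} A_J
= \sum_{k\geq 0} \sum_{|J|=k} \frac{1}{J!} A_J,
\end{equation*}
and the outer two sums collapse to $\sum_J \frac{1}{J!} A_J$, as desired.

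There is no real obstacle here; the only point requiring (mild) care is the combinatorial identity counting the tuples, which is immediate from the multinomial theorem. Since the sequence $A$ is indexed by finitely many multi-indices (so all sums are effectively finite), there are no convergence subtleties, and both sides are manifestly linear in $A$, so no further justification is needed.
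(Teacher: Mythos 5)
Your proof is correct and uses the same combinatorial fact as the paper's proof, namely that a multi-index $J$ with $|J|=k$ is represented by exactly $k!/J!$ ordered tuples $(i_1,\dots,i_k)\in\mathsf{I}^k$. You simply organize the manipulation from the right-hand side to the left, whereas the paper goes from left to right; this is an inessential difference of presentation.
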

\begin{proof}
	We split the sum of the l.~h.~s. according to $k=|J|$, parametrized according to $J=\sum_{j=1}^k e_{i_1}$ and count repetitions to obtain
	\begin{align*}
	\sum_J \frac{1}{J!} A_J &= \sum_{k\geq 0}\sum_{|J|=k} \frac{1}{J!} A_J\\
	&=\sum_{k\geq 0}\sum_{|J|=k}\frac{J!}{k!} \sum_{i_1,...,i_k\in\mathsf{I}} \frac{1}{J!}A_{e_{i_1}+...+e_{i_k}}\\
	&= \sum_{k\geq 0} \frac{1}{k!} \sum_{i_1,...,i_k\in\mathsf{I}} A_{e_{i_1}+...+e_{i_k}}.\qedhere
	\end{align*}
\end{proof}

%%%%%%%%%%%%%%%%%%%%%%%%%%%%%%%%%%%%%%%%%%%%%%%%%%%%%%
\medskip
%%%%%%%%%%%%%%%%%%%%%%%%%%%%%%%%%%%%%%%%%%%%%%%%%%%%%%

%\section{An extended chain rule -- Fa\`a di Bruno}\label{app04}
%An extended chain rule (Fa\`a di Bruno). 
Finally, we show an extended chain rule, which connects to Fa\`a di Bruno's formula.

\begin{lemma}
For $l\ge 0$ and $U\in{\rm U}(\mathsf{L})$ it holds
\begin{align}\label{fw13}
((\rho U)\mathsf{z}_l)=\sum_{k\ge l} \tbinom{k}{l} \mathsf{z}_{k}(\iota_{\bf 0}+\sum_{{\bf n}\not=\0}
\varepsilon_{\bf n}\otimes\mathsf{z}_{\bf n})^{k-l}U.
\end{align}
\end{lemma}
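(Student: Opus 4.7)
The plan is to prove \eqref{fw13} by induction on the length of $U$ expressed as a word in generators of $\mathsf{L}$. Write $\Phi := \iota_{\mathbf 0} + \sum_{\n\neq\mathbf 0}\varepsilon_\n\otimes \z_\n : \mathrm{U}(\mathsf{L})\to\R[[\z_k,\z_\n]]$, and recall that $\Phi^0$ equals the counit $\varepsilon_{\mathbf 0}$. For $U=1$ the RHS collapses: $\Phi(1) = \iota_{\mathbf 0}(1) + \sum_{\n\neq\mathbf 0}(\varepsilon_\n.\mathbf{1})\z_\n = 0$ kills every $k>l$ contribution, while the $k=l$ term gives $\z_l\cdot\varepsilon_{\mathbf 0}(1)=\z_l = (\rho\,\mathbf{1})\z_l$.

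Two structural identities fuel the inductive step $U\mapsto DU$, $D\in\mathsf{L}$. First, a case-check through the three generator types shows
\begin{equation*}
D\z_k = (k+1)\,\Phi(D)\,\z_{k+1}\qquad\text{for every } D\in\mathsf{L},
\end{equation*}
where $\Phi(D)$ evaluates to $\z^\gamma$, $0$, $\z_{e_i}$ for $D=\z^\gamma D^{(\mathbf 0)}$, $\z^\gamma D^{(\n)}$ with $\n\neq\mathbf 0$, $\partial_i$, respectively, by directly reading off \eqref{ao26}, \eqref{ao30}, \eqref{io01}, and \eqref{seemynotesonp38}. Second, and more substantially, the pseudo-derivation identity
\begin{equation*}
\Phi(DV) = D\,\Phi(V) + \Phi(D)\,\varepsilon_{\mathbf 0}(V)\qquad\text{for every } D\in\mathsf{L},\, V\in\mathrm{U}(\mathsf{L}).
\end{equation*}
Its $\iota_{\mathbf 0}$-component is precisely \eqref{io02} specialized to $\n=\mathbf 0$; its polynomial-sector component $\sum_{\n\neq\mathbf 0}\varepsilon_\n(DV)\z_\n$ is handled by dualizing $\Delta^+\Z^{(0,\n)}$, which gives $\varepsilon_\n(DV)=\sum_{\n'+\n''=\n}\binom{\n}{\n'}\varepsilon_{\n'}(D)\varepsilon_{\n''}(V)$; this vanishes unless $D=\partial_i$, and in that case an index-shift $\n'=\n-e_i$ matches it term-by-term with the $\partial_i\z_\n = (n_i+1)\z_{\n+e_i}$ piece of $D\Phi(V)$, the residual $\n=e_i$ contribution supplying exactly the $\Phi(\partial_i)\,\varepsilon_{\mathbf 0}(V)$ correction.

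Iterating the pseudo-derivation via cocommutativity of $\mathrm{cop}$ and Leibniz for $D$ on the commutative algebra $\R[[\z_k,\z_\n]]$, one obtains for every $m\geq 1$
\begin{equation*}
\Phi^m(DU) = D\,\Phi^m(U) + m\,\Phi(D)\,\Phi^{m-1}(U),
\end{equation*}
the factor $m$ arising from summing $m$ applications of the counit property $(\mathrm{id}^{\otimes(i-1)}\otimes\varepsilon_{\mathbf 0}\otimes\mathrm{id}^{\otimes(m-i)})\mathrm{cop}^{(m-1)} = \mathrm{cop}^{(m-2)}$. Plugging both identities into $(\rho DU)\z_l = D\bigl(\sum_{k\geq l}\binom{k}{l}\z_k\Phi^{k-l}(U)\bigr)$ expanded by Leibniz, then shifting $k\mapsto k+1$ in the $D\z_k$-contribution and invoking the elementary identity $k\binom{k-1}{l}=(k-l)\binom{k}{l}$, the $D\z_k$-contribution cancels exactly against the $m\,\Phi(D)\Phi^{m-1}(U)$ correction from $\Phi^m(DU)$, leaving the desired $\sum_{k\geq l}\binom{k}{l}\z_k\Phi^{k-l}(DU)$. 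The main obstacle is the pseudo-derivation identity: its $\iota_{\mathbf 0}$-part is supplied by \eqref{io02}, but aligning the $\varepsilon_\n(DV)$-flow coming from $\Delta^+\Z^{(0,\n)}$ with the $D\z_\n$-flow inside $D\Phi(V)$ requires the precise combinatorial matching sketched above.
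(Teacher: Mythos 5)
Your proof is correct, and it takes a genuinely different route from the paper's. The paper proves \eqref{fw13} by an induction that mirrors the construction of the basis \eqref{ao75}: it first establishes the formula for $U=\frac{1}{\m!}\partial^\m$ by reducing to the Fa\`a di Bruno formula, then propagates it through the map $U\mapsto\z^\beta U D^{(\n)}$ of \eqref{yc11} using Lemma \ref{lem4.2} and the combinatorial identity $(k-l)\binom{k}{l}=(l+1)\binom{k}{l+1}$. You instead run a plain word-length induction $U\mapsto DU$ for $D\in\L$ (so the base case reduces to the trivial $U=\1$), with the burden shifted to the pseudo-derivation identity $\Phi(DV)=D\Phi(V)+\Phi(D)\varepsilon_\0(V)$, its iterated version $\Phi^m(DU)=D\Phi^m(U)+m\Phi(D)\Phi^{m-1}(U)$, and the observation $D\z_k=(k+1)\Phi(D)\z_{k+1}$. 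This buys a more uniform inductive skeleton and avoids Fa\`a di Bruno altogether; the paper's route, on the other hand, reuses the machinery around \eqref{yc11} and \eqref{cop01} that is needed elsewhere and keeps the Fa\`a di Bruno connection (highlighted in the introduction) visible in the base case.

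One caveat in your sketch of the pseudo-derivation identity: the split into ``$\iota_\0$-component matched by \eqref{io02}'' and ``polynomial-sector component matched against $\partial_i\z_\n$'' is not as clean as your phrasing suggests. For $D=\z^\gamma D^{(\n)}$ with $\n\neq\0$, \eqref{io02} at $\n=\0$ produces the extra term $\sum_{\m}(\varepsilon_\m.V)\iota_\m D=(\varepsilon_\n.V)\z^\gamma$, which lives in $\tilde\T^*$ and is \emph{not} $\iota_\0(D)\varepsilon_\0(V)$. On the other side, $D\Phi(V)$ does not respect the $\tilde\T^*\oplus\bar\T^*$ decomposition: it contains $\sum_{\m\neq\0}(\varepsilon_\m.V)\,D\z_\m$, and for that same $D$ one has $D\z_\n=\z^\gamma\in\tilde\T^*$. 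These two ``cross-sector'' contributions cancel each other, which is what makes the identity true; but the matching is genuinely between the $\iota_\0$-residue on the left and the polynomial-sector piece of $D\Phi(V)$ on the right, not within each sector separately. Once this case is spelled out, your argument closes.
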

Recall that \eqref{fw13} is short for 
\begin{align}\label{fw18}
\lefteqn{((\rho U)\mathsf{z}_l)=\sum_{k\ge l} \tbinom{k}{l} \mathsf{z}_{k}}\nonumber\\
&\times\sum_{(U)}\big(\iota_{\bf 0}U_{(1)}+\sum_{{\bf n}\not=\0}
(\varepsilon_{\bf n} U_{(1)})\mathsf{z}_{\bf n}\big)\cdots
\big(\iota_{\bf 0}U_{(k-l)}+\sum_{{\bf n}\not=\0}
(\varepsilon_{\bf n} U_{(k-l)})\mathsf{z}_{\bf n}\big).
\end{align}

\begin{proof}
Assume we have established (\ref{fw13}) for $U=\frac{1}{{\bf m}!}\partial^{\bf m}$
for any ${\bf m}$.
It then remains to argue for admissible $(\beta,{\bf n})$ that if $U$ satisfies (\ref{fw18}), 
also $\mathsf{z}^\beta U D^{({\bf m})}$ satisfies (\ref{fw18}), which we shall do
now. We first address the case of ${\bf m}\not=\0$, in which case the l.~h.~s.~of
(\ref{fw18}) (with $U$ replaced by $\mathsf{z}^\beta U D^{({\bf m})}$) vanishes.
According to Lemma \ref{lem4.2}, the rank-one constituents of the
coproduct of $\mathsf{z}^\beta U D^{({\bf m})}$ always involve one factor 
of the form $\mathsf{z}^\beta U' D^{({\bf m})}$, which is annihilated by both
$\iota_{\bf 0}$ and $\varepsilon_{\bf n}$. Hence it remains to consider ${\bf m}=\0$.
For the l.~h.~s.~of (\ref{fw18}) we have by (\ref{fw24}) and (\ref{ao35})
\begin{align}\label{fw19}
(\rho\mathsf{z}^\beta U D^{({\bf 0})})\mathsf{z}_l
=(l+1)\mathsf{z}^\beta((\rho U)\mathsf{z}_{l+1}).
\end{align}
Once more according to Lemma \ref{lem4.2}, the rank-one constituents of the
coproduct of $\mathsf{z}^\beta U D D^{({\bf 0})}$ are those of $U$ besides
exactly one factor 
of the form $\mathsf{z}^\beta U' D^{({\bf 0})}$, which is annihilated by $\varepsilon_{\bf n}$,
and is annihilated by $\iota_{\bf 0}$ unless $U'=\mathsf{1}$, in which case it gives
rise to $\mathsf{z}^\beta$. Hence the r.~h.~s.~of (\ref{fw18})
(with $U$ replaced by $\mathsf{z}^\beta U D^{({\bf 0})}$) assumes the form
\begin{align}\label{fw20}
&\mathsf{z}^\beta \sum_{k\ge l} \tbinom{k}{l} \mathsf{z}_{k}(k-l) \\
&\times\sum_{(U)}\big(\iota_{\bf 0}U_{(1)}+\sum_{{\bf n}\not=\0}
(\varepsilon_{\bf n} U_{(1)})\mathsf{z}_{\bf n}\big)\cdots
\big(\iota_{\bf 0}U_{(k-l-1)}+\sum_{{\bf n}\not=\0}
(\varepsilon_{\bf n} U_{(k-l-1)})\mathsf{z}_{\bf n}\big). \nonumber
\end{align}
It follows from (\ref{fw18}) (with $l$ replaced by $l+1$) and the
combinatorial identity $(k-l)\binom{k}{l}$ $=(l+1)\binom{k}{l+1}$ 
that (\ref{fw19}) and (\ref{fw20}) coincide.

\medskip

We now turn to the proof of (\ref{fw18}) for $U=\frac{1}{{\bf m}!}\partial^{\bf m}$.
Since the factors of the coproduct of $U$ are of the form
$U'=\frac{1}{{\bf m}'!}\partial^{{\bf m}'}$, see Lemma \ref{lemcop},
on which $\iota_{{\bf 0}}$ vanishes and
on which $\varepsilon_{{\bf n}}$ renders $1$ provided ${\bf n}={\bf m}'\not=\0$,
(\ref{fw18}) assumes the form
\begin{align*}%\label{fw21}
(\tfrac{1}{{\bf m}!}\partial^{\bf m}\mathsf{z}_l)
=\sum_{k\ge l} \tbinom{k}{l}\,\mathsf{z}_{k}
\sum_{{\bf m}_1+\cdots+{\bf m}_{k-l}={\bf m}}\mathsf{z}_{{\bf m}_1}
\cdots\mathsf{z}_{{\bf m}_{k-l}},
\end{align*}
which, relabelling $k-l$ as $k$, we rewrite as
\begin{align}\label{fw21}
(\tfrac{1}{{\bf m}!} \partial^{\bf m} l! \mathsf{z}_l)
=\sum_{k\ge 0} \tfrac{(l+k)!}{k!} \, \mathsf{z}_{k+l}
\sum_{{\bf m}_1+\cdots+{\bf m}_{k}={\bf m}}\mathsf{z}_{{\bf m}_1}
\cdots\mathsf{z}_{{\bf m}_{k}}.
\end{align}
From definitions (\ref{ao20bis}) and (\ref{ao25})
we see that this amounts to the Fa\`a di Bruno formula
\begin{align*}
\lefteqn{\frac{1}{{\bf m}!}\frac{d^{\bf m}}{dy^{\bf m}}_{|y=0}
\frac{d^la}{du^l}(p(y))}\nonumber\\
&=\sum_{k\ge 0}\frac{1}{k!}\frac{d^{l+k}a}{du^{l+k}}(0)
\sum_{{\bf m}_1+\cdots+{\bf m}_{k}={\bf m}}
\frac{1}{{\bf m}_1!}\frac{d^{{\bf m}_1}p}{dy^{{\bf m}_1}}(0)
\cdots\frac{1}{{\bf m}_k!}\frac{d^{{\bf m}_{k}}p}{dy^{{\bf m}_{k}}}(0),
\end{align*}
cf. \cite[(2.1)]{Frabetti} up to re-summation.
\end{proof}

\bigskip

%%%%%%%%%%%%%%%%%%%%%%%%%%%%%%%%%%%%%%%%%%%%%%%%%%%%%%
%%%%%%%%%%%%%%%%%%%%%%%%%%%%%%%%%%%%%%%%%%%%%%%%%%%%%%
\section*{Acknowledgments}
The authors thank Yvain Bruned and Stefan Hollands for fruitful discussions. PL thanks Nikolas Tapia for discussions on free pre-Lie algebras.
MT thanks Lorenzo Zambotti for suggestions.

% GLOSSARY 
\medskip
\printglossary[title={Symbolic index}]

\medskip
%\newpage


\begin{thebibliography}{99}
	
	\bibitem{Abe} E.~Abe. \textit{Hopf Algebras}, volume 74 of \textit{Cambridge Tracts in Mathematics}. Cambridge University Press, 1980.
	
	\bibitem{BonnefoiCMW} T.~Bonnefoi, A.~Chandra, A.~Moinat and H.~Weber. A priori bounds for rough differential equations with a non-linear damping term. \textit{J. Differential Equations} 318:58-93, 2022.
	
	\bibitem{Brouder} Ch.~Brouder. Trees, renormalization and differential equations. \textit{BIT Numer. Math.} 44:425--438, 2004.
	
	\bibitem{BCCH19} Y.~Bruned, A.~Chandra, I.~Chevyrev and M.~Hairer. Renormalising SPDEs in regularity structures. \textit{J. Eur. Math. Soc.} 23:869--947, 2021.
	
	\bibitem{BCFP19} Y.~Bruned, I.~Chevyrev, P.K.~Friz and R.~Preiß. A rough path perspective on renormalization. \textit{J. Funct. Anal.}, 277(11):108283, 2019. 
	
	\bibitem{BHZ} Y.~Bruned, M.~Hairer and L.~Zambotti. Algebraic renormalisation of regularity structures. \textit{Invent. Math.}, 215:1039--1156, 2019.
	
	\bibitem{BK22} Y.~Bruned and F.~Katsetsiadis. Post-Lie algebras in Regularity Structures. \textit{arXiv preprint 	arXiv:2208.00514}, 2022.
	
	\bibitem{BrunedManchon} Y.~Bruned and D.~Manchon. Algebraic deformation for (S)PDEs. \textit{arXiv preprint arXiv:2011.05907v2}, 2020.
	
	\bibitem{Butcher} J.C.~Butcher. Coefficients for the study of Runge-Kutta integration processes. \textit{J. Austral. Math. Soc.} 3(2):185--201, 1963.
	
	\bibitem{ChH} A.~Chandra and M.~Hairer. An analytic BPHZ theorem for regularity structures. \textit{arXiv preprint arXiv:1612.08138}, 2016.
	
	\bibitem{Chapoton} F.~Chapoton and M.~Livernet. Pre-Lie Algebras and the Rooted Trees Operad. \textit{Int. Math. Res. Notices} 8:395-408, 2001.
	
	\bibitem{ConnesKreimer} A.~Connes and D.~Kreimer. Hopf Algebras, Renormalization and Noncommutative Geometry. \textit{Commun. Math. Phys.}, 199:203--242, 1998.
	
	\bibitem{Frabetti} A.~Frabetti and D. Manchon. Five interpretations of Fa\`a di Bruno's formula. \textit{IRMA Lect. Math. Theor. Phys.} 21:91--147, 2015.
	
	 \bibitem{Gub04} M.~Gubinelli. Controlling rough paths. \textit{J. Funct. Anal.}, 216(1):86--140, 2004.
	
	\bibitem{Gub10} M.~Gubinelli. Ramification of rough paths. \textit{J. Diff. Eq.}, 248(4):693--721, 2010.
	
	\bibitem{Hairer14} M.~Hairer. A theory of regularity structures. \textit{Invent. Math.}, 198(2):269--504, 2014.
	
	\bibitem{Hairer} M.~Hairer. Regularity structures and the dynamical $\phi^4_3$ model. \textit{Current developments in mathematics 2014}, 1--49, 2016.
	
	\bibitem{HairerKelly} M.~Hairer and D.~Kelly. Geometric versus non-geometric rough paths. \textit{Ann. Henri Poincaré, Probabilités et Statistiques} 51(1):207--251, 2015.
	
	\bibitem{HGK10} M.~Hazewinkel, N.~Gubareni and V.V.~Kirichenko. \textit{Algebras. Rings and Modules. Lie Algebras and Hopf Algebras}, volume 168 of \textit{Mathematical Surveys} \textit{and Monographs.} American Mathematical Society, 2010.
	
	\bibitem{Hoffman} M.~Hoffman. Combinatorics of Rooted Trees and Hopf Algebras. \textit{Trans. Am. Math. Soc.}, 355(9):3795--3811, 2003.
	
	\bibitem{Hollands05} S.~Hollands and R.M.~Wald. Conservation of the stress tensor in interacting quantum field theory in curved spacetimes. \textit{Rev. Math. Phys.}, 17:227--312, 2005.
	
	\bibitem{LO} P.~Linares and F.~Otto. A tree-free approach to regularity structures: The regular case for quasi-linear equations. \textit{arXiv preprint arXiv:2207.10627}, 2022. 
	
	\bibitem{LOTT} P.~Linares, F.~Otto, M.~ Tempelmayr and P.~Tsatsoulis. A diagram-free approach to the stochastic estimates in regularity structures. \textit{arXiv preprint arXiv:2112.10739}, 2021.
	
	 \bibitem{Lyons98} T.~Lyons. Differential equations driven by rough signals. \textit{Rev. Mat. Iberoamericana}, 14(2):215--310, 1998.
	
	\bibitem{Manchon} D.~Manchon.  \textit{A short survey on pre-Lie algebras}, in \textit{Noncommutative Geometry and
	Physics: Renormalisation, Motives, Index Theory}, in \textit{ESI Lect. Math. Phys.} \textit{Eur. Math. Soc.}, 89–102, 2011. 
	
	\bibitem{OttoLecture} \textit{Felix Otto: Singular quasi-linear stochastic PDEs I}. YouTube, uploaded by Hausdorff Center for Mathematics, 6 Nov. 2019. https://www.youtube. com/watch?v=mejbCPu4M8g.
	
		\bibitem{OSSW} F.~Otto, J.~Sauer, S.~Smith and H.~Weber. A priori bounds for quasi-linear SPDEs in the full sub-critical regime. \textit{arXiv preprint arXiv:2103.11039}, 2021. 
	
	\bibitem{GuinOudom} J.-M.~Oudom and D. Guin. On the Lie enveloping algebra of a pre-Lie algebra. \textit{J. K-Theory} 2:147--167, 2008.
\end{thebibliography}
\end{document}